\newlength{\textwnew}
\newcommand{\algorithmiccontinue}{\textbf{continue}}
\newcommand{\algorithmicbreak}{\textbf{break}}
\newcolumntype{R}[1]{>{\raggedright\let\newline\\\arraybackslash\hspace{0pt}}m{#1}}
\numberwithin{equation}{section}
\theoremstyle{plain}
\newtheorem{definition}{Definition}[section]
\newtheorem{thm}{Theorem}[section]
\newtheorem{lemma}{Lemma}[section]
\newtheorem{rem}{Remark}
\definecolor{ttccqq}{rgb}{0.2,0.8,0}
\definecolor{zzttqq}{rgb}{0.6,0.2,0}
\definecolor{qqwuqq}{rgb}{0,0.39,0}
\definecolor{qqqqff}{rgb}{0,0,1}
\definecolor{uququq}{rgb}{0.25,0.25,0.25}
\newcommand{\Bg}{\mathfrak{C}}
\newcommand{\Cc}{\mathcal{C}}
\newcommand{\HH}{\mathcal{H}}
\newcommand{\Hh}{\mathcal{H}}
\newcommand{\Ii}{\mathcal{I}}
\newcommand{\NN}{\mathbb{N}}
\newcommand{\Pp}{\mathcal{P}}
\newcommand{\RR}{\mathbb{R}}
\newcommand{\Rr}{\mathcal{R}}
\newcommand{\Ss}{\mathcal{S}}
\newcommand{\Uu}{\mathcal{U}}
\newcommand{\XX}{\mathbb{X}}
\newcommand{\Xx}{\mathcal{X}}
\newcommand{\YY}{\mathbb{Y}}
\newcommand{\Yy}{\mathcal{Y}}
\newcommand{\ZZ}{\mathbb{Z}}
\newcommand{\curvedepth}{curve depth}
\newcommand{\Curvedepth}{Curve depth}
\newcommand{\CurveDepth}{Curve Depth}
\newcommand{\indicatorf}{\mathds 1}
\newcommand{\spacefun}{\mathscr{C}([0,1],\mathbb{R}^d)}
\newcommand{\bmx}{\boldsymbol{x}}
\newcommand{\bmy}{\boldsymbol{y}}
\newcommand{\bmz}{\boldsymbol{z}}
\newcommand{\bma}{\boldsymbol{a}}
\begin{document}

\raggedbottom

\title{Depth for Curve Data and Applications}
\author{Pierre Lafaye de Micheaux\hspace{.2cm} \\
	School of Mathematics and Statistics, UNSW Sydney \\
	and \\
	Pavlo Mozharovskyi \\
    LTCI, T\'{e}l\'{e}com Paris, Institut Polytechnique de Paris \\
    and \\
    Myriam Vimond \\
    Univ Rennes, Ensai, CNRS, CREST - UMR 9194
}
\date{February 20, 2020}
\maketitle

\abstract{
John W. Tukey (1975) defined statistical data depth as a function that determines centrality of an arbitrary point with respect to a data cloud or to a probability measure. 
During the last decades, this seminal idea of data depth evolved into a powerful tool proving to be useful in various fields of science. Recently, extending the notion of data depth to the functional setting attracted a lot of attention among theoretical and applied statisticians. We go further and suggest a notion of data depth suitable for data represented as curves, or trajectories, which is independent of the parametrization. We show that our curve depth satisfies theoretical requirements of general depth functions that are meaningful for trajectories. We apply our methodology to diffusion tensor brain images and also to pattern recognition of hand written digits and letters. Supplementary Materials are available online.
}
\indent\\

{\bf Keywords:} 
data depth, space of curves, unparametrized curves, nonparametric statistics, curve registration, DT-MRI fibers, classification, DD-plot.

\sloppy

\section{Introduction}\label{sec:intro}

We propose an extension of the notion of depth for curve data. 
An (unparameterized) curve datum is a set of points of $\RR^d$ which can be described by an unspecified continuous function from a sub-interval of $\RR$ to $\RR^d$. Our original motivation to study such data was to solve a neuroimaging problem involving brain fibers of elderly twins. 

Data depth was originally introduced in a seminal paper by \citet{tukey1975mathematics} to measure the degree of centrality of a
multivariate point $x$ with respect to a given data cloud. His approach consists in computing, for every halfspace $H$ containing $x$, the fraction of points from the data cloud enclosed in~$H$. He then retains the minimum of these fractions as a measure of centrality of $x$; see also \cite{donoho1992}. Since then, several other notions of data depth have been proposed. For example, using random simplices (i.e., generalizations of the notion of a triangle to arbitrary dimensions), \cite{liu1990} proposed a similar measure of ``insideness''  called simplicial depth. For a comprehensive survey on multivariate data depths the reader is referred to \cite{zuo2000general}.  

Thanks to these theoretical developments, it has become possible to extend standard univariate descriptive statistics based on ranks to analyze multivariate observations \citep[see, e.g.,][]{oja1983,liu1999}. New classical inferential statistical tools or techniques using these depth measures or some refinements have also been developed, such as $p$-values \citep{liu1997}, confidence regions \citep{yeh1997,lee2012}, regression \citep{rousseeuw1999,hallin2010}, multivariate nonparametric testing \citep{li2004,zuo2006,chenouri2012nonparametric}, classification \citep{li2012,lange2014fast,paindaveine2015,dutta2016} and estimation of extreme quantiles \citep{he2017}.
See \cite{mosler2013} for a nice introduction showing the richeness and usefulness of depth techniques.

In recent years, statisticians have been facing complex types of data that they analyze using 
a functional depth \citep{fraiman2001trimmed,lopez2009concept,narisetty2016} or even a multivariate functional depth approach \citep{claeskens2014multivariate}. These new techniques have proven to be very useful for data visualization, to estimate a measure of location or spread, to detect outliers \citep[see also][]{hubert2015}, for clustering, 
or to detect if two groups of functions come from the same population. 

However, functional depths are sensitive to parametrization of curves. 
Figures~\ref{fig:Sletters:comparison}~and~\ref{fig:simulated-hurricanes:comparison} illustrate the impact of two different parametrizations on depths rankings of curves provided by the {\it multivariate functional halfspace depth} (MFHD) developped by \cite{claeskens2014multivariate} (with weight function set to a constant) and by the {\it modified simplicial band depth} (mSBD) developped by \cite{lopezpintado2014}. 

\begin{figure}[H]
\begin{center}
	\begin{tabular}{ccccc}
	{\small ~~MFHD -- par. A} & {\small ~~MFHD -- par. B} & {\small ~~mSBD -- par. A} & {\small ~~mSBD -- par. B} & {\small ~~Curve Depth} \\
          \hspace*{-0.2cm}\includegraphics[trim = 1cm 1cm 1cm 1.5cm,clip=true,width=0.16\textwidth,page=1]{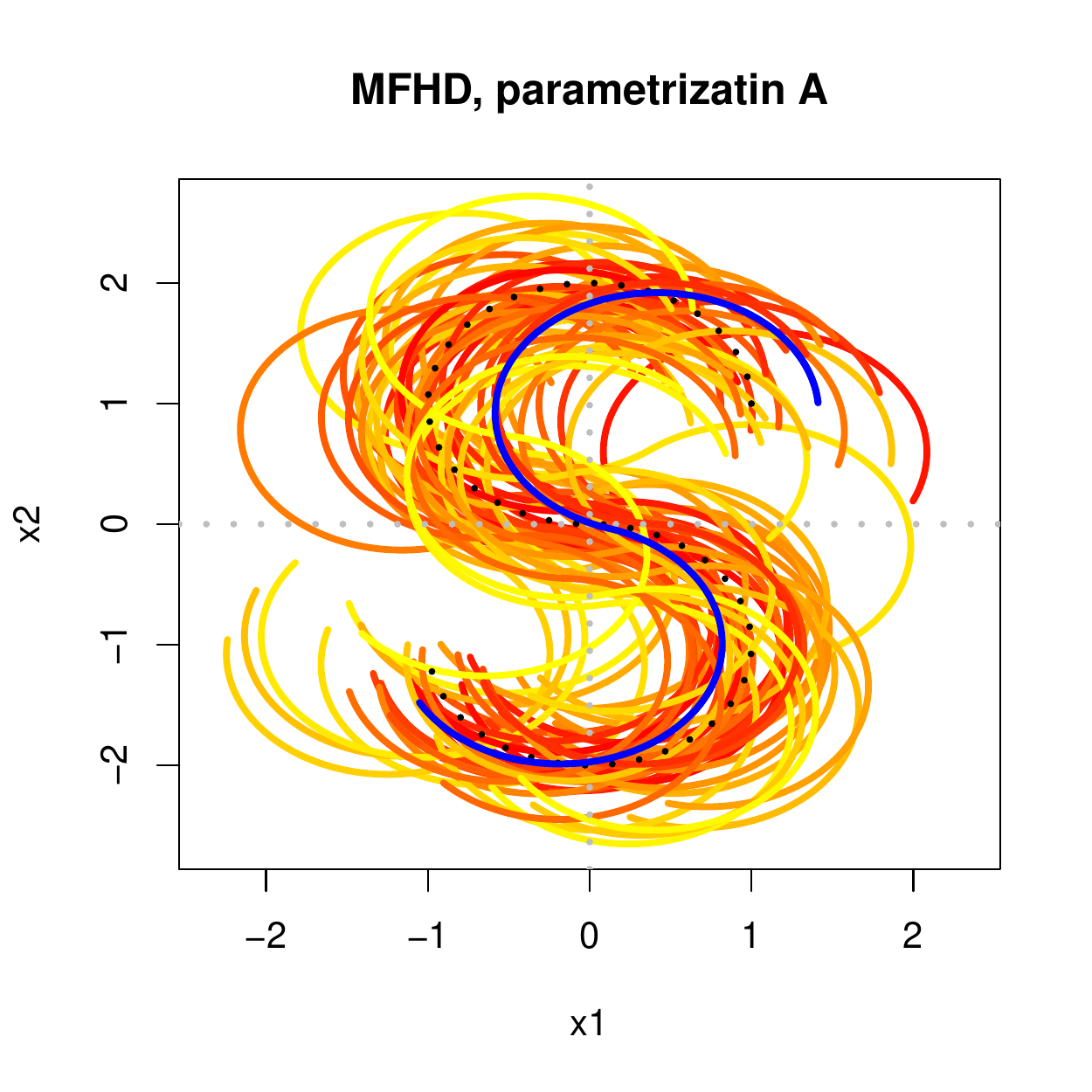} & \hspace*{-0.2cm}\includegraphics[trim = 1cm 1cm 1cm 1.5cm,clip=true,width=0.16\textwidth,page=2]{pic-S-2D.pdf} & \hspace*{-0.2cm}\includegraphics[trim = 1cm 1cm 1cm 1.5cm,clip=true,width=0.16\textwidth,page=3]{pic-S-2D.pdf} & \hspace*{-0.2cm}\includegraphics[trim = 1cm 1cm 1cm 1.5cm,clip=true,width=0.16\textwidth,page=4]{pic-S-2D.pdf} & \hspace*{-0.2cm}\includegraphics[trim = 1cm 1cm 1cm 1.5cm,clip=true,width=0.16\textwidth,page=5]{pic-S-2D.pdf}\\
          (a) & (b) & (c) & (d) & (e)
	\end{tabular}
\end{center}
	\caption{Comparison of depth based ordering for two parametrizations A and B provided respectively by MFHD (a)--(b), mSBD (c)--(d), and by our new depth for unparameterized curves (e). The depth increases from yellow to red. Each deepest curve is plotted in blue. The center of symmetry of the distribution is plotted using black dots. Source: an ensemble of 50 simulated S letters; see Section \ref{app:illustration:S} in Supplementary Materials.}\label{fig:Sletters:comparison}

\end{figure}

In Figure \ref{fig:Sletters:comparison} (a)-(d), we see that the choice of a parametrization (A or B) has a clear impact on which curve is identified as the deepest (in blue). Moreover, unlike MFHD and mSBD, our unparameterized approach finds a deepest curve which is very close to the center of symmetry (the dotted curve). Also, we observe that some curves with high depth (in red) seem to be outliers (Figure \ref{fig:Sletters:comparison} (a) and (c), upper right) and some curves with low depth (in yellow) are close to the deepest curve (Figure \ref{fig:Sletters:comparison} (b) and (d)). This problem is even more striking on Figure \ref{fig:simulated-hurricanes:comparison}. There, many simulated hurricane tracks are identified as outliers (in red, on panels (a)--(d)) by MFHD and mSBD (with two different parametrizations) even if they are close to the center of distribution of the curves (in dark blue). This is in agreement with \citep[Section 5]{mirzargar2014} who note that ``\textit{the time-parameterization is more sensitive to the velocity outlier as a parameterization-dependent feature, the arc-length and life-time percentage parameterization are more sensitive to shape and positional outliers.}'' Here again our unparameterized approach correctly identifies outliers (panel (e)).

\begin{figure}[H]
\begin{center}
	\begin{tabular}{ccccc}
	{\small ~~MFHD -- par. A} & {\small ~~MFHD - par. B} & {\small ~~mSBD -- par. A} & {\small ~~mSBD -- par. B} & {\small ~~Curve Depth} \\
          \includegraphics[width=0.155\textwidth,page=1]{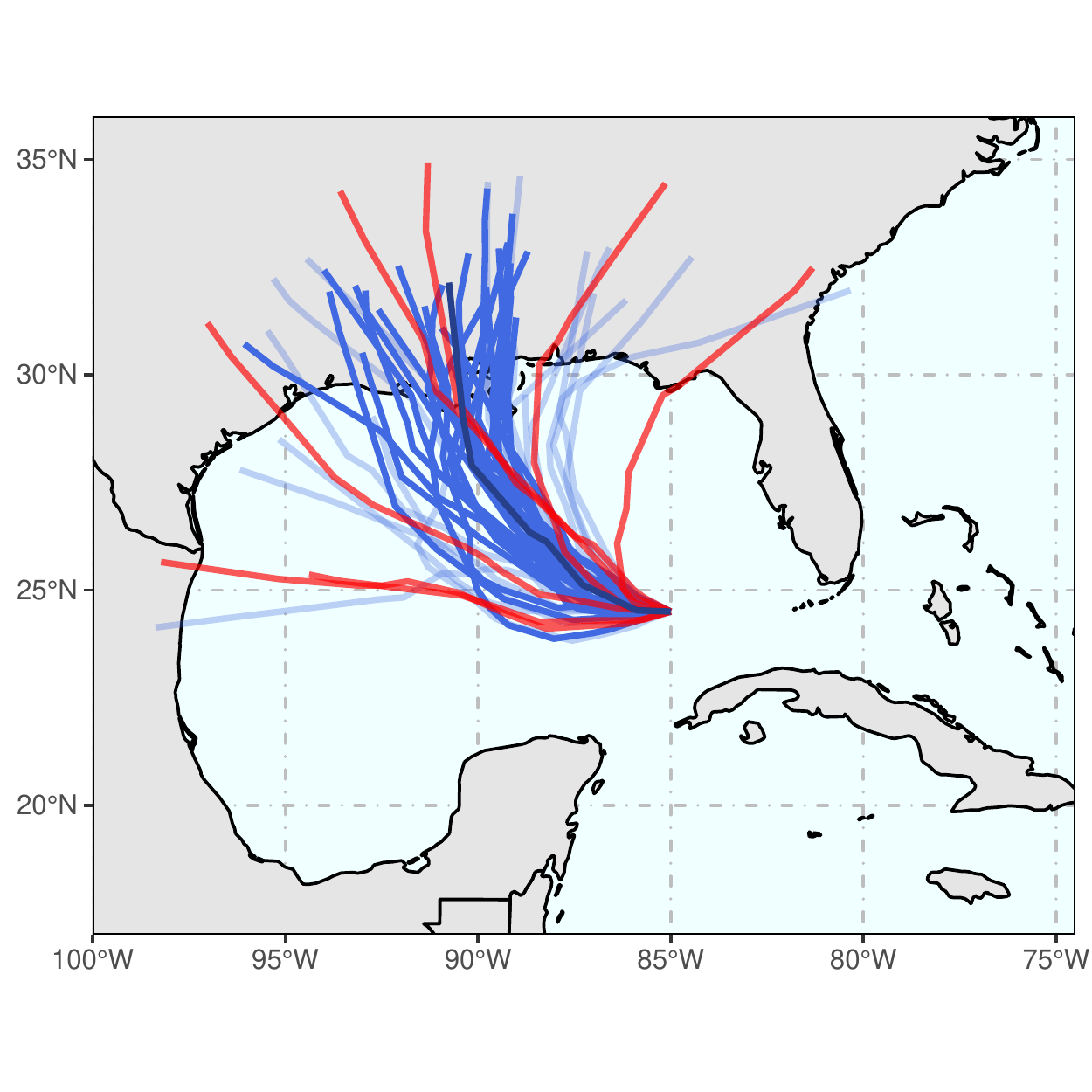}&\includegraphics[width=0.155\textwidth,page=2]{pic-uragans-2D.pdf}&\includegraphics[width=0.155\textwidth,page=3]{pic-uragans-2D.pdf}&\includegraphics[width=0.155\textwidth,page=4]{pic-uragans-2D.pdf}&\includegraphics[width=0.155\textwidth,page=5]{pic-uragans-2D.pdf}\\
                    (a) & (b) & (c) & (d) & (e)
	\end{tabular}
\end{center}
	\caption{Comparison of the depth based ordering for two parametrizations A (time) and B (arc-length) provided respectively by MFHD (a)--(b), mSBD (c)--(d), and by our new depth for unparameterized curves (e). Curves with low value of depth are plotted in red, the others in blue. Each deepest curve is plotted in dark blue. Source: an ensemble of 50 simulated hurricane tracks \citep{mirzargar2014}.}\label{fig:simulated-hurricanes:comparison}
\end{figure}

Note that MFHD and mSBD depths are computed by comparing each point on a given curve only to points (from the other curves) that ``occur at the same time''. Curves are thus compared pointwisely and not globally (this is a direct consequence of parametrization). We believe this is the cause of the aforementioned artefacts.

Of course, depending on the context, working with a proper parametrization of curves can be relevant. For instance, if available, one could use speed of writing as a meaningful parametrization in a handwriting recognition problem; see Section \ref{cursiveexample} in Supplementary Materials. 
For further discussion on the importance and possible choices of a proper parametrization when employing functional data depth, see, e.g., \cite{lopezpintado2014,mirzargar2014} and references therein.

%

In this paper we aim to define a depth which is invariant to the choice of a parametrization of the curves. This was originally motivated by the need to analyze a very large number of bundles of white matter fibers obtained through diffusion tensor imaging (an MRI-based neuroimaging technique) among a population of elderly twins. These neuronal fibers, also called axons, are nerve cell extensions that transmit electrical information between different regions of the brain. The aim of the study was to investigate if genetics plays a role in the spatial organization of these fibers.

In our setting, a mathematical curve describing a given fiber should be understood as the set of all points that describe the location in space of one of these fibers, {\it with no focus whatsoever on any parametrization}. Indeed, as outlined by \cite{kurtek2012} ``a parameterization is merely for the convenience of analysis and is not an intrinsic property of a curve'' which leads them to advocate that ``the shape analysis should be invariant not only to rigid motions and global scalings, but also to their parameterizations''. Intuitively, we want to distinguish curves solely by how they bend and twist, as well as by their lengths and relative locations in space. Consequently, the concept of functional data depth should not be used here (this is further investigated in Section~\ref{ssec:realdata-brain}).

One could think of using one of a few other existing approaches that deal specifically with curves. \cite{goldie1995} considered 2D observation records that are joined in a sequence, while \cite{sangalli2009} estimated centreline curves (and their curvature functions) of internal carotid artery vessels using three-dimensional free-knot regression splines. Unfortunately, these two methods also rely on some parametrization. \cite{mani2010} and \cite{kurtek2012} use a Riemannian framework invariant to the parametrization while \cite{zhang2015} developed a Bayesian version; see also \cite{srivastava2016} for a monograph on the statistical analysis of the shapes of curves. However, it is difficult to find a software to apply these methods on our data.


With this motivation in mind, we developed a new concept of {\it depth for curves} that is invariant to the choice of the parametrization. It will be broadly applicable, thanks to our freely available \texttt{R/C++} package \texttt{curveDepth} \citep{curveDepth}, to many other similar types of data. On can mention a few examples such as textile fibers \citep{xu2001}, blood clot fibers \citep{collet2005}, blood vessels centrelines \citep{sangalli2009}, moving objects such as birds migrating \citep{su2014,yuan2017}, multidimensional data sets obtained by constructing principal curves \citep{hastie1989}. 

The outline of the paper is as follows. In Section~\ref{sec:model}, curves are defined formally and we introduce a statistical model for sampled curves. 
Section~\ref{sec:UnparamCurves} contains a definition of the new data depth for curves.
In Section~\ref{sec:Implementation}, we discuss implementation issues. In Section~\ref{sec:realdata}, we present simulation results. We also apply our curve depth to analyze brain imaging data sets, and to classify hand-written digits. There, our curve depth is compared to other existing depths, namely MFHD \citep{claeskens2014multivariate}, the {\it modified multivariate band depth} (mMBD) of \citet{IevaP13}, the multivariate functional {\it skew-adjusted projection depth} (saPRJ) of \citet{hubert2015}, the {\it simplicial band depth} (SBD) of \citet{lopezpintado2014} and its modified version (mSBD).

Section~\ref{sec:Conclusion} gathers some concluding remarks. Supplementary Materials collect all technical proofs, along with the necessary codes and data to reproduce all our numerical and graphical results.

\section{A Statistical Model for Sampled Curves}\label{sec:model}

In what follows we introduce the space of unparameterized curves and define a statistical model on it. 
For a comprehensive reference the reader is referred to \citet[Section~2]{Kemppainen2017} which borrowed material from \citet[Section~2.1]{aizenman1999holder} and \citet[Section~2.5]{Burago2001}.
For additional details see Section~\ref{app:curves} in the Supplementary Materials. 

\subsection{The Space of Unparameterized Curves}

Let $d\geq1$ be an integer. Let $(\RR^d,|\cdot|_2)$ be the $d$-dimensional Euclidean space, $\spacefun$  be the space of continuous functions defined on the interval $[0,1]$ and taking values in $\RR^d$ and $\Gamma$ be the set of increasing continuous functions $\gamma:[0,1]\to [0,1]$ such that $\gamma(0)=0$ and $\gamma(1)=1$. 
A \textit{parameterized curve} $\beta$, also called a \textit{path}, is an element of $\spacefun.$ 
The image of $\beta,$ denoted as $S_\beta=\beta([0,1]),$ is called the \textit{locus}~of~$\beta$. 
Informally if $\beta(t)$ describes the position of a moving particle at time $t,$ then $S_\beta$ describes the physical route taken by this particle with no consideration being given to stops or goings backward occuring on its trajectory.
The function $\beta:[0,1]\mapsto\mathbb{R}^d$, a parametrization of $S_\beta$ with parameter $t$, provides an ordering along $S_\beta$. Note that there might exists an infinite number of different parametrizations describing the same locus.

\begin{rem}
The start point of $S_\beta$ is the image of $0$ by $\beta$. The end point is the image of $1$.
The \textit{locus} of a trivial curve coincides with a singleton, \textit{i.e.}, a single point of $\RR^d.$ 
\end{rem}

Formally, unparameterized curves are usually defined via an equivalence relation on the set of parameterized curves in $\RR^d$ up to the set of monotonic functions from $[0,1]$ to $[0,1]$. Roughly speaking, two curves $\beta_1$ and $\beta_2$ are said equivalent if they share the same locus and visit its points continuously and in the same order, possibly at a different speed. Hereafter, we restrict ourselves to the set of all curves equivalent to $\beta$ that start at $\beta(0)$ and stop at $\beta(1)$. More precisely, we say that two parameterized curves $\beta_1$ and $\beta_2$ are equivalent whenever there exist two reparametrizations $\gamma_1,\gamma_2\in\Gamma$ such that $\beta_1\circ\gamma_1=\beta_2\circ\gamma_2$. We then define the \emph{unparameterized curve} $\Cc_\beta$ as the set of all paths equivalent to $\beta$, that is the equivalence class of $\beta$ up to this equivalence relation. Informally, $\Cc_\beta$ describes the trajectory from $\beta(0)$ to $\beta(1)$, with no information about the location at any time.  
Note that in our context, it would be possible to consider the general definition, i.e., to walk a path $\beta$ from $\beta(1)$ to $\beta(0)$, or the other way around. But restricting all our definitions by considering the set of parameterized curves in $\RR^d$ only up to the set of reparametrizations $\Gamma$ greatly simplifies exposition; see Remark \ref{rem:order}. In the sequel, an unparameterized curve will be generically denoted $\Cc$. Notice that all parameterized curves in the same equivalence class $\Cc$ share the same locus, which enables one to talk about the locus of $\Cc$, denoted thereafter as $S_\Cc$.

The \textit{space of unparameterized curves} is then defined as
$$
\Bg = \{\Cc_{\beta}:\beta\in\spacefun\}.
$$
In other words, $\Bg$ is the quotient space of $\spacefun$ by the equivalence relation on the set of parameterized curves

Following \cite{Kemppainen2017}, we endow the space of curves $\Bg$ with the Fr\'{e}chet metric $d_\Bg$ defined as
\begin{equation}\label{eq:distance}
d_\Bg\left(\Cc_1,\Cc_2\right) = \inf\left\{\|\beta_1-\beta_2\|_\infty;~\beta_1\in\Cc_1,\ \beta_2\in\Cc_2\right\},\quad\Cc_1,\Cc_2\in\Bg,
\end{equation}
where $\|\beta\|_{\infty} = \sup_{t\in[0,1]}|\beta(t)|_2$ for $\beta\in\spacefun$.
The resulting metric space $(\Bg,d_\Bg)$ is non linear. It inherits the properties of separability and completeness from $\spacefun$; see Section~\ref{app:curves:metric} in the Supplementary Materials. This guarantees the existence of non-atomic probability measures on $(\Bg,d_\Bg)$. Moreover, according to \citet[Theorems~1.2,~3.2~and~8.1]{parthasarathy1967probability}, every probability measure defined on $\Bg$ is regular and tight.

\subsection{The Arc-Length Probability Measure of a Curve}
The length $L(\beta)$ of a parameterized curve $\beta\in\Cc$ is defined as
\begin{equation}\label{eq:length}
L(\beta)=\sup_{\tau}\left\{L_\tau(\beta);~\tau \text{ a partition of $[0,1]$} \right\},
\end{equation}
where $L_\tau(\beta)=\sum_{j=1}^J|\beta(\tau_j) - \beta(\tau_{j-1})|_2$ is the {\it chordal length}\/ of $\beta$ associated with the partition $\tau=\{\tau_0,\ldots,\tau_J;~0=\tau_0<\cdots<\tau_J=1,~J\in\mathbb{N}^*\}$.
Informally $L(\beta)$ is the total distance travelled by a particle moving from $\beta(0)$ to $\beta(1)$ along the support $\Ss_\beta$ of the curve~$\beta$ (taking into account any backward steps). 
Then all parameterized curves in $\Cc$ have the same length. Consequently, the length of $\Cc$, denoted $L(\Cc)$, is defined by $L(\Cc)=L(\beta)$, for any $\beta\in\Cc$.
Note that the function $L:\Bg\to[0,+\infty]$ is not continuous, but it is measurable (Lemma \ref{lem:SpaceCurve:mesureQ}). 
In the following we assume that all unparameterized curves belong to the measurable set $\Bg_L=\{\Cc\in \Bg\,;~ 0<L(\Cc)<\infty\}\subset\Bg$, the subset of \textit{rectifiable} (i.e., of finite length) unparameterized curves with a positive length.


According to \citet[Theorem 2.4]{vaisala2006lectures}, each curve $\Cc\in\Bg_L$ contains a unique parametrization $\beta_\Cc:[0,1]\to\RR^d$, called \emph{the arc-length parametrization}, whose restrictions to the intervals $[0,t]$, noted $\beta_\Cc^t$, satisfy $L({\beta_\Cc^t})=tL(\Cc),$ for all $t\in[0,1]$. Informally, with $\beta_\Cc$, the locus $S_\Cc$ is visited at a constant speed. Then any rectifiable curve $\Cc$ may be expressed as
$$
\Cc=\{\beta_\Cc\circ\gamma;~ \gamma\in\Gamma\}.
$$

Using the arc-length parametrization $\beta_\Cc$ of an unparameterized curve $\Cc$, one can thus define the \emph{line integral} of a non-negative Borel function $f:\RR^d\to\RR$ over $\Cc$ as
\begin{equation}\label{eq.line.integral}
\int_{\Cc} f(s) ds := \int_{0}^{1}f\left(\beta_\Cc(t)\right) L(\Cc)dt,
\end{equation}
where the integral on the right is a Riemann integral. 
Furthermore, we define the \textit{arc-length probability measure} of $\Cc$ as the probability distribution $\mu_{\Cc}$ on the Borel sets of $\RR^d$:
\begin{equation}\label{eq.def.muC}
\text{for any borel set $A$ of $\RR^d,$}\qquad \mu_\Cc(A) = \frac{1}{L(\Cc)}\int_\Cc\mathds{1}_A(s) ds\,,
\end{equation}
where the indicator function $\mathds{1}_A(x)$ takes the value 1 if $x\in A$ and 0 otherwise.

From \eqref{eq.line.integral} and \eqref{eq.def.muC}, we immediately get
\begin{equation}\label{eq.line.integral.with.mu}
  \int_\Cc f(s)d\mu_\Cc(s) = 
  \int_{0}^1 f(\beta_\Cc(t))dt.
\end{equation}
Also, note that $\mu_{\Cc}$ only contains information about the support $S_{\Cc}$ of $\Cc$ and the frequency at which its points are visited.  Roughly speaking, $\mu_\Cc(A)$ can be interpreted as a ratio: the distance travelled by a particle on the subset $S_{\Cc}\cap A$ divided by the total distance it travels on $\Ss_\Cc$. (Note that $L(\Cc)$ can be different from the length of $\Ss_\Cc$.) It is somehow a normalised measure of how much of curve $\Cc$ intersects with $A$.



\subsection{A Nonparametric Statistical Model for a Sample of Curves}

We denote by $\Pp$ the set of all probability measures defined on the Borel $\sigma$-algebra of the Borel sets of $(\Bg,d_\Bg)$ whose support is a subset of rectifiable 
curves of positive length (to exclude singletons): 
$$\Pp=\Big\{P,\,\text{a probability measure on }(\Bg,d_\Bg)\ ;~ P\left(\Bg_L\right)=1\Big\}.$$
Consider a random unparameterized curve $\Xx,$ namely a random element taking ``values'' in the space of unparameterized curves $\Bg,$ whose probability distribution $P\in\Pp$ is unknown. 
We define the probability distribution $Q_{P}$ as follows: 
\begin{equation}\label{eq.def.QP}
  \text{for all borel sets $A$ of $\RR^d,$}\qquad Q_{P}(A) = \int_\Bg \mu_\Cc(A)dP(\Cc)=E_P[\mu_\Xx(A)],
\end{equation}
a measure of how much (on average) a curve generated by $\Xx$ intersect with~$A$.

\begin{rem}\label{rem:Q}
In Section~\ref{app:curves:integral} in the Supplementary Materials, we show that for any Borel bounded function $f:\RR^d\to\RR$, the function $\Cc\in\Bg_L\mapsto \int_\Cc f d\mu_\Cc \in\RR$ is measurable. 
Consequently, $Q_P$ is well-defined.
\end{rem}

The statistical model considered in this article is to assume that the data to be observed are $n$ random unparameterized curves $\Xx_{1},\ldots,\Xx_{n}$, which are independent copies of the random element $\Xx$, that is to say
\begin{equation}\label{eq:model}
\Xx_{1},\ldots,\Xx_{n} \text{ are i.i.d.\ from $P\in\Pp$}.
\end{equation}
In the next section, we define a population data depth for unparameterized curves, and its sample version.

\section{Data Depth for Unparameterized Curves}\label{sec:UnparamCurves}

\subsection{Population and Sample Versions}

Let $y^\top$ denote the transpose of the column vector $y\in\RR^d$ and $\Ss$ be the unit-sphere in $\RR^d$. 
For a pair $(u,x)\in\Ss\times\RR^d$, let $H_{u,x}$ denote the closed halfspace $\{y\in\RR^d\ : y^\top u \geq x^\top u \}$ whose frontier is orthogonal to the vector $u$ and goes through the point $x$.
Notice that if $d=1,$ the unit-sphere is $\{-1,1\}$.

\begin{definition}[\Curvedepth, population version]
Let $\Cc\in\Bg_L$ be an unparameterized curve and let $P\in\Pp$ be a probability measure. 
We define the \emph{\curvedepth} of $\Cc$ w.r.t.\ $P$, denoted $D(\Cc|P)$, by the mapping 
\begin{eqnarray}\label{eq:depth:curve}
D: & \Bg_L\times\Pp & \to\RR\nonumber \\
   &   (\Cc,P)         & \mapsto D(\Cc|P) = \int_\Cc D(s|Q_P,\mu_\Cc) d\mu_\Cc(s),
\end{eqnarray}
where the above line integral is computed via \eqref{eq.line.integral} using, for any $d\geq1$ and any $x\in S_\Cc$,
\begin{align}\label{eq:depth:x}
D(x|Q_P,\mu_\Cc) &\!=\! \inf_{u\in\Ss}\frac{Q_P(H_{u,x})}{\mu_\Cc(H_{u,x})},
\end{align}
with the convention that $a/0=+\infty$ for all $a>0$ and $0/0=0$ in the above ratio.
\end{definition}

The term $D(x|Q_P,\mu_\Cc)$ aims to compare the two distributions $Q_P$ and $\mu_{\Cc}$ around $x\in S_\Cc$. For $u$ and $x$ fixed, recall from \eqref{eq.def.muC} and from \eqref{eq.def.QP} that $\mu_\Cc(H_{u,x})$ measures (the fraction of length of) how much the curve $\Cc$ delves into the halfspace $H_{u,x}$, whereas $Q_P(H_{u,x})$ measures (the expected fraction of length of) how much a random curve $\Xx$ (with distribution $P$) delves into $H_{u,x}$. Consequently, the ratio $Q_P(H_{u,x})/\mu_\Cc(H_{u,x})$ is small when we expect curves generated according to $P$ to enter less into $H_{u,x}$ than the curve $\Cc$. Getting a value $r>1$ (resp. $r<1$) for this ratio, indicates that $\Xx$ generates curves that enter into $H_{u,x}$, on average, $r$ times more (resp. $1/r$ times less) than $\Cc$ does; see Figure~\ref{fig:illustration} for a visual aid.

\begin{figure}[!ht]
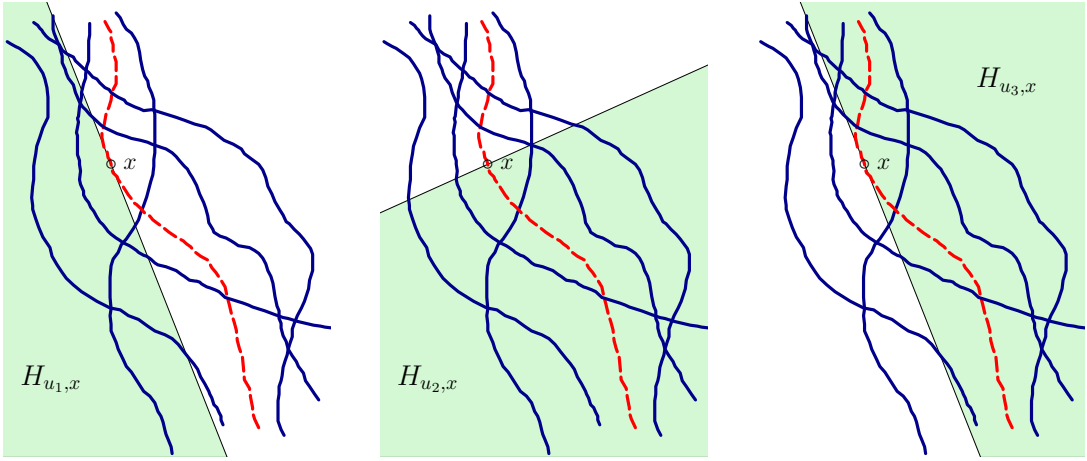

	\begin{center}
		\resizebox{0.29\linewidth}{!}{    
		\input{illustration1.tex} 
		}
		\resizebox{0.29\linewidth}{!}{    
		\input{illustration2.tex} 
		}
    	\resizebox{0.29\linewidth}{!}{    
		\input{illustration3.tex} 
		}
	\end{center}
	\caption{
          Illustrations of the statistical model and depth calculation \eqref{eq:depth:x} for three halfspaces with a sample of five curves generated by $\Xx$ in blue and the curve $\Cc$ in red. We consider all halfspaces whose frontier contains the point $x$ and pick up the smallest ratio of the probability measures between $Q_P$ and $\mu_\Cc$ : (left) $\frac{Q_P(H_{u_1,x})}{\mu_\Cc(H_{u_1,x})}=9.714$, (middle) $\frac{Q_P(H_{u_2,x})}{\mu_\Cc(H_{u_2,x})}=0.999$, (right) $\frac{Q_P(H_{u_3,x})}{\mu_\Cc(H_{u_3,x})}=0.618$}
	\label{fig:illustration}
\end{figure}

Then, similarly to the original Tukey depth, to obtain $D(x|Q_P,\mu_\Cc)$, we consider all possible rotations of the halfspace $H_{u,x}$ around $x$ to find the one that discriminates the most the curve $\Cc$ from a curve generated according to $P$.
We shall call $D(x|Q_P,\mu_\Cc)$ as the \textit{point \curvedepth{}} at $x\in S_\Cc$. 
Then \eqref{eq:depth:curve} defines the depth of $\Cc$ w.r.t.\ $P$ as the mean of the point \curvedepth{s} at all $x$ in its locus $S_\Cc$.

Notice that if there exists $u\in\Ss$ such that $Q_P(H_{u,x}) = 0,$ then $x$ is an outlier w.r.t.~$Q_P$, and thus the contribution of $x\in S_\Cc$ to the depth of $\Cc$ w.r.t.\ $P$ is set to zero, that is $D(x|Q_P,\mu_\Cc) = 0$.

 If $Q_P(H_{u,x})> 0$ for all $u\in\Ss,$ that means $x$ lies in the convex hull of the support of~$Q_P.$ 
Our aim is to calculate the depth of $x\in S_\Cc$ w.r.t.\ $Q_P$ relatively to the measure $\mu_\Cc,$ that is why we consider the ratio $Q_P(H_{u,x})/\mu_\Cc(H_u,x)$ in the definition of $D(x|Q_P,\mu_\Cc).$
In this case, we can show that there exists $u$ such that $\mu_\Cc (H_{u,x}) \geq Q_P(H_{u,x}) > 0$ (Lemma~\ref{lem:depth:bounded} in the Supplementary Materials), so that $x \mapsto D(x|Q_P,\mu_\Cc)$ is bounded by $1.$
Moreover, $x\mapsto D(x|Q_P,\mu_\Cc)$ is measurable as a limit of measurable functions (see Lemma~\ref{lem:depth:mc:conv-ps} in the Supplementary Materials).

\begin{definition}[\Curvedepth, sample version]
Let $\Xx_1,\ldots,\Xx_n$ be a random sample of unparameterized curves belonging to $\Bg_L$ a.s. and let $\Cc\in\Bg_L$ be a rectifiable unparameterized curve.
With a slight abuse of notation, and thanks to \eqref{eq.line.integral}, we define the \emph{\curvedepth} of~$\Cc$ w.r.t.\  $\Xx_1,\ldots,\Xx_n$ by the mapping 
{\small\begin{eqnarray}\label{eq:depth:emp:curve}
D: & \Bg_L\times \left\{\Bg_L\right\}^n & \to      \RR \\
   &   (\Cc,\Xx_1,\ldots,\Xx_n)         & \mapsto D(\Cc|\Xx_1,\ldots,\Xx_n) = \int_\Cc D(s|Q_n,\mu_\Cc) d\mu_\Cc(s) ,\nonumber
\end{eqnarray}}
where $Q_n = (\mu_{\Xx_1}^{}+\cdots+\mu_{\Xx_n}^{} )/n$ and $\beta_{\Cc}$ is the arc-length parametrization of  $\Cc$. 
\end{definition}


\begin{rem}
In a sense, our depth may be seen as a genaralization of the Tukey halfspace depth in $\RR^d.$
If $\Cc$ is a trivial curve, that is $L(\Cc)=0$ and $S_\Cc=\{y\}$ for some $y\in\RR^d,$ we define $\mu_\Cc$ as the dirac measure $\delta_y$ at $y$.
Then, if $\Xx_1,\ldots,\Xx_n$ are also trivial curves, that is $S_{\Xx_i}=\{x_i\}$, $i=1,\ldots,n$, we get
\begin{align*}
D(\Cc|\Xx_1,\ldots,\Xx_n) &= D(y|Q_n,\delta_y)\\ 
	& = \inf_{u\in\Ss} \frac{1}{n} \sum_{i=1}^n \mathds{1}_{x_i\in H_{u,y}}.
\end{align*}  
\end{rem}

Theorem~\ref{thm:consistency} below states that the sample version of the \curvedepth\ \eqref{eq:depth:emp:curve} converges in probability to the population version \eqref{eq:depth:curve} as $n\rightarrow\infty.$

\begin{thm}\label{thm:consistency}
Let $\Cc\in\Bg_L$ be an unparameterized curve such that $\mu_\Cc$ is non-atomic. 
Let $P$ be a probability measure in the space of unparameterized curves such that $P\in\Pp$ and $Q_P$ is non-atomic.
Then  the sample \curvedepth{} $D(\Cc|\Xx_1,\ldots,\Xx_n)$ converges in probability to $D(\Cc|P)$ as $n\to\infty.$
\end{thm}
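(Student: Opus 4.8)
\medskip
\noindent\textbf{Proof proposal.} The plan is to prove the (formally stronger) statement that $D(\Cc\,|\,\Xx_1,\dots,\Xx_n)\to D(\Cc\,|\,P)$ almost surely, which in particular gives convergence in probability. Writing $D(\Cc\,|\,\Xx_1,\dots,\Xx_n)=\int_\Cc D(s\,|\,Q_n,\mu_\Cc)\,d\mu_\Cc(s)$ and $D(\Cc\,|\,P)=\int_\Cc D(s\,|\,Q_P,\mu_\Cc)\,d\mu_\Cc(s)$ with $Q_n=\tfrac1n\sum_{i=1}^n\mu_{\Xx_i}$, I would split the argument into three parts: (i) a Glivenko--Cantelli-type uniform law of large numbers for the randomized mixture $Q_n$ over the class of closed halfspaces; (ii) a pointwise stability statement, namely $D(x\,|\,Q_n,\mu_\Cc)\to D(x\,|\,Q_P,\mu_\Cc)$ for $\mu_\Cc$-almost every $x\in S_\Cc$, almost surely; and (iii) a dominated-convergence step transferring (ii) to the $\mu_\Cc$-integral.

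For (i), observe that for each fixed $(u,x)\in\Ss\times\RR^d$ the variables $\mu_{\Xx_1}(H_{u,x}),\mu_{\Xx_2}(H_{u,x}),\dots$ are i.i.d., lie in $[0,1]$, and have mean $Q_P(H_{u,x})=E_P[\mu_\Xx(H_{u,x})]$ by \eqref{eq.def.QP}, so the strong law gives $Q_n(H_{u,x})\to Q_P(H_{u,x})$ a.s. To make this uniform I would use that the closed halfspaces of $\RR^d$ form a Vapnik--Chervonenkis class (of index $d+1$); a standard symmetrization argument, applied to the uniformly bounded functionals $\Cc\mapsto\mu_\Cc(H_{u,x})=\int\mathds 1_{H_{u,x}}\,d\mu_\Cc$, upgrades the pointwise law to
$$\Delta_n:=\sup_{(u,x)\in\Ss\times\RR^d}\bigl|Q_n(H_{u,x})-Q_P(H_{u,x})\bigr|\ \to\ 0\quad\text{a.s.}$$
Alternatively one may avoid uniformity in $x$: fix $x$, work with the smaller VC class $\{H_{u,x}:u\in\Ss\}$, obtain a.s.\ uniform-in-$u$ convergence, and invoke Fubini to make this hold for $\mu_\Cc$-a.e.\ $x$ simultaneously.

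For (ii), fix $x\in S_\Cc$ and set $\delta(x)=D(x\,|\,Q_P,\mu_\Cc)=\inf_{u\in\Ss}Q_P(H_{u,x})/\mu_\Cc(H_{u,x})$. If $Q_P(H_{u_0,x})=0$ for some $u_0$, then $\delta(x)=0$; moreover $E_P[\mu_\Xx(H_{u_0,x})]=0$ forces $\mu_{\Xx_i}(H_{u_0,x})=0$ for every $i$ almost surely, hence $Q_n(H_{u_0,x})=0$ and, by the $0/0=0$ convention, $D(x\,|\,Q_n,\mu_\Cc)=0$ for every $n$ on that almost-sure event. If instead $Q_P(H_{u,x})>0$ for all $u$, then $x$ lies in the interior of the convex hull of the support of $Q_P$, Lemma~\ref{lem:depth:bounded} furnishes a direction with $\mu_\Cc(H_{u,x})\ge Q_P(H_{u,x})>0$, and $\delta(x)\le1$. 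For the upper bound I would evaluate the empirical infimum along a minimizing sequence $u_k$ (each with $\mu_\Cc(H_{u_k,x})>0$): for fixed $k$ the ordinary strong law gives $\limsup_n D(x\,|\,Q_n,\mu_\Cc)\le Q_n(H_{u_k,x})/\mu_\Cc(H_{u_k,x})\to Q_P(H_{u_k,x})/\mu_\Cc(H_{u_k,x})$, and letting $k\to\infty$ yields $\limsup_n D(x\,|\,Q_n,\mu_\Cc)\le\delta(x)$ a.s. For the lower bound I would split the directions at a level $\eta>0$: using $Q_n(H_{u,x})\ge Q_P(H_{u,x})-\Delta_n$ uniformly in $u$, the directions with $\mu_\Cc(H_{u,x})\ge\eta$ give ratio at least $\delta(x)-\Delta_n/\eta$, whereas the directions with $\mu_\Cc(H_{u,x})<\eta$ give ratio at least $(c_x-\Delta_n)/\eta$, where $c_x:=\inf_u Q_P(H_{u,x})>0$. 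This last strict positivity is exactly where the non-atomicity of $Q_P$ is used (it guarantees, in the present setting, that $Q_P$ charges no hyperplane, hence $u\mapsto Q_P(H_{u,x})$ is lower-semicontinuous on the compact sphere); non-atomicity of $\mu_\Cc$ is used to ensure that the set of $x\in S_\Cc$ at which the ratio is governed by a degenerate $\mu_\Cc(H_{u,x})=0$ direction is $\mu_\Cc$-negligible. Choosing $\eta$ with $c_x/\eta>\delta(x)$ and letting $n\to\infty$ gives $\liminf_n D(x\,|\,Q_n,\mu_\Cc)\ge\delta(x)$, so $D(x\,|\,Q_n,\mu_\Cc)\to\delta(x)$ a.s.

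For (iii), since $0\le D(s\,|\,Q_n,\mu_\Cc)\le1$ for $\mu_\Cc$-a.e.\ $s$ (Lemma~\ref{lem:depth:bounded}) and $\mu_\Cc$ is a probability measure, the dominated convergence theorem applied on the almost-sure event of (ii) yields $\int_\Cc D(s\,|\,Q_n,\mu_\Cc)\,d\mu_\Cc(s)\to\int_\Cc D(s\,|\,Q_P,\mu_\Cc)\,d\mu_\Cc(s)$, i.e.\ $D(\Cc\,|\,\Xx_1,\dots,\Xx_n)\to D(\Cc\,|\,P)$ a.s., hence in probability. The main obstacle is step (ii): the functional $Q\mapsto\inf_u Q(H_{u,x})/\mu_\Cc(H_{u,x})$ is not Lipschitz uniformly in $x$, because the denominator $\mu_\Cc(H_{u,x})$ may be arbitrarily small, so the argument must exploit that such degenerate directions carry enough $Q_P$-mass to be irrelevant for the infimum — which is precisely what the two non-atomicity hypotheses, together with the geometric facts recorded in Lemma~\ref{lem:depth:bounded} and the measurability facts of Lemma~\ref{lem:depth:mc:conv-ps}, are there to secure. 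A secondary, more routine, point is passing from the ordinary strong law to its uniform version for the \emph{randomized} empirical measures $Q_n$ rather than empirical measures of i.i.d.\ points.
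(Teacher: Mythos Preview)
Your direct route---pointwise convergence of $D(x\mid Q_n,\mu_\Cc)$ plus dominated convergence---is more elementary than what the paper does (the paper deduces the statement as a corollary of the full Monte Carlo machinery: it shows the thresholded estimator $\widehat D_{n,m,\Delta_m}$ converges both to the sample depth and to the population depth, then reads off the result). Your upper bound in step~(ii) via a minimizing sequence is fine, and step~(iii) is routine once (ii) is in place.

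The gap is in the lower bound of step~(ii). You assert that $c_x:=\inf_{u\in\Ss}Q_P(H_{u,x})>0$ because non-atomicity of $Q_P$ forces $Q_P$ to charge no hyperplane, hence makes $u\mapsto Q_P(H_{u,x})$ lower-semicontinuous. That implication is false in dimension $d\ge2$: ``non-atomic'' only rules out point masses, not mass on hyperplanes (think of $Q_P$ partly supported on a line segment). In general $u\mapsto Q_P(H_{u,x})$ is only \emph{upper}-semicontinuous, so $Q_P(H_{u,x})>0$ for every $u$ does not preclude $\inf_u Q_P(H_{u,x})=0$. Without $c_x>0$ your bound for the ``small-$\mu_\Cc$'' directions collapses: there may exist directions $u$ with $\mu_\Cc(H_{u,x})$ arbitrarily small and $Q_n(H_{u,x})$ genuinely zero (all sampled curves miss $H_{u,x}$), so the ratio is $0$ and your splitting gives nothing. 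This is precisely the ``denominator can be arbitrarily small'' obstacle you flag at the end, and your proposed remedy does not neutralise it.

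The paper circumvents this by never taking an unrestricted infimum: the Monte Carlo depth is an infimum over the thresholded class $\HH_{\Delta_m}^{n,m}$, on which the denominator is bounded below by $\Delta_m$, so a crude Lipschitz bound in $\|\widehat Q_{m,n}-Q_P\|_\HH/\Delta_m$ controls the lower bound immediately. The price is a separate argument (Case~2 of Lemma~\ref{lem:depth:mc:conv-ps}, with its two sub-cases according to whether the minimizing sequence has $\mu_\Cc(H_{u_k,x})$ bounded away from zero or not) showing that the thresholded infimum approaches the full one as $\Delta_m\downarrow0$. If you want to salvage the direct route, you will need an analogous device: restrict to $\{u:\mu_\Cc(H_{u,x})\ge\eta\}$, prove $|D_\eta(x\mid Q_n,\mu_\Cc)-D_\eta(x\mid Q_P,\mu_\Cc)|\le\Delta_n/\eta$, and then show $D_\eta(x\mid Q_P,\mu_\Cc)\downarrow D(x\mid Q_P,\mu_\Cc)$ as $\eta\downarrow0$ via the same minimizing-sequence dichotomy---but this last step is exactly what you have not supplied, and it does not follow from $c_x>0$.
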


\subsection{Properties}

The main aim of the proposed \curvedepth{} is to provide a meaningful statistical ordering of the observed curve data, which is experimentally studied and illustrated on real-data examples in Section~\ref{sec:realdata}. Theorem~\ref{thm:consistency} states the consistency of our sample depth under mild assumptions and in this subsection we discuss its properties.

Following the suggestion of \cite{liu1990} for simplicial depths, \citet{zuo2000general} have defined four properties to be satisfied by a proper multivariate depth function: affine invariance, maximality at the center of symmetry, monotonicity relative to the deepest point and vanishing at infinity. (For a slightly different version of the postulates see also \citet{Dyckerhoff2004data} and \citet{mosler2013}.) 
For a functional depth, \cite{nieto2016topologically} suggest that six properties need to be satisfied, but \citet{gibels2018general} argue that some of them could be demanding.

The situation appears to be even more challenging for the space of unparameterized curves. Indeed, (loci of) unparameterized curves can be seen as subsets of $\RR^d$ which are parameterized by paths up to the same order of visit of their points. These mathematical objects can thus be thought of as being ``between'' functional data and set data. Moreover, since no canonical mandatory postulates for a functional depth have been established yet, and since the existing postulates are mainly inherited from those for the multivariate depth function, we base the following analysis on the latter.

Since the length is an important characteristic of an unparameterized curve, similarity invariance, which is associated with a similarity group preserving orientation and ratio of lengths, seems to be more appropriate than affine invariance in our context.
Moreover, the space of unparameterized curves is not a vector space. For instance the surjection $\beta\mapsto \Cc_{\beta}$ is not linear (there is no natural way to define the addition of two unparameterized curves and thus no line segment between two unparameterized curves, a crucial point for the monotonicity property). It is thus not possible to extend the classical formulation of a depth using results from \cite{Dutta2011} or \cite{Mosler2018}, say. 
Similarly, there is no universal way to define a notion of symmetry for unparameterized curves, no symmetry center can be defined either.
 The vanishing at infinity property can be directly extended to the space of curves. Below we state the properties satisfied by our \curvedepth{} function and summarize them in Theorem~\ref{thm:properties}.



\paragraph{Boundness.}

Calculating the \curvedepth{} \eqref{eq:depth:curve} consists in integrating a non-negative function bounded by one w.r.t.\ a probability measure. This fulfills one of the basic requirements of a depth function: to take values on the unit interval.

\paragraph{Similarity invariance.}

For a multivariate depth, affine invariance is required for changelessness w.r.t.\ an affine change of the coordinate system. 
For the space of unparameterized curves, we consider affine transformations that also preserve ratios of the lengths of curves, i.e., similarities. (Note that the length of an unparameterized curve is a property of the equivalence class.) 
A similarity $f:\RR^d\to \RR^d$ is an affine transform, $f(x)=rAx+b$ such that $A$ is an orthogonal matrix, $r$ is a positive factor and $b\in\RR^d$ is a vector. 
In particular, for all $x$ and $y$ in $\RR^d$, we have $|f(x)-f(y)|_2=r |x-y|_2.$ 
We denote by $P_f$ the distribution of the image under $f$ of a stochastic process having a distribution $P$. 
A map $D$ satisfies the property of similarity invariance if for every rectifiable curve $\Cc$ and every similarity map $f:\RR^d\to\RR^d$, it holds $D(\Cc,P)=D(f\circ\Cc,P_f).$


\paragraph{Vanishing at infinity.}
The farther away an unparametrized curve is from a data cloud of curves, the smaller its depth should be. 
To formulate the vanishing at infinity property of our curve depth $D$, we consider any sequence $(\Cc_n)_n$ of curves in $\Bg_L$  such that $\mu_{\Cc_n}$ is a non-atomic measure for all $n$ and $\lim_{n\to\infty} d_{\Bg}(\Cc_n,0)=\infty$, where $0$ denotes the set of parametrized curves equivalent to the constant curve $t\mapsto\beta(t)=0$ for all $t\in[0,1]$. 
However, such a formulation involves sequences of curves whose length tends to infinity. To exclude these cases, we assume that there exists some $\ell>0$ such that $L(\Cc_n)<\ell$ for all $n$. This guarantees that only the location of these curves tends to infinity. We then prove that
$$ \lim_{n\to\infty} D(\Cc_n, P)=0.
$$

\begin{thm}\label{thm:properties}
Under the assumptions of Theorem~\ref{thm:consistency},	our \curvedepth{} is a depth function in $\Bg_L$, i.e., it takes values in $[0,1]$, is similarity-invariant and is vanishing at infinity.
\end{thm}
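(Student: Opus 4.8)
The plan is to verify the three asserted properties in turn. \emph{Boundedness} is essentially already done in the discussion preceding the theorem: for $x$ in the convex hull of the support of $Q_P$, Lemma~\ref{lem:depth:bounded} supplies a direction $u$ with $\mu_\Cc(H_{u,x})\geq Q_P(H_{u,x})>0$, so the infimum defining $D(x|Q_P,\mu_\Cc)$ is $\leq 1$; for $x$ outside that convex hull the point depth is $0$ by convention; in all cases $0\leq D(x|Q_P,\mu_\Cc)\leq 1$. Since $D(\Cc|P)=\int_\Cc D(s|Q_P,\mu_\Cc)\,d\mu_\Cc(s)$ integrates this against the probability measure $\mu_\Cc$, we get $D(\Cc|P)\in[0,1]$. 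The only thing to be careful about here is measurability of the integrand, which is already asserted (Lemma~\ref{lem:depth:mc:conv-ps}).

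For \emph{similarity invariance}, fix a similarity $f(x)=rAx+b$ with $A$ orthogonal, $r>0$. The first step is the geometric observation that $f$ maps halfspaces to halfspaces: concretely $f(H_{u,x})=H_{Au,f(x)}$, since $y^\top u\geq x^\top u$ iff $(rAy)^\top(Au)\geq (rAx)^\top(Au)$ iff $(f(y)-b)^\top(Au)\geq (f(x)-b)^\top(Au)$. Next, because $f$ rescales all chordal lengths by $r$, one has $L(f\circ\beta)=rL(\beta)$, the arc-length parametrization transforms as $\beta_{f\circ\Cc}=f\circ\beta_\Cc$, and hence the pushforward identity $\mu_{f\circ\Cc}=f_\#\mu_\Cc$ holds (this is where I would lean on the change-of-variables formula \eqref{eq.line.integral.with.mu}). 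The same pushforward passes through the $P$-integral in \eqref{eq.def.QP}, giving $Q_{P_f}=f_\#Q_P$. Combining these, for $x\in S_\Cc$ and $u\in\Ss$,
\[
\frac{Q_{P_f}(H_{Au,f(x)})}{\mu_{f\circ\Cc}(H_{Au,f(x)})}=\frac{Q_P(f^{-1}H_{Au,f(x)})}{\mu_\Cc(f^{-1}H_{Au,f(x)})}=\frac{Q_P(H_{u,x})}{\mu_\Cc(H_{u,x})},
\]
and taking the infimum over $u$ (using that $u\mapsto Au$ is a bijection of $\Ss$) yields $D(f(x)|Q_{P_f},\mu_{f\circ\Cc})=D(x|Q_P,\mu_\Cc)$. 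Integrating this identity over $S_\Cc$, again via the pushforward $\mu_{f\circ\Cc}=f_\#\mu_\Cc$ applied to \eqref{eq:depth:curve}, gives $D(f\circ\Cc|P_f)=D(\Cc|P)$.

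For \emph{vanishing at infinity}, take a sequence $(\Cc_n)$ in $\Bg_L$ with $\mu_{\Cc_n}$ non-atomic, $L(\Cc_n)<\ell$ for all $n$, and $d_\Bg(\Cc_n,0)\to\infty$. The key point is that $d_\Bg(\Cc_n,0)=\inf_{\beta\in\Cc_n}\|\beta\|_\infty$ large, together with the uniform length bound $L(\Cc_n)<\ell$, forces the whole locus $S_{\Cc_n}$ to be far from the origin: for any $\beta\in\Cc_n$ and any $s,t$, $|\beta(s)-\beta(t)|_2\leq L(\Cc_n)<\ell$, so $\min_{x\in S_{\Cc_n}}|x|_2 \geq d_\Bg(\Cc_n,0)-\ell \to\infty$. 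Now fix a compact ball $B$ containing the support of $Q_P$ up to mass $\geq 1-\varepsilon$ (possible since $Q_P$ is tight). For $n$ large, $S_{\Cc_n}\cap B=\emptyset$, so for each $x\in S_{\Cc_n}$ there is a hyperplane through $x$ with $B$ entirely on one side; taking $u$ the corresponding outward normal gives $Q_P(H_{u,x})\leq Q_P(B^c)\leq \varepsilon$ while $\mu_{\Cc_n}(H_{u,x})\geq \mu_{\Cc_n}(\{x\})$'s neighbourhood contribution — more robustly, one picks $u$ so that $H_{u,x}$ contains at least a fixed fraction of $S_{\Cc_n}$ (e.g.\ the halfspace bounded by the hyperplane through $x$ orthogonal to the direction of the curve's ``spread'', or simply note $\mu_{\Cc_n}(H_{u,x})>0$ and bound the infimum by this single direction). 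Thus $D(x|Q_P,\mu_{\Cc_n})\leq \varepsilon/\mu_{\Cc_n}(H_{u,x})$; choosing $u$ to also capture a uniformly positive $\mu_{\Cc_n}$-mass makes this $\leq C\varepsilon$, and integrating over $S_{\Cc_n}$ gives $D(\Cc_n|P)\leq C\varepsilon$ for $n$ large, whence the limit is $0$.

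\textbf{Main obstacle.} The boundedness and similarity-invariance parts are essentially bookkeeping with pushforwards and the change-of-variables formula. The delicate step is vanishing at infinity: one must produce, for each point $x$ on the far-away curve, a single direction $u$ that simultaneously makes $Q_P(H_{u,x})$ small (easy, from tightness and $x$ being far from $\mathrm{supp}\,Q_P$) \emph{and} keeps $\mu_{\Cc_n}(H_{u,x})$ bounded below by a constant not depending on $n$ or $x$, so that the ratio is genuinely small. The cleanest way is probably to choose $u$ pointing roughly from $x$ back toward the bulk of $S_{\Cc_n}$ so that $H_{u,x}$ swallows at least, say, half of the arc length of $\Cc_n$; controlling this uniformly is where the uniform length bound $L(\Cc_n)<\ell$ and the non-atomicity of $\mu_{\Cc_n}$ must be used carefully.
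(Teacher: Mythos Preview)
Your treatment of boundedness and similarity invariance is correct and essentially identical to the paper's (Lemma~\ref{lem:depth:bounded} for the pointwise bound, and Lemma~\ref{lem:deep:invariance} for invariance, which proceeds via exactly the pushforward identities $\mu_{f\circ\Cc}=f_\#\mu_\Cc$ and $Q_{P_f}=f_\#Q_P$ that you wrote down).

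The vanishing-at-infinity argument, however, has a genuine gap, and your proposed fix does not close it. You want, for each $x\in S_{\Cc_n}$, a single direction $u$ with $Q_P(H_{u,x})$ small \emph{and} $\mu_{\Cc_n}(H_{u,x})$ bounded below. But these two requirements pull against each other: the directions $u$ with small $Q_P(H_{u,x})$ are those with $u^\top x\gtrsim r$ (so that $H_{u,x}$ misses the ball where $Q_P$ concentrates), i.e.\ $u$ points roughly \emph{away} from the origin; whereas your proposed $u$ ``pointing toward the bulk of $S_{\Cc_n}$'' has no reason to lie in that half of the sphere. Concretely, if $x$ is the point of $S_{\Cc_n}$ farthest from the origin, then every halfspace $H_{u,x}$ with $u$ in the ``good'' near-hemisphere (small $Q_P$) contains almost none of $\Cc_n$, so the ratio need not be small at $x$. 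No single-direction argument can work uniformly over all $x\in S_{\Cc_n}$.

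The paper's resolution (Lemma~\ref{lem:deep:vanishing}) is to abandon the pointwise approach and instead control a \emph{bad set}: define
\[
\Delta_n=\bigl\{x\in S_{\Cc_n}:\ \forall u\in\Ss,\ Q_P(H_{u,x})>\epsilon_n\ \text{or}\ \mu_{\Cc_n}(H_{u,x})<\sqrt{\epsilon_n}\bigr\},
\]
so that $D(x|Q_P,\mu_{\Cc_n})\le\sqrt{\epsilon_n}$ off $\Delta_n$, and then prove $\mu_{\Cc_n}(\Delta_n)\le 4d\sqrt{\epsilon_n}$. The latter bound is where the real geometry enters, via a covering lemma (Lemma~\ref{lem:deep:vanishing:lemma}): through any $x$ at radius $R_n$ there are $2d-2$ halfspaces tangent to a small sphere $\Ss_{r_n}$ (hence each with $Q_P$-mass $\le\epsilon_n$) whose union contains every point $y$ with $|y|\ge|x|$ and $|x-y|\le\ell$. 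If $x\in\Delta_n$, each of these halfspaces has $\mu_{\Cc_n}$-mass $<\sqrt{\epsilon_n}$, yet together they cover essentially all of $S_{\Cc_n}$ lying farther out than $x$; a stratification by radius then forces $\Delta_n$ to sit near the outermost shell of $S_{\Cc_n}$, of $\mu_{\Cc_n}$-mass $O(\sqrt{\epsilon_n})$. This is precisely the mechanism that handles the ``edge points'' your argument cannot, and it is where the length bound $L(\Cc_n)<\ell$ and the tangent-halfspace geometry are actually used.
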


\section{Implementation}\label{sec:Implementation}

Even if the curves $\Xx_1,\ldots \Xx_n$ and $\Cc$ are known, it may not be possible to obtain explicit expressions of $\mu_\Cc(H)$ and $Q_n(H)$ for an arbitrary halfspace $H$. This might prevent one to compute a value for \eqref{eq:depth:emp:curve} (via \eqref{eq:depth:x}). In fact, it appears that computation of the point \curvedepth{} ${D}(x|{Q}_{n},{\mu}_{\Cc}),$ $x\in S_{\Cc}$, in \eqref{eq:depth:emp:curve} demands algorithmic elaboration. 
We describe in the Supplementary Materials (Section \ref{app:ssec:montecarlo}) a Monte Carlo scheme to approximate $D(\Cc|\Xx_1,\ldots,\Xx_n)$. This is summarized in Algorithm \ref{alg:montecarlo}.

The main idea is to generate $3$ samples. First, a sample of size $m$ is used in order to approximate $\mu_{\Cc}$ (Line~2). Next, a (stratified) sample of size $nm$ is used to approximate the $\mu_{\Xx_i}$ (Lines~3--5) and $Q_n$ (Line~6). (See Lemma~\ref{lem:SpaceCurve:approxmu} in the Supplementary Materials for the procedure to generate these samples.) These are the two ingredients involved in the approximation of ${D}(x|{Q}_{n},{\mu}_{\Cc})$.

The last sample (Line~7) consists of points generated along the curve $\beta_\Cc$. It is used to approximate the line integral of ${D}(\cdot|{Q}_{n},{\mu}_{\Cc})$ with respect to $\mu_{\Cc}$ (Line~11). 
A Monte Carlo approximation of \eqref{eq:depth:x} is obtained (Lines~8--10) by using an adaptation of a minimization algorithm from \citep{rousseeuw1996HD2} for dimension $2$ and one from \citep{dyckerhoff2016HD} for higher dimensions; see our Algorithms~1 and 2 in Section~\ref{app:ssec:calcdepth} in the Supplementary Materials. These original algorithms were developed for the computation of the multivariate Tukey depth. They need to be adapted to our context as follows. Given that we are looking to estimate a ratio whose denominator can be arbitrarily small, we introduce a threshold $\Delta$ in order to control the stochastic convergence of the proposed algorithm (see Theorem~\ref{thm:consistency0} in the Supplementary Materials). Formal algorithms for dimensions $2$ and $3$ are stated and described in the Supplementary Materials (Section~\ref{app:ssec:calcdepth}). The latter can be easily extended to higher dimensions.

\noindent\hspace*{0\parindent}%
\begin{minipage}{0.875\textwidth}
\begin{algorithm}[H]
\caption{Monte Carlo approximation of $D(\Cc|\Xx_1,\ldots,\Xx_n)$ in \eqref{eq:depth:emp:curve}; $m$ denotes the Monte Carlo sample size of points generated uniformly on each curve (taking into account their length); $\Delta_m$ is a threshold parameter such that $\Delta_m\underset{m\rightarrow\infty}{\longrightarrow}0$; $\HH_{\Delta_m}^{n,m}$ is the collection of closed halfspaces $H$ such that either $\widehat{Q}_{m,n}(H)=0$ or $\widehat{\mu}_{m}(H)>\Delta_m$.}\label{alg:montecarlo}
\begin{algorithmic}[1]
\Procedure{MCapprox}{$\Cc,\Xx_1,\ldots,\Xx_n,m,\Delta_m$}
\State{Generate $Y_1,\ldots,
  Y_m$ \text{ i.i.d.\ from $\mu_{\Cc}$} and set $\widehat{\mu}_{m}=m^{-1}\sum_{j=1}^m\delta_{Y_j}$.}
\For{$i=1:n$}  
	\State{Generate $X_{i1},\ldots, X_{im}$ \text{ i.i.d.\ from $\mu_{\Xx_i}$ and set $\widehat{\mu}_{\Xx_i}=m^{-1}\sum_{j=1}^m\delta_{X_{i,j}}$.} } 
        \EndFor
\State{Set $\widehat{Q}_{m,n}=n^{-1}\sum_{i=1}^n\widehat{\mu}_{\Xx_i} = (nm)^{-1}\sum_{i=1}^n\sum_{j=1}^m\delta_{X_{i,j}}$ to estimate \eqref{eq.def.QP}.}
\State{Generate (independently from the $Y_j$'s) $Z_1,\ldots, Z_m$ \text{ i.i.d.\ from  $\mu_{\Cc}$.}}  
\For{$k=1:m$}  \Comment{An approximation of $D(Z_k|Q_n,\mu_\Cc)$ from \eqref{eq:depth:x}}
	\State{Compute $\widehat{D}(Z_k|\widehat{Q}_{m,n},\widehat{\mu}_{m},\HH_{\Delta}^{n,m})$ as the smallest ratio of $\widehat{Q}_{m,n}(H_{u,Z_k})$ to $\widehat{\mu}_{m}(H_{u,Z_k})$ over a (random or deterministic) grid of points $u\in\Ss$ selected in such a way that $H_{u,Z_k}\in\HH_{\Delta_m}^{n,m}$. See Algorithms~1 and 2 in Section~\ref{app:ssec:calcdepth} in the Supplementary Materials for a way to build a grid achieving exactly the infimum in $\mathbb{R}^2$ or $\mathbb{R}^3$.}
\EndFor\\
\Return{$m^{-1}\sum_{k=1}^{m} \widehat{D}(Z_{k}|\widehat{Q}_{m,n},\widehat{\mu}_{m},\HH_{\Delta}^{n,m})$ as an estimate of \eqref{eq:depth:emp:curve}.}
\EndProcedure
\end{algorithmic}
\end{algorithm}
\end{minipage}

Overall, time complexity is $O(m^dn^{d-1}\log(mn))$ if ${D}(x|{Q}_{n},{\mu}_{\Cc})$ is computed exactly, where $n$ is the size of the sample of curves, and $m$ is the size of the Monte Carlo sample of points which are sampled on each curve involved in the depth computation. Time complexity is $O(km^2n)$ if ${D}(x|{Q}_{n},{\mu}_{\Cc})$ is approximated using projections on $k$ random directions (i.e., the minimum ratio in Step~9 of Algorithm~\ref{alg:montecarlo} is searched over $k$ random directions $u$ only).

In \eqref{eq:distance}, we introduced the Fréchet distance $d_\Bg(\Cc_1,\Cc_2)$ between any two curves $\Cc_1$ and $\Cc_2$ belonging to the space of curves $\Bg$. This distance will be useful for two applications of Section~\ref{sec:realdata}; namely for curve registration in the brain and also for an adaptation of the unsupervised classification method of \citet{jornsten2004clustering}. When calculating $d_\Bg(\Cc_1,\Cc_2)$, one has to search for a parameterized curve in $\Cc_1$ and a parameterized curve in $\Cc_2$ that are as close as possible, in terms of their supremum distance. Numerically, this can be done as follows. Consider a set of points on $\Cc_1$ and a corresponding relocation of each one of these points to $\Cc_2$, preserving their ordering. The goal is to minimize the largest Euclidean distance between any one point on $\Cc_1$ and any of its relocated counterpart on $\Cc_2$. The formal algorithm together with an illustrative explanation is stated in the Supplementary Materials (Section~\ref{app:ssec:calcdist}).

Numerical computation of our \curvedepth{} and of the above-mentioned distance are implemented in the \texttt{R} package \texttt{CurveDepth} \citep{curveDepth} which is available on the CRAN \citep{RSoftware2019}. 

\section{Numerical Experiments Using Simulations}\label{sec:numexperiments}


\subsection{Simulated examples with a closed-form depth formula}\label{subsec:closedform}

The particular geometrical aspects of the curves in the following examples allows one to gain a better insight in the behavior of our curve depth and its potential limitations. More details on the computations are provided in the Supplementary Materials (Section \ref{app:ssec:examples}).

\paragraph{Segments on a line.}
We observe a sample of $n$ non-overlapping segments $[a_k,b_k]$, $k=1,\ldots,n$, on a line. 
Without loss of generality, we denote by $\Xx_k$ the $k^{\text{th}}$ segment, from left to right (see Figure \ref{fig:degenerate} top).
The curve depth of $\Xx_k$ w.r.t\ $\Xx_1,\ldots, \Xx_n$ is
$$
D({\Xx_k}|\Xx_1,\ldots, \Xx_n) = \left\{\begin{array}{ll}
1/n & \text{ if } k=1 \text{ or } k=n \\ 
1/n - ((n-1)/n)\log\left((1-t_k)^{1-t_k} t_k^{t_k}\right) & \text{otherwise}, 
\end{array}
\right.
$$
where $t_k=(k-1)/(n-1).$
The deepest curve is the segment for which $t_k$ is the closest point to $1/2.$
Our curve depth induces the same ordering as when one computes the original Tukey depth of the middle points of the segments. 
It is worthwhile noting that when the sample size $n$ increases, $D({\Xx_k}|\Xx_1,\ldots ,\Xx_n)$ tends to the Shannon entropy  (in base $b=e$) of a Bernoulli($t_k$) random variable. Thus our segment depth is maximum at $1/2$ (its value being equal to $\log(2)$) and minimal (i.e., equal to 0) close to $0$ and $1$. Outliers correspond to minimal depth and minimal entropy.

\paragraph{Parallel segments on a rectangle.}
Let $\Cc_y$ be the segment of $[0,1]^2$ defined as the set $\{(x,y);\ x\in[0,1]\}.$
We define $\Xx\sim P$ as the random curve generated from the following scheme (see Figure \ref{fig:degenerate} , bottom left) :
\begin{align*}
\Xx=\Cc_{Y}\text{ with }Y\sim \Uu[0,1].
\end{align*}
The population version of our curve depth is
$$
D(\Cc_y|P) = \min(y,1-y)=1/2 - |y - 1/2| \text{ for } y\in[0,1].
$$
Our curve depth induces the same ordering as when one computes the Tukey depth of the abscissa of the segments.
Notice that due to the particular geometry of the distribution of the segments, our curve depth is unable to detect as outliers vertical segments lying in the interior of the support of the measure $Q_P$ (here it is the unit square).

\paragraph{Star segments.}
Let $\Cc_\theta$ be the segment in $\RR^2$ from $(0,0)$ to the point $(\cos(\theta),\sin(\theta))$, for $\theta\in[0,2\pi)$.
We define $\Xx\sim P$ as the random curve generated from the following scheme (see Figure \ref{fig:degenerate} , bottom right) :
\begin{align*}
\Xx=\Cc_{\theta}\text{ with }\theta\sim \Uu[0,2\pi].
\end{align*}
By symmetry, every segment has the same depth, which is equal to $0.255$.

%

\begin{figure}[H]
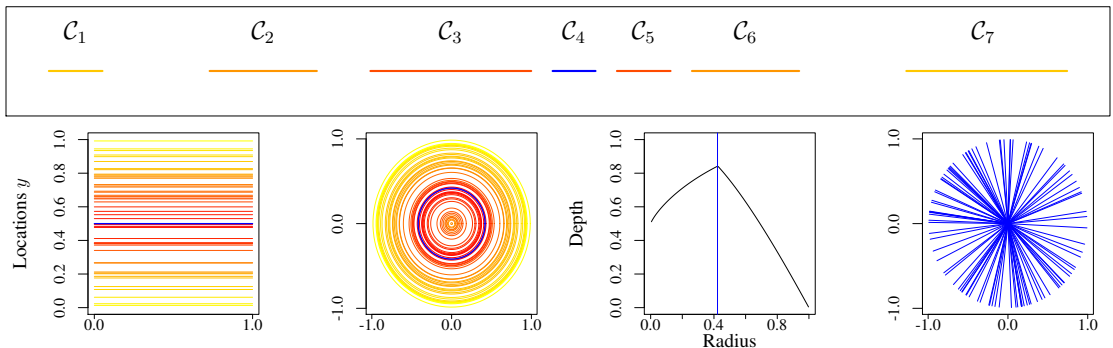

	\begin{center}
		\resizebox{0.89\linewidth}{!}{

\begin{tikzpicture}[x=5pt,y=0.5pt]
\definecolor{fillColor}{RGB}{255,255,255}
\path[use as bounding box,fill=fillColor,fill opacity=0.00] (50,100) rectangle (155,200);

\begin{scope}
\path[clip] (  50.00,  100.00) rectangle (155,211);
\definecolor{drawColor}{RGB}{0,0,0}
\path[draw=drawColor,line width= 0.4pt,line join=round,line cap=round] ( 51, 108) --
	(154, 108) --
	(154,210) --
	( 51,210) --
	( 51, 108);
\end{scope}

\begin{scope}
\path[clip] ( 50, 100) rectangle (155,200);
\definecolor{writeColor}{RGB}{0,0,0}

\definecolor{drawColor}{RGB}{255,200,0}
\path[draw=drawColor,line width= 1pt,line join=round,line cap=round] ( 55,150) -- ( 60,150);
\node[text=writeColor,anchor=base,inner sep=0pt, outer sep=0pt, scale=  1.00] at ( 57.5,180) {$\Cc_1$};

\definecolor{drawColor}{RGB}{255,150,0}
\path[draw=drawColor,line width= 1pt,line join=round,line cap=round] ( 70,150) -- ( 80,150);
\node[text=writeColor,anchor=base,inner sep=0pt, outer sep=0pt, scale=  1.00] at ( 75,180) {$\Cc_2$};

\definecolor{drawColor}{RGB}{255,75,0}
\path[draw=drawColor,line width= 1pt,line join=round,line cap=round] ( 85,150) --	( 100,150);
\node[text=writeColor,anchor=base,inner sep=0pt, outer sep=0pt, scale=  1.00] at ( 92.5,180) {$\Cc_3$};

\definecolor{drawColor}{RGB}{0,0,255}
\path[draw=drawColor,line width= 1pt,line join=round,line cap=round] (102,150) -- (106,150);
\node[text=writeColor,anchor=base,inner sep=0pt, outer sep=0pt, scale=  1.00] at (104,180) {$\Cc_4$};

\definecolor{drawColor}{RGB}{255,75,0}
\path[draw=drawColor,line width= 1pt,line join=round,line cap=round] (108,150) -- (113,150);
\node[text=writeColor,anchor=base,inner sep=0pt, outer sep=0pt, scale=  1.00] at (110.50,180) {$\Cc_5$};

\definecolor{drawColor}{RGB}{255,150,0}
\path[draw=drawColor,line width= 1pt,line join=round,line cap=round] (115,150) -- (125.00,150);
\node[text=writeColor,anchor=base,inner sep=0pt, outer sep=0pt, scale=  1.00] at (120,180) {$\Cc_6$};

\definecolor{drawColor}{RGB}{255,200,0}
\path[draw=drawColor,line width= 1pt,line join=round,line cap=round] (135,150) --	(150,150);
\node[text=writeColor,anchor=base,inner sep=0pt, outer sep=0pt, scale=  1.00] at (142.19,180) {$\Cc_7$};
\end{scope}
\end{tikzpicture}}\\
		\resizebox{0.2125\linewidth}{!}{\input{parallel.tex}}
		\resizebox{0.2125\linewidth}{!}{\input{circles.tex}}
    	\resizebox{0.2125\linewidth}{!}{

\begin{tikzpicture}[x=1pt,y=0.45pt]
\definecolor{fillColor}{RGB}{255,255,255}
\path[use as bounding box,fill=fillColor,fill opacity=0.00] (0,35) rectangle (165,320);
\begin{scope}
\path[clip] ( 49.20, 61.20) rectangle (155.47,312.15);
\definecolor{drawColor}{RGB}{0,0,0}

\path[draw=drawColor,line width= 0.4pt,line join=round,line cap=round] ( 53.63,189.26) --
	( 53.87,190.33) --
	( 54.37,192.29) --
	( 54.86,194.10) --
	( 55.35,195.81) --
	( 55.84,197.43) --
	( 56.33,198.98) --
	( 56.83,200.48) --
	( 57.32,201.92) --
	( 57.81,203.33) --
	( 58.30,204.69) --
	( 58.79,206.02) --
	( 59.29,207.32) --
	( 59.78,208.58) --
	( 60.27,209.82) --
	( 60.76,211.04) --
	( 61.25,212.23) --
	( 61.75,213.39) --
	( 62.24,214.54) --
	( 62.73,215.67) --
	( 63.22,216.77) --
	( 63.71,217.86) --
	( 64.21,218.93) --
	( 64.70,219.99) --
	( 65.19,221.03) --
	( 65.68,222.05) --
	( 66.17,223.06) --
	( 66.67,224.06) --
	( 67.16,225.04) --
	( 67.65,226.01) --
	( 68.14,226.96) --
	( 68.63,227.91) --
	( 69.13,228.84) --
	( 69.62,229.76) --
	( 70.11,230.67) --
	( 70.60,231.56) --
	( 71.09,232.45) --
	( 71.59,233.33) --
	( 72.08,234.20) --
	( 72.57,235.05) --
	( 73.06,235.90) --
	( 73.55,236.74) --
	( 74.05,237.57) --
	( 74.54,238.39) --
	( 75.03,239.20) --
	( 75.52,240.00) --
	( 76.01,240.80) --
	( 76.51,241.59) --
	( 77.00,242.37) --
	( 77.49,243.14) --
	( 77.98,243.90) --
	( 78.47,244.65) --
	( 78.97,245.40) --
	( 79.46,246.14) --
	( 79.95,246.88) --
	( 80.44,247.60) --
	( 80.93,248.32) --
	( 81.43,249.04) --
	( 81.92,249.74) --
	( 82.41,250.44) --
	( 82.90,251.13) --
	( 83.39,251.82) --
	( 83.89,252.50) --
	( 84.38,253.18) --
	( 84.87,253.84) --
	( 85.36,254.51) --
	( 85.86,255.16) --
	( 86.35,255.81) --
	( 86.84,256.45) --
	( 87.33,257.09) --
	( 87.82,257.73) --
	( 88.32,258.35) --
	( 88.81,258.97) --
	( 89.30,259.59) --
	( 89.79,260.20) --
	( 90.28,260.81) --
	( 90.78,261.41) --
	( 91.27,262.00) --
	( 91.76,262.59) --
	( 92.25,263.18) --
	( 92.74,263.76) --
	( 93.24,264.33) --
	( 93.73,264.90) --
	( 94.22,265.46) --
	( 94.71,266.02) --
	( 95.20,265.43) --
	( 95.70,264.22) --
	( 96.19,263.00) --
	( 96.68,261.76) --
	( 97.17,260.51) --
	( 97.66,259.25) --
	( 98.16,257.97) --
	( 98.65,256.68) --
	( 99.14,255.38) --
	( 99.63,254.06) --
	(100.12,252.73) --
	(100.62,251.39) --
	(101.11,250.04) --
	(101.60,248.67) --
	(102.09,247.29) --
	(102.58,245.90) --
	(103.08,244.50) --
	(103.57,243.09) --
	(104.06,241.67) --
	(104.55,240.24) --
	(105.04,238.79) --
	(105.54,237.34) --
	(106.03,235.87) --
	(106.52,234.40) --
	(107.01,232.91) --
	(107.50,231.42) --
	(108.00,229.91) --
	(108.49,228.40) --
	(108.98,226.87) --
	(109.47,225.34) --
	(109.96,223.80) --
	(110.46,222.25) --
	(110.95,220.69) --
	(111.44,219.12) --
	(111.93,217.54) --
	(112.42,215.96) --
	(112.92,214.36) --
	(113.41,212.76) --
	(113.90,211.15) --
	(114.39,209.53) --
	(114.88,207.91) --
	(115.38,206.27) --
	(115.87,204.63) --
	(116.36,202.98) --
	(116.85,201.33) --
	(117.34,199.66) --
	(117.84,197.99) --
	(118.33,196.31) --
	(118.82,194.63) --
	(119.31,192.93) --
	(119.80,191.23) --
	(120.30,189.53) --
	(120.79,187.82) --
	(121.28,186.10) --
	(121.77,184.37) --
	(122.26,182.64) --
	(122.76,180.90) --
	(123.25,179.15) --
	(123.74,177.40) --
	(124.23,175.65) --
	(124.72,173.88) --
	(125.22,172.11) --
	(125.71,170.34) --
	(126.20,168.56) --
	(126.69,166.77) --
	(127.18,164.98) --
	(127.68,163.18) --
	(128.17,161.38) --
	(128.66,159.57) --
	(129.15,157.75) --
	(129.64,155.93) --
	(130.14,154.11) --
	(130.63,152.28) --
	(131.12,150.44) --
	(131.61,148.60) --
	(132.10,146.76) --
	(132.60,144.91) --
	(133.09,143.05) --
	(133.58,141.19) --
	(134.07,139.33) --
	(134.56,137.46) --
	(135.06,135.58) --
	(135.55,133.70) --
	(136.04,131.82) --
	(136.53,129.93) --
	(137.02,128.04) --
	(137.52,126.14) --
	(138.01,124.24) --
	(138.50,122.33) --
	(138.99,120.42) --
	(139.48,118.51) --
	(139.98,116.59) --
	(140.47,114.67) --
	(140.96,112.74) --
	(141.45,110.81) --
	(141.94,108.88) --
	(142.44,106.94) --
	(142.93,105.00) --
	(143.42,103.05) --
	(143.91,101.10) --
	(144.40, 99.15) --
	(144.90, 97.19) --
	(145.39, 95.23) --
	(145.88, 93.27) --
	(146.37, 91.30) --
	(146.86, 89.33) --
	(147.36, 87.35) --
	(147.85, 85.38) --
	(148.34, 83.40) --
	(148.83, 81.41) --
	(149.32, 79.43) --
	(149.82, 77.44) --
	(150.31, 75.45) --
	(150.80, 73.46) --
	(151.29, 71.47);

\end{scope}
\begin{scope}
\path[clip] (  0.00,  0.00) rectangle (180.67,361.35);
\definecolor{drawColor}{RGB}{0,0,0}

\path[draw=drawColor,line width= 0.4pt,line join=round,line cap=round] ( 53.14, 61.20) -- (151.54, 61.20);

\path[draw=drawColor,line width= 0.4pt,line join=round,line cap=round] ( 53.14, 61.20) -- ( 53.14, 55.20);

\path[draw=drawColor,line width= 0.4pt,line join=round,line cap=round] ( 72.82, 61.20) -- ( 72.82, 55.20);

\path[draw=drawColor,line width= 0.4pt,line join=round,line cap=round] ( 92.50, 61.20) -- ( 92.50, 55.20);

\path[draw=drawColor,line width= 0.4pt,line join=round,line cap=round] (112.18, 61.20) -- (112.18, 55.20);

\path[draw=drawColor,line width= 0.4pt,line join=round,line cap=round] (131.86, 61.20) -- (131.86, 55.20);

\path[draw=drawColor,line width= 0.4pt,line join=round,line cap=round] (151.54, 61.20) -- (151.54, 55.20);

\node[text=drawColor,anchor=base,inner sep=0pt, outer sep=0pt, scale=  1.00] at ( 53.14, 39.60) {\footnotesize 0.0};

\node[text=drawColor,anchor=base,inner sep=0pt, outer sep=0pt, scale=  1.00] at ( 92.50, 39.60) {\footnotesize 0.4};

\node[text=drawColor,anchor=base,inner sep=0pt, outer sep=0pt, scale=  1.00] at (131.86, 39.60) {\footnotesize 0.8};

\path[draw=drawColor,line width= 0.4pt,line join=round,line cap=round] ( 49.20, 70.49) -- ( 49.20,302.86);

\path[draw=drawColor,line width= 0.4pt,line join=round,line cap=round] ( 49.20, 70.49) -- ( 43.20, 70.49);

\path[draw=drawColor,line width= 0.4pt,line join=round,line cap=round] ( 49.20,116.97) -- ( 43.20,116.97);

\path[draw=drawColor,line width= 0.4pt,line join=round,line cap=round] ( 49.20,163.44) -- ( 43.20,163.44);

\path[draw=drawColor,line width= 0.4pt,line join=round,line cap=round] ( 49.20,209.91) -- ( 43.20,209.91);

\path[draw=drawColor,line width= 0.4pt,line join=round,line cap=round] ( 49.20,256.38) -- ( 43.20,256.38);

\path[draw=drawColor,line width= 0.4pt,line join=round,line cap=round] ( 49.20,302.86) -- ( 43.20,302.86);

\node[text=drawColor,rotate= 90.00,anchor=base,inner sep=0pt, outer sep=0pt, scale=  1.00] at ( 34.80, 70.49) {\footnotesize 0.0};

\node[text=drawColor,rotate= 90.00,anchor=base,inner sep=0pt, outer sep=0pt, scale=  1.00] at ( 34.80,116.97) {\footnotesize 0.2};

\node[text=drawColor,rotate= 90.00,anchor=base,inner sep=0pt, outer sep=0pt, scale=  1.00] at ( 34.80,163.44) {\footnotesize 0.4};

\node[text=drawColor,rotate= 90.00,anchor=base,inner sep=0pt, outer sep=0pt, scale=  1.00] at ( 34.80,209.91) {\footnotesize 0.6};

\node[text=drawColor,rotate= 90.00,anchor=base,inner sep=0pt, outer sep=0pt, scale=  1.00] at ( 34.80,256.38) {\footnotesize 0.8};

\node[text=drawColor,rotate= 90.00,anchor=base,inner sep=0pt, outer sep=0pt, scale=  1.00] at ( 34.80,302.86) {\footnotesize 1.0};

\path[draw=drawColor,line width= 0.4pt,line join=round,line cap=round] ( 49.20, 61.20) --
	(155.47, 61.20) --
	(155.47,312.15) --
	( 49.20,312.15) --
	( 49.20, 61.20);
\end{scope}
\begin{scope}
\path[clip] (  0.00,  0.00) rectangle (180.67,361.35);
\definecolor{drawColor}{RGB}{0,0,0}

\node[text=drawColor,anchor=base,inner sep=0pt, outer sep=0pt, scale=  1.00] at (102.34, 15.60) {Radius};

\node[text=drawColor,rotate= 90.00,anchor=base,inner sep=0pt, outer sep=0pt, scale=  1.00] at ( 10.80,186.67) {Depth};
\end{scope}
\begin{scope}
\path[clip] ( 49.20, 61.20) rectangle (155.47,312.15);
\definecolor{drawColor}{RGB}{0,0,255}
\path[draw=drawColor,line width= 0.4pt,line join=round,line cap=round] ( 94.71, 61.20) -- ( 94.71,312.15);
\end{scope}
\end{tikzpicture}}
    	\resizebox{0.2125\linewidth}{!}{\input{rayon.tex}}
	\end{center}
	\caption{
			Illustration of examples : (top) Sample of $n=7$ non-overlapping segments on a line, 
          (bottom left) Sample of $n=50$ parallel segments, (bottom middle) Sample of $n=100$ concentric circles and the associated population version of our \curvedepth{} as a function of the radius, (bottom right) Sample of $n=100$ star segments. For each scenario, the deepest curves are plotted in blue while darker red indicates a higher value of depth.}
	\label{fig:degenerate}
\end{figure}

\paragraph{Concentric circles.}
Let $\Cc_r$ be the circle in $\RR^2$ of center $0$ and radius $r>0$.
We define $\Xx\sim P$ as the random curve generated from the following scheme (see Figure \ref{fig:degenerate}, bottom middle-left) :
\begin{align*}
\Xx=\Cc_{R}\text{ with }R\sim \Uu[0,1].
\end{align*}
The population version of our curve depth is plotted on Figure \ref{fig:degenerate} (bottom middle-right). 
The deepest circle is the circle with a radius $r=0.425.$
It is worthwhile noting that our approach do not incorrectly lead to the deepest curve being the circle with a null radius. This being said, one may have expected the deepest curve to be the circle with radius $r=1/2.$

\subsection{Monte Carlo Approximation of the \CurveDepth}\label{ssec:simulation}

In most cases, it is not possible to get an explicit expression of our \curvedepth{} since it requires to compute for all $x\in S_\Cc$ an infinimum of the ratio $Q_n(H_{x,u})/\mu_\Cc(H_{x,u})$ over all $u\in\Ss.$ 
Section \ref{sec:Implementation} describes a Monte Carlo estimate of $D(\Cc|\Xx_1,\ldots,\Xx_n)$; see Algorithm \ref{alg:montecarlo}.
This approximation is consistent according to Theorem \ref{thm:consistency0}.
We conducted a Monte Carlo study to assess this convergence in several scenarios in Section~\ref{app:ssec:simulation} of the Supplementary Materials. Here, we only give a brief summary of these results.

\paragraph{Scheme 1 : Concentric circles.}
We consider the population of concentric circles with radius lying in the interval $(0,1)$ described in Subsection~\ref{subsec:closedform}.
For a given sample of circles $\{\Xx_1,\ldots, \Xx_n\}$, we have an explicit expression both for $D(\Cc_r|\Xx_1,\ldots,\Xx_n)$ and $D(\Cc_r|P)$, where $\Cc_r$ is the circle of radius $r\in(0,1)$. 
This example has the particularity that the functions $x\in S_\Cc \mapsto D(x|Q_n,\mu_\Cc)$ and $x\in S_\Cc \mapsto D(x|Q_P,\mu_\Cc)$ are constant over their domain. Our main findings are the following.

\textit{1. The Monte Carlo estimator of the sample \curvedepth{} converges in probability as the Monte Carlo sample size $m$ goes to infinity}; see Figure~\ref{app:fig:circles:m:pack} in the Supplementary Materials.
Monte Carlo estimates (see Algorithm \ref{alg:montecarlo}) tend in average to underestimate the sample \curvedepth{}.
Observing such a negative bias is not surprising since we aim to compute an infimum over all directions $u\in\Ss$. 
However this bias and the standard deviation depend on the value of the radius (i.e., on the position of the curve $\Cc$ w.r.t.\ the sample of curves) and they both decrease towards zero as $m$ gets large.

\textit{2. The sample \curvedepth{} converges in probability to the population \curvedepth.}
The bias and the standard deviation of the sample \curvedepth\ computed over $5,000$ replications decrease towards zero as $n$ goes to $+\infty$; see Table~\ref{app:tab:circles:n} in the Supplementary Materials.
Moreover, the standard deviation of $D(\Cc_r|\Xx_1,\ldots,\Xx_n)$ seems to be dependent on the value of the radius $r$. As expected, the Monte Carlo estimator of the population \curvedepth\ (see Algorithm~\ref{alg:montecarlo}) also converges in probability for increasing values of both $n$ and $m$. The rate of convergence of the latter is slightly smaller, with on average a greater impact on its bias than on its standard deviation.

\paragraph{Scheme 2 and scheme 3 : functional data.}
We consider two example of simulated functional data from  \citet[see paragraph 4.2.1 in][]{claeskens2014multivariate} and from \cite{cuevas2007models}.
Here we consider as unparametrized curve the collection of points,
$$
\Xx = \{ (t,\bmx(t)) \,:\, t\in[0,1]\}
$$
where $\bmx(t)$ is a continuous function from $[0,1]$ to $\RR.$
The sample processes of these example admit a symmetry around their respective mean function.
Moreover these mean functions are known (see the black curves in Figure \ref{fig:sim}). 
Notice that for these examples, we have no explicit formula for the sample \curvedepth{} and the population \curvedepth.

\textit{1. The convergence of the Monte Carlo estimate of the sample \curvedepth.}
For a given sample of curves, $\{\Xx_1,\ldots, \Xx_n\},$ we observe that the Monte Carlo estimate of $D(\Cc|\Xx_1,\ldots,\Xx_n)$ converges in probability to a constant with $m$ goes to $\infty$ and with $\min(n,m)$ goes to $\infty.$
Moreover, we don't observe an impact of the threshold $\Delta$ in the computation of the depth.

\textit{2. The most central curves are located in a neighborhood of the mean curves.}
According to the previous simulations, we estimate the Monte Carlo error of the deepest curve (shown in red in Figure~\ref{fig:sim}) and we select the curves whose depth belongs to the $97.5\%$-confidence (shown in orange in Figure~\ref{fig:sim}).
These curves appear to be located reasonably close to the center of the stochastic process (the black mean curves).

\begin{figure}[H]
	\begin{center}
		\includegraphics[keepaspectratio=true,width=0.425\linewidth, trim  =10mm 15mm 10mm 10mm, clip]{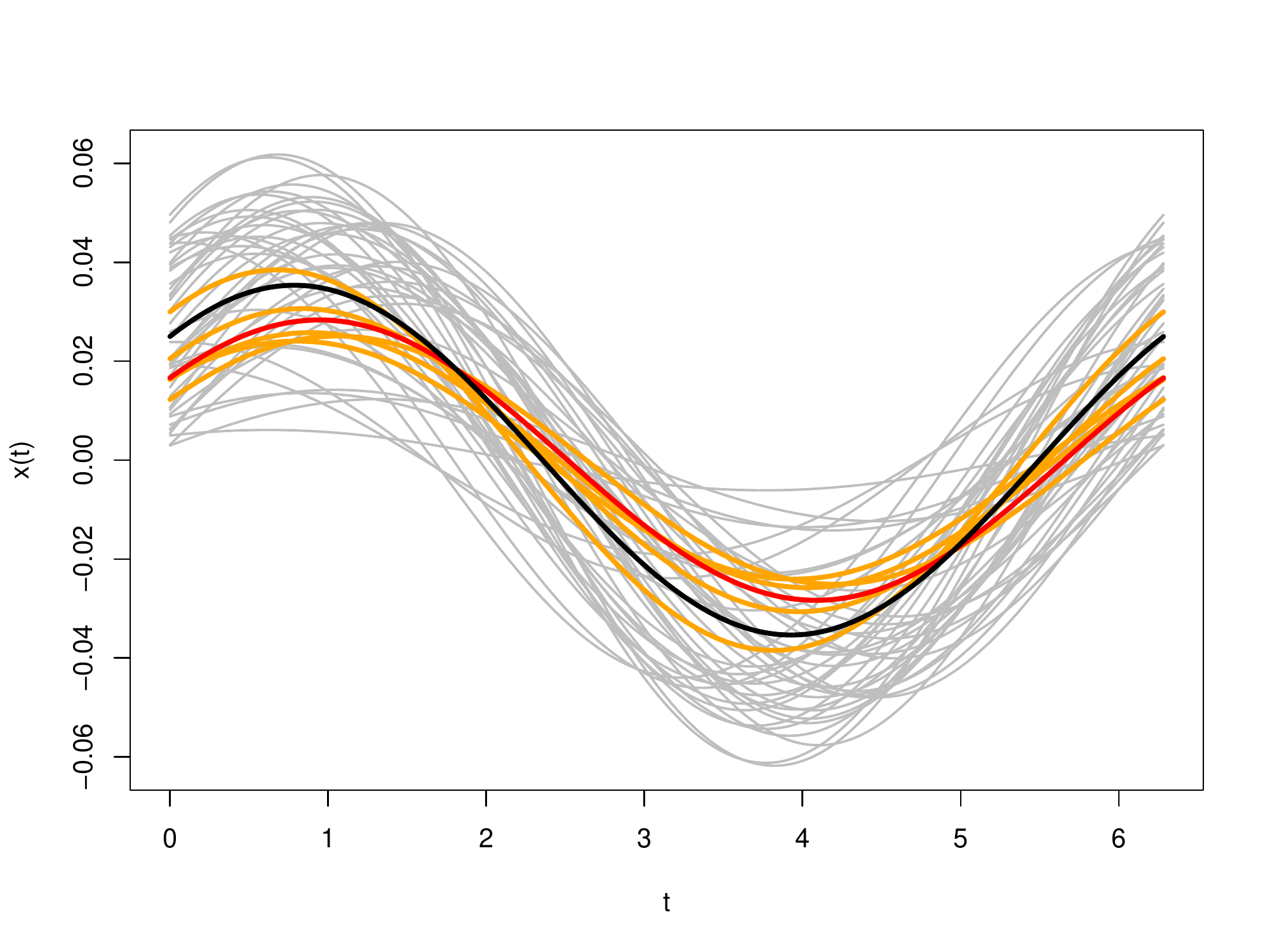} \quad
   		\includegraphics[keepaspectratio=true,width=0.425\linewidth, trim  =10mm 15mm 10mm 10mm, clip]{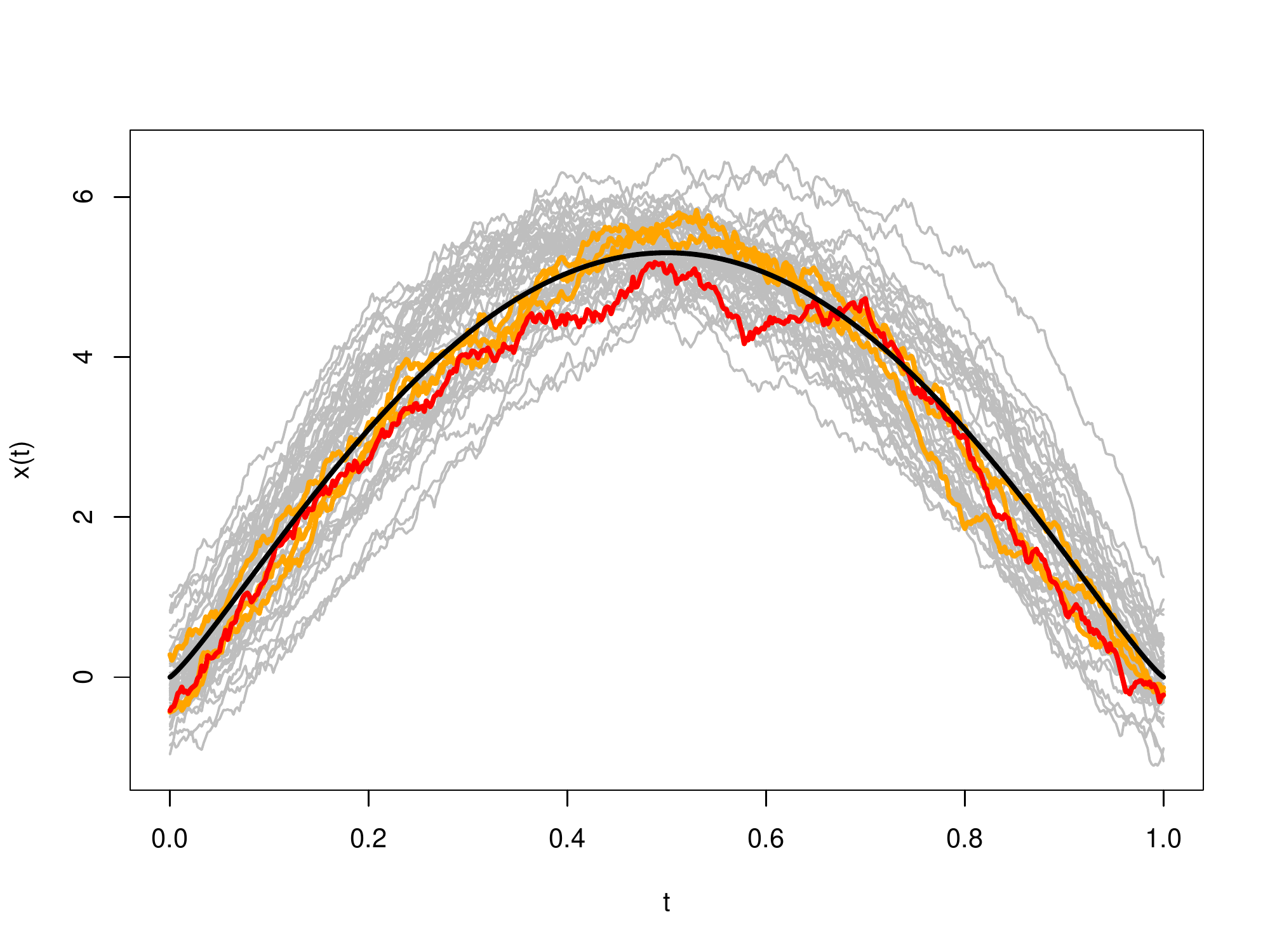}
		
	\end{center}
	\caption{Two samples of $n=50$ gray curves generated according to two simulation schemes:  (left) the one proposed by \cite{claeskens2014multivariate}, (right) the one proposed by~\cite{cuevas2007models}. The deepest curve, computed using our new method (taking $\Delta_m=1/(10m^\alpha)$ with $m=500$ and $\alpha=1/8$), is plotted in red; its depth is $0.744$ (left) and $0.752$ (right). The mean curves, plotted in black, have a depth of $0.727$ (left) and $0.571$ (right). The curves having a depth lying in the $0.975$ Monte Carlo confidence interval are plotted in orange.
}
	\label{fig:sim}
\end{figure}

\subsection{Outlier detection}

We explored the ability of our \curvedepth{} to detect outlying observations in a sample of curves, on two visual examples; see Figure~\ref{fig:outl}.

\begin{figure}[H]
	\begin{center}
		\includegraphics[keepaspectratio=true,width=0.425\linewidth, trim  =10mm 15mm 10mm 10mm, clip]{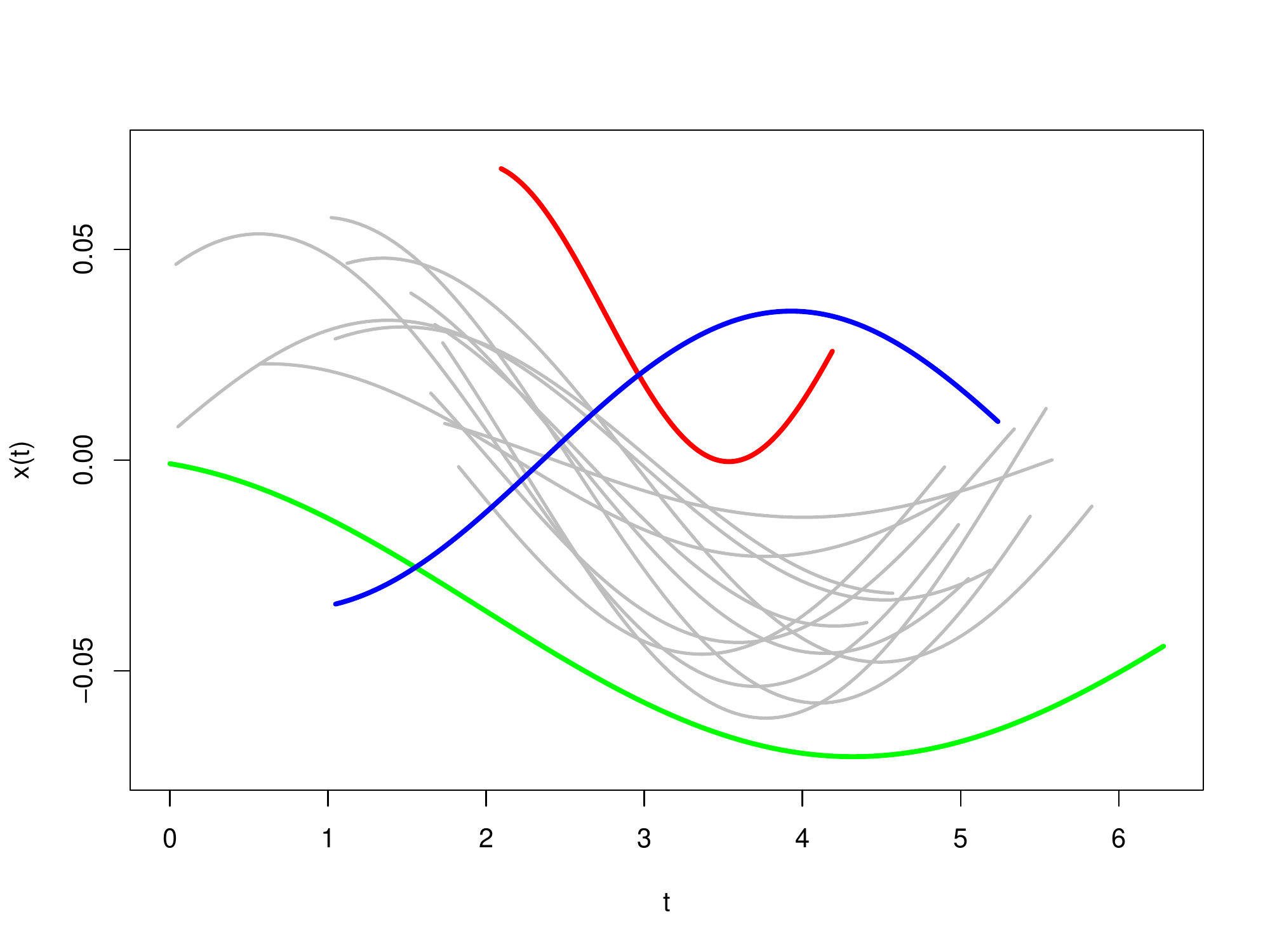} \quad
   		\includegraphics[keepaspectratio=true,width=0.425\linewidth, trim  =10mm 15mm 10mm 10mm, clip]{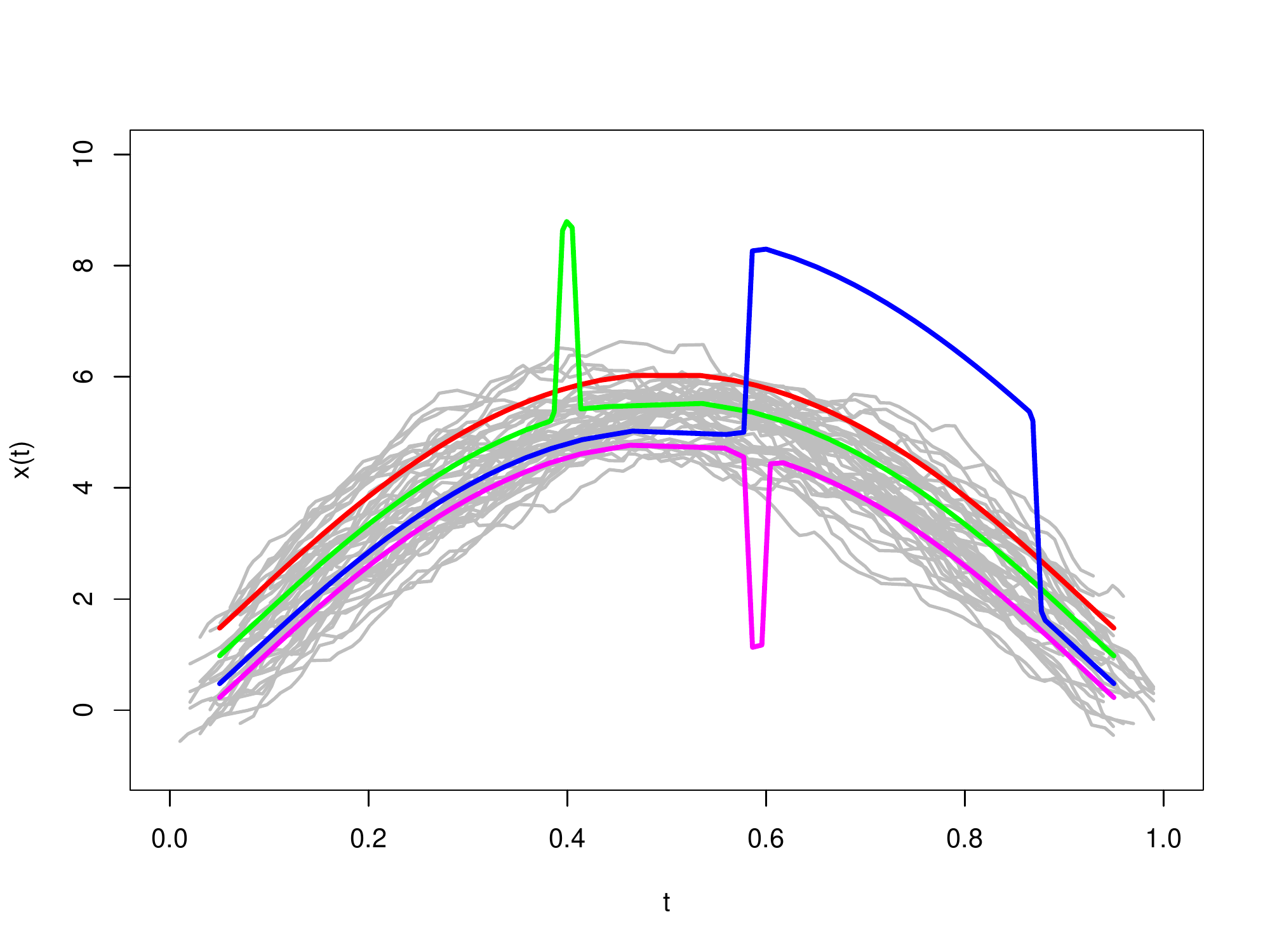}
		
	\end{center}
	\caption{Curve data with outliers. Left: $15$ smooth curves with two shift-shape (red and green) and one purely shape (blue) outliers. Right: $50$ oscillating curves with one smooth shift outlier (red), one isolated outlier (green), one persistent outlier (blue), and one isolated outlier with a negative peak (magenta).
}
	\label{fig:outl}
\end{figure}

For the first scenario, we generated a sample of $12$ 2D-curves according to the following random generating process (inspired from \citep[Section~4.2.1]{claeskens2014multivariate}):
$$
\Xx = \left\{ \left(x, A_1\sin(2\pi x) + A_2\cos(2\pi x)\right);\, x\in[L,U]\right\}\,,
$$
where $A_1,A_2\sim\Uu[0,0.05]$, $L\sim\Uu[0,\frac{2\pi}{3}]$, and $U\sim\Uu[\frac{4\pi}{3},2\pi]$, all independent. We then added three outlier curves: two (red and green) are shift-shape outliers, while the third one (blue) is a purely shape outlier \citep[for a taxonomy of multivariate functional outliers see, e.g.,][]{hubert2015}.

For the second scenario, we generated a sample of $46$ 2D-curves according to the following random generating process:
$$
\Yy = \left\{ \left(x, 30(1 - x)^{1 + W}x^{1.5 - W} + U_x\right);\,  x\in [L,U]\right\}
$$
where $\{U_t;\, t\in[0,1]\}$ is a zero mean stationary Gaussian process with covariance function $t \mapsto  0.2 e^{-\frac{1}{0.3}|t|}$, $W\sim\Uu[0,0.5]$, $L\sim\Uu[0,0.1]$, $U\sim\Uu[0.9,1]$, all independent. We then added four outliers: a shift outlier (red), an isolated outlier (green), a persistent outlier (blue), and to be fair to the other depth measures, another isolated outlier with a negative peak (magenta). All the outliers (slightly) differ in shape. 

Plots of the ordered depths of the curves in these two samples, computed using mSBD, saPRJ and MFHD (using an arc-length parametrization) as well as our \curvedepth{} are displayed on Figure~\ref{fig:outlscores}.

Our \curvedepth{} is the only one able to correctly identify the three outliers added to the $\Xx$ curves. saPRJ and MFHD only identify the two shift-shape outliers while mSBD identifies just one. None of these three other depth methods sees the pure shape outlier.

\begin{figure}[H]
\begin{center}
	\begin{tabular}{cccc}
	{\small \quad mSBD} & {\small saPRJ} & {\small MFHD} & {\small {\CurveDepth}} \\
	\includegraphics[height=2.45cm,trim = 0cm 1.5cm 1cm 1.725cm,clip=true,page=2]{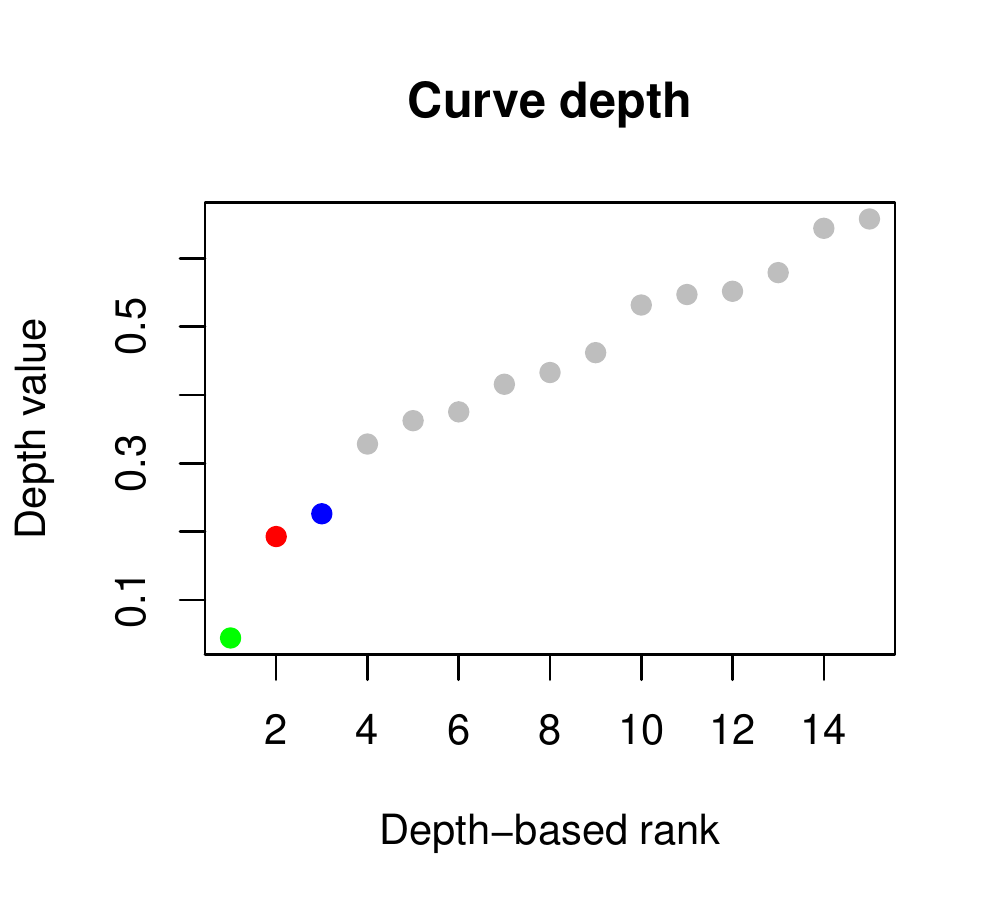}&\includegraphics[height=2.45cm,trim = 1cm 1.5cm 1cm 1.725cm,clip=true,page=3]{pic-depthOutlA.pdf}&\includegraphics[height=2.45cm,trim = 1cm 1.5cm 1cm 1.725cm,clip=true,page=4]{pic-depthOutlA.pdf}&\includegraphics[height=2.45cm,trim = 1cm 1.5cm 1cm 1.725cm,clip=true,page=1]{pic-depthOutlA.pdf}\\
	\includegraphics[height=2.83cm,trim = 0cm 0.5cm 1cm 1.725cm,clip=true,page=2]{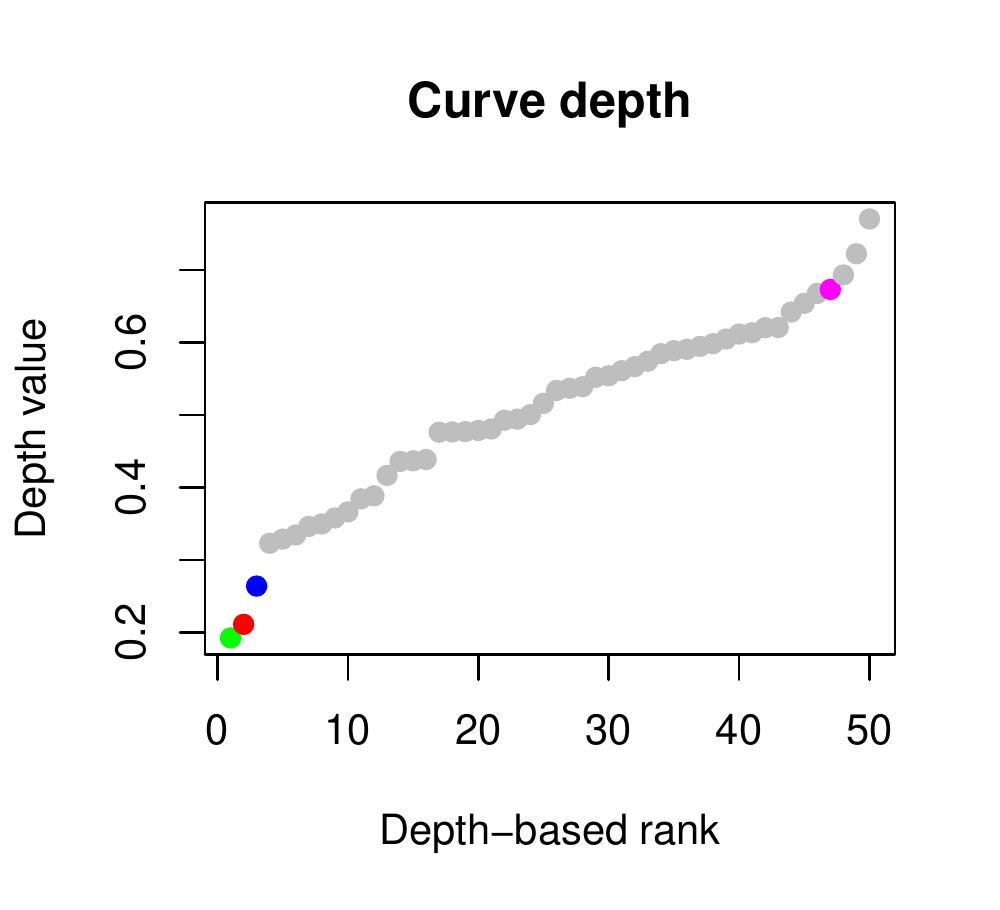}&\includegraphics[height=2.83cm,trim = 1cm 0.5cm 1cm 1.725cm,clip=true,page=3]{pic-depthOutlB.pdf}&\includegraphics[height=2.83cm,trim = 1cm 0.5cm 1cm 1.725cm,clip=true,page=4]{pic-depthOutlB.pdf}&\includegraphics[height=2.83cm,trim = 1cm 0.5cm 1cm 1.725cm,clip=true,page=1]{pic-depthOutlB.pdf}\\
	\end{tabular}
\end{center}
	\caption{Values of depth in ascending order for two samples drawn from $\Xx$ (top) and $\Yy$ (bottom) contaminated with a few outlying observations (colored as in Figure~\ref{fig:outl}).}
	\label{fig:outlscores}
\end{figure}

For the second scenario, mSBD and MFHD are not able to identify any outlier added to the $\Yy$ curves, while saPRJ only fails to find the shift outlier. Our \curvedepth\ perfectly distinguishes all outliers but the negative isolated outlier which is assigned a rather large value of depth. This failure was to be expected since a building block of our approach is to use halfspaces, as illustrated in Figure~\ref{fig:illustration}. Somehow, a similar behaviour was observed when we slightly underestimated the depth of the deepest curve in the concentric circles example of Section~\ref{subsec:closedform}.

\section{Application to Real Data}\label{sec:realdata}

\subsection{Application to the Older Australian Twins Study Data}\label{ssec:realdata-brain}


White matter (WM) in the brain is made up of long myelinated axonal fibers generally regarded as passive routes connecting several gray matter regions (the ones containing neurons) to permit flow of information across them.
In such tissue, water tends to diffuse mostly along the direction of the fibers. The ratio of axial and radial movement is called fractional anisotropy. Diffusion Tensor Magnetic Resonance Imaging (DTI) measures the motion of hydrogen atoms within water in all three dimensions.

We had access to DTI scans from the Older Australian Twins Study (OATS), an ongoing longitudinal study investigating genetic and environmental factors and their associations and interactions in healthy brain ageing and ageing-related neurocognitive disorders for people aged 65+ years \citep{Sachdev2009}.  
The DTI data considered in the current article were drawn from 34 twin pairs, aged between 67.3 and 84.2 years. Eleven of the 34 pairs were dizygotic (DZ) twin pairs (i.e., non-identical twins sharing 50\% of their genes) and 23 monozygotic (MZ) twin pairs (i.e., identical twins sharing 100\% of their genes). 
Using MRtrix software \citep{tournier2012mrtrix} to extract corticospinal fiber tracts from the DTI scans (an operation called tractography), the resulting data sets were two bundles of around $1,000$ fibers each per subject (see Figure~\ref{fig:boxplot:brain}; left). Other pre-processing steps are described in the Supplementary Materials, Section~\ref{sec:preprocess.DTI}.

It is quite a challenging task to visualize brain fibers. Consequently this information is difficult to use in a clinical environment (e.g., for surgery planning). New tools are thus needed for efficiently representing these {\it tractograms}. An interesting approach by \cite{Mercier2018} consists in progressively simplifying tractograms by grouping similar fibers into a specific geometric representation.

We believe that the depth for curves developed here can also help neuroscientists to visualize a 3D bundle of fibers. One can follow the approach adopted by \cite{Mercier2018} by grouping curves according to their depths. 
It is also possible to assign a transparency value to each curve equal or proportional to its depth value (see Figure~\ref{fig:boxplot:brain}; left) to inspect the whole bundle at once. 
Similarly, one can instead assign a low transparency value to the least deep curves in order to visualize outliers (see Figure~\ref{fig:boxplot:brain}; right). Outliers can eventually be removed before further statistical analyses are conducted.


\begin{figure}[H]
	\begin{center}
	\includegraphics[width=0.375\textwidth]{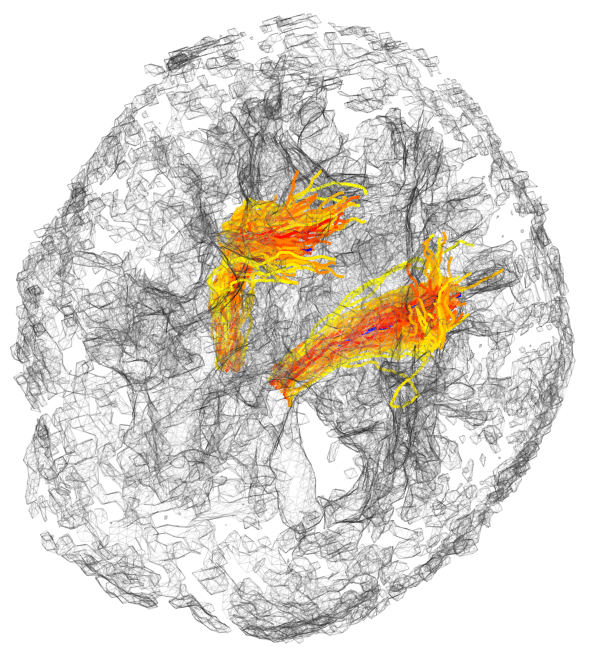}
	\quad
	\includegraphics[width=0.475\textwidth]{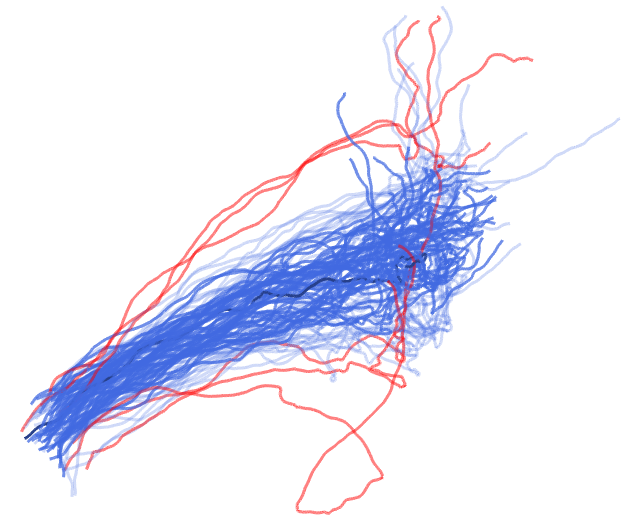}
	\end{center}
	\caption{Illustrations of the ordering of the white matter fibers for one subject using our \curvedepth.  
	(Left) Whole brain fiber data set for one twin; see \url{http://biostatisticien.eu/DataDepthFig8} for an interactive 3D applet. 
	(Right) Result of bundle ordering for the right side of the brain only. 
	We only display the first $100$ fibers in the data set, among which 6 are identified as outliers and colored in red (their depth is less than 0.075).
	}
	\label{fig:boxplot:brain}
\end{figure}


We demonstrate on Figure~\ref{fig:boxplot:brain:comparison} that our \curvedepth\ approach gives better results in terms of outlier detection than four other existing depth measures that can be applied to three-dimensional curves. These multivariate functional depth-based competitors (with an arc-length parametrization) are the \emph{modified multivariate band depth} (mMBD), SBD, mSBD, saPRJ, and our \curvedepth. We observe that the 15 fibers having the lowest depth as computed by our \curvedepth{} are located outside of the bundle, while there are fibers with a low depth value inside this bundle for the competitors. Furthermore the range of depth values associated to our \curvedepth{} is the widest among the 5 methods considered here. And there is a clearer separation between the depths of outliers and the other fibers. Notice that it is hard to distinguish outliers using SBD and mSBD and that the bottom fiber which is clearly outside the bundle is not detected as an outlier by SBD. Finally, mMBD and saPRJ detect fewer outliers than our \curvedepth, some of them being the same as the ones detected by our approach.

\begin{figure}[H]
\begin{center}
	\begin{tabular}{ccccc}
	{\small mMBD} & {\small SBD} & {\small mSBD} & {\small saPRJ} & {{\CurveDepth}} \\
		\includegraphics[width=0.155\textwidth]{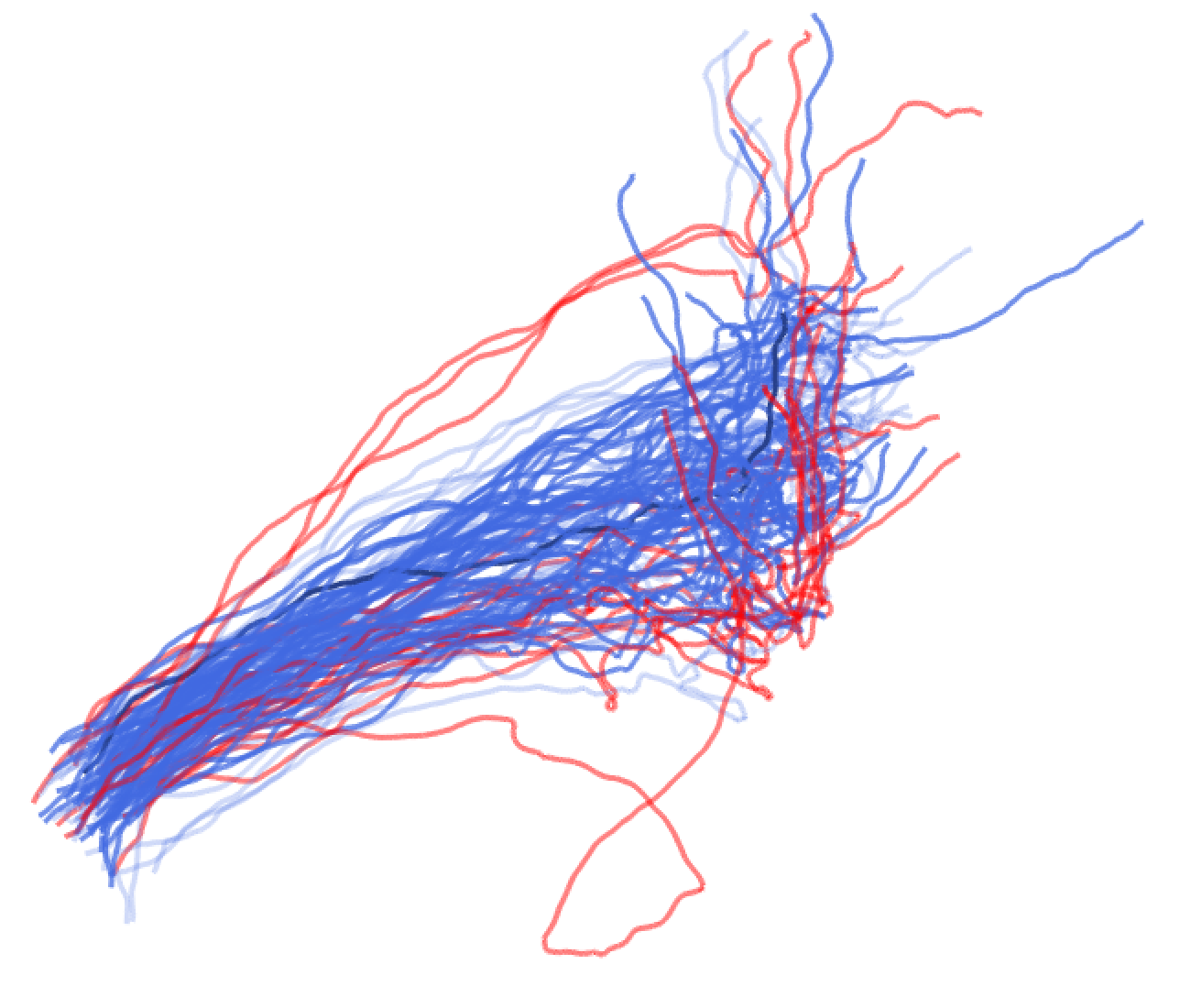}&\includegraphics[
		width=0.155\textwidth]{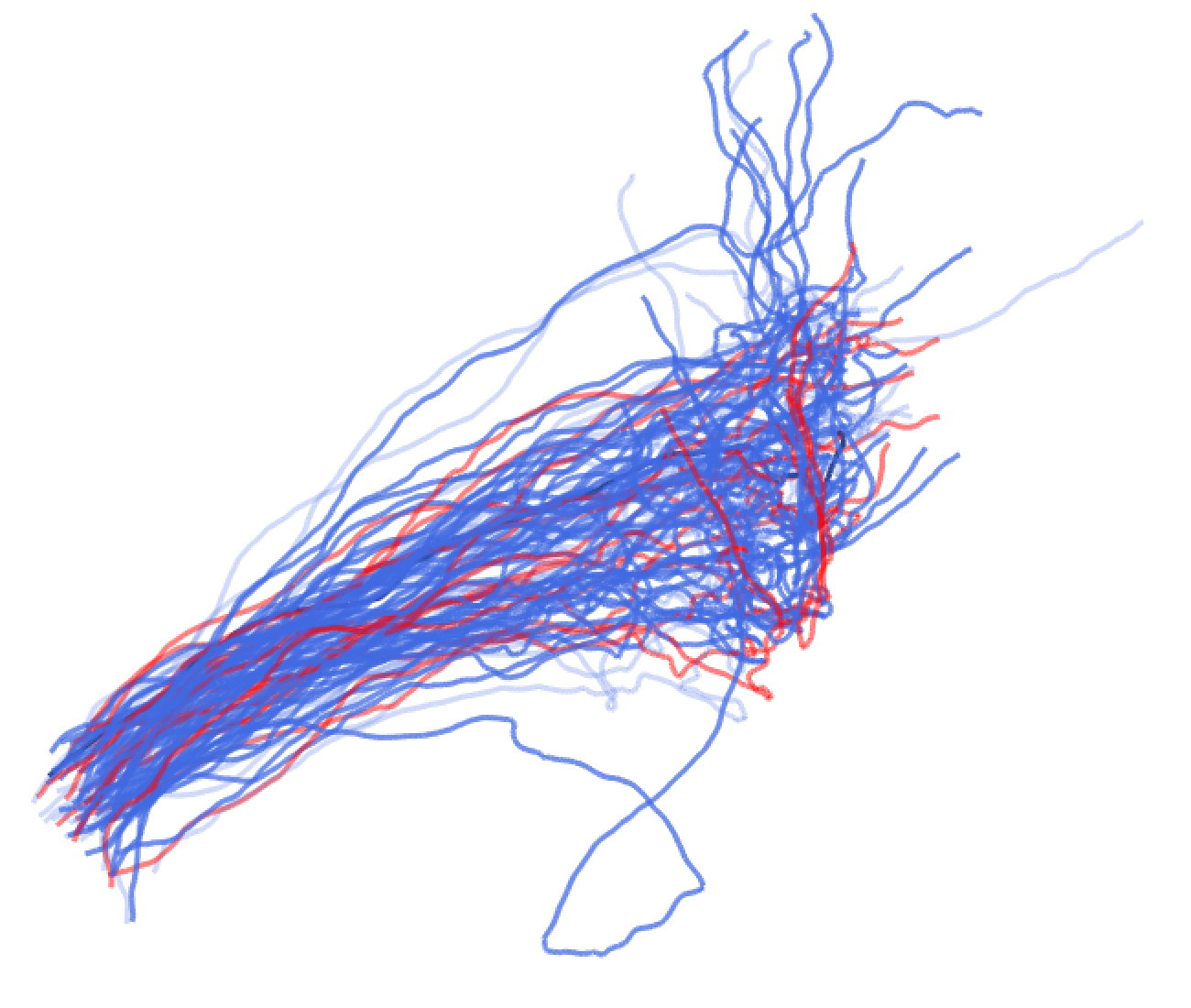}&\includegraphics[width=0.155\textwidth,page=3]{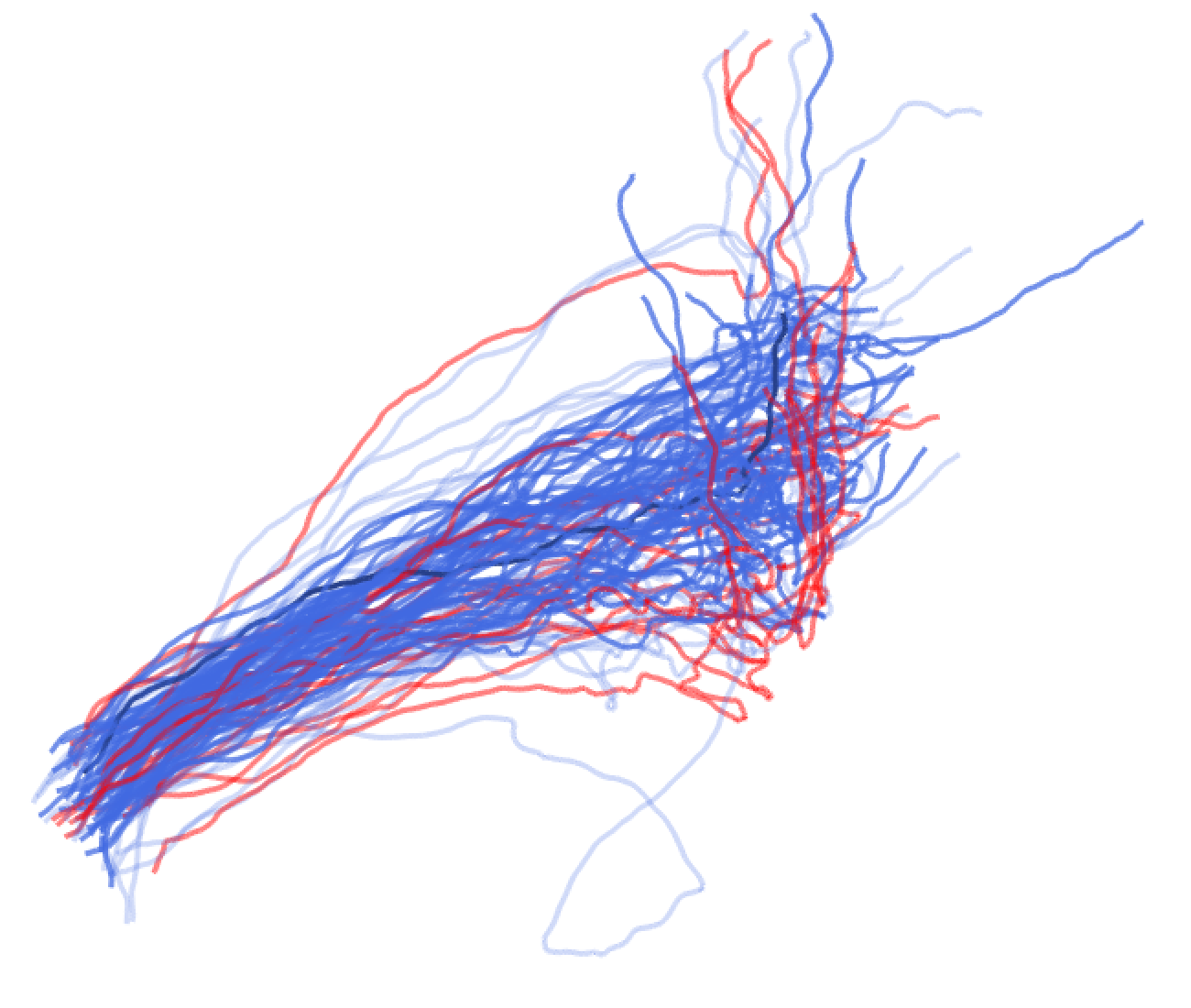}&\includegraphics[width=0.155\textwidth]{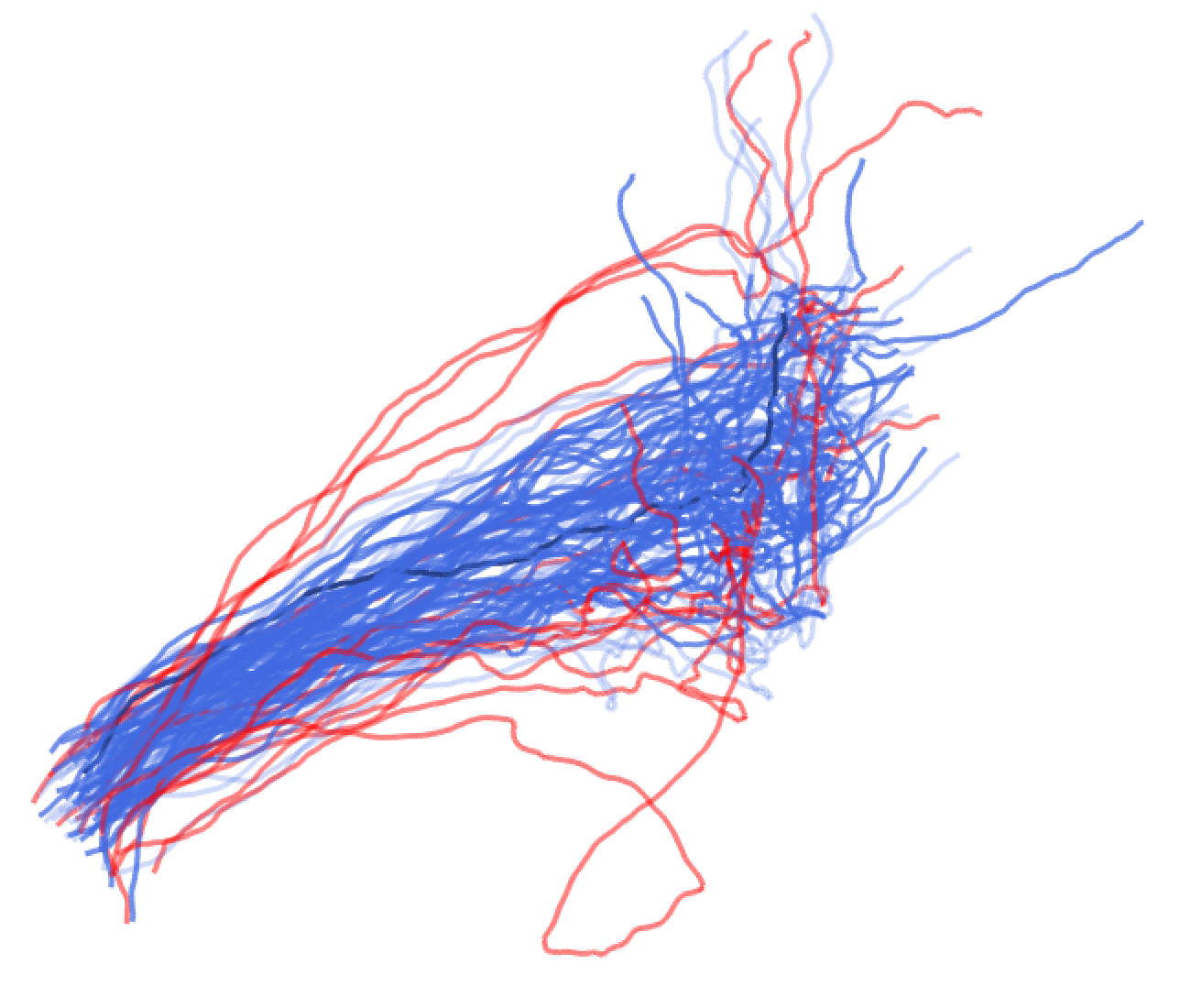}&\includegraphics[width=0.155\textwidth]{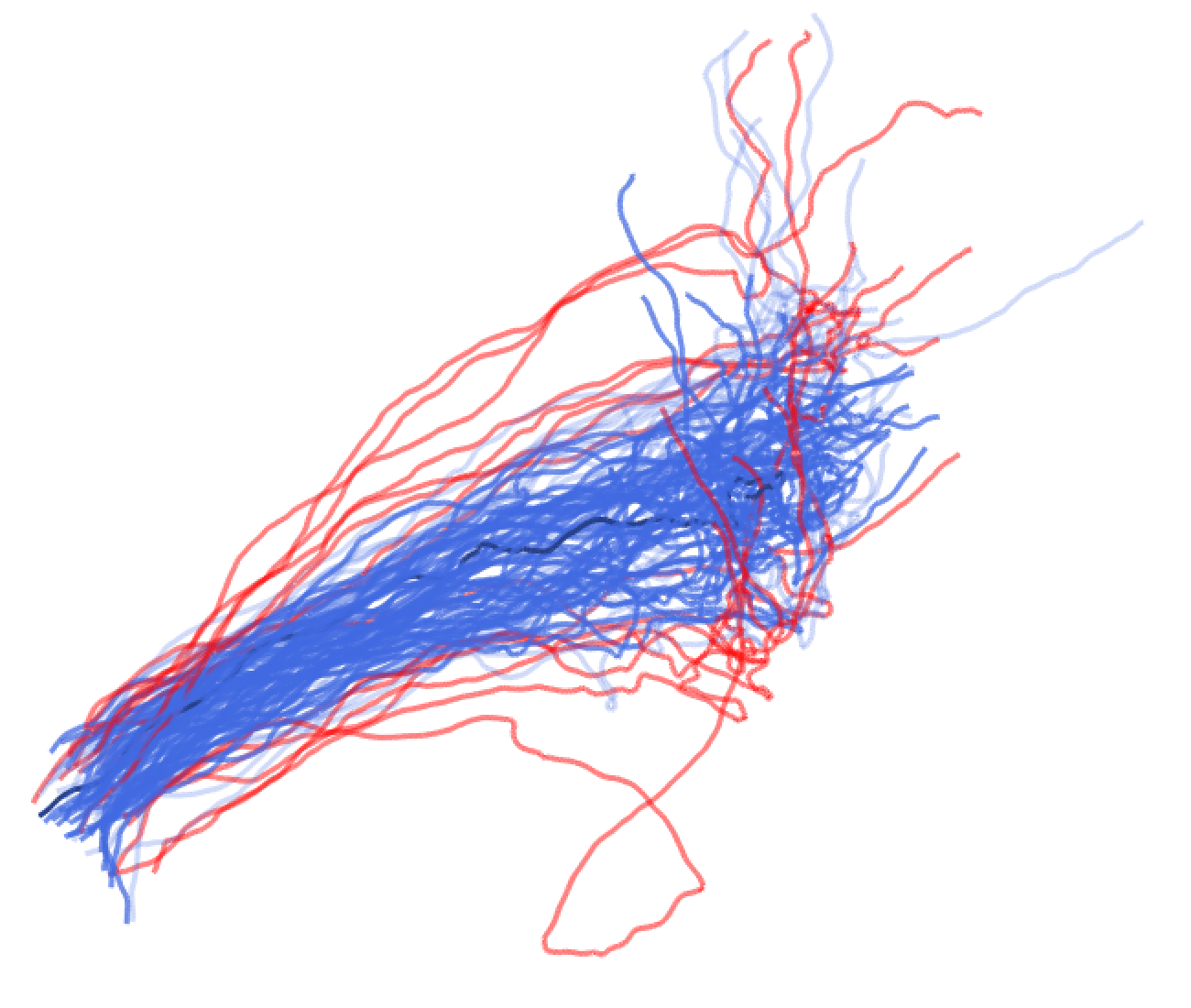} \\
		\includegraphics[width=0.155\textwidth]{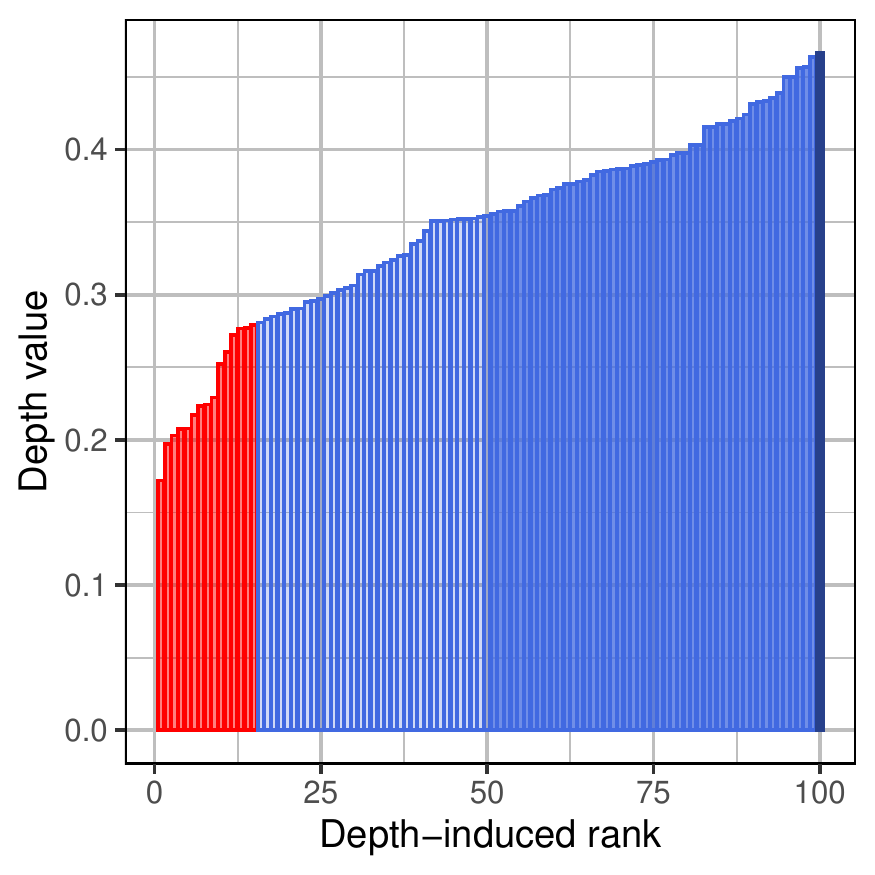}&\includegraphics[
		width=0.155\textwidth]{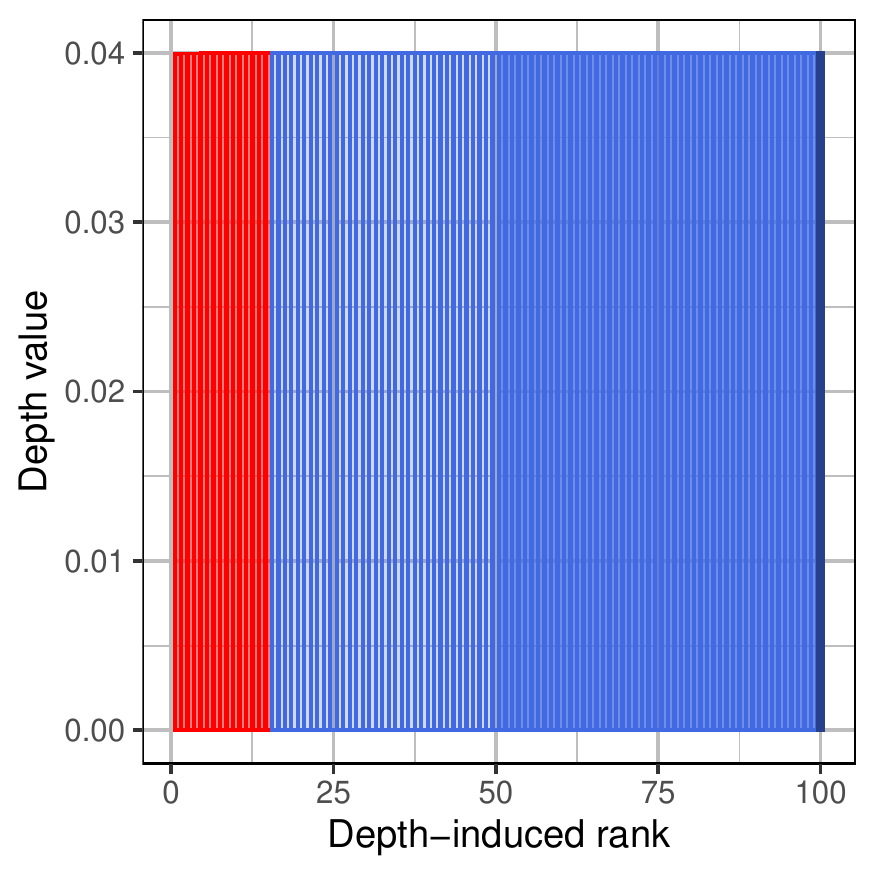}&\includegraphics[width=0.155\textwidth]{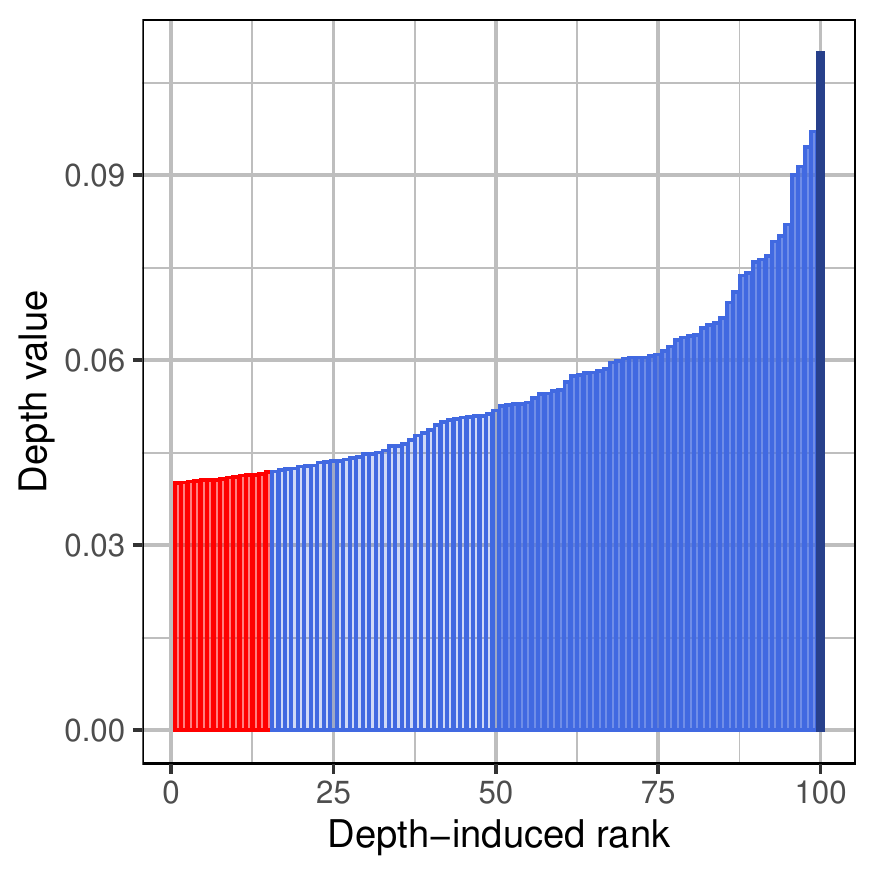}&\includegraphics[width=0.155\textwidth]{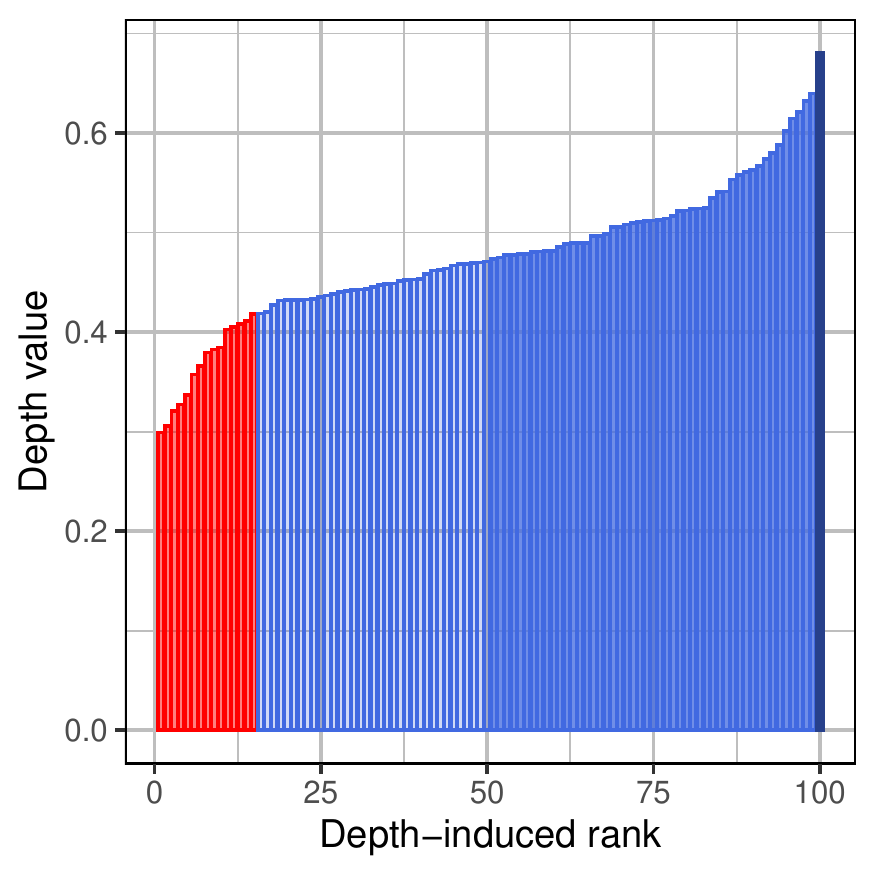}&\includegraphics[width=0.155\textwidth]{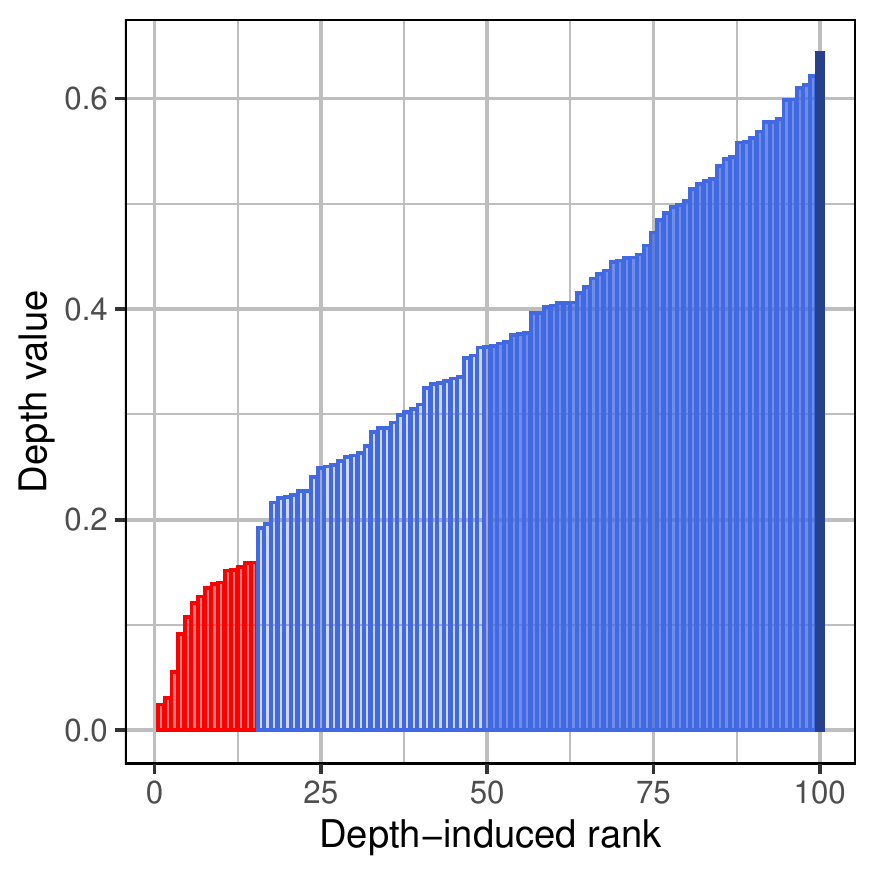}
	\end{tabular}
\end{center}
	\caption{Top: Curve boxplots for a sample of $100$ WM-fibers from the right side of the brain. The set of $100$ curves is partitioned into $15/35/49/1$ curves: $15$ with the smallest value of depth, considered as outliers (red), $35$ with a larger value of depth, considered as outer curves (light blue), $49$ with the largest value of depth, considered as the more central curves (blue), and finally the deepest curve of all (dark blue). Depth methods are (from left to right) mMBD, SBD, mSBD, saPRJ, and our \emph{\curvedepth}. Bottom: Corresponding depth-ranked histograms.}\label{fig:boxplot:brain:comparison}
\end{figure}

\subsubsection{Curve Registration}\label{ssec:curve-registration}

Image registration is one of the main pre-processing steps in any statistical analysis of brain imaging data. Its aim is to geometrically match up image volumes of brain structures, for example for structure localization or difference detection. Broadly, this consists in finding rotation and translation parameters that will minimize a certain cost function (e.g., least squares or mutual information) which quantifies how well aligned two images are. Image registration is an active field of research since existing algorithms still have defects, for example they might suffer from directionality bias \citep{Dojat2014}. All standard libraries dedicated to the analysis of fMRI, MRI and DTI brain imaging data contain an image registration procedure; see, e.g., {\tt RNiftyReg} \citep{Clayden2017} in the R software \citep{RSoftware2019}.

Here, our approach to register the bundles at hand is to first extract one single best representative curve for each bundle (namely the deepest one; see the dark blue fiber in Figure~\ref{fig:boxplot:brain}) and then to match these representatives as best as possible. 
In the twin DTI data set considered here, our aim was to register 68 bundles, of about 1,000 fibers each, located in the left hemisphere say. To reach this goal, we first computed the deepest fiber within each bundle, noted thereafter $d_j$, $j=1,\ldots,68$. We then computed the deepest fiber among $d_1,...,d_{68}$, which is denoted $D$. Finally, for each bundle $j$, we found the rigid transformation (in terms of rotation, translation and centering) that minimizes the distance \eqref{eq:distance} between the curves $d_j$ and $D$. Registration is then achieved by applying each one of these rigid transformations to all the fibers within the corresponding bundle. This process is illustrated in Figure~\ref{fig:registration:brain}.

\begin{figure}
{\small
	\begin{center}
	\begin{tabular}{ccc}
		Subject $104$ & Subject $110$ & Subject $131$ \\
	\includegraphics[width=0.28\textwidth]{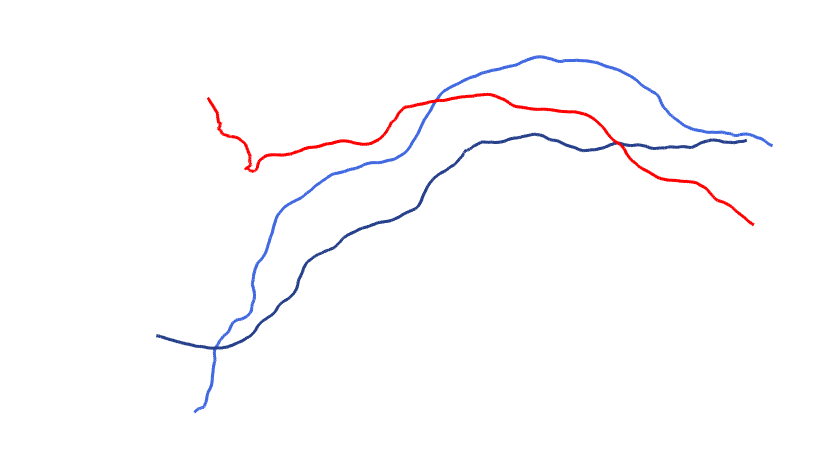} & \includegraphics[width=0.28\textwidth]{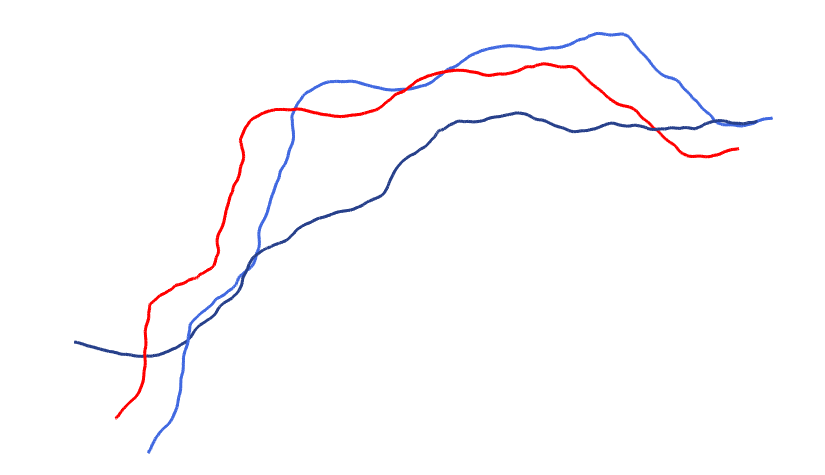} & \includegraphics[width=0.28\textwidth]{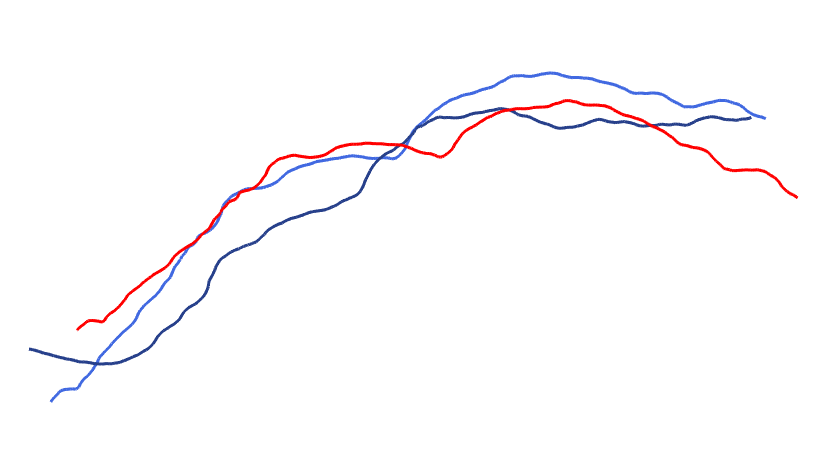} \\
	\end{tabular}
	\end{center}
}
	\caption{Illustration of the registration process. Subject 235 is the reference subject (i.e., the subject whose deepest curve is $D$, the \emph{deepest of all}). The red and the dark blue curves are the deepest curves (before registration) of the given subject and of subject $235$, respectively. We bring the red curve as close as possible, in terms of distance \eqref{eq:distance}, to the dark blue curve. The transformed curve (i.e, after registration) is the light blue curve. Distances from each red curve (i.e., before registration) and from each light blue curve (i.e., after registration) to the deepest of all are $10.271$ and $3.245$ (for subject $104$), $4.539$ and $3.395$ (for subject $110$), and $3.329$ and $2.084$ (for subject $131$), respectively.}
	\label{fig:registration:brain}
\end{figure}

\subsubsection{A Statistical Comparison Between MZ and DZ Twins}\label{ssec:ddplotsCompar}

After having performed curve registration, comparison of the empirical distributions is possible. Given two distributions $P_0,P_1\in\mathcal{P}$ on the space of curves $\Bg$, we consider the mapping that yields the $DD$-plot \citep{liu1999}: 
\begin{equation}
	\Bg\rightarrow[0,1]^2\,,\,\Cc\mapsto \left(D(\Cc|{P_0}),D(\Cc|{P_1})\right).
\end{equation}
For two random samples of curves $\{\Xx_1^{(0)},\ldots,\Xx_{n_0}^{(0)}\}$ and $\{\Xx_1^{(1)},\ldots,\Xx_{n_1}^{(1)}\}$ from $P_0$ and $P_1$ respectively, 
the empirical $DD$-plot can be constructed as:
\begin{eqnarray}
\bigcup_{k=0,1}\left\{\left({D}(\Xx^{(k)}_{i}|\Xx_1^{(0)},\ldots,\Xx_{n_0}^{(0)}),{D}(\Xx^{(k)}_{i}|\Xx_1^{(1)},\ldots,\Xx_{n_1}^{(1)})\right)\,,i=1,\ldots ,n_k\right\}. \nonumber 
\end{eqnarray}


For six pairs of twins, $DD$-plots are presented in Figure~\ref{fig:ddplots:twins}, whose contribution is twofold. First, as a proof of concept, the empirical distributions of two MZ twins are very similar since the points are concentrated around the diagonal of the $DD$-plot while those of DZ twins differ \citep[see also][]{liu1999}. Second, this closeness of the MZ twins underlines the high quality of the curve registration using the the geometrical matching (Section~\ref{ssec:curve-registration}) in the sense that (each of) these two bundles of curves are meant to substantially coincide.

Recently, neuroscientists have discovered that several structures in the brain are influenced by our genetics; see, e.g., \citep{Wen2016}. This suggests a genetically-driven spatial organisation of corticospinal brain fibers. This biological hypothesis can be statistically confirmed by applying the depth-based Wilcoxon testing procedure introduced by \cite{liu1993quality} and further described in \citep{lopez2009concept}. For each pair of twins, we considered $500$ fibers selected at random from the first twin as a reference sample. We then used $50$ fibers from each twin (selected at random among the remaining fibers) to calculate the test statistic value. The $p$-values, computed using the normal asymptotic null distribution given by \cite{Lehmann1975}, are provided in Figure~\ref{fig:ddplots:twins}. They are small for DZ twins and large for MZ twins, a statistical evidence in favour of this biological hypothesis.

\begin{figure}[H]
{\small
	\begin{center}
	\begin{tabular}{ccc}
		\qquad$105$ \textit{vs.} $205$ (DZ) & \qquad$120$ \textit{vs.} $220$ (DZ) & \qquad$132$ \textit{vs.} $232$ (DZ) \\
		\includegraphics[width=0.275\textwidth,trim=0mm 0mm 10mm 15mm,clip=true,page=1]{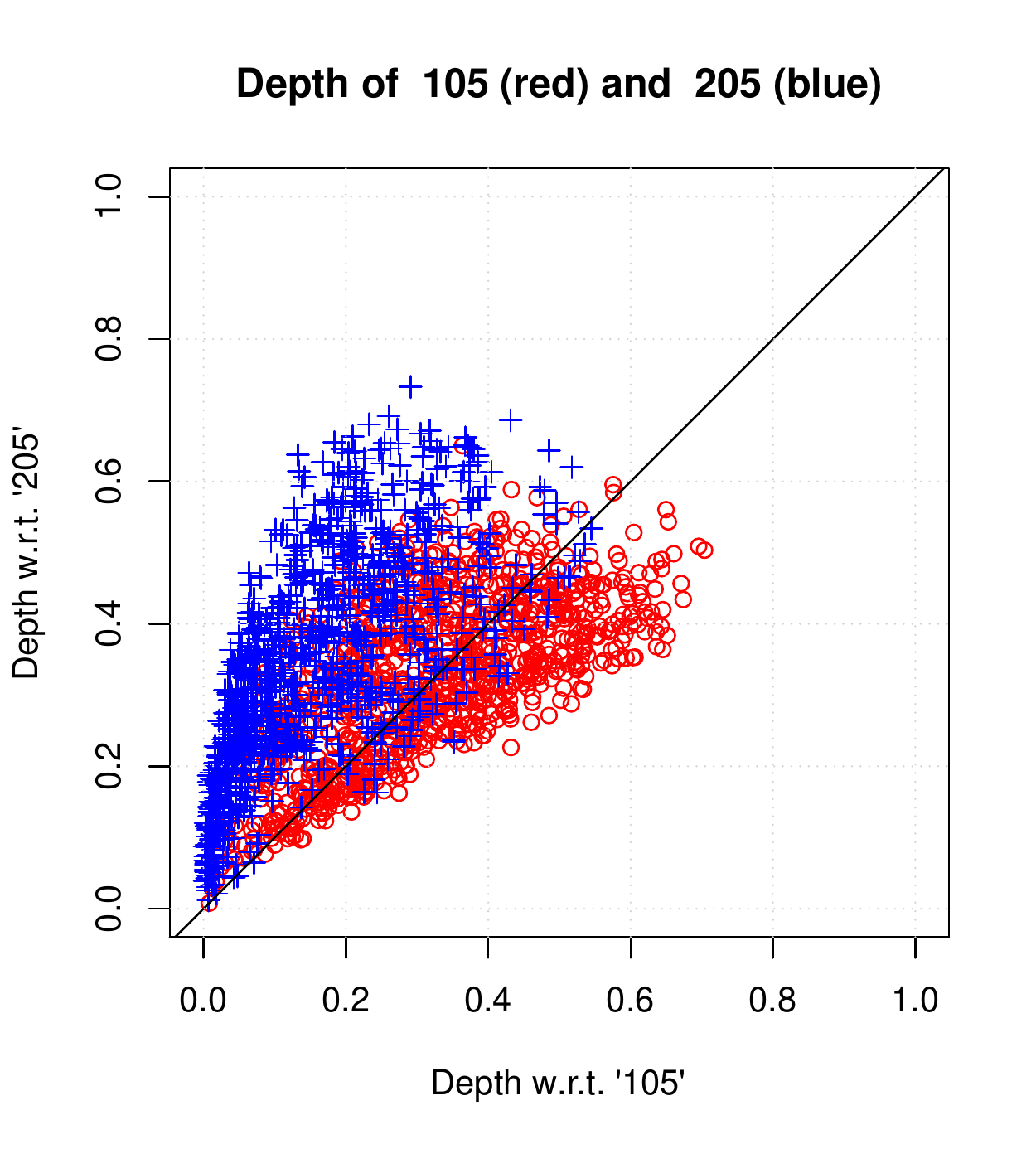} & \includegraphics[width=0.275\textwidth,trim=0mm 0mm 10mm 15mm,clip=true,page=1]{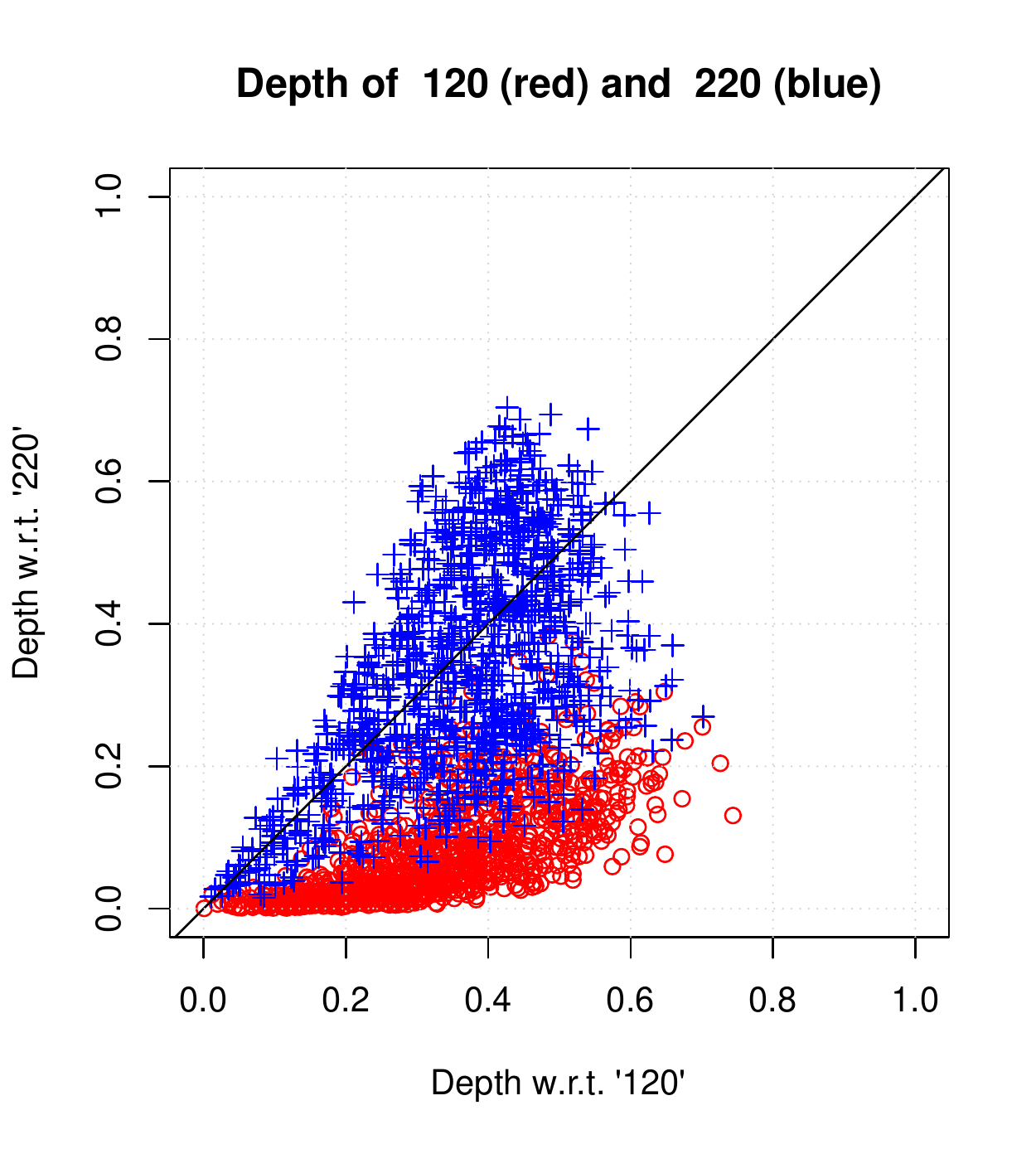} & \includegraphics[width=0.275\textwidth,trim=0mm 0mm 10mm 15mm,clip=true,page=1]{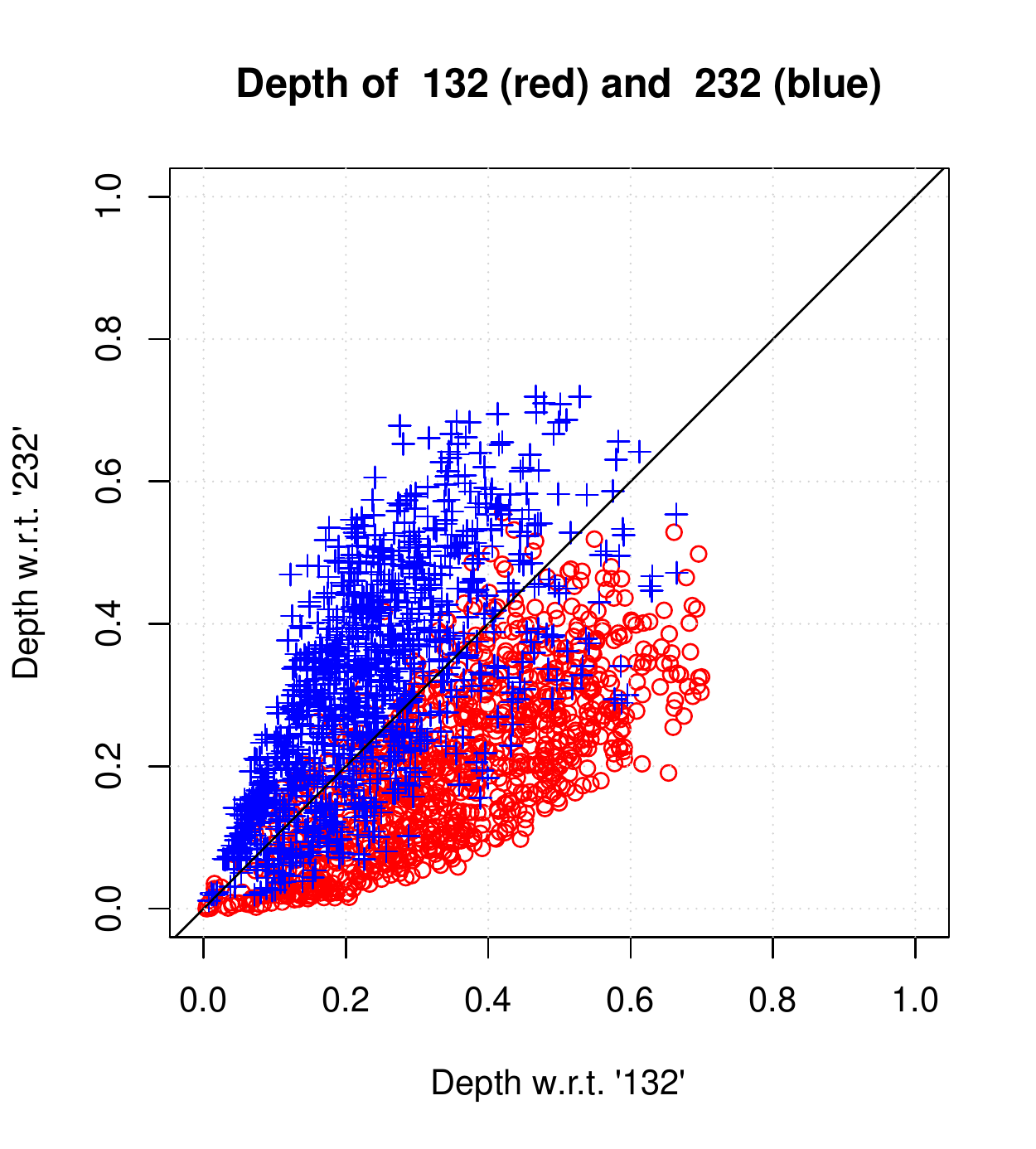} \\
		 \qquad$104$ \textit{vs.} $204$ (MZ) & \qquad$106$ \textit{vs.} $206$ (MZ) & \qquad$131$ \textit{vs.} $231$ (MZ) \\
		\includegraphics[width=0.275\textwidth,trim=0mm 0mm 10mm 15mm,clip=true,page=2]{ddplots-twinsDMZ_120-220-104-204.pdf} & \includegraphics[width=0.275\textwidth,trim=0mm 0mm 10mm 15mm,clip=true,page=2]{ddplots-twinsDMZ_105-205-106-206.pdf} & \includegraphics[width=0.275\textwidth,trim=0mm 0mm 10mm 15mm,clip=true,page=2]{ddplots-twinsDMZ_132-232-131-231.pdf}
	\end{tabular}
	\end{center}}
	\caption{$DD$-plots of six pairs of twins (red circles for 1xx; blue ``+'' signs for 2xx) with associated $p$-values in parenthesis. (Top) three DZ, namely: $105$ and $205$ ($1-e9$), $120$ and $220$ ($0.017$), $132$ and $232$ ($0.003$). (Bottom) three MZ: namely $104$ and $204$ ($0.733$), $106$ and $206$ ($0.366$), $131$ and $231$ ($0.366$).}
	\label{fig:ddplots:twins}
\end{figure}

\subsection{Classification Algorithms for Unparameterized Curves}\label{sec:classif}

Automatic clustering of white matter fibers is an important sub-task in understanding brain connectivity and integrity, see e.g., \cite{jin2014automatic}.
With this motivation in mind, we extend to the context of curves two classification algorithms: the $DD$-plot procedure \citep{li2012} and the unsupervised depth-based clustering \citep{jornsten2004clustering}.
To illustrate the performance of these two procedures when used in conjunction with our \curvedepth, we start by considering the problem of recognition of hand-written digits from the now famous training MNIST data set\footnote{\url{http://yann.lecun.com/exdb/mnist/}}. This is done in a supervised way in Section~\ref{ssec:appclass} and in an unsupervised way in Section \ref{ssec:appclust}. Finally, in Section~\ref{ssec:appclustdti} we produce an unsupervised clustering of the DT-MRI brain fibers in a data set previously studied by \citet[Section 4]{kurtek2012}.


\subsubsection{Supervised Classification of Hand-written Digits}\label{ssec:appclass}

As a proof-of-concept, we show that our \curvedepth\ can be used to produce a linear classifier able to discriminate between two classes of images representing the digits `0' and~`1'. We illustrate its results on $100$ observations from each class. The original MNIST images have been preprocessed in order to transform them into pixelized curves (i.e., each pixel of an image should have at most two neighboring pixels on the vertical, horizontal and diagonal directions). A few examples of the preprocessed digit images are plotted in Figure~\ref{fig:digits}.


\def\wb{0.135}
\newcommand{\trimval}{39mm 30mm 24mm 30mm}
\begin{figure}[H]
\begin{center}
\frame{\includegraphics[keepaspectratio=true,scale=\wb,trim=30.3mm 27.8mm 20mm 22.8mm,clip=true,page=1]{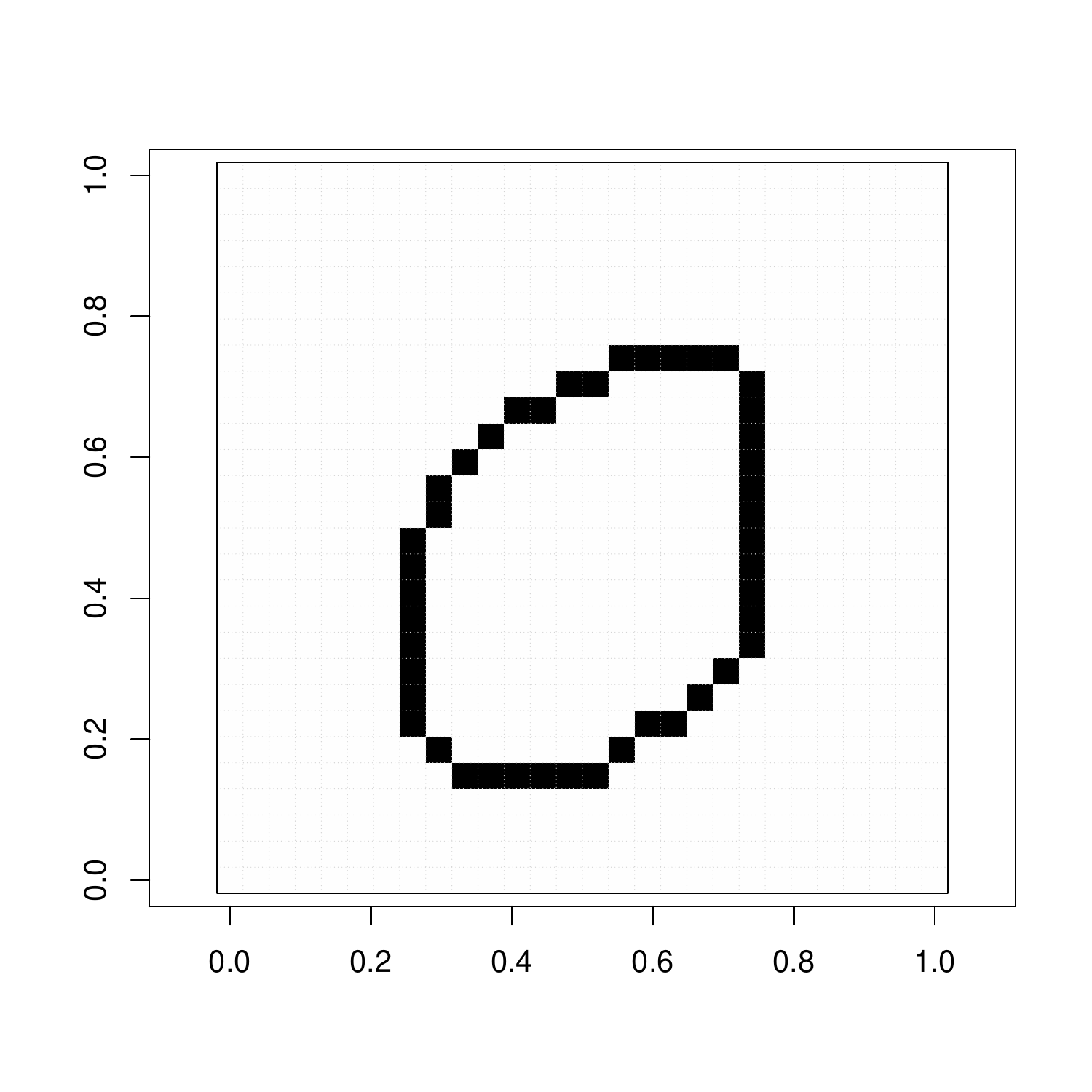}} \frame{\includegraphics[keepaspectratio=true,scale=\wb,trim=30.3mm 27.8mm 20mm 22.8mm,clip=true,page=2]{digits-zero.pdf}} \frame{\includegraphics[keepaspectratio=true,scale=\wb,trim=30.3mm 27.8mm 20mm 22.8mm,clip=true,page=3]{digits-zero.pdf}} \frame{\includegraphics[keepaspectratio=true,scale=\wb,trim=30.3mm 27.8mm 20mm 22.8mm,clip=true,page=4]{digits-zero.pdf}} \frame{\includegraphics[keepaspectratio=true,scale=\wb,trim=30.3mm 27.8mm 20mm 22.8mm,clip=true,page=5]{digits-zero.pdf}} \frame{\includegraphics[keepaspectratio=true,scale=\wb,trim=30.3mm 27.8mm 20mm 22.8mm,clip=true,page=6]{digits-zero.pdf}} \frame{\includegraphics[keepaspectratio=true,scale=\wb,trim=30.3mm 27.8mm 20mm 22.8mm,clip=true,page=7]{digits-zero.pdf}} \frame{\includegraphics[keepaspectratio=true,scale=\wb,trim=30.3mm 27.8mm 20mm 22.8mm,clip=true,page=8]{digits-zero.pdf}} \frame{\includegraphics[keepaspectratio=true,scale=\wb,trim=30.3mm 27.8mm 20mm 22.8mm,clip=true,page=9]{digits-zero.pdf}} \frame{\includegraphics[keepaspectratio=true,scale=\wb,trim=30.3mm 27.8mm 20mm 22.8mm,clip=true,page=10]{digits-zero.pdf}}\vspace{0.35ex}\\
\frame{\includegraphics[keepaspectratio=true,scale=\wb,trim=30.3mm 27.8mm 20mm 22.8mm,clip=true,page=1]{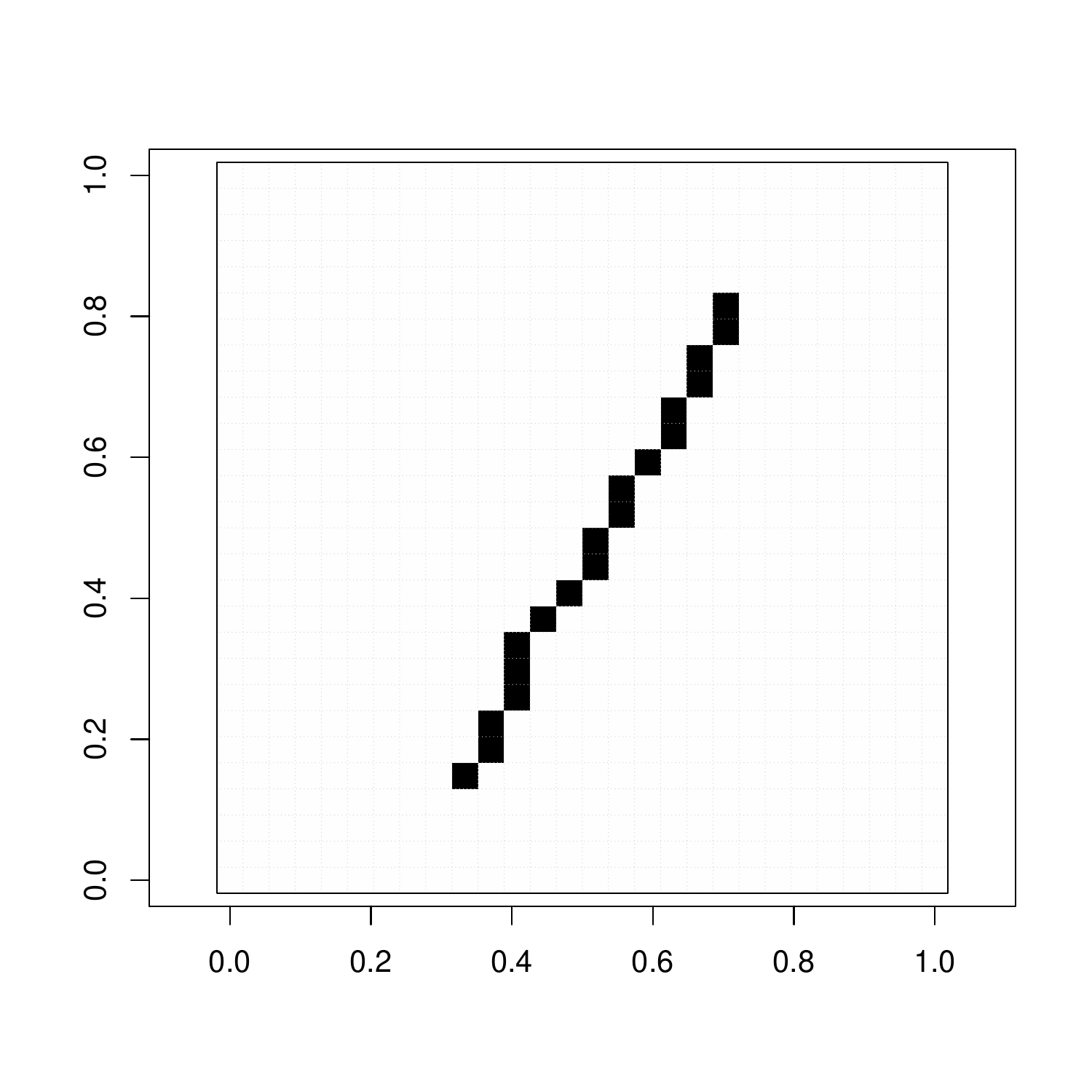}} \frame{\includegraphics[keepaspectratio=true,scale=\wb,trim=30.3mm 27.8mm 20mm 22.8mm,clip=true,page=2]{digits-one.pdf}} \frame{\includegraphics[keepaspectratio=true,scale=\wb,trim=30.3mm 27.8mm 20mm 22.8mm,clip=true,page=3]{digits-one.pdf}} \frame{\includegraphics[keepaspectratio=true,scale=\wb,trim=30.3mm 27.8mm 20mm 22.8mm,clip=true,page=4]{digits-one.pdf}} \frame{\includegraphics[keepaspectratio=true,scale=\wb,trim=30.3mm 27.8mm 20mm 22.8mm,clip=true,page=15]{digits-one.pdf}} \frame{\includegraphics[keepaspectratio=true,scale=\wb,trim=30.3mm 27.8mm 20mm 22.8mm,clip=true,page=6]{digits-one.pdf}} \frame{\includegraphics[keepaspectratio=true,scale=\wb,trim=30.3mm 27.8mm 20mm 22.8mm,clip=true,page=7]{digits-one.pdf}} \frame{\includegraphics[keepaspectratio=true,scale=\wb,trim=30.3mm 27.8mm 20mm 22.8mm,clip=true,page=8]{digits-one.pdf}} \frame{\includegraphics[keepaspectratio=true,scale=\wb,trim=30.3mm 27.8mm 20mm 22.8mm,clip=true,page=9]{digits-one.pdf}} \frame{\includegraphics[keepaspectratio=true,scale=\wb,trim=30.3mm 27.8mm 20mm 22.8mm,clip=true,page=10]{digits-one.pdf}}
\end{center}
\caption{First ten images from each of the two classes of curve-preprocessed MNIST digits.}
\label{fig:digits}
\end{figure}



A $DD$-plot built using our \curvedepth\ (see Section~\ref{ssec:ddplotsCompar}) can be exploited to classify curves. Indeed, this task is greatly simplified in the $DD$-plot space since a rule separating two classes needs only to be found in a Euclidean space of dimension two.
For the sample consisting of 100 `0's and 100 `1's, we applied the $DD\alpha$-procedure (an iterative heuristics in the $DD$-plot; see \cite{lange2014fast} for a detailed description). The resulting separation rule is plotted in solid green on Figure~\ref{fig:ddplotsep}.

\begin{figure}[t]
	\begin{center}
		\includegraphics[scale=0.675,trim = 0mm 0mm 0mm 0mm, clip]{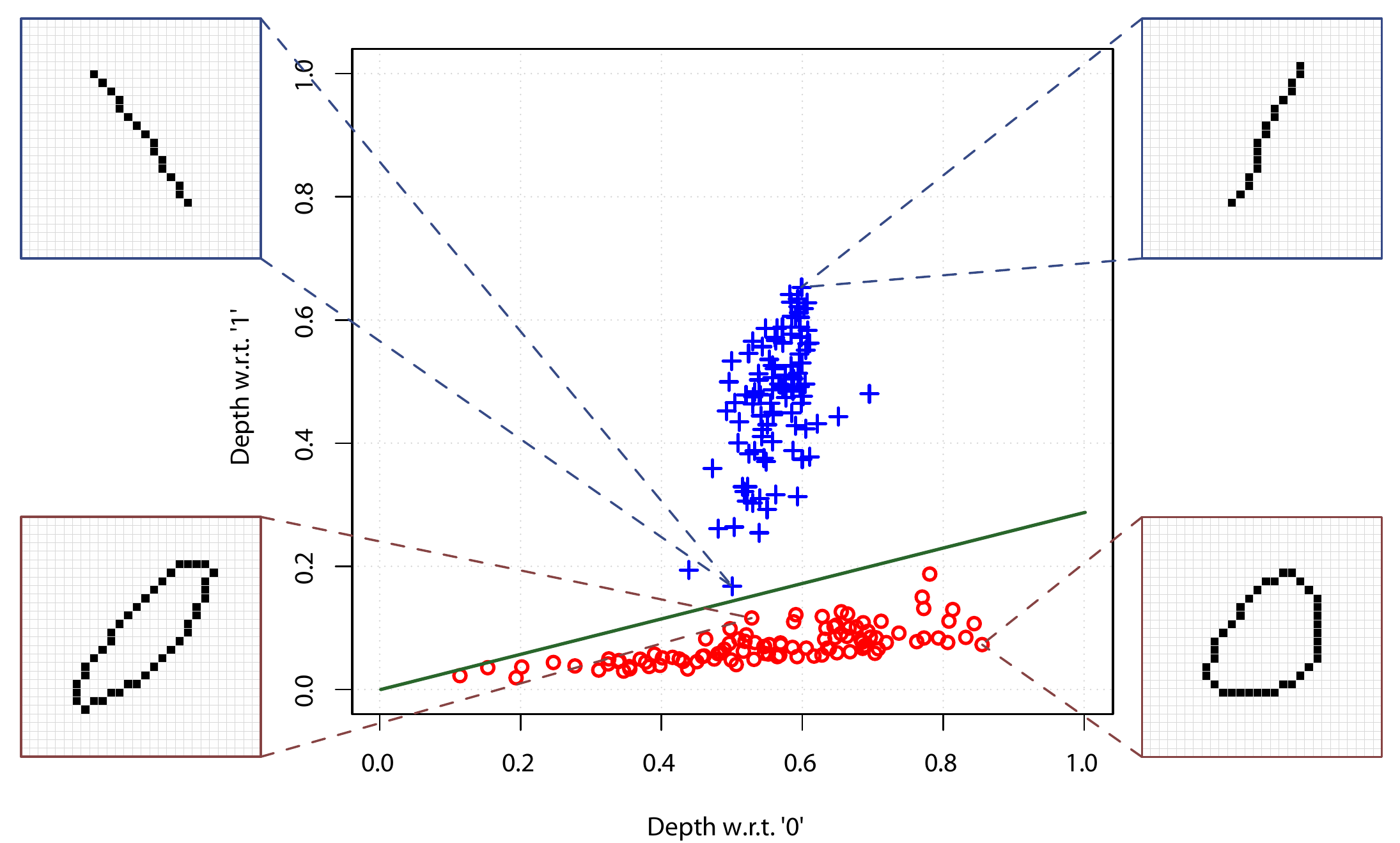}
	\end{center}
	\caption{$DD\alpha$-classifier for a subsample of 100 `0's and 100 `1's taken from the MNIST data set. For each one of the two classes, `magnified' observations correspond to the one having the highest depth value in its class (on the right-hand side), and to the one lying closest to the opposite class (on the left-hand side).}
	\label{fig:ddplotsep}
\end{figure}

One can observe (for this particular sample) the perfect separation of the two classes by a linear rule. 
In Figure~\ref{fig:ddplotsep}, on the right-hand side of the $DD$-plot `magnified' observations (`1' and `0') having the highest depth in each class are pictured; they are trivially well classified.
On the left-hand side of the $DD$-plot, we paint the most doubtful observations, i.e., those lying closest to a member of the opposite class.
The `1' here corresponds to the observation with the lowest depth in the sample of `1's; this can also be regarded as an atypical observation.
The situation is different with the `0' lying closest to the set of `1's. 
It has a rather average depth in its own class, but due to its oblong shape resembles a `1' and thus has a high depth value in the class of `1's relative to its depth in the class of `0's.

\subsubsection{Unsupervised Classification of Hand-written Digits}\label{ssec:appclust}

\cite{jornsten2004clustering} proposes the DDClust algorithm for clustering.
This non-parametric method is based on both distance-based distortion (captured by the silhouette width) and geometry of the curves (captured by the relative depth). 
We propose the original method with slight modifications and we illustrate it on the MNIST-digits data. 
Let $\{\Cc_1,\ldots,\Cc_n\}$ be an observed sample of curves from $\Bg.$
Our aim is to partition the data set into $K$ groups.
DDclust proceeds iteratively by assigning a curve $\Cc_i$ at each instance to the cluster where it has the highest depth.

For $k=1,\ldots,K,$ we denote by $I_k$ the set of indices of observations belonging to the cluster $k$ and by $P_k$ the probability measure on $\Bg$ defined as
$$
P_k = \frac{1}{n_k} \sum_{i\in I_k} \delta_{\Cc_i},
$$ 
where $n_k$ is the size of the cluster $I_k.$
Then $\Ii=\{I_1,\ldots,I_K\}$ is a partition of $\{1,\ldots,n\}.$

The within-cluster data depth of an observation $i\in I_k$ is $D(\Cc_i|P_k).$
The between-cluster data depth of an observation $i\in I_k$ is $\min_{\ell\neq k} D(\Cc_i|P_\ell).$
The relative depth of an observation $i\in I_k$ is then defined as
\begin{equation}
	ReD_i(\Ii) = D(\Cc_i|P_k) - \min_{\ell\neq k} D(\Cc_i|P_\ell).
\end{equation}
The within-cluster average distance of an observation $\Cc_i\in I_k$ is 
$$d(\Cc_i|k)=\frac{1}{n_k-1}\sum_{j\in I_k\setminus\{i\}} d_\Bg(\Cc_i,\Cc_j),$$ where $n_k-1$ is the size of $I_k\setminus\{i\}.$
The closest average distance of an observation $i\in I_k$ among foreign clusters is $\min_{\ell\neq k} d(\Cc_i|\ell).$
The silhouette width of an observation $i$ belonging to cluster $k$ is
\begin{equation}
	Sil_i(\Ii) = \frac{\min_{\ell\neq k} d(\Cc_i|\ell) - d(\Cc_i|k)}{\max\{d(\Cc_i|k),\min_{\ell\neq k} d(\Cc_i|\ell)\}}.
\end{equation} 
The clustering cost of an observation $i$ for the partition $\Ii=\{I_1,\ldots,I_K\}$ is
\begin{equation}
	C_i(\Ii) =  (1 - \lambda)Sil_i(\Ii) + \lambda ReD_i(\Ii),
\end{equation}
where $\lambda\in[0,1]$ being a constant defining trade-off between depth and distance. 
The total clustering cost can then be formulated as
\begin{equation}
	C(\Ii) =  \frac{1}{n}\sum_{k=1}^K \sum_{i\in I_k} C_i(\Ii).
\end{equation}

Here we employ the original clustering algorithm by \cite{jornsten2004clustering} with slight modifications, which we briefly describe right below and send the reader to the source for details. 
For a fixed number of clusters $K$, we start with an initial partition $\Ii$ which may be generated at random. 
For each observation $i=1,\ldots, n,$ we compute its clustering cost $C_i(\Ii).$
Then the set of observations considered for a potential reallocation is defined as the set of indices:
$$
\Rr = \{ i\ :\ C_i(\Ii) < T\},
$$
where $T\leq0$ is a prefixed threshold.
For a random subset $E$ from $\Rr,$ we reallocate each index in $E$ to its closest cluster (the one with highest depth for this observation) getting a new partition $\tilde{\Ii}$ that is accepted if $C(\tilde{\Ii})>C(\Ii)$ and with probability $1-\exp\left(\beta(C(\Ii)-C(\tilde{\Ii}))\right)/2$ otherwise ($\beta$ is a temperature parameter). 
The whole procedure is given in Algorithm~\ref{alg:clust}, which can be found in Section~\ref{app:ssec:clustcurv} of the Supplementary Materials.



We ran our clustering algorithm {\textsc{DDclustCurve}} (with $K=3$) on a set of $300$ preprocessed MNIST images of the digits `0', `1' and `7. The results are very satisfactory (empirical error rate $=1\%$, $3$ errors). The resulting $C_i(\mathcal{I})$-s are plotted in Figure~\ref{fig:clustReD}. 

\begin{figure}[H]
	\begin{center}
		\includegraphics[width=.8\textwidth,trim = 0mm 5mm 0mm 20mm, clip]{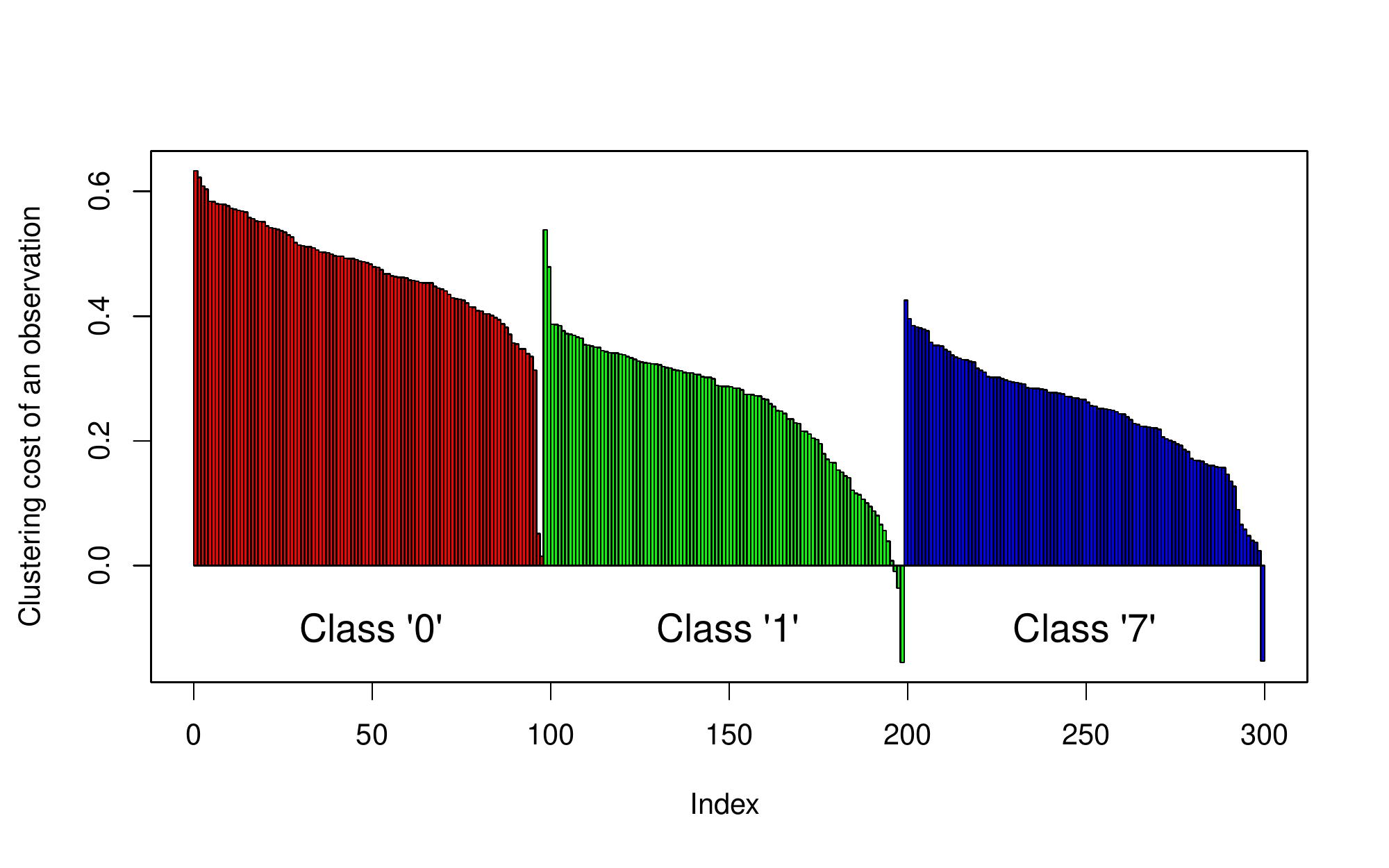}
	\end{center}
	\caption{Clustering cost (ordered decreasingly within each class) of $300$ digits from the MNIST library after convergence of the clustering algorithm (Algorithm~\ref{alg:clust} in the Supplementary Materials, an adaptation of \cite{jornsten2004clustering}'s algorithm). The colors correspond to the correct classes of digits `0' (red), `1' (green), and `7' (blue). According to the clustering criterion (threshold $T$ set at $0$), only $3$ observations from class `1' and $1$ observation from class `7' are misclassified; the (true) clustering error is $1\%$ (or $3$ observations).}
	\label{fig:clustReD}
\end{figure}

\subsubsection{Unsupervised Classification of DT-MRI Fiber Tracts}\label{ssec:appclustdti}

To further illustrate the exploratory potential of the proposed depth notion, we additionnaly apply the clustering Algorithm \ref{alg:clust} by \cite{jornsten2004clustering} to the DT-MRI brain fibers considered previously by \citet[Section 4]{kurtek2012}.
Automatic clustering of white matter fibers is an important sub-task in understanding brain connectivity and integrity, see e.g., \cite{jin2014automatic}.

The data consist of one bundle of fibers for each one of four subjects. These bundles contain $176,$ $68,$ $48$ and $88$ fibers respectively. 
The results of our clustering coincide for subjects 1 and 3 with those obtained by \citet[Figure 4]{kurtek2012}. The results differ for subjects 2 and 4 but our own interpretation is geometrically sound; see Figure \ref{fig:Kurtek}.
For subject 2, the original red and blue groups in \cite{kurtek2012} are grouped together into one single group (the red one in Figure~\ref{fig:Kurtek}),  while their original green group is split in two parts (green and blue in Figure~\ref{fig:Kurtek}). It is worthwhile noting that a closer look to the scatter plot in \citet[bottom of second column in Figure 4]{kurtek2012} tends to justify this splitting. 
Subject 4, on the other hand, illustrates that our approach takes into account different features of the data. 
 
\begin{figure}[t]
	\begin{center}
	\begin{tabular}{cc}
	Subject 1 $(n=176)$ & Subject 2 $(n=68)$\\
	\includegraphics[width=0.4\textwidth,trim = 0mm 0mm 0mm 0mm, clip]{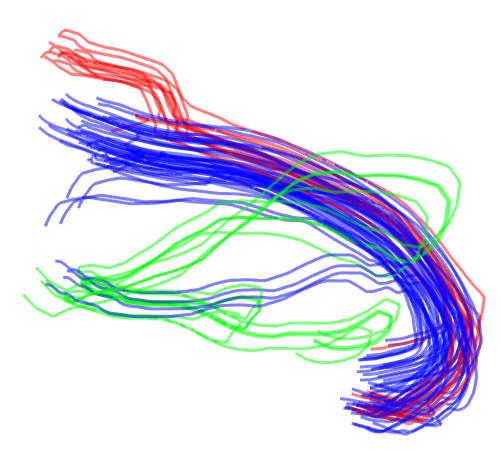} &\includegraphics[width=0.4\textwidth,trim = 0mm 0mm 0mm 0mm, clip]{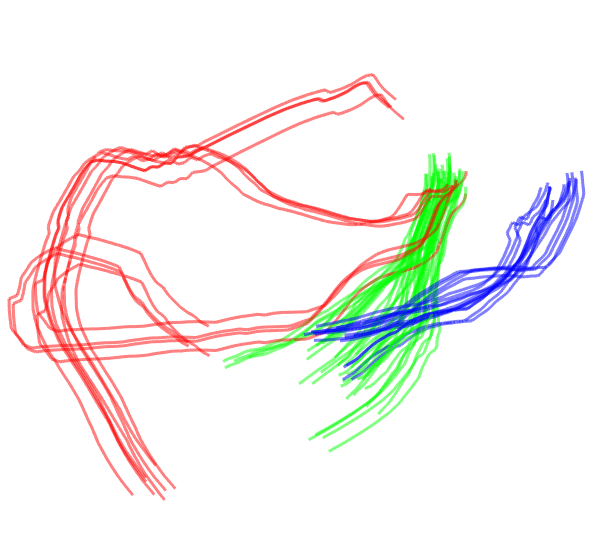}\\
	Subject 3 $(n=48)$ & Subject 4 $(n=88)$\\
		\includegraphics[width=0.4\textwidth,trim = 0mm 0mm 0mm 0mm, clip]{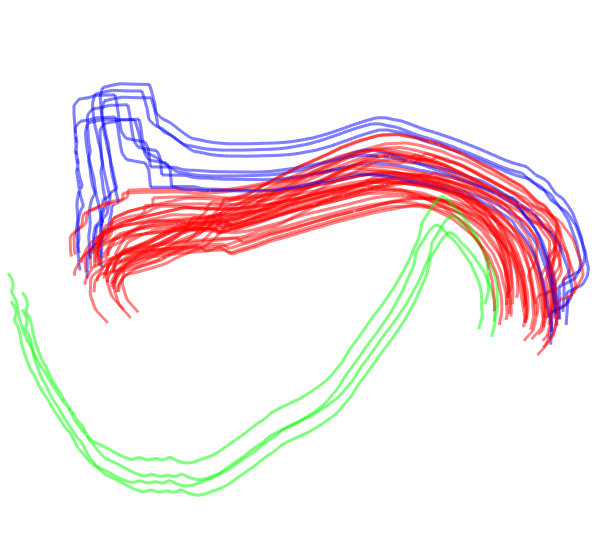} &\includegraphics[width=0.4\textwidth,trim = 0mm 0mm 0mm 0mm, clip]{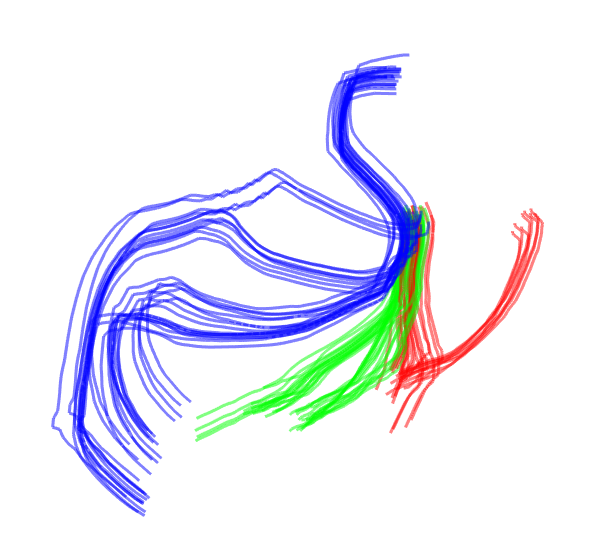}

	\end{tabular}
	\end{center}
	\caption{Clustering of DT-MRI fibers \citep[Source:][]{kurtek2012}.}
	\label{fig:Kurtek}
\end{figure}

\section{Concluding Remarks}\label{sec:Conclusion}

In this work, we introduced a new notion of depth for continuous curves having finite length and we investigated its properties (boundedness, similarity invariance, vanishing at infinity). By construction, our curve depth is invariant to reparametrizations and it is defined on a non-linear space, namely the space of unparametrized curves. It is applicable to curve data embedded in a space of any (finite) dimension. It is a tool that can advantageously compete with functional data depths when dealing with curve objects that should not be considered as functional data (e.g., DTI data). In that sense, our curve depth can be seen as an extension of the notion of statistical depth function to non-standard data types; see also the works by \cite{ley2014} and \cite{paindaveine2018}.  We envision a rich palette of applications for this curve depth. We gave various examples of its use, e.g., for the spatial alignment or unsupervised classification of brain fibers. We illustrated its superiority to other existing depth methods for some applications, for instance in terms of its ability to detect spatial outliers. One can think of other interesting applications of our curve depth, e.g., for handling handwriting data, or 2D and 3D trajectories of animal species or vehicles. A ready-to-use implementation of algorithms that approximate depths of curves via Monte Carlo or that compute the distance between two curves suggests a basis for direct application of the developed methodology in other contexts. Being the most time demanding part of the algorithm, the computation of our point \curvedepth\ can be performed efficiently in dimension two while approximations can be successfully used in higher dimensions, which is illustrated in the performed experiments. These computations are moderately sensitive to the choice of the size $m$ of Monte Carlo samples or smooth curves. This is confirmed by simulation and for real data applications (e.g., we took $m=50$ to cluster brain imaging data obtained from \cite{kurtek2012}). Implementation of the proposed methodology can be found in the \texttt{R}-package \texttt{curveDepth} \citep{curveDepth} available on the CRAN \citep{RSoftware2019}. The data on brain fibers used in this article are available from the authors.

\section*{Supplementary Materials}

\begin{description}
	\item[Additional results:] These contain theoretical details on the space of curves, definitions of our \curvedepth\ function and its properties, algorithms, additional simulation results and some details on data preprocessing. (``CurveDepthSupplement.pdf'')
	\item[Reproducing scripts:] Reproducing \texttt{R}-scripts for experiments contained in the article with descriptions included in files. (``CurveDepthReproduce.zip'')
\item[Animations:] A depth-colored animation of a few brain fibers (\url{http://biostatisticien.eu/DataDepthFig8}) and an illustration of two parametrizations of an `S'-shaped curve (\url{http://biostatisticien.eu/EquivalentCurves}).
\end{description}







\section*{Acknowledgments}

The authors are grateful to Wei Wen for providing the OATS dataset and to Zhaohua Ding and Sebastian Kurtek for providing the DT-MRI brain fibers dataset used in Section~\ref{sec:classif}. We would like to thank Gery Geenens, Karl Mosler and Lionel Truquet  for fruitful discussions about theoretical aspects. This paper includes results produced on the computational cluster Katana at UNSW Sydney.

\bibliographystyle{agsm}


\clearpage

\setcounter{section}{0}

\begin{center}
 {\LARGE Depth for Curve Data and Applications\\ Supplementary Materials} \\
    \vspace{.5cm}
  \author{Pierre Lafaye de Micheaux\hspace{.2cm}\\
    School of Mathematics and Statistics, UNSW Sydney,  Australia\\
    and \\
    Pavlo Mozharovskyi \\
    LTCI, T\'{e}l\'{e}com Paris, Institut Polytechnique de Paris \\
    and \\
    Myriam Vimond \\
    Univ Rennes, Ensai, CNRS, CREST - UMR 9194}

    \vspace{.5cm}
        
    \date{February 20, 2020}
\maketitle
\end{center}

\section{Impact of Parametrization on Functional Depth}\label{app:illustration}

\subsection{Simulated S Letters}\label{app:illustration:S}

We parameterize a $2D$ S-shaped curve (the red one in Figure~\ref{app:fig:curves3d}~(a)) using either parametrization A:
\begin{align}
\label{eqn:parA}
\begin{split}
	x_1(t) &= - \bigl(\cos(t) + 1\bigr)\indicatorf\{t < \frac{3\pi}{2}\} -\bigl(\cos(3t - 3\pi) + 1\bigr)\indicatorf\{t \ge \frac{3\pi}{2}\} + 1,
\\
	x_2(t) &= \phantom{-}\bigl(\sin(t) + 1\bigr) \indicatorf\{t < \frac{3\pi}{2}\} - \bigl(\sin(3t - 3\pi) + 1\bigr)\indicatorf\{t \ge \frac{3\pi}{2}\},
\end{split}
\end{align}
or parametrization B:
\begin{align}
\label{eqn:parB}
\begin{split}
    	x_1(t) &= - \bigl(\cos(3t) + 1\bigr)\indicatorf\{t < \frac{\pi}{2}\} - \bigl(\cos(t + \pi) + 1\bigr)\indicatorf\{t \ge \frac{\pi}{2}\} + 1,
\\
    	x_2(t) &= \phantom{-}\bigl(\sin(3t) \,+ 1\bigr)\indicatorf\{t < \frac{\pi}{2}\} - \bigl(\sin(t + \pi) + 1\bigr)\indicatorf\{t \ge \frac{\pi}{2}\}.
\end{split}
\end{align}

For parametrization A (\ref{eqn:parA}), the argument $t$ ``moves slowly'' on the first half of the curve while it ``moves fast'' on the second half. This pattern is reversed for parametrization~B (\ref{eqn:parB}); see Figure~\ref{app:fig:parametrizations} and also \url{http://biostatisticien.eu/EquivalentCurves} for an interactive visualization.

\begin{figure}[H]
\begin{center}
	\includegraphics[keepaspectratio=true,scale=0.475,trim=0mm 0mm 0mm 10mm,clip=true]{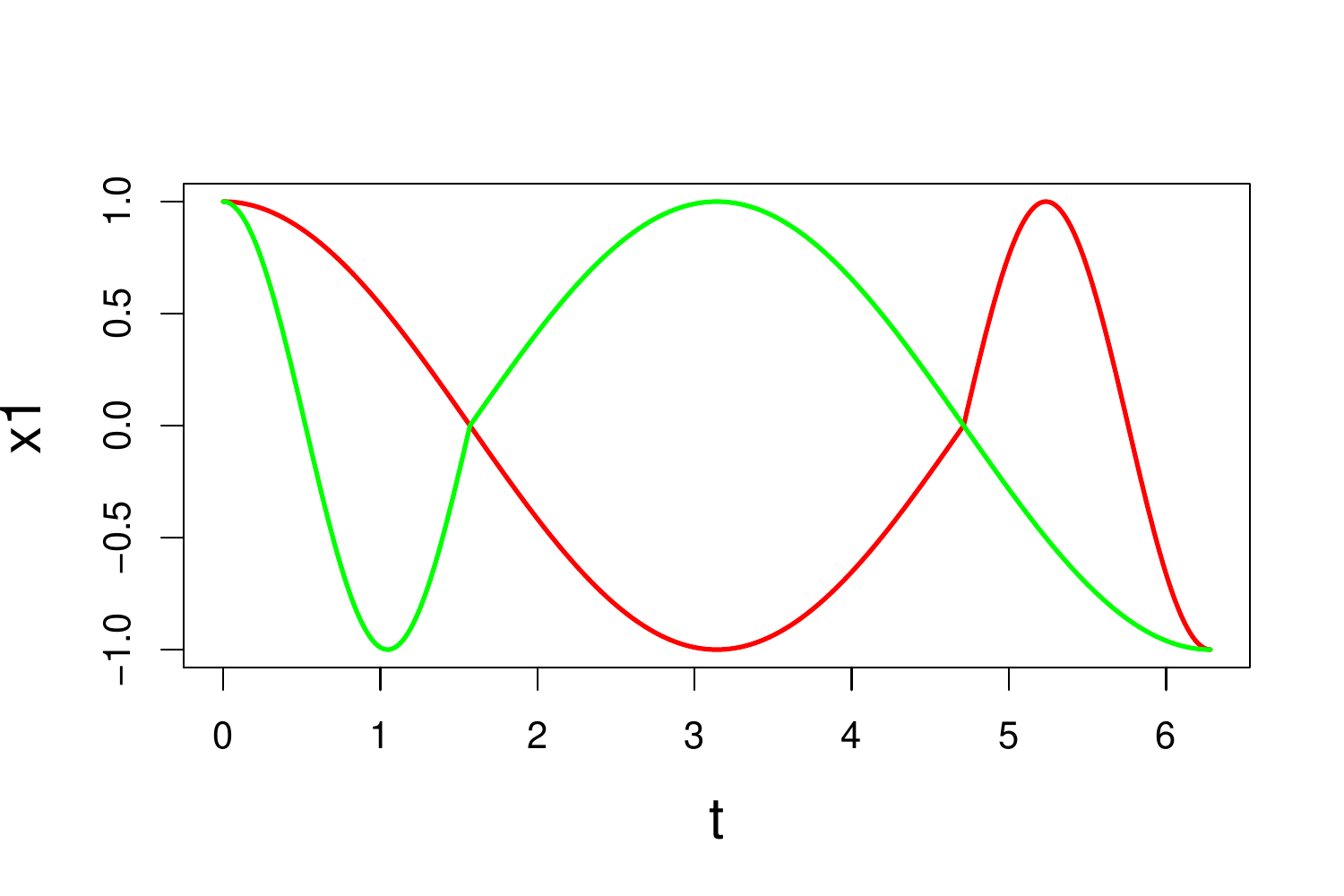}\includegraphics[keepaspectratio=true,scale=0.475,trim=0mm 0mm 0mm 10mm,clip=true]{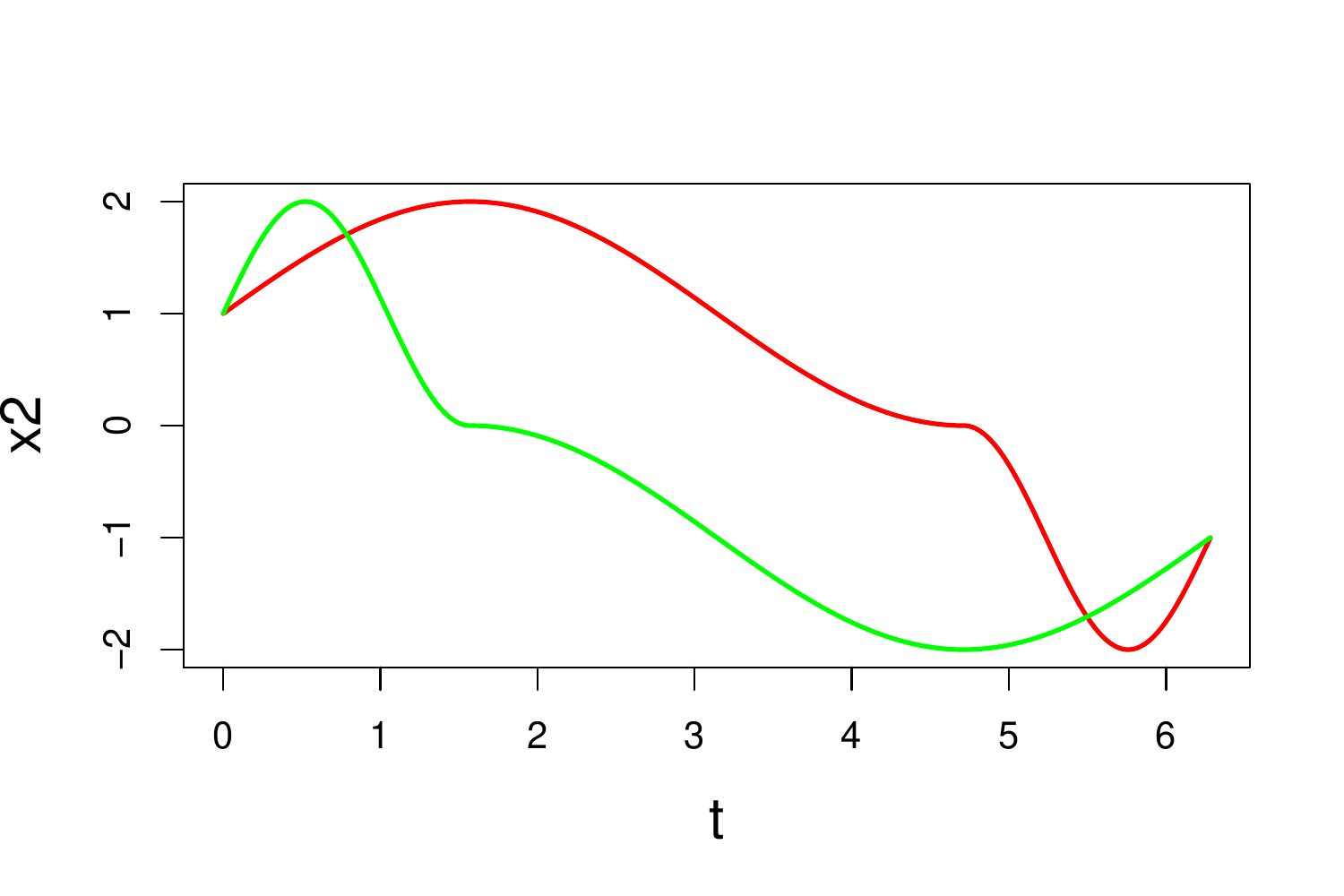}

	\hfill(a)\hfill\hfill(b)\hfill\hfill
	\caption{Parametrization A (in red) and parametrization B (in green) for coordinates $x_1$ (a) and $x_2$ (b).}
	\label{app:fig:parametrizations}
\end{center}
\end{figure}


A set of 50 S-shaped curves is then obtained by randomly shifting and rotating an ``ideal'' S-curve, as well as changing its length; see Figure~\ref{app:fig:curves3d} (a). More precisely, both location coordinates, the rotation angle, and the difference of length w.r.t. the beginning and the end of the ``ideal'' S-curve are drawn from a normal distribution centered at zero.


Depth-based rankings given by MFHD \citep{claeskens2014multivariate} and mSBD \citep{lopezpintado2014}, both using parametrizations A and B, are displayed in Figure~\ref{app:fig:curves3d} (b) and (c), respectively.  

\begin{figure}[H]
\begin{center}
	\includegraphics[keepaspectratio=true,scale=0.325,trim=0mm 0mm 0mm 10mm,clip=true]{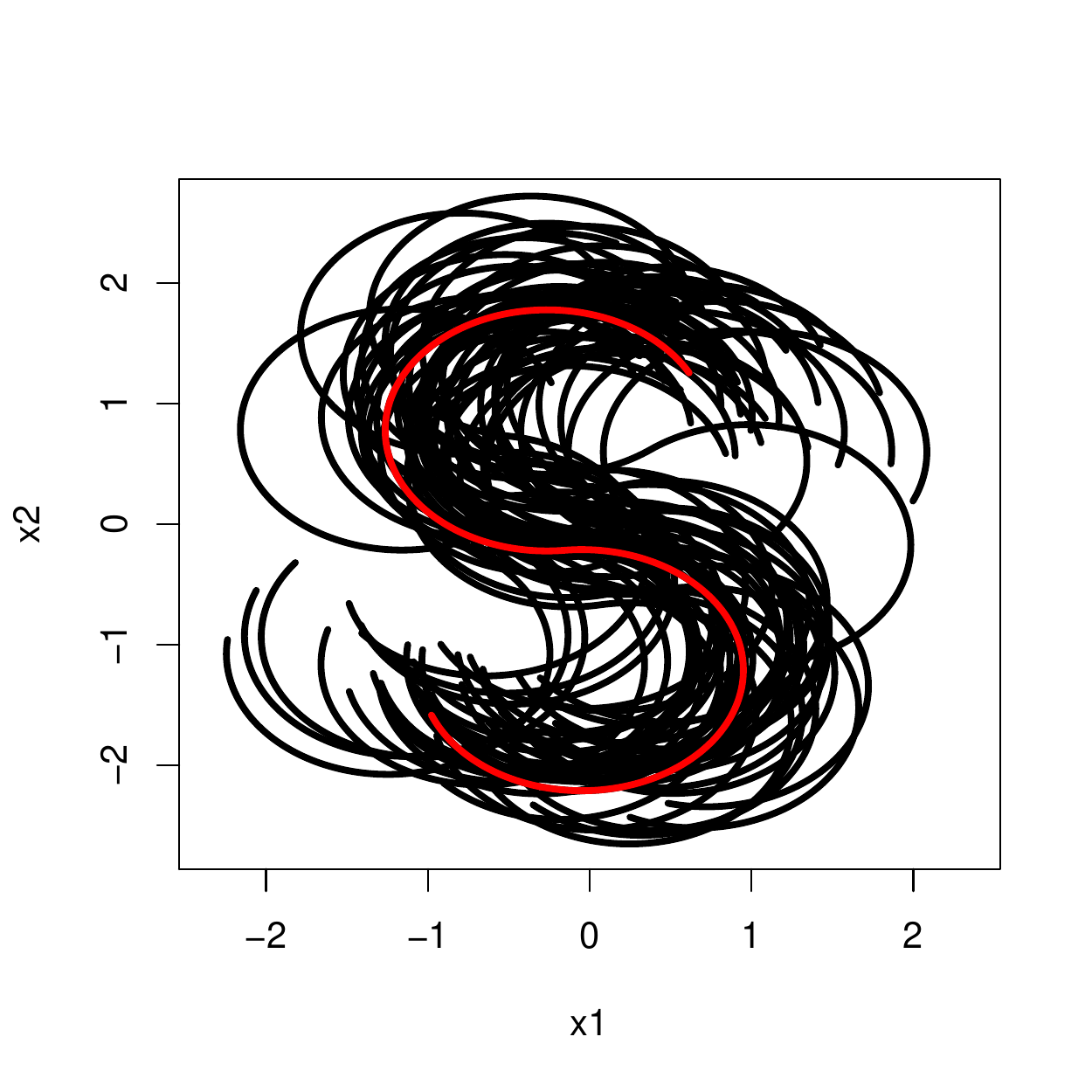}		    \includegraphics[keepaspectratio=true,scale=0.325,trim=0mm 0mm 0mm 10mm,clip=true]{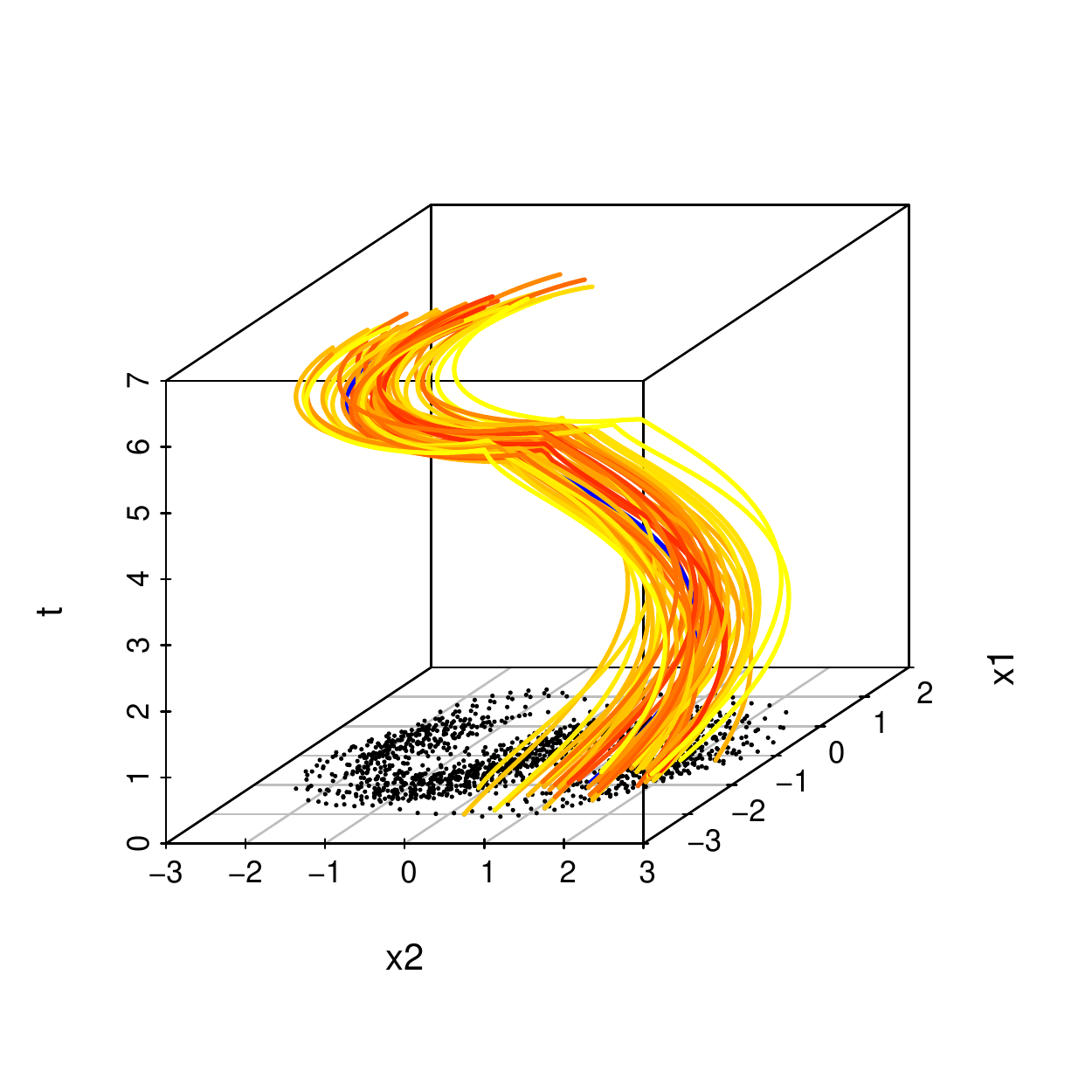}       	\includegraphics[keepaspectratio=true,scale=0.325,trim=0mm 0mm 0mm 10mm,clip=true]{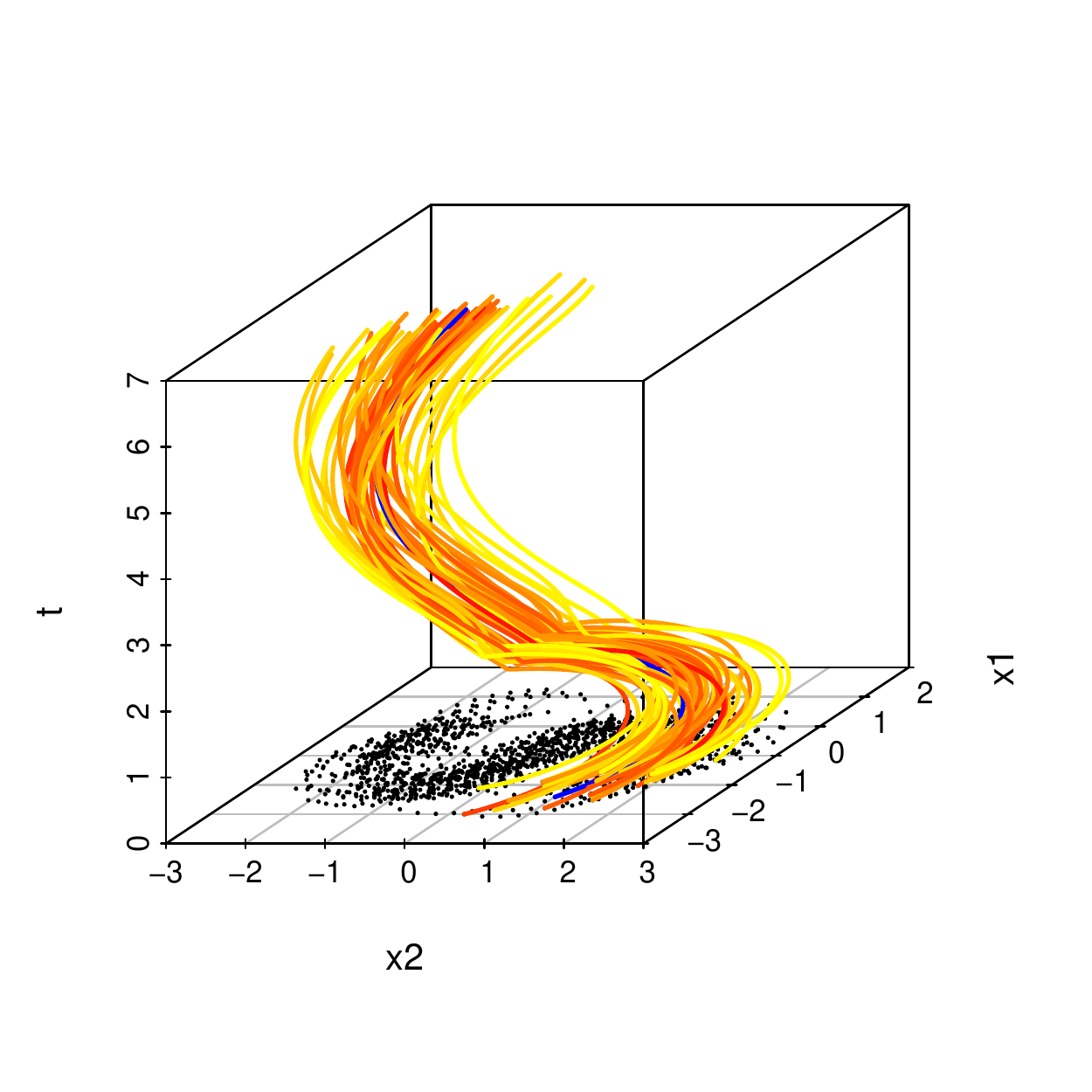}
	
	\hfill\hfill\quad(a)\qquad\hfill\qquad(b)\qquad\hfill\quad(c)\hfill\hfill\hfill
	\caption{A set of $50$ curves derived from the red pattern (a), together with their corresponding depth-colored functional representations for parametrizations A (b) and B (c). The depth of each curve is calculated w.r.t.\ to the same sample of $50$ curves. Here, we used the multivariate functional halfspace depth by \cite{claeskens2014multivariate}. The depth increases from yellow to red, the deepest curve being colored in blue.}
	\label{app:fig:curves3d}
\end{center}
\end{figure}


\subsection{Cursive Handwriting Sample}\label{cursiveexample}

We applied the multivariate functional halfspace depth developped by \cite{claeskens2014multivariate} (with weight function set to a constant) to a set of 20 planar curves taken from \citep[Cursive handwriting sample]{ramsay2017fda}. These curves were parameterized via two continuous functions $u\mapsto (x(u),y(u))\in\mathbb{R}^2$, where the parameter $u\in[0,1]$ represents either the time or the arc-length. (Note that an equivalent representation of such a curve, standard in multivariate functional data analysis, is through a vector of two real-valued functions defined over $[0, 1]$, as in the previous subsection.) Depth rankings are different depending on the parametrization chosen; see Figure~\ref{fig:fda3dbox}. 

\begin{figure}[H]
	\begin{center}
		\includegraphics[width=0.425\textwidth,trim = 0mm 20mm 15mm 25mm, clip,page=1]{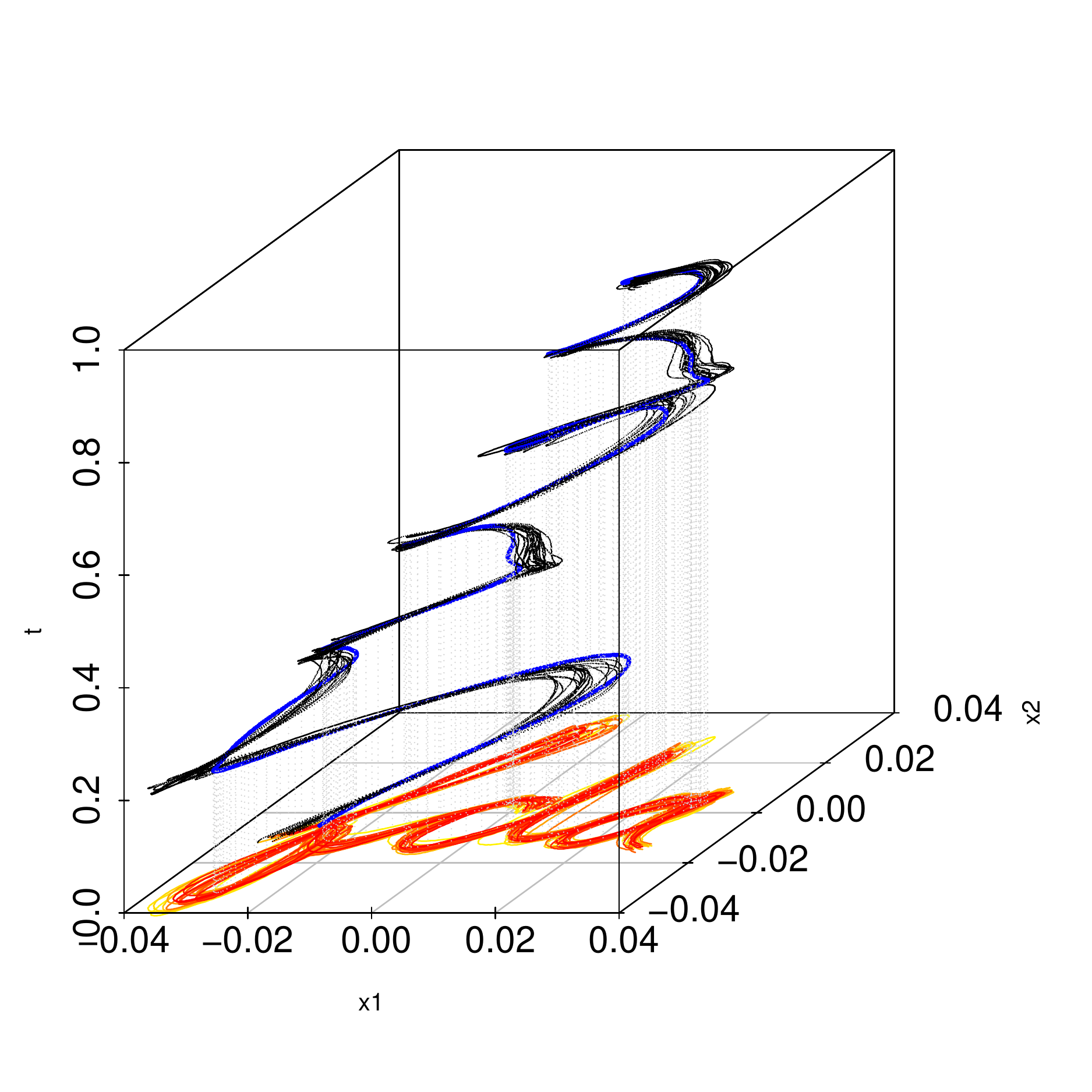}\quad\includegraphics[width=0.425\textwidth,trim = 0mm 20mm 15mm 25mm, clip,page=2]{pic-fda-3dbox}
	\end{center}
	\caption{Functional depth based ranking (obtained using the multivariate functional halfspace depth by \cite{claeskens2014multivariate}) of plane curves for parametrization by time (left) or by arc-length (right) obtained using the package \texttt{MFHD} \citep{MFHD}. The depth increases from yellow to red in the two dimensional trace. The 3-dimensional blue curve indicates the observation with the extreme rank difference for the two parametrizations (rank $1$ for time parametrization and rank $13$ for arc-length parametrization).
	Source: data set \texttt{handwrit} of the \texttt{R}-package \texttt{fda} \citep{ramsay2017fda}.  
}
	\label{fig:fda3dbox}
      \end{figure}

Only four curves out of twenty are assigned the same ranks (namely 3, 10, 19 and 20). For the sixteen others, depth-induced rankings are different, sometimes by a large amount. For instance, one curve is ranked~1 (minimal depth) for one parametrization and 13 (quite high depth) over 20 for the other; see Table~\ref{tab:rankings}.

\begin{table}[H]
	\caption{Depth ranks for parametrization by time or by arc-length.}
	\label{tab:rankings}
	\begin{center}
	{\scriptsize
          \begin{tabular}{l|cccccccccccccccccccc}
            \hline
			Time & 2 & 3 & 13 & 12 & 4 & 8 & 1 & 17 & 11 & 9 & 7 & 19 & 15 & 20 & 18 & 16 & 14 & 5 & 6 & 10 \\ \hline
			Length & 6 & 3 & 16 & 14 & 5 & 7 & 13 & 11 & 1 & 17 & 2 & 19 & 8 & 20 & 12 & 18 & 15 & 4 & 9 & 10 \\ \hline
		\end{tabular}
	}
	\end{center}
\end{table}

It thus appears that to obtain meaningful results, a proper parametrization of curves is needed. (This could be the speed of writing in this handwriting recognition example.)

\subsection{Historic Hurricanes Tracks}

We applied the multivariate functional halfspace depth developped by \cite{claeskens2014multivariate} and the multivariate simplicial depth developed by \cite{lopezpintado2014} to the historical hurricane tracks (obtained from \url{https://coast.noaa.gov/hurricanes/}) that go through the circular region of size 56 nautical miles centered at location $24.5$N by $78$W. We considered two parametrizations : the arc-length parametrization (A) and the parametrization by the time (B); see Figure \ref{fig:hurricanes:comparison}. In this example, the results we obtained seem  less sensitive to the choice of a parametrization than in the previous subsections. Nevertheless, this illustrates that multivariate functional depth functions tend to detect outliers which do not appear to be geometrically aberrant.

\begin{figure}[H]
\begin{center}
	\begin{tabular}{ccccc}
	{\small MFHD, par. A} & {\small MFHD, par. B} & {\small mSBD, par. A} & {\small mSBD, par. B} & {\small Curve Depth} \\
		\includegraphics[width=0.155\textwidth,page=1]{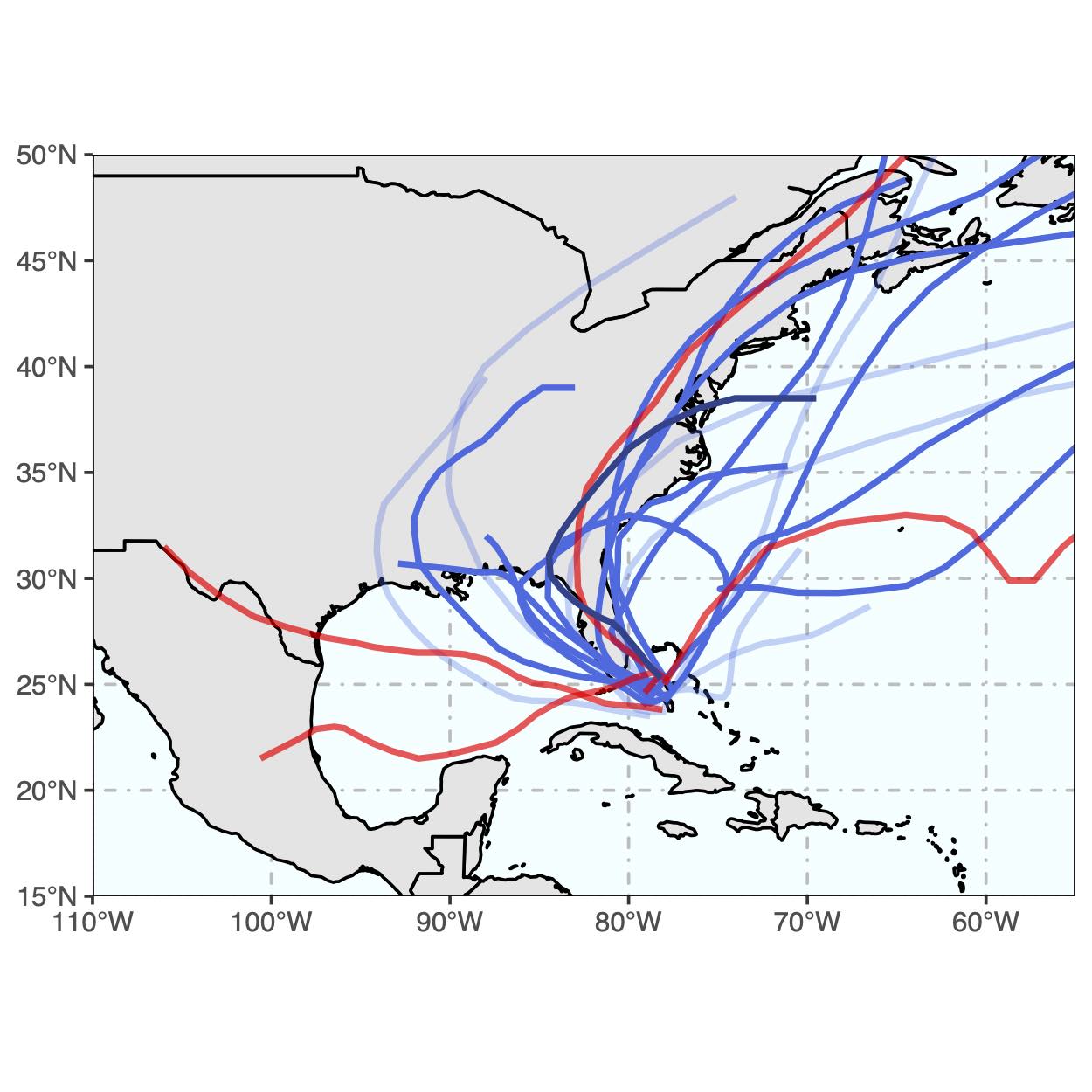}&\includegraphics[width=0.155\textwidth,page=2]{pic-sim_hurricanes-2Dbp_cf1.jpg}&\includegraphics[width=0.155\textwidth,page=3]{pic-sim_hurricanes-2Dbp_cf1.jpg}&\includegraphics[width=0.155\textwidth,page=4]{pic-sim_hurricanes-2Dbp_cf1.jpg}&\includegraphics[width=0.155\textwidth,page=5]{pic-sim_hurricanes-2Dbp_cf1.jpg}
	\end{tabular}
\end{center}
	\caption{Comparison of depth based ordering for two parametrizations A (time) and B (arc length) provided respectively by MFHD (a)--(b), mSBD (c)--(d), and by our new depth for unparameterized curves (e). The depth increases from yellow to red. Deepest curves are plotted in blue. Curves with low value of depth are plotted in red. Source: an ensemble of $23$ historic hurricane tracks originating in the Gulf of Mexico between 1918 - 2018.}\label{fig:hurricanes:comparison}
\end{figure}

\section{The Space of Unparametrized Curves}\label{app:curves}

Several of the results in this section can be found in \cite[Section~2]{Kemppainen2017}. The authors of this article borrowed material from \citet[Section~2.1]{aizenman1999holder} and \citet[Section~2.5]{Burago2001}.

\subsection{Equivalence Relation for Parametrized Curves}\label{app:curves:equiv}
We denote $\Gamma$ the set of increasing continuous functions $\gamma:[0,1]\to [0,1]$ such that $\gamma(0)=0$ and $\gamma(1)=1.$
Two parametrized curves $\beta_1:[0,1]\to\RR^d$ and $\beta_2:[0,1]\to\RR^d$ are equivalent (i.e., describe the same unparametrized curve) if and only if there exist two reparametrizations $\gamma_1,\gamma_2\in\Gamma$ such that $\beta_1\circ\gamma_1=\beta_2\circ\gamma_2.$

In order to describe the equivalence class associated to $\beta_1,$ we consider never-locally-constant functions.  
A parametrized curve $\beta:[0,1]\to\RR^d$ is said to be \emph{never-locally-constant} if there exists no non-empty sub-interval $(a,b)\subset[0,1]$ such that the restriction of $\beta$ to the interval $[a,b],$ denoted as $\beta_{\Large|_{[a,b]}},$ is a constant function. 
According to \citet[Exercice 2.5.3]{Burago2001}, each equivalence class admits one representative which is never-locally-constant, for example its arc-length parametrization.
The equivalence class associated to the never-locally-constant path $\beta$ in $\spacefun$ is,
$$
\Cc=\{\beta\circ\gamma:\,\gamma\in\Gamma\}.
$$
The set of unparmetrized curves $\Bg$ is the quotient space of $\spacefun$ by the equivalence relation defined above.

\subsection{The Metric Space of Unparmetrized Curves}\label{app:curves:metric}
Following \cite{Kemppainen2017}, we endow the space of curves $\Bg$ with the Fr\'{e}chet metric $d_\Bg$ defined as
\begin{equation}\label{eq:distance}
d_\Bg\left(\Cc_1,\Cc_2\right) = \inf\left\{\|\beta_1-\beta_2\|_\infty,\beta_1\in\Cc_1,\ \beta_2\in\Cc_2\right\},\quad\Cc_1,\Cc_2\in\Bg,
\end{equation}
where $\|\beta\|_{\infty} = \sup_{t\in[0,1]}|\beta(t)|_2$ for $\beta\in\spacefun$.

\begin{lemma}\label{lem:SpaceCurve:complete}
The metric space $(\Bg,d_\Bg)$ is separable and complete.
\end{lemma}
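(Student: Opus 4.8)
\textbf{Proof plan for Lemma~\ref{lem:SpaceCurve:complete}.}
The plan is to transfer separability and completeness from the well-known space $\spacefun=\mathscr{C}([0,1],\RR^d)$ (with the sup-norm) through the quotient map $\beta\mapsto\Cc_\beta$ and the formula \eqref{eq:distance}. The first thing I would nail down is that $d_\Bg$ really is a metric on $\Bg$, not merely a pseudometric: symmetry and the triangle inequality follow easily from the definition of the Fr\'echet distance and the fact that $\Gamma$ is closed under composition and contains the identity (so that an $\varepsilon$-close pair for $(\Cc_1,\Cc_2)$ and an $\varepsilon'$-close pair for $(\Cc_2,\Cc_3)$ can be reparametrized simultaneously to a common parameter on the middle curve); and $d_\Bg(\Cc_1,\Cc_2)=0 \Rightarrow \Cc_1=\Cc_2$ is exactly the statement that two parametrized curves at Fr\'echet distance zero are equivalent, which follows by extracting a uniformly convergent sequence of reparametrizations (using a compactness/Arzel\`a--Ascoli argument on $\Gamma$, noting $\Gamma$ is a family of monotone functions and one can pass to a pointwise limit which is then monotone, continuous after the usual adjustments, and fixes the endpoints) and by invoking the never-locally-constant representatives from Section~\ref{app:curves:equiv}. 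This identification-at-distance-zero step is the one genuine subtlety.

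For separability, I would take a countable dense subset $\mathcal{D}\subset\spacefun$ (e.g. polygonal paths with rational vertices and rational break-points, which is dense in $(\spacefun,\|\cdot\|_\infty)$), and show $\{\Cc_\beta:\beta\in\mathcal{D}\}$ is dense in $(\Bg,d_\Bg)$. This is immediate from the inequality $d_\Bg(\Cc_{\beta_1},\Cc_{\beta_2})\le\|\beta_1-\beta_2\|_\infty$, which holds because the identity reparametrization is admissible in \eqref{eq:distance}: given $\Cc\in\Bg$ pick any representative $\beta\in\Cc$, approximate $\beta$ in sup-norm by some $\beta'\in\mathcal D$, and then $d_\Bg(\Cc,\Cc_{\beta'})\le\|\beta-\beta'\|_\infty$ is as small as desired.

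For completeness, let $(\Cc_n)_n$ be Cauchy in $(\Bg,d_\Bg)$. Passing to a subsequence I may assume $d_\Bg(\Cc_n,\Cc_{n+1})<2^{-n}$. I would then build parametrized representatives $\beta_n\in\Cc_n$ that are genuinely Cauchy in $(\spacefun,\|\cdot\|_\infty)$: fix $\beta_1\in\Cc_1$ arbitrarily; inductively, having chosen $\beta_n$, use the near-infimum in \eqref{eq:distance} to find reparametrizations $\gamma_n,\delta_n\in\Gamma$ with $\|\beta_n\circ\gamma_n-\tilde\beta_{n+1}\circ\delta_n\|_\infty<2^{-n}$ for some $\tilde\beta_{n+1}\in\Cc_{n+1}$, and replace $\beta_n$ by $\beta_n\circ\gamma_n$ and set $\beta_{n+1}:=\tilde\beta_{n+1}\circ\delta_n$ (so this is a reparametrization bookkeeping that propagates forward). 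Care is needed to make the reparametrizations on $\beta_n$ compatible with the ones already chosen at earlier stages; the cleanest way is to fix all representatives at once using the arc-length parametrizations from \citet[Theorem~2.4]{vaisala2006lectures} as canonical anchors, but that requires the $\Cc_n$ to be rectifiable, which is not assumed here, so instead one threads the reparametrizations through a single forward induction. Once a sup-Cauchy sequence $\beta_n$ with $\beta_n\in\Cc_n$ is obtained, completeness of $(\spacefun,\|\cdot\|_\infty)$ gives $\beta_n\to\beta$ uniformly; setting $\Cc:=\Cc_\beta\in\Bg$ and using $d_\Bg(\Cc_n,\Cc)\le\|\beta_n-\beta\|_\infty\to0$ shows the original Cauchy sequence converges (a Cauchy sequence with a convergent subsequence converges).

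The routine parts are the metric axioms and the two sup-norm domination inequalities $d_\Bg(\Cc_{\beta_1},\Cc_{\beta_2})\le\|\beta_1-\beta_2\|_\infty$. The main obstacle is the completeness argument: extracting, along a fast Cauchy subsequence, a single sequence of \emph{mutually compatible} parametrized representatives that is Cauchy in sup-norm — i.e. making sure the reparametrizations chosen to witness $d_\Bg(\Cc_n,\Cc_{n+1})$ can be composed coherently so that no earlier approximation is destroyed. A secondary (but standard) obstacle is the $d_\Bg(\Cc_1,\Cc_2)=0\Rightarrow\Cc_1=\Cc_2$ step, handled via Arzel\`a--Ascoli on $\Gamma$ together with the never-locally-constant representatives of Section~\ref{app:curves:equiv}. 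Both of these are essentially classical facts about Fr\'echet curves (see \citet[Section~2.1]{aizenman1999holder} and \citet[Section~2.5]{Burago2001}), so the write-up can cite them and keep the details light.
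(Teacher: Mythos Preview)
Your overall strategy matches the paper's: transfer separability through the quotient map using $d_\Bg(\Cc_{\beta_1},\Cc_{\beta_2})\le\|\beta_1-\beta_2\|_\infty$, and for completeness pass to a fast Cauchy subsequence and lift it to a sup-norm Cauchy sequence of parametrized representatives in $\spacefun$. You also correctly flag the coherent-representatives step as the crux.

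However, the specific forward induction you describe is broken. When at step $n$ you ``replace $\beta_n$ by $\beta_n\circ\gamma_n$'', you lose control of $\|\beta_{n-1}-\beta_n\circ\gamma_n\|_\infty$: the bound $\|\beta_{n-1}-\beta_n\|_\infty<2^{-(n-1)}$ from the previous step was for the \emph{old} $\beta_n$, and nothing prevents the new one from being far from $\beta_{n-1}$ in sup-norm. Cascading $\gamma_n$ backwards to all earlier $\beta_k$ preserves the pairwise bounds but leaves you with $\beta_k^{(n)}$ that keep changing with $n$, and there is no reason these stabilize. So ``threading the reparametrizations through a single forward induction'' does not work with arbitrary $\gamma_n\in\Gamma$.

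The paper resolves exactly this point with an auxiliary lemma (Lemma~\ref{lem:SpaceCurve:never-local-constant}): for never-locally-constant representatives, the infimum in \eqref{eq:distance} is unchanged if one restricts to \emph{homeomorphic} reparametrizations. Homeomorphisms are invertible, so from $\|\beta_n\circ\psi-\tilde\beta_{n+1}\circ\phi\|_\infty<\epsilon_k$ one gets $\|\beta_n-\tilde\beta_{n+1}\circ\phi\circ\psi^{-1}\|_\infty<\epsilon_k$ and can set $\beta_{n+1}:=\tilde\beta_{n+1}\circ\phi\circ\psi^{-1}$ \emph{without touching} $\beta_n$. This is the missing ingredient in your sketch; once you have it, the rest of your argument goes through exactly as you wrote it.
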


Proof of Lemma~\ref{lem:SpaceCurve:complete} relies on the following lemma.

\begin{lemma}\label{lem:SpaceCurve:never-local-constant}
Let $\beta_1$ and $\beta_2$ be two never-locally-constant paths on $[0,1]$. 
Let $\Cc_i$ be the unparametrized curve associated to $\beta_i$ and
$$
\Cc_i^{\mathrm{Hom}}=\{\beta_i\circ\psi;\quad\text{$\psi:[0,1]\to[0,1]$ is homeomorphic increasing continuous}\},
$$
a subset of $\Cc_i$, $i=1,2$. Then, we have
$$
d_\Bg\left(\Cc_1,\Cc_2\right) = d_\Bg\left({\Cc_1}^{\mathrm{Hom}},{\Cc_2}^{\mathrm{Hom}}\right).
$$
\end{lemma}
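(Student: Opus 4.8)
\textbf{Proof plan for Lemma~\ref{lem:SpaceCurve:never-local-constant}.}

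First I would observe that the non-trivial inclusion is $d_\Bg(\Cc_1,\Cc_2) \geq d_\Bg({\Cc_1}^{\mathrm{Hom}},{\Cc_2}^{\mathrm{Hom}})$, since ${\Cc_i}^{\mathrm{Hom}} \subseteq \Cc_i$ forces the infimum in \eqref{eq:distance} over the smaller sets to be at least as large. So the plan is to show that, given any $\varepsilon>0$ and any pair $\beta_1\circ\gamma_1 \in \Cc_1$, $\beta_2\circ\gamma_2 \in \Cc_2$ with $\gamma_1,\gamma_2 \in \Gamma$ satisfying $\|\beta_1\circ\gamma_1 - \beta_2\circ\gamma_2\|_\infty < d_\Bg(\Cc_1,\Cc_2) + \varepsilon$, one can produce \emph{homeomorphic} increasing reparametrizations $\psi_1,\psi_2$ (of $[0,1]$ onto itself) with $\|\beta_1\circ\psi_1 - \beta_2\circ\psi_2\|_\infty$ no larger (up to another $\varepsilon$). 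Since a general $\gamma\in\Gamma$ is increasing and continuous but need not be strictly increasing — it may have ``flat'' plateaus — the key is to exploit the never-locally-constant hypothesis on $\beta_1,\beta_2$ to argue that these plateaus are harmless.

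The main technical step — and the expected main obstacle — is the following replacement argument. Suppose $\gamma_1$ is constant on a maximal sub-interval $[a,b]\subseteq[0,1]$, say $\gamma_1\equiv c$ there. Then $\beta_1\circ\gamma_1$ is constant on $[a,b]$, equal to $\beta_1(c)$. Because $\beta_2$ is never-locally-constant, on any sub-interval of positive length $\beta_2\circ\gamma_2$ restricted to $[a,b]$ cannot be constant unless $\gamma_2$ is itself constant on $[a,b]$. The idea is to slightly perturb $\gamma_1$ near such plateaus to make it strictly increasing, at a cost to the sup-norm that can be made arbitrarily small by taking the perturbation fine enough; symmetrically handle the plateaus of $\gamma_2$. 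One must check that the perturbed reparametrizations can be taken simultaneously strictly increasing and that the resulting functions $\beta_1\circ\psi_1$ and $\beta_2\circ\psi_2$ remain uniformly close, using only uniform continuity of $\beta_1,\beta_2$ on the compact interval $[0,1]$. I would carry this out by first reducing to the case where one of $\gamma_1,\gamma_2$ is already a homeomorphism: given $\gamma_1,\gamma_2\in\Gamma$, the composition structure $\beta_1\circ\gamma_1 = \beta_2\circ\gamma_2$ need not hold, but we are only comparing values, so we may reparametrize both by the ``common refinement'' — concretely, set $\sigma(t) = \tfrac12(\gamma_1(t)+\gamma_2(t)+ \eta\, t)$ for small $\eta>0$, which is strictly increasing, hence a homeomorphism of $[0,1]$, and then write $\beta_i\circ\gamma_i = (\beta_i\circ\gamma_i\circ\sigma^{-1})\circ\sigma$; but $\gamma_i\circ\sigma^{-1}$ may still fail to be strictly increasing, so one iterates the plateau-removal idea. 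An alternative, cleaner route is to invoke \citet[Exercice~2.5.3]{Burago2001} and the structure of the equivalence classes already recalled in Section~\ref{app:curves:equiv}: since $\beta_i$ is never-locally-constant, every element $\beta_i\circ\gamma$ of $\Cc_i$ with $\beta_i\circ\gamma$ also never-locally-constant forces $\gamma$ to be strictly increasing, hence a homeomorphism. Thus the only elements of $\Cc_i$ outside ${\Cc_i}^{\mathrm{Hom}}$ are those whose parametrization has a genuine plateau, and for those one shows the infimum is not decreased by passing to ${\Cc_i}^{\mathrm{Hom}}$, by a density argument: homeomorphic reparametrizations of $\beta_i$ are dense (in sup-norm on $\spacefun$) among \emph{all} reparametrizations of $\beta_i$, precisely because $\beta_i$ is never-locally-constant so collapsing or expanding plateaus perturbs the trace arbitrarily little.

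Having established the density statement, the lemma follows quickly: for the infimum defining $d_\Bg(\Cc_1,\Cc_2)$, take a near-minimizing pair $(\beta_1\circ\gamma_1, \beta_2\circ\gamma_2)$, approximate each $\beta_i\circ\gamma_i$ within $\varepsilon$ by some $\beta_i\circ\psi_i \in {\Cc_i}^{\mathrm{Hom}}$ using the density step, and conclude by the triangle inequality that $d_\Bg({\Cc_1}^{\mathrm{Hom}},{\Cc_2}^{\mathrm{Hom}}) \le \|\beta_1\circ\psi_1-\beta_2\circ\psi_2\|_\infty \le d_\Bg(\Cc_1,\Cc_2) + 3\varepsilon$. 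Letting $\varepsilon\to 0$ gives the reverse inequality, and together with the trivial one we obtain equality. I expect the delicate point to be making the density / plateau-removal argument fully rigorous with the right modulus-of-continuity bookkeeping; everything else is a routine $\varepsilon$-chase.
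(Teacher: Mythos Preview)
Your proposal is correct and ultimately lands on the same density argument the paper uses, but you take a considerably longer route to get there. The paper's proof is essentially a two-line observation: every $\gamma\in\Gamma$ is the uniform limit of increasing homeomorphisms $\psi_n$ of $[0,1]$ (for instance $\psi_n(t)=(1-\tfrac{1}{n})\gamma(t)+\tfrac{t}{n}$), and then uniform continuity of $\beta_i$ on $[0,1]$ gives $\|\beta_i\circ\gamma-\beta_i\circ\psi_n\|_\infty\to 0$. This already shows that $\Cc_i^{\mathrm{Hom}}$ is dense in $\Cc_i$ for the sup-norm, and your final $\varepsilon$-chase finishes the argument exactly as you wrote it. Your detours through plateau-removal, common refinements $\sigma$, and iteration are unnecessary once you see this direct approximation.

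One small misattribution: you write that the density holds ``precisely because $\beta_i$ is never-locally-constant,'' but in fact the density step only needs uniform continuity of $\beta_i$. The never-locally-constant hypothesis is used earlier, in Section~\ref{app:curves:equiv}, to guarantee that the equivalence class has the explicit form $\Cc_i=\{\beta_i\circ\gamma:\gamma\in\Gamma\}$ in the first place; without it, elements of $\Cc_i$ need not all be of the form $\beta_i\circ\gamma$ with $\gamma\in\Gamma$, and your setup would not cover the whole class.
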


\begin{proof}[Proof of Lemma \ref{lem:SpaceCurve:never-local-constant}]
We note that for every reparametrization $\gamma\in\Gamma,$ there exists a sequence $(\psi_n)_n$ of increasing homeomorphisms that converges uniformly to $\gamma.$
Then using the uniform continuity of the parametrized curves, we deduce that every point of the equivalence class $\Cc_i$ is the uniform limit of sequence of ${\Cc_i}^{\mathrm{Hom}}$ for $i=1,2.$
\end{proof}

\begin{proof}[Proof of Lemma~\ref{lem:SpaceCurve:complete}]
  First, note that $(\Bg,d_\Bg)$ is a metric space \cite[Lemma~2.1]{aizenman1999holder}. It remains to prove that it is separable and complete.

\textit{1. The topological space $(\Bg,d_\Bg)$ is separable.}
The topological space $(\spacefun,\|\cdot\|_\infty)$ is separable \cite[Exemple~1.3]{billingsley2013convergence}, 
so by definition it contains a countable dense subset $\mathbb{D}$. 
Then the set of equivalence classes associated to the paths of $\mathbb{D}$ is a countable dense subset of $(\Bg,d_\Bg).$

\textit{2. The topological space $(\Bg,d_\Bg)$ is complete.}
Let $(\Cc_m)_m$ be a Cauchy sequence of $(\Bg,d_\Bg).$
Let $(\epsilon_k)_k$ be a sequence of positive real numbers such that the series of general term $(\epsilon_k)$ converges. 
Using Lemma \ref{lem:SpaceCurve:never-local-constant} it is possible to build a sub-sequence $(n_k)_k$ and a sequence of never-locally-constant parametrizations $\beta_{n_k}$ of $\Cc_{n_k}$ such that,
$$
\forall k\leq 1,\quad \|\beta_{n_k}-\beta_{n_{k+1}}\|_\infty\leq\epsilon_k.
$$
Then $(\beta_{n_k})$ is a Cauchy sequence of the complete space $(\spacefun,\|\cdot\|_\infty).$
There exists $\beta\in\spacefun$ such that $\lim_{k\to\infty} \|\beta_{n_k}-\beta\|_\infty=0.$
Since the sequence $(\Cc_n)_n$ is a Cauchy sequence and that $\beta_{n_k}$ is a parametrization of $\Cc_{n_k},$ we deduce that $(\Cc_n)_n$ converges to the equivalence class of $\beta$ in $(\Bg,d_\Bg).$
\end{proof}

\subsection{Mesurability of the Line Integral}\label{app:curves:integral}
The length of a parametrized curve $\beta,$ denoted $L(\beta),$ is defined as the supremum of the set of chordal lengths,
\begin{equation}\label{eq:chordal.length}
L_{\tau}(\beta)=\sum_{j=1}^J |\beta(\tau_j)-\beta(\tau_{j-1})|_2,
\end{equation}
corresponding to all finite partitions $\tau$ of $[0,1]$ : $0=\tau_0<\tau_1<\ldots <\tau_J=1.$
A parametrized curve $\beta$ is \emph{rectifiable} if $L(\beta)$ is finite.

\begin{rem}
For a rectifiable parametrized curve $\beta,$ we have,
$$
L(\beta) =\lim_{J\to \infty} \sum_{j=1}^J \left|\beta(j/J)-\beta\left(({j-1})/J\right)\right|_2.
$$
\end{rem}

The length is a property of the equivalence class : all parametrizations of $\Cc\in\Bg$ have the same length. We denote by $L(\Cc)$ the length of $\Cc.$

For a rectifiable parametrized curve $\beta,$ we define the length reparametrization \citep[see][Theorem 1.3]{vaisala2006lectures} :
\begin{align*}
s_\beta:[0,1]&\to[0,1]\\
		t &\mapsto L(\beta^t)/L(\beta),
\end{align*}
where $\beta^t$ is the restriction of $\beta$ to the interval $[0,t].$
The function $s_\beta$ is increasing and continuous, that is $s_\beta\in\Gamma.$
Moreover, one can define the generalized inverse of $s_\beta,$ 
\begin{align*}
q_\beta:[0,1]&\to[0,1]\\
		u &\mapsto \inf\{t: s_\beta(t)\geq u\}.
\end{align*}
The function $q_\beta$ is left continuous and admits a limit from the right for all $u\in[0,1]$ \citep[see][Proposition 1]{embrechts2013note}.
According to \citet[Theorem 2.4]{vaisala2006lectures}, for each rectifiable curve $\Cc$ there exists a unique parametrization $\beta_\Cc:[0,1]\to\RR^d,$ called \emph{the arc-length parametrization}, such that $L(\beta_\Cc^t)=tL(\Cc),$ for all $t\in[0,1].$ The arc-length parametrization is never-locally-constant.

\begin{lemma}\label{lem:SpaceCurve:mesureQ}
Let $\Bg_L$ be the set of rectifiable unparametrized curves with a positive length.
\begin{enumerate}
\item $L:\Bg\to[0,+\infty]$ is measurable and $\Bg_L$ is a measurable set.
\item Let $u\in[0,1]$ be fixed. The application, 
\begin{align*}
\{\beta\in\spacefun:\ L(\beta)<\infty\} &\to[0,1]\\
		\beta &\mapsto q_\beta(u)
\end{align*}
is measurable.
\item For all non negative bounded functions $f:\RR^d\to\RR,$ the application
\begin{align*}
I:\Bg_L &\to\RR\\
  \Cc &\mapsto \int_\Cc f(s) ds :=\int_0^1 f(\beta_\Cc(t)) dt
\end{align*}
is measurable.
\end{enumerate}
\end{lemma}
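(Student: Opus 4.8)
\textbf{Proof proposal for Lemma~\ref{lem:SpaceCurve:mesureQ}.}

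The plan is to establish the three measurability statements in order, using the fact that each relevant quantity is a pointwise (monotone) limit of quantities that are manifestly continuous on $\spacefun$, and that the surjection $\beta\mapsto\Cc_\beta$ can be made into a measurable map with a measurable right-inverse by restricting to never-locally-constant (e.g.\ arc-length) parametrizations. First, for part (1), I would recall from \eqref{eq:chordal.length} and the Remark following it that for a rectifiable $\beta$ one has $L(\beta)=\sup_J L_{\tau_J}(\beta)$ where $\tau_J$ is the uniform partition with $J$ subintervals, and in fact $L(\beta)=\lim_J\sum_{j=1}^J|\beta(j/J)-\beta((j-1)/J)|_2$. Each map $\beta\mapsto\sum_{j=1}^J|\beta(j/J)-\beta((j-1)/J)|_2$ is continuous on $(\spacefun,\|\cdot\|_\infty)$ (it is a finite sum of compositions of evaluation maps, which are continuous, with the norm), hence Borel; therefore $\beta\mapsto L(\beta)=\sup_J(\cdots)$ is Borel on $\spacefun$ as a countable supremum of Borel functions. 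Since $L$ is constant on equivalence classes, it descends to a well-defined map on $\Bg$; to see that this descended map is Borel on $(\Bg,d_\Bg)$ I would use that the map $A:\Bg\to\spacefun$ sending $\Cc$ to its arc-length parametrization $\beta_\Cc$ is a measurable section of the quotient (this needs a short argument, see below), and then $L=L\circ A$ on $\Bg$ is a composition of measurable maps. Finally $\Bg_L=L^{-1}((0,\infty))$ is then a Borel set.

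For part (2), fix $u\in[0,1]$. On the set $\{\beta:L(\beta)<\infty\}$ the length reparametrization $s_\beta(t)=L(\beta^t)/L(\beta)$ is, for each fixed $t$, a measurable function of $\beta$ by the same uniform-partition argument applied to the restriction $\beta^t$ (note $L(\beta^t)=\lim_J\sum_{j=1}^J|\beta(jt/J)-\beta((j-1)t/J)|_2$). Then $q_\beta(u)=\inf\{t\in[0,1]:s_\beta(t)\ge u\}$, and since $t\mapsto s_\beta(t)$ is continuous and nondecreasing we have, for any $c\in[0,1)$, $\{\beta:q_\beta(u)\le c\}=\{\beta:s_\beta(r)\ge u\text{ for some rational }r\in(c,1]\}\cup\{q_\beta(u)\le c\text{ via the closed endpoint}\}$; more cleanly, $q_\beta(u)>c \iff s_\beta(c)<u$ (using continuity and monotonicity of $s_\beta$), so $\{\beta:q_\beta(u)>c\}=\{\beta:s_\beta(c)<u\}$, which is measurable by the measurability of $\beta\mapsto s_\beta(c)$. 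Hence $\beta\mapsto q_\beta(u)$ is measurable.

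For part (3), write $I(\Cc)=\int_0^1 f(\beta_\Cc(t))\,dt$. The strategy is: (a) show that $\Cc\mapsto\beta_\Cc$ is measurable from $\Bg_L$ into $\spacefun$, (b) show $\beta\mapsto\int_0^1 f(\beta(t))\,dt$ is measurable on $\spacefun$ for bounded Borel $f$, and conclude by composition. For (b), when $f$ is bounded and continuous the integrand $t\mapsto f(\beta(t))$ depends continuously on $\beta$ uniformly in $t$, so $\beta\mapsto\int_0^1 f(\beta(t))\,dt$ is continuous; the class of bounded $f$ for which this functional is Borel is a vector space closed under bounded pointwise limits and contains the bounded continuous functions, so by a functional monotone class (Dynkin) argument it contains all bounded Borel $f$. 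For (a), I would exhibit $\Cc\mapsto\beta_\Cc$ as a pointwise limit of measurable maps: for fixed $t$, the evaluation $\Cc\mapsto\beta_\Cc(t)$ can be expressed through a never-locally-constant representative via $\beta_\Cc(t)=\beta(q_\beta(t))$ for any never-locally-constant $\beta$ in the class, and one can pin down a canonical measurable choice of such $\beta$ (for instance, parametrize $\Bg_L$ by pairs (locus-data, length) and use the explicit inverse-length construction) so that each coordinate-evaluation $\Cc\mapsto\beta_\Cc(t)$ is measurable; since $\spacefun$ carries the sup-norm topology and a map into $C([0,1],\RR^d)$ with all evaluations measurable and values in a separable space is Borel, this gives measurability of $\Cc\mapsto\beta_\Cc$.

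\textbf{Main obstacle.} The genuinely delicate point is (a): making rigorous that the assignment of the arc-length parametrization $\Cc\mapsto\beta_\Cc$ (equivalently, a measurable Borel section of the quotient map $\spacefun\to\Bg$) is Borel measurable. The difficulty is that the quotient map is continuous and surjective but far from injective, and one must produce a measurable selection; I expect the cleanest route is to avoid abstract selection theorems and instead use the concrete formulas — $s_\beta$, its generalized inverse $q_\beta$ (whose measurability in $\beta$ is exactly part (2)), and the identity $\beta_\Cc=\beta\circ q_\beta$ for a never-locally-constant $\beta$ — together with the observation that $\beta\circ q_\beta$ is independent of which never-locally-constant representative one starts from, so that fixing any one Borel-measurable assignment of a never-locally-constant representative (e.g.\ the one coming from the countable dense set used in Lemma~\ref{lem:SpaceCurve:complete}) suffices. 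Once (a) is in place, parts (1) and (3) follow by the composition and monotone-class arguments sketched above, and part (2) is the self-contained core that powers the construction.
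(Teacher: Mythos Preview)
Your parts (1) and (2) are essentially the paper's arguments: supremum of chordal lengths over uniform partitions for $L$, and the identity $\{q_\beta(u)>c\}=\{s_\beta(c)<u\}$ for the generalized inverse.

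For part (3) the paper takes a more economical route than yours. Rather than first proving that the section $\Cc\mapsto\beta_\Cc$ is Borel into $\spacefun$, it writes the integral as a limit of Riemann sums $I_n(\Cc)=\frac{1}{n}\sum_{i=1}^n f(\beta_\Cc(i/n))$, uses the identity $\beta_\Cc=\beta\circ q_\beta$ valid for \emph{any} $\beta\in\Cc$, and observes that $\beta\mapsto f(\beta(q_\beta(i/n)))$ is Borel on $\spacefun$ directly from part (2) combined with continuity of the evaluation map $(\beta,t)\mapsto\beta(t)$. Thus $I_n$, and hence $I=\lim_n I_n$, is Borel on $\spacefun$ and constant on equivalence classes; only then does one descend to $\Bg_L$. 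This bypasses your step (a) entirely.

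On your ``main obstacle'': your proposed resolution is circular. You write that $\Cc\mapsto\beta_\Cc(t)=\beta(q_\beta(t))$ once ``one can pin down a canonical measurable choice of such $\beta$'', but pinning down a measurable $\Cc\mapsto\beta$ \emph{is} the section problem you are trying to solve. The paper avoids this by never selecting a representative: it works on $\spacefun$ throughout with functions that are constant on equivalence classes. The descent step (a class-invariant Borel function on $\spacefun$ induces a Borel function on $(\Bg,d_\Bg)$) is what the paper glosses over with the remark ``$\beta\in L^{-1}(A)$ iff $\Cc_\beta\subset L^{-1}(A)$''; it is justified by the Polish structure on both spaces: the continuous surjection $\pi:\spacefun\to\Bg$ sends a saturated Borel set $E$ to an analytic set with analytic complement $\pi(\spacefun\setminus E)$, hence Borel by Souslin's theorem. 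That is the clean way to close the gap, not a measurable selection.
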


\begin{proof}[Proof of Lemma~\ref{lem:SpaceCurve:mesureQ}]
\textit{1. $L$ is measurable.}
Let $\tau$ be a partition of $[0,1].$ 
The function $\beta\in\spacefun\mapsto L_{\tau}(\beta)\in\RR_+$ is measurable.
Then the length function $\beta\mapsto L(\beta)$ is measurable as the limit of measurable functions.
Moreover, for $A$ a borelian of $[0,\infty],$ we have :
$$ 
\beta\in L^{-1}(A)\ \text{ if and only if }\ \Cc_\beta \subset L^{-1}(A).  
$$
Then $L:\Bg\to [0,+\infty]$ is measurable and $\Bg_L=L^{-1}(]0,\infty[).$

\textit{2. $\beta\mapsto q_\beta(u)$ is measurable.}
From the previous item, we deduce that for all $t\in[0,1],$ the function
\begin{align*}
\{\beta\in\spacefun:\ L(\beta)<\infty\} &\to[0,1]\\
		\beta &\mapsto s_\beta(t)
\end{align*}
is measurable. Let $u$ be in $[0,1]$ fixed. We remark that,
$$\small
\{\beta\!\in\!\spacefun: 0<L(\beta)<\infty\ \text{and}\ q_\beta(u) \leq t \} = \{\beta\!\in\!\spacefun: 0<L(\beta)<\infty\ \text{and}\ s_\beta(t) \geq u \}. 
$$
Then $\beta\mapsto q_\beta(u)$ is measurable too.

\textit{3. $I:\Cc\mapsto I(\Cc)$ is measurable.}
It suffices to prove the lemma when $f$ is continuous.
Let $\Cc$ be in $\Bg_L.$
Using Riemann sums we have:
\begin{align*}
I(f) = \lim_{n\to\infty} I_n(\Cc),\quad\text{where } I_n(\Cc)=\frac{1}{n} \sum_{i=1}^n f\left(\beta_\Cc(i/n)\right).
\end{align*}
Let $\beta$ be a parametrization of $\Cc,$ then $\beta=\beta_\Cc\circ s_\beta$ and $\beta_\Cc=\beta\circ q_\beta.$
Then we can rewrite $I_n(\Cc)$ as
\begin{align*}
I_n(\beta)=\frac{1}{n} \sum_{i=1}^n f\left(\beta(q_\beta(i/n))\right).
\end{align*}
We deduce from the previous point that the function,
\begin{align*}
\{\beta\!\in\!\spacefun: 0<L(\beta)<\infty\}\to \RR,\quad \beta\mapsto I_n(\beta),
\end{align*}
is measurable, and its limits is measurable too.
\end{proof}

Further, we define the probability distribution $\mu_{\Cc}$ on the Borel sets of $\RR^d$
$$
\text{for all borel sets $A$ of $\RR^d,$}\qquad \mu_\Cc(A) = \frac{1}{L(\Cc)}\int_\Cc\mathds{1}_A(s) ds\,,
$$
with $\mathds{1}_A(x)$ being the indicator function that takes the value 1 if $x\in A$ and 0 otherwise.

\begin{lemma}\label{lem:SpaceCurve:approxmu}
Let $\beta$ be a parametrisation of $\Cc.$ 
Let $U_J$ be a random variable such that its distribution is a mixture distribution,
$$
\sum_{j=1}^J \frac{|\beta(j/J)-\beta(({j-1})/J)|_2}{L_{\tau_J}(\beta)}\ \Uu_{[(j-1)/J,{j}/J]},
$$
where $L_{\tau_J}(\beta)$ is the chordal-length \eqref{eq:chordal.length} associated to the partition $\tau_J=(j/J)_{j=0,\ldots,J}$ and  $\Uu_{[a,b]}$ is the uniform distribution on the interval $[a,b].$
The sequence of random variables $(\beta(U_J))_{J\geq 1}$ converges in distribution to $\mu_\Cc.$
\end{lemma}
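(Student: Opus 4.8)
The plan is to establish convergence in distribution through test functions: it suffices to prove that $E[f(\beta(U_J))]\to\int_\Cc f\,d\mu_\Cc$ for every bounded continuous $f:\RR^d\to\RR$. Put $g=f\circ\beta$, a continuous function on $[0,1]$, and let $G_J$ denote the cumulative distribution function of $U_J$, so that $E[f(\beta(U_J))]=\int_0^1 g\,dG_J$. The first half of the argument pins down the candidate limit. Since $\beta=\beta_\Cc\circ s_\beta$ with $s_\beta\in\Gamma$ continuous and non-decreasing, the push-forward of the Lebesgue--Stieltjes measure $ds_\beta$ under $s_\beta$ is Lebesgue measure on $[0,1]$ (the flat pieces of $s_\beta$, if any, carry no $ds_\beta$-mass), so the change of variables $u=s_\beta(t)$ yields
$$\int_0^1 g\,ds_\beta = \int_0^1 f\big(\beta_\Cc(s_\beta(t))\big)\,ds_\beta(t) = \int_0^1 f\big(\beta_\Cc(u)\big)\,du = \int_\Cc f\,d\mu_\Cc,$$
the last equality being \eqref{eq.line.integral.with.mu}. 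Hence the whole statement reduces to $\int_0^1 g\,dG_J\to\int_0^1 g\,ds_\beta$, and for this it is enough to show that $G_J\to s_\beta$ uniformly on $[0,1]$: both $G_J$ and $s_\beta$ are continuous distribution functions on a compact interval, so uniform (even pointwise) convergence entails weak convergence of the associated probability measures, and $g$ is bounded and continuous.

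Next I would make both functions explicit. From the mixture description, $U_J$ has density $t\mapsto J\,|\beta(j/J)-\beta((j-1)/J)|_2/L_{\tau_J}(\beta)$ on $\big((j-1)/J,\,j/J\big)$, whence for all $t\in[0,1]$
$$G_J(t) = \frac{\ell_J(t)}{L_{\tau_J}(\beta)},$$
where $\ell_J$ is the piecewise-linear function on $[0,1]$ with $\ell_J(j/J)=\sum_{i=1}^j|\beta(i/J)-\beta((i-1)/J)|_2$; that is, $\ell_J$ is the linear interpolant of the cumulative chordal lengths of $\beta$ along $\tau_J$. In parallel, $s_\beta=\ell/L(\Cc)$ with $\ell(t):=L(\beta^t)$, which is non-decreasing, continuous \citep[Theorem~1.3]{vaisala2006lectures}, with $\ell(0)=0$ and $\ell(1)=L(\Cc)>0$. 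Recalling that $L_{\tau_J}(\beta)\to L(\beta)=L(\Cc)$ as $J\to\infty$ for rectifiable $\beta$, proving $G_J\to s_\beta$ uniformly reduces to proving $\ell_J\to\ell$ uniformly.

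At the grid points this follows from the additivity of length and of chordal length across a cut point belonging to the partition: fixing $j$ and writing $a=j/J$, the quantity $L(\Cc)-L_{\tau_J}(\beta)$ splits as $\big[\ell(a)-\ell_J(a)\big]+\big[L(\beta|_{[a,1]})-L_{\tau_J\cap[a,1]}(\beta|_{[a,1]})\big]$ with both brackets non-negative, so $0\le\ell(a)-\ell_J(a)\le\varepsilon_J:=L(\Cc)-L_{\tau_J}(\beta)$, and $\varepsilon_J\to0$. For a general $t\in[j/J,(j+1)/J]$, both $\ell$ and $\ell_J$ increase by at most $\ell((j+1)/J)-\ell(j/J)\le\omega(1/J)$ on that interval, where $\omega$ is the modulus of continuity of $\ell$ (finite and vanishing at $0$ by uniform continuity of $\ell$ on $[0,1]$); therefore $|\ell(t)-\ell_J(t)|\le\varepsilon_J+\omega(1/J)\to0$ uniformly in $t$. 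This gives $G_J\to s_\beta$ uniformly and closes the argument. Equivalently, once $G_J\to s_\beta$ uniformly, $U_J$ converges in distribution to a variable $T$ with distribution function $s_\beta$; the continuous mapping theorem then yields $\beta(U_J)\to\beta(T)$ in distribution, and $\beta(T)=\beta_\Cc(s_\beta(T))$ has law $(\beta_\Cc)_*\,\Uu[0,1]=\mu_\Cc$ by the probability integral transform applied to the continuous distribution function $s_\beta$.

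The one step that really needs care is the uniform-in-$j$ estimate $\ell(j/J)-\ell_J(j/J)\le\varepsilon_J$; the additivity splitting above is precisely what makes this uniform, and everything else is bookkeeping. A minor technical nuisance is the change of variables $u=s_\beta(t)$ when $\beta$ is not never-locally-constant (so $s_\beta$ has flat pieces): those pieces are $ds_\beta$-null, so one may either run the argument on the support of $ds_\beta$ or, without loss of generality, replace $\beta$ by the never-locally-constant representative of $\Cc$ \citep[Exercise~2.5.3]{Burago2001}.
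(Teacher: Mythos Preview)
Your proof is correct. The route differs from the paper's, though the underlying ingredients coincide. The paper compares $E[f(\beta(U_J))]$ and $\int f\,d\mu_\Cc$ to a common Riemann-type sum $\sum_j w_j\,f(\beta(j/J))$, using uniform continuity of $f\circ\beta$ to control the first approximation and the bound $\ell(j/J)-\ell((j{-}1)/J)\ge|\beta(j/J)-\beta((j{-}1)/J)|_2$ together with $L(\Cc)-L_{\tau_J}(\beta)\to0$ to control the second. You instead recognise $E[f(\beta(U_J))]=\int_0^1 g\,dG_J$ and $\int f\,d\mu_\Cc=\int_0^1 g\,ds_\beta$, and then prove $G_J\to s_\beta$ uniformly via the additivity estimate $0\le\ell(j/J)-\ell_J(j/J)\le L(\Cc)-L_{\tau_J}(\beta)$ at grid points plus the modulus of continuity of $\ell$ in between. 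Your framing through distribution functions is arguably cleaner: it separates the analytic work (showing $\ell_J\to\ell$ uniformly) from the measure-theoretic conclusion (weak convergence), and your final remark using the continuous mapping theorem and the probability integral transform makes the role of $s_\beta$ fully transparent. The paper's direct Riemann-sum comparison is shorter but leaves these structural observations implicit.
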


\begin{proof}[Proof of Lemma~\ref{lem:SpaceCurve:approxmu}]
Let $\ell_\beta(t)=L(\beta_{|[0,t]})$ be the length of the parametrized curve $\beta$ on $[0,t].$
Let $f:\RR^d\to\RR$ be a continuous bounded function.
Using that the functions $f,\beta,\beta_\Cc$ and $\ell_\beta$ are continuous (and uniformly continuous on a compact set), for all $\epsilon>0$ there exists $J_\epsilon\geq 1$ such that,
\begin{align*}
&\forall J\geq J_\epsilon,\ [L(\Cc)-L_{T_J}(\beta)|<\epsilon L(\beta)/2\quad\text{and}\quad L(\Cc)/2<L_{T_J}(\beta),\\
&\forall J\geq J_\epsilon,\ \forall t\in[0,1], \exists j\in\{1,\ldots,J\} : t\in[(j-1)/J,j/J]\ \text{and } |f\circ\beta(j/J)-f\circ\beta(t)| <\epsilon.
\end{align*}
Then we can show that
\begin{align*}
\lim_{J\to\infty} \mathbb{E}_P(f(\beta(U_J)) &= \lim_{J\to\infty} \sum_{j=1}^J \frac{|\beta(j/J)-\beta(({j-1})/J)|_2}{L(\Cc)} f(\beta(j/J)),\\
\int f d\mu_\Cc &= \lim_{J\to\infty} \sum_{j=1}^J \frac{\ell_\beta(j/J)-\ell_\beta(({j-1})/J)}{L(\Cc)} f(\beta(j/J)).
\end{align*}
Noticing that for all $j=1,\ldots,J,$ $\ell_\beta(j/J)-\ell_\beta(({j-1})/J)\geq |\beta(j/J)-\beta(({j-1})/J)|_2,$ we can bound the difference,
\begin{align*}
&\left|\sum_{j=1}^J \frac{\ell_\beta(j/J)-\ell_\beta(({j-1})/J)}{L(\Cc)} f(\beta(j/J)) - \sum_{j=1}^J \frac{|\beta(j/J)-\beta(({j-1})/J)|_2}{L(\Cc)} f(\beta(j/J))\right|\\
&\qquad\qquad\qquad\qquad\qquad\leq \frac{\max_{t\in[0,1]}|f(\beta(t))|}{L(\Cc)} \left(L(\Cc)-L_{\tau_J}(\beta)\right).
\end{align*}
\end{proof}

\begin{rem}[The order does not matter]\label{rem:order}
In the paper, we define an unparametrized curve $\Cc=\Cc_\beta$ with an order : the starting point is $\beta(0)$ and the end point is $\beta(1)$.
We can also define an unparametrized curve without orientation via an equivalence relation on the set of parametrized curves up to a larger set of reparametrization,
\begin{multline*}
\Gamma_{\textrm{new}}=\left\{\gamma:[0,1]\to[0,1] : \text{$\gamma$ is continuous and monotonic, }\right.\\ \left.
 (\gamma(0),\gamma(1))\in\{(0,1),(1,0)\}\right\}.
\end{multline*}
Using the same arguments, one can show that the resulting space of curves endowed with the associated Frechet metric is separable and complete. 
The difference is that there exist two arc-length parametrizations : one $\beta_{\Cc}^+$ from $\beta(0)$ to $\beta(1)$ and the other $\beta_{\Cc}^-$ from $\beta(1)$ to $\beta(0)$ such that,
$$
\beta_{\Cc}^-(t)=\beta_{\Cc}^+(1-t).
$$
Then the definition of the probability measure $\mu_\Cc$ is invariant whether we use $\beta_{\Cc}^+$ or $\beta_{\Cc}^-$ to define it.
\end{rem}


\section{Definition of the Depth Functions}

In order to prove that our curve depth is well defined, we have to show that the function $x\mapsto D(x|Q_P,\mu_\Cc)$ is measurable and that $D(\Cc|Q_P)$ is bounded by $1$ for all $\Cc\in\Bg_L$. This requires the use of a Monte Carlo scheme that we describe in the next subsection.

\subsection{Monte Carlo Approximation of the Curve Depth}\label{app:ssec:montecarlo}

Notice that the curves $\Xx_1,\ldots \Xx_n$ and $\Cc$ are known, so this means that $\mu_\Cc$ and $Q_n$ are formally known too. However, the computation of $\mu_\Cc(H)$ and $Q_n(H)$ for an arbitrary halfspace $H$ can be untractable. Consequently, it is necessary to estimate $\mu_\Cc(H)$ using either a quadrature formula or a Monte Carlo approach. We choose here a Monte Carlo approximation of \eqref{eq:depth:emp:curve}. In Section \ref{app:sec:simulation} we conduct a simulation study on the size $m$ of the Monte Carlo scheme.

We generate samples of size $m$ from the observed (realized) curves $\Xx_1,\ldots,\Xx_n$:  
\begin{align*}
\text{for all $i=1,\ldots,n$},\quad 
\text{ given $\Xx_i,$}\quad X_{i,1},\ldots,X_{i,m} \text{ are i.i.d.\ distributed from } \mu_{\Xx_{i}}^{},
\end{align*}
and two independent samples from the curve $\Cc,$
\begin{align*}
&Y_{1}, \ldots,  Y_{m} \text{ are i.i.d.\ distributed from } \mu_{\Cc}^{},\\
&Z_{1}, \ldots,  Z_{m} \text{ are i.i.d.\ distributed from } \mu_{\Cc}^{}.
\end{align*}

We use $\YY_{m}=\{Y_{1}, \ldots , Y_{m}\}$ to estimate the distribution $\mu_\Cc$ by the empirical distribution $\widehat{\mu}_{m}$:
$$
\widehat{\mu}_{m} = m^{-1}\sum_{j=1}^{m}\delta_{Y_{i}}\,,
$$
where $\delta_x$ stands for the Dirac measure at $x\in\RR^d.$ 

Furthermore, we remark that the marginal distribution of $X_{i,j}$ is $Q_P$ (see Remark \ref{rem:Q}). 
Then, let $\widehat{Q}_{m,n}$ be the empirical distribution of the random sample $\XX_{n,m}=\{X_{i,j},\ i=1,\ldots, n;\ j=1,\ldots, m\}$:
$$
\widehat{Q}_{m,n}=(mn)^{-1}\sum_{ i=1}^n\sum_{j=1}^m\delta_{X_{i,j}}.
$$
Let $H$ be a closed halfspace of $\RR^d.$ 
A plug-in estimator of $Q_n(H)/\mu_\Cc(H)$ is $\widehat{Q}_{m,n}(H)/\widehat{\mu}_{m}(H).$ 

To ensure the consistency of the Monte Carlo estimate of \eqref{eq:depth:emp:curve} we need to control the ratio $\widehat{Q}_{m,n}(H)/\widehat\mu_m(H)$ for all $H$ such that $\mu_\Cc(H)>0,$ given that $\mu_\Cc(H)$ is approximated by $\widehat{\mu}_m(H)$. 
This is known to be a challenging problem having no general solution, see e.g., \cite{BrodaK16}. 
To circumvent this, we consider only a subset of all halfspaces in $\RR^d$ for the computation of the Monte Carlo estimate of the depth. 
Let $\Delta$ be in $(0,1/2).$
We denote by $\HH_{\Delta}^{n,m}$ the collection of closed halfspaces $H$ such that either $\widehat{Q}_{m,n}(H)=0$ or $\widehat{\mu}_{m}(H)>\Delta$, almost surely. 
For all $x$ in the locus of $\Cc$, we define 
{\begin{align}
\widehat{D}(x|\widehat{Q}_{m,n},\widehat{\mu}_{m},\HH_{\Delta}^{n,m})&\!=\! \inf_{u\in\mathcal{S}}\left\{\widehat{Q}_{m,n}(H_{u,x})/\widehat{\mu}_{m}(H_{u,x})\,:\ H_{u,x}\in\HH_{\Delta}^{n,m}\right\}\label{eq:depth:mc:x}.
\end{align}}
Then, we use $\ZZ_{m}=\{Z_{1}, \ldots ,  Z_{m}\}$ to estimate the integral \eqref{eq:depth:emp:curve} w.r.t.\ the probability measure $\mu_\Cc$,
\begin{align}
\widehat{D}_{n,m,\Delta}(\Cc|\Xx_1,\ldots,\Xx_n)&\!=\! \frac{1}{m}\sum_{i=1}^{m} \widehat{D}(Z_{i}|\widehat{Q}_{m,n},\widehat{\mu}_{m},\HH_{\Delta}^{n,m}).\label{eq:depth:mc:curve}
\end{align}

\begin{figure}[H]
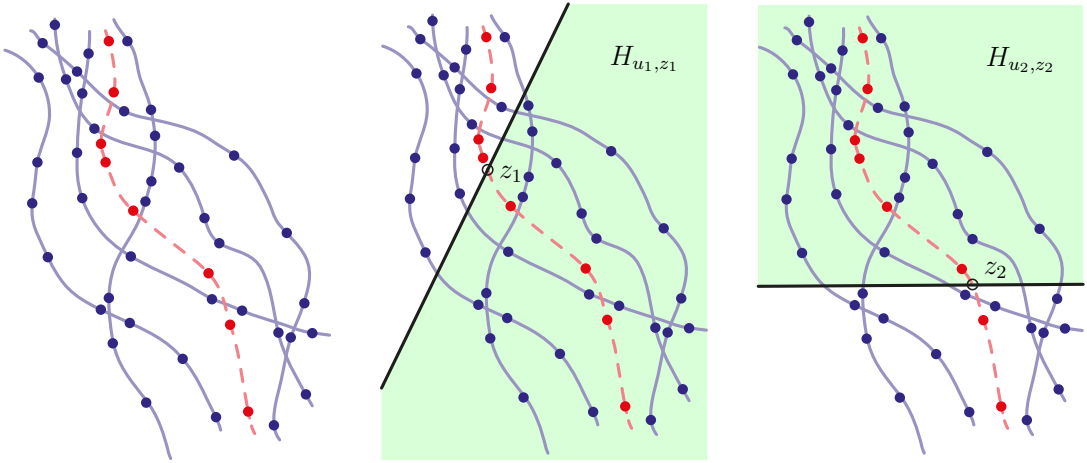

	\begin{center}
		\resizebox{0.29\linewidth}{!}{    
		\input{pic_approx01.tex}
		}
		\resizebox{0.29\linewidth}{!}{    
		\input{pic_approx02.tex}
		}
    	\resizebox{0.29\linewidth}{!}{    
		\input{pic_approx03.tex}
		}
	\end{center}
	\caption{
          Illustrations to the statistical model and Monte Carlo appproximation of the depth with a sample of five curves in blue and the curve $\Cc$ in red: (left) samples $\XX_{n,m}$ (blue) and $\YY_m$ (red) of points on the observed curves; (middle and right) illustration of calculations of \eqref{eq:depth:mc:x} for $z_1$ and $z_2$ on the red curve see Remark \ref{rk:calculation}.}
	\label{fig:illustration}
\end{figure}

\begin{rem}\label{rk:calculation}
To provide an intuitive reasoning, we artificially restrict the choice of the infimum in \eqref{eq:depth:mc:x} to two halfspaces where the number of observed curves is $n=5$ and the size of the Monte Carlo sample for each curve is $m=8$; see Figure~\ref{fig:illustration}. 
Let $z_1$ and $z_2$ be two points in the locus of $\Cc$ (red middle curve). 
Consider two halfplanes, say $H_{u_1,z_1}$ and $H_{-u_1,z_1}$,  yielded by the line in Figure~\ref{fig:illustration}, middle, when calculating $\widehat{D}(z_1|\widehat{Q}_{m,n},\widehat{\mu}_{m})$. 
For each of these halfplanes, we obtain $(\widehat{Q}_{m,n}(H_{u_1,Z_1}) = 25/40,\,\widehat{\mu}_{m}(H_{u_1,z_1}) = 4/8)$ and $(\widehat{Q}_{m,n}(H_{-u_1,z_1}) = 15/40,\,\widehat{\mu}_{m}(H_{-u_1,z_1}) = 4/8)$, respectively. 
Among $H_{u_1,z_1}$ and $H_{-u_1,z_1}$, $H_{-u_1,z_1}$ will be chosen as $\widehat{Q}_{m,n}(H_{-u_1,z_1}) < \widehat{Q}_{m,n}(H_{u_1,z_1})$ and $\widehat{\mu}_{m}(H_{u_1,z_1}) = \widehat{\mu}_{m}(H_{-u_1,z_1})$, and thus the rationale follows the traditional multivariate Tukey depth as this would be the case in the absence of the denominator $\widehat{\mu}_{m}(H_{\cdot,z_1})$. 
On the other hand, in Figure~\ref{fig:illustration}, right, the values of the denominators in \eqref{eq:depth:mc:x} differ giving pairs of portions equal to $(\widehat{Q}_{m,n}(H_{u_2,z_2}) = 25/40,\,\widehat{\mu}_{m}(H_{u_2,z_2}) = 6/8)$ and $(\widehat{Q}_{m,n}(H_{-u_2,z_2}) = 15/40,\,\widehat{\mu}_{m}(H_{-u_2,z_2}) = 2/8)$. In this case, halfplane $H_{u_2,z_2}$ with higher portion of $\widehat{Q}_{m,n}$ will be chosen due to the difference of $\widehat{\mu}_{m}(H_{u_2,z_2})$ and $\widehat{\mu}_{m}(H_{-u_2,z_2})$.
\end{rem}

Theorem~\ref{thm:consistency0} below states that the Monte Carlo approximation of the \curvedepth~\eqref{eq:depth:mc:curve} converges in probability to the population version \eqref{eq:depth:curve} when $n,m\rightarrow\infty$ or to the sample version \eqref{eq:depth:emp:curve} when $m\rightarrow\infty$.
Then Theorem~\ref{thm:consistency} holds. 
Let $\mu$ be a probability measure defined on $\RR^d,$ and let $\widehat{\mu}_m$ be the empirical measure defined on a $m$-sample of $\mu.$
We denote by $\HH$ the collection of all halfspaces in $\RR^d$ and define
$$
\|\widehat{\mu}_m-\mu\|_{\HH}:=\sup_{H\in\HH}|\widehat{\mu}_m(H)-\mu(H)|.
$$
According to \citet[Chapter 26, Theorem 1]{shorack2009empirical}, the class $\HH$ satisfies the Glivenko-Cantelly property. 
Then $\|\widehat{\mu}_m-\mu\|_\HH$ converge a.s. to zero as $m\to +\infty.$
Moreover, we have
\begin{equation}\label{eq:shorack}
\displaystyle\overline{\lim}_{m} \lambda_m^{-1} \|\widehat{\mu}_m-\mu\|_\HH \leq C\quad\text{a.s.},
\end{equation}
where $C=\sqrt{d+1+1/2}$ and $\lambda_m=(\log(m)/m)^{1/2}$; see \citet[Chapter 26, Exercise 2]{shorack2009empirical}.

\begin{thm}\label{thm:consistency0}
Let $\Cc\in\Bg_L$ be an unparametrized curve such that $\mu_\Cc$ is non-atomic. 
Let $P$ be a probability measure in the space of curves such that $P\in\Pp$ and $Q_P$ is non-atomic.
Let $(\Delta_m)$ be a decreasing sequence of positive numbers such that $(\Delta_m)$ and $(\lambda_m/\Delta_m^2)$ converge to zero as $m\to\infty.$
Then:
\begin{itemize}
\item the Monte Carlo approximation $\widehat{D}_{n,m,\Delta_m}(\Cc|\Xx_1,\ldots,\Xx_n)$ converges in probability to $D(\Cc|\Xx_1,\ldots,\Xx_n)$ as $m\to\infty,$
\item the Monte Carlo approximation $\widehat{D}_{n,m,\Delta_m}(\Cc|\Xx_1,\ldots,\Xx_n)$ converges in probability to $D(\Cc|P)$ as $m,n\to\infty,$
\item the sample Tukey curve depth $D(\Cc|\Xx_1,\ldots,\Xx_n)$ converges in probability to $D(\Cc|P)$ as $n\to\infty.$
\end{itemize}
\end{thm}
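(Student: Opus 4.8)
I would prove the three assertions in the stated order, the first two carrying the analytic weight and the last following by a diagonalisation. Write $D_t(Q):=D\bigl(\beta_\Cc(t)\mid Q,\mu_\Cc\bigr)=\inf_{u\in\Ss}Q(H_{u,\beta_\Cc(t)})/\mu_\Cc(H_{u,\beta_\Cc(t)})$, so that by \eqref{eq.line.integral.with.mu} one has $D(\Cc|P)=\int_0^1 D_t(Q_P)\,dt$, $D(\Cc|\Xx_1,\dots,\Xx_n)=\int_0^1 D_t(Q_n)\,dt$, and the Monte Carlo version is $m^{-1}\sum_{k\le m}\widehat D(Z_k\mid\widehat Q_{m,n},\widehat\mu_m,\HH_{\Delta_m}^{n,m})$. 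All these point depths lie in $[0,1]$ (Lemma~\ref{lem:depth:bounded}), so the dominated convergence theorem will let me pass from convergence of the integrands to convergence of the integrals. The whole argument therefore reduces to controlling the point depths.

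\emph{Step 1 (uniform control of the sampling errors).} The family $\HH$ of closed halfspaces of $\RR^d$ is a Glivenko--Cantelli class with the quantitative rate \eqref{eq:shorack}. Applied to the i.i.d.\ points $Y_1,\dots,Y_m,Z_1,\dots,Z_m\sim\mu_\Cc$ and to the $X_{i,j}$ (whose common marginal is $Q_P$, Remark~\ref{rem:Q}), this gives $\|\widehat\mu_m-\mu_\Cc\|_\HH\to0$ and $\|\widehat Q_{m,n}-Q_n\|_\HH\to0$ (the latter conditionally on $\Xx_1,\dots,\Xx_n$) almost surely as $m\to\infty$, with the $(\log m/m)^{1/2}$ control where needed. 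The genuinely delicate piece is
\[
\|Q_n-Q_P\|_\HH=\sup_{H\in\HH}\Bigl|\tfrac1n\textstyle\sum_{i=1}^n\mu_{\Xx_i}(H)-Q_P(H)\Bigr|\xrightarrow[n\to\infty]{}0\quad\text{a.s.},
\]
because here one averages i.i.d.\ \emph{random measures}, not i.i.d.\ points. I would handle this by fixing the normal direction $u$ first --- for which $\{H_{u,x}:x\in\RR^d\}$ is a nested (totally ordered) family, so the classical one-dimensional Glivenko--Cantelli argument applies verbatim to the monotone map $x\mapsto Q_n(H_{u,x})$, using that $x\mapsto Q_P(H_{u,x})$ has at most countably many jumps --- and then upgrading to a supremum over the compact sphere $\Ss$ by a slab/oscillation bound.

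\emph{Step 2 (from the measures to the ratios, and the role of $\Delta_m$).} Fix $t$ and put $x=\beta_\Cc(t)$. The inequality $\limsup_n D_t(Q_n)\le D_t(Q_P)$ is immediate from the pointwise convergence $Q_n(H_{u,x})\to Q_P(H_{u,x})$ at each fixed $u$. For $\liminf_n D_t(Q_n)\ge D_t(Q_P)$: if along a subsequence $D_t(Q_n)\le D_t(Q_P)-\delta$, choose near-minimising directions $u_n$, extract $u_n\to u_\star$ by compactness of $\Ss$, and combine Step~1 ($\sup_u|Q_n(H_{u,x})-Q_P(H_{u,x})|\to0$) with the behaviour of $\mu_\Cc(H_{u,x})$ near $u_\star$ to get $Q_P(H_{u_\star,x})/\mu_\Cc(H_{u_\star,x})\le D_t(Q_P)-\delta$, contradicting the defining infimum. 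The configurations that obstruct this ($0/0$ and $a/0$, i.e.\ $\mu_\Cc$ charging a boundary hyperplane on which $Q_P$ vanishes) correspond to a $\mu_\Cc$-null set of $x$'s precisely because $\mu_\Cc$ and $Q_P$ are non-atomic; these exceptional $t$'s are discarded. The threshold $\Delta_m$ in $\HH_{\Delta_m}^{n,m}$ exists to keep the denominator of the Monte Carlo ratio from being spuriously small; with $\Delta_m\to0$ and $\lambda_m/\Delta_m^2\to0$ one checks, using Step~1, that $\widehat D(Z_k\mid\widehat Q_{m,n},\widehat\mu_m,\HH_{\Delta_m}^{n,m})$ reproduces $D_{\cdot}(Q_n)$ (resp.\ $D_{\cdot}(Q_P)$) up to $o_P(1)$.

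\emph{Step 3 (conclusions, and the final assertion).} Combining Steps~1--2 with dominated convergence yields the first two bullets: $\widehat D_{n,m,\Delta_m}(\Cc|\Xx_1,\dots,\Xx_n)\xrightarrow{P}D(\Cc|\Xx_1,\dots,\Xx_n)$ as $m\to\infty$ for each fixed $n$, and $\xrightarrow{P}D(\Cc|P)$ as $\min(m,n)\to\infty$. The final bullet (which is Theorem~\ref{thm:consistency}) then follows by diagonalisation: for each $n$ pick $m_n\ge n$ so large that $\PP\bigl(|\widehat D_{n,m_n,\Delta_{m_n}}(\Cc|\Xx_1,\dots,\Xx_n)-D(\Cc|\Xx_1,\dots,\Xx_n)|>1/n\bigr)\le 1/n$; since $\min(n,m_n)\to\infty$, the second bullet gives $\widehat D_{n,m_n,\Delta_{m_n}}(\Cc|\Xx_1,\dots,\Xx_n)\xrightarrow{P}D(\Cc|P)$, and the triangle inequality forces $D(\Cc|\Xx_1,\dots,\Xx_n)\xrightarrow{P}D(\Cc|P)$. (Alternatively one reruns Steps~1--2 directly with $Q_n$ in place of $\widehat Q_{m,n}$ and no truncation.) The main obstacle is the combination of Step~1 for $\|Q_n-Q_P\|_\HH$ and Step~2: one must convert a uniform law of large numbers for halfspace masses of \emph{random} measures into convergence of an infimum of ratios whose denominators may vanish, and it is exactly the non-atomicity of $\mu_\Cc$ and $Q_P$ that confines the degenerate ratios to a $\mu_\Cc$-null set and lets the dominated convergence step go through.
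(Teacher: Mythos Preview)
Your overall architecture matches the paper's: uniform Glivenko--Cantelli control over halfspaces via \eqref{eq:shorack}, pointwise convergence of the point depth, boundedness, and dominated convergence; and the third bullet by diagonalisation is exactly how the paper closes. Two points, however, deserve attention.

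First, you skip a step the paper treats explicitly. The Monte Carlo depth $m^{-1}\sum_k\widehat D(Z_k\mid\cdots)$ is an empirical average over the $Z_k$, not an integral, and the integrand $\widehat D(\cdot\mid\widehat Q_{m,n},\widehat\mu_m,\HH_{\Delta_m}^{n,m})$ itself depends on $m$. Dominated convergence alone does not bridge this gap. The paper conditions on $\XX_{n,m}$ and $\YY_m$, applies Hoeffding's inequality to the bounded i.i.d.\ summands $\widehat D(Z_k\mid\cdots)$ to obtain $\bigl|m^{-1}\sum_k\widehat D(Z_k\mid\cdots)-\int_\Cc\widehat D(\cdot\mid\cdots)\,d\mu_\Cc\bigr|\le\epsilon$ with probability at least $1-2e^{-2\epsilon^2 m}$, and only then invokes DCT on the integral.

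Second, your Step~2 and the paper's Lemmas~\ref{lem:depth:mc:conv-ps}--\ref{lem:depth:mc:conv-ps2} differ in method. You argue the $\liminf$ by compactness of $\Ss$: extract $u_n\to u_\star$ from near-minimisers and push the ratio to the limit. This needs control of $\mu_\Cc(H_{u_n,x})$ as $u_n\to u_\star$, and your claim that the obstructive configurations form a $\mu_\Cc$-null set of $x$ ``precisely because $\mu_\Cc$ and $Q_P$ are non-atomic'' is not correct as stated: non-atomicity rules out point masses, not hyperplanes carrying mass (a straight segment has non-atomic $\mu_\Cc$ yet a single hyperplane of full $\mu_\Cc$-mass through every $x$). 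More seriously, when $\mu_\Cc(H_{u_n,x})\to 0$ along your subsequence, the limiting ratio at $u_\star$ need not inherit the bound and the contradiction does not close. The paper avoids compactness entirely: for fixed $x$ it takes a minimising sequence $(u_k)$ for $D(x|Q_P,\mu_\Cc)$, splits into the cases $\inf_k\mu_\Cc(H_{u_k,x})>0$ versus $\mu_\Cc(H_{u_k,x})\downarrow 0$, and in the latter constructs a diagonal $(u_{k_m})$ with $\mu_\Cc(H_{u_{k_m},x})\ge 2C\lambda_m+\Delta_m$ so that $H_{u_{k_m},x}\in\HH_{\Delta_m}^{n,m}$ eventually. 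The hypothesis $\lambda_m/\Delta_m^2\to 0$ is used precisely here, to bound $\|\widehat\mu_m-\mu_\Cc\|_\HH/\Delta_m^2$.

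A smaller remark: the paper never needs your ``genuinely delicate'' piece $\|Q_n-Q_P\|_\HH\to 0$. For the second bullet it passes directly from $\widehat Q_{m,n}$ to $Q_P$ by applying \eqref{eq:shorack} to the $mn$ i.i.d.\ points $X_{i,j}$ (whose common marginal is $Q_P$, see Remark~\ref{rem:Q}), and the third bullet is then deduced from the first two, as you also propose.
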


To prove the boundness of the \curvedepth, it suffices to show that for all $x,$ $D(x|Q_P,\mu_\Cc)$ is bounded by $1.$

\subsection{Boundness of Data Depth}

Let $A$ be a Borel set of $\RR^d,$ we denote by $\partial A$ the boundary of the set $A.$
For instance the boundary of $H_{x,u}$ is $\partial H_{x,u}=\left\{y\in\RR^d:\ (y-x)^\top u= 0\right\},$ $u\in\Ss,$ $x\in\RR^d.$

\begin{lemma}\label{lem:depth:bounded}
{Let $\mu$ and $Q$ be non-atomic measures on $\RR^d,$ $d\geq 1.$ 
For all $x\in \RR^d$, there exists a closed halfspace $H\in\HH$ such that $x\in \partial H$ and $Q(H)/\mu(H)\leq 1$ (with a convention $0/0=0$).}
\end{lemma}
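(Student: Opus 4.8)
The plan is to reduce the statement to the following equivalent claim: there exists $u\in\Ss$ with $Q(H_{u,x})\le\mu(H_{u,x})$. Indeed, once such a $u$ is produced, take $H=H_{u,x}$, so that $x\in\partial H$ by construction; if $\mu(H)>0$ then $Q(H)/\mu(H)\le 1$, and if $\mu(H)=0$ then necessarily $Q(H)=0$ as well, so $Q(H)/\mu(H)=0/0=0\le 1$ by convention. (Throughout I use that in our setting $\mu$ and $Q$ are probability measures, so $\mu(\RR^d)=Q(\RR^d)=1$.)

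To prove the reduced claim I would argue by contradiction: suppose $Q(H_{u,x})>\mu(H_{u,x})$ for \emph{every} $u\in\Ss$, and average this strict inequality over the sphere. Let $\sigma$ be the rotation-invariant probability measure on $\Ss$. The heart of the argument is the identity $\int_\Ss Q(H_{u,x})\,d\sigma(u)=\tfrac12$, and likewise for $\mu$. To establish it I would invoke the joint measurability of $(u,y)\mapsto\indicatorf\{(y-x)^\top u\ge 0\}$ on $\Ss\times\RR^d$ and Tonelli's theorem to get
\[
\int_\Ss Q(H_{u,x})\,d\sigma(u)=\int_{\RR^d}\sigma\bigl(\{u\in\Ss:(y-x)^\top u\ge 0\}\bigr)\,dQ(y).
\]
For $y\ne x$ the set $\{u\in\Ss:(y-x)^\top u\ge 0\}$ is a closed hemisphere whose bounding great subsphere $\{u:(y-x)^\top u=0\}$ is $\sigma$-null, hence has $\sigma$-measure $\tfrac12$; for $y=x$ it is all of $\Ss$. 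Since $Q$ is non-atomic, $Q(\{x\})=0$, and the integral equals $\tfrac12$; the same computation applies verbatim to $\mu$.

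It then remains to notice that, under the contradiction hypothesis, $h(u):=Q(H_{u,x})-\mu(H_{u,x})$ is a bounded measurable function on $(\Ss,\sigma)$ that is strictly positive everywhere, so $\int_\Ss h\,d\sigma>0$ (consider the level sets $\{h>1/k\}$, which increase to $\Ss$, so some one of them has positive $\sigma$-measure). But $\int_\Ss h\,d\sigma=\tfrac12-\tfrac12=0$, a contradiction. Hence some $u\in\Ss$ with $Q(H_{u,x})\le\mu(H_{u,x})$ exists, and the lemma follows.

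I expect the only slightly delicate point to be the hemisphere computation behind $\int_\Ss Q(H_{u,x})\,d\sigma(u)=\tfrac12$: one must justify applying Tonelli (the joint measurability, which also yields the measurability of $u\mapsto Q(H_{u,x})$ needed to speak of $h$) and check that a great subsphere is $\sigma$-null --- both classical, and the $d=1$ case (where $\Ss=\{-1,1\}$ and the ``equator'' $\{u:(y-x)^\top u=0\}$ is empty for $y\ne x$) is trivial. Non-atomicity of $\mu$ and $Q$ enters precisely to force $\mu(\{x\})=Q(\{x\})=0$, which is what makes the two averages coincide.
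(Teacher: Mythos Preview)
Your proof is correct and takes a genuinely different route from the paper's. The paper argues as follows: first find a direction $u$ for which the boundary hyperplane $\partial H_{u,x}$ is null for \emph{both} $Q$ and $\mu$ (this is obtained by a countability argument --- only countably many affine hyperplanes through $x$ can carry positive mass, whereas uncountably many such hyperplanes exist), and then observe that for this $u$ the two complementary halfspaces satisfy $Q(H_{u,x})+Q(H_{-u,x})=1=\mu(H_{u,x})+\mu(H_{-u,x})$, so at least one of them must have $Q\le\mu$. Your argument instead averages $u\mapsto Q(H_{u,x})$ over the rotation-invariant measure on $\Ss$, uses Tonelli and the hemisphere identity (plus non-atomicity at the single point $x$) to show this average is exactly $\tfrac12$ for both $Q$ and $\mu$, and derives a contradiction from a pointwise strict inequality.

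Your averaging approach is more elegant and treats all dimensions $d\ge1$ uniformly, whereas the paper separates $d=1$ from $d\ge2$. On the other hand, the paper's intermediate lemma --- existence of affine subspaces through $x$ that are null for both measures --- is a reusable structural fact that the paper invokes again later (in the proof that the Monte Carlo depth is bounded, where a similar counting argument is needed for empirical measures). So the paper's detour buys a tool for subsequent use, while yours gives a cleaner self-contained proof of the lemma at hand.
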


\begin{proof}[Proof of Lemma \ref{lem:depth:bounded}]
Let $x$ be a fixed point of ${\RR^d}$.

\textit{1. The one-dimensional case, $d=1$.}
Since $Q_P$ and $\mu_\Cc$ are non-atomic measures, we get either $Q_P((-\infty,x])\leq \mu_\Cc((-\infty,x])$ or $Q_P([x,+\infty))\leq \mu_\Cc([x,+\infty)),$ for all $x\in\RR$. Then the lemma is proved.

\textit{2. The multi-dimensional case, $d\geq 2$.} If there exists $u\in\Ss$ such that $Q(\partial H_{u,x})=\mu(\partial H_{u,x})=0,$ then the lemma is proved. 
We show recursively that, for $k=1,\ldots,d-1,$ there exists an affine subspace $A_k$ of dimension $k$ such that $Q(A_k)=\mu(A_k)=0$ and $x\in A_k$. Then we consider $u\in\Ss$ such that $\partial H_{u,x}=A_{d-1}$ and the first assertion is true.

\textit{For $k=1,$} let $\mathcal{A}_n$ be the set of affine subspaces of dimension $1$ such that for all $A\in\mathcal{A}_n$: $x\in A$ and either $Q(A)>1/n$ or $\mu_\Cc(A)>1/n.$
The set $\mathcal{A}_n$ is finite since the intersection of $A\in\mathcal{A}_n$ is the singleton $\{x\}$ (and $Q(\{x\})=\mu(\{x\})=0$),
$$
1\geq \mu\left(\cup_{A\in\mathcal{A}_n}A\right) = \sum_{A\in\mathcal{A}_n}\mu(A)> \#\mathcal{A}_n /n.
$$
Then the set $$\cup_{n\geq 1}\mathcal{A}_n=\{A \text{ affine subspaces of dimension $1$}:\ x\in A\ \text{and}\ \mu(A)>0 \text{ or } Q(A)>0\}$$ is countable as the countable union of finite sets. Since the set of affine subspaces of dimension $1$ which contain $x$ is continuous, there exists $A_1\notin\mathcal{A}_n$ such that $Q(A_1)=\mu_\Cc(A_1)=0$ and $x\in A_1$.

\textit{Assume that $k\geq 2.$}
Using the recursive assumption for $k-1$, there exists an affine subspace $A_{k-1}$ of dimension $k-1$ such that $Q(A_{k-1})=\mu(A_{k-1})=0$ and $x\in A_{k-1}$.  
Let $\mathcal{A}_n$ be the set of affine subspaces $A$  of dimension $k$ such that $A_{k-1}\subset A,$ $Q(A)>1/n$ or $\mu(A)>1/n$. 
Using the same previous argument the subset $\mathcal{A}_n$ is finite and there exists $A_k\notin\cup_{n\geq 1}\mathcal{A}_n$.     
\end{proof}

\subsection{Mesurability}
To prove the measurability, it suffices to show that $x\mapsto D(x|Q_P,\mu_\Cc)$ and $x\mapsto D(x|Q_n,\mu_\Cc)$ are the limits of measurable  functions (see Lemma \ref{lem:depth:mc:conv-ps} and Lemma \ref{lem:depth:mc:conv-ps2} respectively).

\begin{lemma}\label{lem:depth:mc:measurable}
The function $x\in\RR^d\mapsto \widehat{D}(x|\widehat{Q}_{m,n},\widehat{\mu}_{m},\HH_{\Delta}^{n,m})\in\RR_+$ is measurable a.s.
\end{lemma}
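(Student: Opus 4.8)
The plan is to exploit that, once the samples $\{X_{i,j}\}$ and $\{Y_{j}\}$ are fixed, $\widehat{Q}_{m,n}$ and $\widehat{\mu}_{m}$ are supported on finitely many points, so that $x\mapsto\widehat{D}(x|\widehat{Q}_{m,n},\widehat{\mu}_{m},\HH_{\Delta}^{n,m})$ takes only finitely many values; its sublevel sets will then be finite unions of \emph{combinatorial cells}, and the only genuine work is to show that such a cell is Borel despite being defined by an existential quantifier over the uncountable direction set $\Ss$.

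First I would fix an arbitrary realisation of the samples (the conclusion then holds for every realisation, hence a.s.), relabel the $mn$ points $X_{i,j}$ as $X'_{1},\dots,X'_{N}$ with $N=mn$, and note that $H_{u,x}$, and hence $\widehat{Q}_{m,n}(H_{u,x})$ and $\widehat{\mu}_{m}(H_{u,x})$, depend on $(u,x)\in(\RR^{d}\setminus\{0\})\times\RR^{d}$ only through the integers $a(u,x)=\#\{p:(X'_{p}-x)^{\top}u\ge0\}$ and $b(u,x)=\#\{q:(Y_{q}-x)^{\top}u\ge0\}$, and are invariant under $u\mapsto su$, $s>0$. Defining $\rho:\{0,\dots,N\}\times\{0,\dots,m\}\to[0,+\infty]$ by $\rho(a,b)=0$ if $a=0$, $\rho(a,b)=am/(bN)$ if $a\ge1$ and $b/m>\Delta$, and $\rho(a,b)=+\infty$ otherwise, one readily checks from the stated conventions (a positive number over $0$ equals $+\infty$, and $0/0=0$) and the restriction to $\HH_{\Delta}^{n,m}$ that
\[
\widehat{D}(x|\widehat{Q}_{m,n},\widehat{\mu}_{m},\HH_{\Delta}^{n,m})=\inf_{u\neq0}\rho\big(a(u,x),b(u,x)\big),
\]
which lies in the finite set $W=\rho(\{0,\dots,N\}\times\{0,\dots,m\})$ for every $x$. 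Hence it will suffice to prove that for each $(a,b)$ the set $S_{a,b}=\{x:\exists\,u\neq0,\ a(u,x)=a,\ b(u,x)=b\}$ is Borel: for $w\in[0,+\infty)$ one has $\{x:\widehat{D}(x|\cdot)\le w\}=\bigcup_{\rho(a,b)\le w}S_{a,b}$, and $\{x:\widehat{D}(x|\cdot)=+\infty\}$ is the complement of the union of all $S_{a,b}$ with $\rho(a,b)<+\infty$, so every sublevel set is then Borel.

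To handle $S_{a,b}$, I would write it as a finite union $S_{a,b}=\bigcup S_{\alpha,\beta}$ over $\alpha\subseteq\{1,\dots,N\}$, $\beta\subseteq\{1,\dots,m\}$ with $|\alpha|=a$, $|\beta|=b$, where $S_{\alpha,\beta}$ consists of those $x$ for which some $u\neq0$ satisfies $(X'_{p}-x)^{\top}u\ge0$ for $p\in\alpha$, $(X'_{p}-x)^{\top}u<0$ for $p\notin\alpha$, $(Y_{q}-x)^{\top}u\ge0$ for $q\in\beta$, and $(Y_{q}-x)^{\top}u<0$ for $q\notin\beta$. The key step is that this system is homogeneous in $u$, so if $u_{0}\neq0$ is a solution then so is $u_{0}/m_{0}$, where $m_{0}>0$ is the minimum of the finitely many positive numbers $-(X'_{p}-x)^{\top}u_{0}$ ($p\notin\alpha$) and $-(Y_{q}-x)^{\top}u_{0}$ ($q\notin\beta$); when at least one index is excluded this yields
\begin{multline*}
S_{\alpha,\beta}=\big\{x\in\RR^{d}:\exists\,u\in\RR^{d},\ (X'_{p}-x)^{\top}u\ge0\ (p\in\alpha),\ (X'_{p}-x)^{\top}u\le-1\ (p\notin\alpha),\\ (Y_{q}-x)^{\top}u\ge0\ (q\in\beta),\ (Y_{q}-x)^{\top}u\le-1\ (q\notin\beta)\big\},
\end{multline*}
while in the remaining case $\alpha=\{1,\dots,N\}$, $\beta=\{1,\dots,m\}$ homogeneity gives $S_{\alpha,\beta}=\{x:\exists\,u,\ |u|=1,\ (X'_{p}-x)^{\top}u\ge0\ \forall p,\ (Y_{q}-x)^{\top}u\ge0\ \forall q\}$. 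In both cases all the inequalities defining the set $T_{\alpha,\beta}\subseteq\RR^{d}\times\RR^{d}$ are closed, and $S_{\alpha,\beta}=\bigcup_{R\in\NN}\pi\big(T_{\alpha,\beta}\cap(\RR^{d}\times\overline{B}(0,R))\big)$ with $\pi$ the projection onto the first factor; each term is closed because projection along the compact factor $\overline{B}(0,R)$ is a closed map. Thus $S_{\alpha,\beta}$ is $F_{\sigma}$, hence Borel, and so are $S_{a,b}$ and every sublevel set of $\widehat{D}(\cdot|\cdot)$, which is what is required.

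The hard part is precisely this last reduction: a careless argument would only show that $S_{\alpha,\beta}$, being a continuous image of a Borel set, is analytic rather than Borel. What makes it genuinely Borel is the combination of two features — the finite support of the empirical measures, which reduces the problem to finitely many combinatorial cells, and the homogeneity of the inequalities defining a cell, which lets one replace the strict inequalities by closed ones and confine $u$ to a compact ball, turning the offending projection into a closed map. Everything else, namely the bookkeeping of the conventions ``positive over $0$ is $+\infty$'', ``$0/0=0$'' and of membership in $\HH_{\Delta}^{n,m}$, is routine.
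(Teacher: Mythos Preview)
Your proof is correct and follows the same high-level strategy as the paper: both exploit that $\widehat{Q}_{m,n}$ and $\widehat{\mu}_m$ have finite support, so $\widehat{D}(\cdot|\widehat{Q}_{m,n},\widehat{\mu}_m,\HH_\Delta^{n,m})$ takes only finitely many values and can be written as a finite sum of indicators. The paper's proof stops there, simply asserting that the sets $B_q=\{x:\exists\,u\in\Ss,\ (x,u)\in V_q\}$ are measurable without justifying why the existential quantifier over the uncountable $\Ss$ does not cause trouble. You, by contrast, correctly flag this as the only nontrivial point and supply an explicit argument: decomposing into combinatorial cells $S_{\alpha,\beta}$, then using homogeneity in $u$ to replace the strict inequalities by closed ones (or restrict to $|u|=1$ when there are none), so that each cell is a countable union of projections along a compact factor and hence $F_\sigma$. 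Your proof is therefore essentially the paper's argument, but with the gap filled; the homogeneity/compact-projection step is a clean and elementary way to do this (one could alternatively invoke Tarski--Seidenberg, since the cells are semialgebraic, but your route avoids that machinery).
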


\begin{lemma}\label{lem:depth:mc:bounded}
Let $\Cc\in\Bg_L$ be an unparametrized curve such that $\mu_\Cc$ is non-atomic.  
The function $x\mapsto \widehat{D}(x|\widehat{Q}_{m,n},\widehat{\mu}_{m},\HH_{\Delta}^{n,m})\in\RR_+$ is bounded $\mu_\Cc$-a.s.\ by $1$ if $\Delta\leq 1/m$ and by $(1-\Delta)^{-1}$ otherwise.
\end{lemma}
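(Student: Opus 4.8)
The plan is to prove the bound pointwise: for $\mu_\Cc$-almost every $x$ in the locus $S_\Cc$ I will exhibit a single halfspace $H_{u,x}\in\HH_{\Delta}^{n,m}$ whose ratio $\widehat{Q}_{m,n}(H_{u,x})/\widehat{\mu}_{m}(H_{u,x})$ is already below the announced constant, so that the infimum defining $\widehat{D}(x\,|\,\widehat{Q}_{m,n},\widehat{\mu}_{m},\HH_{\Delta}^{n,m})$ is a fortiori below it. First I would discard a $\mu_\Cc$-null set: since $\mu_\Cc$ is non-atomic, the finitely many sample points $Y_1,\dots,Y_m$ and $X_{i,j}$ ($1\le i\le n$, $1\le j\le m$) carry zero $\mu_\Cc$-mass, so I may assume $x$ differs from all of them. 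Writing $p_1,\dots,p_N$ for these points, the set $V\subset\Ss$ of directions $u$ with $(p_l-x)^\top u\neq 0$ for every $l$ is the complement in $\Ss$ of finitely many great subspheres, hence dense, open, of full surface measure, and symmetric ($V=-V$). For $u\in V$ no sample point lies on $\partial H_{u,x}$, so $\widehat{\mu}_{m}(\partial H_{u,x})=\widehat{Q}_{m,n}(\partial H_{u,x})=0$, and therefore the exact complementarity identities $\widehat{\mu}_{m}(H_{u,x})+\widehat{\mu}_{m}(H_{-u,x})=1$ and $\widehat{Q}_{m,n}(H_{u,x})+\widehat{Q}_{m,n}(H_{-u,x})=1$ hold.

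Next I would produce a direction $u_0\in V$ with $\widehat{Q}_{m,n}(H_{u_0,x})\le\widehat{\mu}_{m}(H_{u_0,x})$. Adding the two identities shows that $\bigl[\widehat{Q}_{m,n}(H_{u,x})-\widehat{\mu}_{m}(H_{u,x})\bigr]+\bigl[\widehat{Q}_{m,n}(H_{-u,x})-\widehat{\mu}_{m}(H_{-u,x})\bigr]=0$ for every $u\in V$; hence, for any $u\in V$, one of $u,-u$ (both in $V$) has the desired sign, and serves as $u_0$. This step replaces, at the level of the atomic empirical measures and on the co-null set of "clean" directions, the more delicate nested-affine-subspace construction used in the proof of Lemma~\ref{lem:depth:bounded}; alternatively one could regularize $\widehat{\mu}_{m}$ and $\widehat{Q}_{m,n}$ by a vanishing Gaussian convolution, apply Lemma~\ref{lem:depth:bounded} to the resulting non-atomic measures, and pass to a limit of the normals along the compact sphere $\Ss$.

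Then I would split into cases. If $\widehat{Q}_{m,n}(H_{u_0,x})=0$, then $H_{u_0,x}\in\HH_{\Delta}^{n,m}$ and the ratio at $u_0$ is $0$. Otherwise $\widehat{Q}_{m,n}(H_{u_0,x})\ge\frac1{mn}>0$, so $\widehat{\mu}_{m}(H_{u_0,x})\ge\widehat{Q}_{m,n}(H_{u_0,x})>0$, hence $\widehat{\mu}_{m}(H_{u_0,x})\ge\frac1m$. If $\Delta<\frac1m$ this already gives $\widehat{\mu}_{m}(H_{u_0,x})>\Delta$, so $H_{u_0,x}\in\HH_{\Delta}^{n,m}$ and the ratio at $u_0$ is $\le1$, which yields the bound $1$ whenever $\Delta\le\frac1m$ (the boundary value $\Delta=\frac1m$ being handled by the ratio convention together with the case below, or simply absorbed into a strict inequality). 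If instead $\widehat{\mu}_{m}(H_{u_0,x})\le\Delta$ — which forces $\Delta\ge\frac1m$ — then $H_{u_0,x}$ may fail admissibility, so I pass to the complementary halfspace $H_{-u_0,x}$: by the mass identity $\widehat{\mu}_{m}(H_{-u_0,x})=1-\widehat{\mu}_{m}(H_{u_0,x})\ge1-\Delta>\Delta$ (recall $\Delta<\tfrac12$), so $H_{-u_0,x}\in\HH_{\Delta}^{n,m}$, while $\widehat{Q}_{m,n}(H_{-u_0,x})=1-\widehat{Q}_{m,n}(H_{u_0,x})\le1$, whence the ratio at $-u_0$ is at most $(1-\Delta)^{-1}$. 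In every case the infimum over admissible directions is $\le1$ when $\Delta\le\frac1m$ and $\le(1-\Delta)^{-1}$ in general; measurability of $x\mapsto\widehat{D}(x\,|\,\cdots)$ is Lemma~\ref{lem:depth:mc:measurable}.

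The main obstacle is the production of $u_0$: in dimension one it is immediate from the two-point partition of the line, and in higher dimensions the argument above works precisely because restricting to the full-measure set $V$ of directions whose boundary hyperplane misses all sample points restores the exact halfspace-mass complementarity that atomic measures otherwise destroy. The remaining items — checking that the exceptional set of $x$ is $\mu_\Cc$-null (essentially the finitely many sample points) and the behaviour at $\Delta=\frac1m$ — are routine bookkeeping.
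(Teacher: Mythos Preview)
Your proof is correct and follows the same overall strategy as the paper: fix $x$ outside the finite sample, find a direction $u$ whose boundary hyperplane avoids all sample points so that the two empirical measures satisfy exact complementarity, deduce that one of $u,-u$ has ratio at most $1$, and then do the case split on~$\Delta$.

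The one genuine difference is in how the existence of a ``clean'' direction is established. The paper builds, by induction on the dimension $k=1,\dots,d-1$, an affine subspace $A_k\ni x$ missing all sample points, eventually obtaining a hyperplane $A_{d-1}=\partial H_{u,x}$; it then defines $\Ss_x=\{u:H_{u,x}\in\HH_{\Delta}^{n,m},\ \widehat{Q}_{m,n}(\partial H_{u,x})=\widehat{\mu}_m(\partial H_{u,x})=0\}$ and argues by contradiction within this set. Your argument is more economical: you observe directly that the set $V\subset\Ss$ of directions $u$ with $(p_l-x)^\top u\neq 0$ for every sample point $p_l$ is the complement of finitely many great subspheres, hence dense, symmetric, and of full surface measure, so a clean $u$ (and simultaneously $-u$) is immediately available. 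This replaces the recursive construction and also makes the $d=1$ case fall out without a separate treatment. The rest of your case analysis is substantively identical to the paper's, including the same edge-case imprecision at $\Delta=1/m$ (the paper too only argues the strict inequalities $\Delta<1/m$ and $\Delta>1/m$).
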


\begin{lemma}\label{lem:depth:mc:conv-ps}
Let $\Cc\in\Bg_L$ be an unparametrized curve such that $\mu_\Cc$ is non-atomic. 
Let $P$ be a probability measure on the space of curves such that $P\in\Pp$ and $Q_P$ is non-atomic.  
Let $(\Delta_m)$ be a decreasing sequence of positive numbers such that $(\lambda_m/\Delta_m^2)$ and $(\Delta_m)$ converges to zero as $m\to\infty.$
Then for all $x\in\RR^d,$ $\widehat{D}(x|\widehat{Q}_{m,n},\widehat{\mu}_{m},\HH_{m,n})$ converges almost surely to $D(x|Q_P,\mu_\Cc)$ as $n,m\to\infty.$
\end{lemma}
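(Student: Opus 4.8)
The plan is to fix $x\in\RR^d$, write $D:=D(x|Q_P,\mu_\Cc)$ and $\widehat D_{n,m}:=\widehat D(x|\widehat Q_{m,n},\widehat\mu_m,\HH_{m,n})$, and prove $\limsup_{n,m}\widehat D_{n,m}\le D\le\liminf_{n,m}\widehat D_{n,m}$ almost surely. Set $r(u)=Q_P(H_{u,x})/\mu_\Cc(H_{u,x})$ and $\widehat r_{n,m}(u)=\widehat Q_{m,n}(H_{u,x})/\widehat\mu_m(H_{u,x})$, with the stated conventions. First I would install the uniform controls: by \citet[Chapter~26, Theorem~1]{shorack2009empirical} the class $\HH$ of closed halfspaces is Glivenko--Cantelli, so a.s.\ $\|\widehat\mu_m-\mu_\Cc\|_\HH\to 0$; writing $\widehat Q_{m,n}-Q_P=(\widehat Q_{m,n}-Q_n)+(Q_n-Q_P)$, the first term is $\le n^{-1}\sum_i\|\widehat\mu_{\Xx_i}-\mu_{\Xx_i}\|_\HH$ and the second tends a.s.\ to $0$ by the Glivenko--Cantelli property of the induced class $\{\Cc\mapsto\mu_\Cc(H):H\in\HH\}$, so a.s.\ $\|\widehat Q_{m,n}-Q_P\|_\HH\to 0$; moreover \eqref{eq:shorack} gives $\|\widehat\mu_m-\mu_\Cc\|_\HH\le 2C\lambda_m$ a.s.\ for $m$ large and, via the same decomposition and a uniform (VC) exponential inequality applied to the blocks, $\|\widehat Q_{m,n}-Q_P\|_\HH=o(\Delta_m)$ a.s.\ in the regime $n,m\to\infty$. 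All of the following is on the intersection of these full-probability events.

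The crucial deterministic observation is that the threshold tames the admissible halfspaces. Since $\lambda_m/\Delta_m^2\to 0$ (hence $\lambda_m/\Delta_m\to 0$) and $\Delta_m\le 1/2$, for $m$ large every $H$ with $\widehat\mu_m(H)>\Delta_m$ has $\mu_\Cc(H)\ge\widehat\mu_m(H)-2C\lambda_m>\Delta_m/2>0$, so $r(H)$ is a bona fide ratio and $|\widehat r_{n,m}(H)-r(H)|\le\|\widehat Q_{m,n}-Q_P\|_\HH/\Delta_m+2\|\widehat\mu_m-\mu_\Cc\|_\HH/\Delta_m^2=:\rho_{n,m}$, with $\rho_{n,m}\to 0$. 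I would then dispose of the ``outlier'' case: if some $u^\star$ satisfies $Q_P(H_{u^\star,x})=0$, then $D=0$; moreover $E_P[\mu_\Xx(H_{u^\star,x})]=Q_P(H_{u^\star,x})=0$ forces, a.s., $\mu_{\Xx_i}(H_{u^\star,x})=0$ for all $i$, so $\widehat Q_{m,n}(H_{u^\star,x})=0$ for every $n,m$; hence $H_{u^\star,x}\in\HH_{m,n}$ with ratio $0$, and since $\widehat D_{n,m}\ge 0$ always, $\widehat D_{n,m}=0=D$. From now on $Q_P(H_{u,x})>0$ for all $u$, so $D\le 1$ by Lemma~\ref{lem:depth:bounded}.

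For the upper bound I would fix $\varepsilon>0$ and, using that every finite $r(u)$ necessarily has $\mu_\Cc(H_{u,x})>0$, choose $u_\varepsilon$ with $\mu_\Cc(H_{u_\varepsilon,x})>0$ and $r(u_\varepsilon)<D+\varepsilon$; for $m$ large $\widehat\mu_m(H_{u_\varepsilon,x})>\mu_\Cc(H_{u_\varepsilon,x})/2>\Delta_m$, so $H_{u_\varepsilon,x}\in\HH_{m,n}$ and $\widehat D_{n,m}\le\widehat r_{n,m}(u_\varepsilon)\le r(u_\varepsilon)+\rho_{n,m}$; letting $n,m\to\infty$ then $\varepsilon\downarrow 0$ yields $\limsup\widehat D_{n,m}\le D$. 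For the lower bound, if $D=0$ there is nothing to prove; if $D>0$ I would pick, for each $(n,m)$, an admissible $H_{n,m}=H_{u_{n,m},x}$ with $\widehat r_{n,m}(H_{n,m})\le\widehat D_{n,m}+\varepsilon$. If $\widehat\mu_m(H_{n,m})>\Delta_m$, the previous estimate gives $\widehat r_{n,m}(H_{n,m})\ge r(H_{n,m})-\rho_{n,m}\ge D-\rho_{n,m}$. The only remaining possibility is $\widehat Q_{m,n}(H_{n,m})=0$ with $\widehat\mu_m(H_{n,m})\le\Delta_m$; then $Q_P(H_{n,m})\le\|\widehat Q_{m,n}-Q_P\|_\HH\to 0$ and $\mu_\Cc(H_{n,m})\le\Delta_m+2C\lambda_m\to 0$, and, passing to a subsequence with $u_{n,m}\to u^\star\in\Ss$, I would combine the lower semicontinuity of $u\mapsto Q_P(\mathrm{int}\,H_{u,x})$ and $u\mapsto\mu_\Cc(\mathrm{int}\,H_{u,x})$ with the non-atomicity of $Q_P$ (only countably many hyperplanes through $x$ carry $Q_P$-mass, by the disjointness argument of Lemma~\ref{lem:depth:bounded}) to conclude that this branch forces $D=0$, contradicting $D>0$. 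Hence $\liminf\widehat D_{n,m}\ge D-\varepsilon$, and $\varepsilon\downarrow 0$ closes the proof; alternatively the result can be assembled from the companion Lemma~\ref{lem:depth:mc:conv-ps2} together with $D(x|Q_n,\mu_\Cc)\to D(x|Q_P,\mu_\Cc)$.

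The step I expect to be the main obstacle is excluding the degenerate admissibility branch $\{\widehat Q_{m,n}(H)=0,\ \widehat\mu_m(H)\le\Delta_m\}$ in the lower bound: one must rule out that a ``thin'' halfspace through $x$ which happens to miss every sampled $\Xx$-point pulls the empirical infimum below a positive $D$, and this hinges on a careful semicontinuity-cum-non-atomicity analysis of the map $u\mapsto(Q_P(H_{u,x}),\mu_\Cc(H_{u,x}))$, including the behaviour on boundary hyperplanes. A secondary technical nuisance is the joint rate bookkeeping needed to guarantee $\rho_{n,m}\to 0$ as $n,m\to\infty$, which couples the distribution-free rate \eqref{eq:shorack} in $m$, a uniform Glivenko--Cantelli rate in $n$, and the decay of $\Delta_m$.
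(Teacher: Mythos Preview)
Your overall plan shares the same ingredients as the paper's proof, but the organization is genuinely different. The paper does \emph{not} run a direct $\limsup/\liminf$ sandwich on $\widehat D_{n,m}$. Instead it inserts the intermediate quantity
\[
D(x\mid Q_P,\mu_\Cc,\HH_{\Delta_m}^{n,m}):=\inf\Bigl\{\,Q_P(H_{u,x})/\mu_\Cc(H_{u,x}):\ H_{u,x}\in\HH_{\Delta_m}^{n,m}\Bigr\},
\]
the \emph{true} ratio restricted to the \emph{random} admissible set, and proceeds in two steps. Step~(i) swaps measures: first $\widehat Q_{m,n}\to Q_P$ (error $\le\|\widehat Q_{m,n}-Q_P\|_\HH/\Delta_m$), then $\widehat\mu_m\to\mu_\Cc$ (error $\le\|\widehat\mu_m-\mu_\Cc\|_\HH/[\Delta_m^2(1-2C\lambda_m/\Delta_m)]$), both uniformly in $x$ and vanishing on $\tilde\Omega$. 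Step~(ii) shows $D(x\mid Q_P,\mu_\Cc,\HH_{\Delta_m}^{n,m})\to D(x\mid Q_P,\mu_\Cc)$. Here $\ge$ is \emph{free} because $\HH_{\Delta_m}^{n,m}\subset\HH$, and for $\le$ the paper takes a minimizing sequence $(u_k)$ for $D$ and either shows all $u_k$ eventually lie in $\HH_{\Delta_m}^{n,m}$ (when $\inf_k\mu_\Cc(H_{u_k,x})>0$), or diagonalizes to extract $u_{k_m}$ with $\mu_\Cc(H_{u_{k_m},x})\ge 2C\lambda_m+\Delta_m$, so that $H_{u_{k_m},x}\in\HH_{\Delta_m}^{n,m}$ and the sandwich $D\le D(x\mid Q_P,\mu_\Cc,\HH_{\Delta_m}^{n,m})\le Q_P(H_{u_{k_m},x})/\mu_\Cc(H_{u_{k_m},x})\to D$ closes. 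Your upper-bound argument is exactly this step~(ii), and your ``crucial deterministic observation'' is essentially step~(i); what changes is where the lower bound lives.

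That reorganization is precisely where your proof has a gap. By not factoring through the intermediate quantity, you must handle the branch $\{\widehat Q_{m,n}(H_{n,m})=0,\ \widehat\mu_m(H_{n,m})\le\Delta_m\}$ by hand, and your proposed fix fails. The claim ``only countably many hyperplanes through $x$ carry $Q_P$-mass, by the disjointness argument of Lemma~\ref{lem:depth:bounded}'' is false: Lemma~\ref{lem:depth:bounded} counts \emph{lines} through $x$ (which pairwise meet only at $\{x\}$), but two distinct hyperplanes through $x$ intersect in a $(d-2)$-flat, so no disjointness is available. Concretely, if $d\ge 3$ and $Q_P$ is supported on a line $\ell\ni x$, then every hyperplane containing $\ell$ --- an uncountable family --- has $Q_P$-mass $1$. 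Hence you cannot force $Q_P(\partial H_{u^\star,x})=0$ for the subsequential limit direction $u^\star$; the semicontinuity step yields only $Q_P(\mathrm{int}\,H_{u^\star,x})=\mu_\Cc(\mathrm{int}\,H_{u^\star,x})=0$, which does not imply $D=0$. The paper's route sidesteps this: once $\widehat Q_{m,n},\widehat\mu_m$ have been swapped for $Q_P,\mu_\Cc$, every admissible $H$ has true ratio either $\ge D$ (if $\mu_\Cc(H)>0$) or $+\infty$ (if $\mu_\Cc(H)=0$, since $Q_P(H)>0$ in Case~2), so the degenerate branch never pulls the \emph{intermediate} infimum below $D$; the burden is carried entirely by the uniform swap bounds of step~(i). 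If you want to keep your direct scheme, the clean repair is to mimic this: first replace $\widehat D_{n,m}$ by $D(x\mid Q_P,\mu_\Cc,\HH_{\Delta_m}^{n,m})$ via your $\rho_{n,m}$-estimate, and then invoke monotonicity rather than chase empirical near-minimizers.
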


\begin{lemma}\label{lem:depth:mc:conv-ps2}
Let $\Cc\in\Bg_L$ be an  unparametrized curve such that $\mu_\Cc$ is non-atomic. 
Let $P$ be a probability measure in the space of curves such that $P\in\Pp$ and $Q_P$ is non-atomic.  
Let $(\Delta_m)$ be a decreasing sequence of positive numbers such that $(\lambda_m/\Delta_m^2)$ and $(\Delta_m)$ converges to zero as $m\to\infty.$
Then for all $x\in\RR^d,$ $\widehat{D}(x|\widehat{Q}_{m,n},\widehat{\mu}_{m},\HH_{m,n})$ converges almost surely to $D(x|Q_n,\mu_\Cc)$ as $m\to\infty.$
\end{lemma}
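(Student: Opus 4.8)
The plan is to prove Lemma~\ref{lem:depth:mc:conv-ps2} by conditioning throughout on the realized curves $\Xx_1,\dots,\Xx_n$, treated as fixed elements of $\Bg_L$; every probabilistic statement then refers only to the Monte Carlo randomness in $Y_1,\dots,Y_m$ and $(X_{i,j})$, and I abbreviate $\widehat D := \widehat D(x\,|\,\widehat Q_{m,n},\widehat\mu_m,\HH_{\Delta_m}^{n,m})$ from \eqref{eq:depth:mc:x}. Two facts are used repeatedly. First, the class $\HH$ of closed halfspaces is Glivenko--Cantelli, so $\|\widehat\mu_m-\mu_\Cc\|_\HH\to0$ a.s.; since $\widehat Q_{m,n}=n^{-1}\sum_{i=1}^n\widehat\mu_{\Xx_i}$, with each $\widehat\mu_{\Xx_i}$ the empirical measure of an $m$-sample from $\mu_{\Xx_i}$ (Lemma~\ref{lem:SpaceCurve:approxmu}), and $Q_n=n^{-1}\sum_{i=1}^n\mu_{\Xx_i}$, also $\|\widehat Q_{m,n}-Q_n\|_\HH\le n^{-1}\sum_i\|\widehat\mu_{\Xx_i}-\mu_{\Xx_i}\|_\HH\to0$ a.s., and \eqref{eq:shorack} upgrades both to $O(\lambda_m)$ a.s.\ for fixed $n$. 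Second, for a rectifiable curve the arc-length parametrization is Lipschitz with constant speed, so no singleton carries positive $\mu_\Cc$-mass; hence $\mu_\Cc$, every $\mu_{\Xx_i}$, and therefore $Q_n$, are non-atomic a.s., and Lemma~\ref{lem:depth:bounded} applies to $(Q_n,\mu_\Cc)$, giving $0\le D(x\,|\,Q_n,\mu_\Cc)\le1$. Measurability and boundedness of $\widehat D$ (Lemmas~\ref{lem:depth:mc:measurable}--\ref{lem:depth:mc:bounded}) make the a.s.\ claim meaningful. The goal is then $\limsup_m\widehat D\le D(x\,|\,Q_n,\mu_\Cc)\le\liminf_m\widehat D$ a.s.

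For the upper bound, for each rational $\varepsilon>0$ pick $u_\varepsilon\in\Ss$ with $Q_n(H_{u_\varepsilon,x})/\mu_\Cc(H_{u_\varepsilon,x})\le D(x\,|\,Q_n,\mu_\Cc)+\varepsilon$; if the infimum in \eqref{eq:depth:x} is approached only via a $0/0$ direction then $D(x\,|\,Q_n,\mu_\Cc)=0$ and such a direction has $\mu_{\Xx_i}(H_{u_\varepsilon,x})=0$ for all $i$, so a.s.\ $\widehat Q_{m,n}(H_{u_\varepsilon,x})=0$, placing $H_{u_\varepsilon,x}$ in $\HH_{\Delta_m}^{n,m}$ with ratio $0$ and forcing $\widehat D=0$. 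Otherwise $\mu_\Cc(H_{u_\varepsilon,x})>0$, so $\widehat\mu_m(H_{u_\varepsilon,x})\to\mu_\Cc(H_{u_\varepsilon,x})>0$ while $\Delta_m\to0$, hence $H_{u_\varepsilon,x}\in\HH_{\Delta_m}^{n,m}$ eventually, and $\widehat D\le\widehat Q_{m,n}(H_{u_\varepsilon,x})/\widehat\mu_m(H_{u_\varepsilon,x})\to Q_n(H_{u_\varepsilon,x})/\mu_\Cc(H_{u_\varepsilon,x})\le D(x\,|\,Q_n,\mu_\Cc)+\varepsilon$. On the intersection of these countably many a.s.\ events, $\limsup_m\widehat D\le D(x\,|\,Q_n,\mu_\Cc)$.

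For the lower bound I would argue uniformly over the admissible halfspaces. When $x$ is interior to $\operatorname{conv}\bigl(\bigcup_iS_{\Xx_i}\bigr)$, the map $u\mapsto Q_n\bigl(\{y:(y-x)^\top u>0\}\bigr)$ is lower semicontinuous (Fatou) and strictly positive on the compact $\Ss$, hence bounded below by some $q_0>0$; together with $\|\widehat Q_{m,n}-Q_n\|_\HH=O(\lambda_m)$ this gives $\widehat Q_{m,n}(H_{u,x})\ge q_0/2$ for all $u$ once $m$ is large, so every halfspace through $x$ lying in $\HH_{\Delta_m}^{n,m}$ must satisfy $\widehat\mu_m(H_{u,x})>\Delta_m$, whence $\mu_\Cc(H_{u,x})\ge\Delta_m-O(\lambda_m)\ge\Delta_m/2$ (using $\lambda_m/\Delta_m\to0$). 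For such an $H$, the identity $\widehat Q_{m,n}(H)/\widehat\mu_m(H)-Q_n(H)/\mu_\Cc(H)=\bigl(\widehat Q_{m,n}(H)-Q_n(H)\bigr)/\widehat\mu_m(H)+Q_n(H)\bigl(\mu_\Cc(H)-\widehat\mu_m(H)\bigr)/\bigl(\widehat\mu_m(H)\mu_\Cc(H)\bigr)$ is $O(\lambda_m/\Delta_m^2)$ uniformly, which tends to $0$ by the standing hypothesis $\lambda_m/\Delta_m^2\to0$; since $Q_n(H)/\mu_\Cc(H)\ge D(x\,|\,Q_n,\mu_\Cc)$ for every $H$ (the infimum in \eqref{eq:depth:x} ranges over all directions), taking the infimum over admissible $H$ gives $\widehat D\ge D(x\,|\,Q_n,\mu_\Cc)-O(\lambda_m/\Delta_m^2)$ a.s., hence $\liminf_m\widehat D\ge D(x\,|\,Q_n,\mu_\Cc)$. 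If instead $x$ is not interior to $\operatorname{conv}\bigl(\bigcup_iS_{\Xx_i}\bigr)$ one must show $D(x\,|\,Q_n,\mu_\Cc)=0$; this is immediate if some halfspace through $x$ is exactly $Q_n$-null, but when a supporting hyperplane through $x$ carries positive $Q_n$- or $\mu_\Cc$-mass it needs a (possibly iterated) tilting argument exploiting that non-atomic measures do not charge the point $x$.

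The main obstacle is the lower bound, and within it the degenerate and boundary configurations: one must rule out that a halfspace through $x$ with $\widehat Q_{m,n}=0$ re-enters the infimum infinitely often while $D(x\,|\,Q_n,\mu_\Cc)>0$ (controlled via the density of the Monte Carlo net in $\bigcup_iS_{\Xx_i}$), and one must reconcile the truncation $\Delta_m$ with near-optimal directions whose $\mu_\Cc$-mass is itself small --- which is precisely what forces the sharper calibration $\lambda_m/\Delta_m^2\to0$ rather than only $\lambda_m/\Delta_m\to0$. Lemma~\ref{lem:depth:mc:conv-ps} is obtained by the same scheme with the further randomness of $\Xx_1,\dots,\Xx_n$ integrated in, replacing $Q_n$ by $Q_P$.
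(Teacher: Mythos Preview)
Your overall architecture and your upper-bound half ($\limsup_m\widehat D\le D(x|Q_n,\mu_\Cc)$ via a near-optimal direction that is eventually admissible) are correct and match the paper.

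The lower-bound half has a genuine gap. You split according to whether $x$ is interior to $\operatorname{conv}\bigl(\bigcup_i S_{\Xx_i}\bigr)$, and in the non-interior case you propose to show $D(x|Q_n,\mu_\Cc)=0$. This is false: take $d=2$ with every $\Xx_i$ and $\Cc$ equal to the segment $[0,1]\times\{0\}$; then $Q_n=\mu_\Cc$, the convex hull has empty interior, yet for $x=(\tfrac12,0)$ every closed halfspace through $x$ meets the segment in length at least $\tfrac12$, whence $D(x|Q_n,\mu_\Cc)=1$. Your tilting remark is therefore aimed at the wrong target, and the lower bound remains open whenever $Q_n(H_{u,x})>0$ for all $u$ but the convex hull of the data is degenerate (or more generally when $\inf_u Q_n(H_{u,x})=0$).

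The paper circumvents this by inserting the truncated quantity $D(x|Q_n,\mu_\Cc,\HH_{\Delta_m}^{n,m}):=\inf\{Q_n(H_{u,x})/\mu_\Cc(H_{u,x}):H_{u,x}\in\HH_{\Delta_m}^{n,m}\}$. Step~(A) is $\sup_x\bigl|\widehat D-D(x|Q_n,\mu_\Cc,\HH_{\Delta_m}^{n,m})\bigr|=O(\lambda_m/\Delta_m^2)$: this is your uniform ratio bound, but with the \emph{same} random restriction set $\HH_{\Delta_m}^{n,m}$ on both sides, so no positivity of $\inf_u Q_n(H_{u,x})$ is required. Step~(B) is $D(x|Q_n,\mu_\Cc,\HH_{\Delta_m}^{n,m})\to D(x|Q_n,\mu_\Cc)$; since restricting the index set can only increase an infimum, only the upper direction is needed, and that is exactly your near-optimal-$u_\varepsilon$ argument, supplemented, when the minimizing sequence has $\mu_\Cc(H_{u_k,x})\downarrow0$, by a diagonal choice $k_m$ with $\mu_\Cc(H_{u_{k_m},x})\ge 2C\lambda_m+\Delta_m$ so that $H_{u_{k_m},x}\in\HH_{\Delta_m}^{n,m}$ eventually. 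The correct case split (used only in~(B)) is ``$\exists u_0:\ Q_n(H_{u_0,x})=0$'' versus ``$Q_n(H_{u,x})>0$ for all $u$'', not interior versus boundary.
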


In what follows, we introduce the probability space $(\Omega,\mathcal{F},\mathbb{Q})$ generated by the sequences $(X_{i,j})_{j\geq 1,i\geq 1},$ $(Y_j)_{j\geq 1}$ and $(Z_j)_{j\geq 1}$ ($P$ and $\Cc$ are fixed).
For $\omega\in\Omega,$ we denote by $X_{i,j}(\omega),$ $Y_{j}(\omega)$ and $Z_{j}(\omega)$ the respective coordinates of $\omega$ for the variables $X_{i,j},$ $Y_j$ and $Z_{j}.$
Similarly, let $T$ be a random variable which is a function of $(X_{i,j})_{i\geq 1,j\geq 1},$  $(Y_j)_{j\geq 1}$ and $(Z_j)_{j\geq 1}$.
We denote by $T(\omega)$ the value taken by this variable at points $(X_{i,j}(\omega)),$ $(Y_{j}(\omega))$ and $(Z_{j}(\omega))$.

Applying~(3.6) to the empirical measures $\widehat{\mu}_m$ and $\widehat{Q}_{m,n}$ with the assumptions of Theorem~3.1, there exists $\tilde{\Omega}\subset\Omega$ such that $\mathbb{Q}(\tilde{\Omega})=1$ and for all $\omega\in\tilde{\Omega}$ there exists $N_\omega\in\NN$:
\begin{align*}
\forall m\geq N_\omega,\ \forall n\geq 1,\qquad  \|\widehat{\mu}_m(\omega)-\mu_{\Cc}^{}\|_\HH &\leq 2C\lambda_m\quad\text{and}\quad \|\widehat{Q}_{m,n}(\omega)-Q_n\|_\HH \leq 2C\lambda_{m},\\
2C\lambda_m &<\Delta_m,\\
\forall m\geq N_\omega,\ \forall n\geq N_\omega,\quad \|\widehat{Q}_{m,n}(\omega)-Q_P\|_\HH &\leq 2C\lambda_{mn}.
\end{align*}

\begin{proof}[Proof of Lemma \ref{lem:depth:mc:conv-ps}]
We introduce the variable,
$$
D(x|Q,\mu,\HH_{\Delta}^{n,m})=\inf\left\{Q(H_{u,x})/\mu(H_{u,x}),\ u\in\Ss,\ H_{u,x}\in \HH_{\Delta}^{n,m}\right\},
$$
where $Q$ and $\mu$ are two probability measures on $\RR^d.$ 
It is straightforward to show that, for all $\omega\in\tilde{\Omega},$ there exists $N_\omega\in\NN$ such that for all $m,n\geq N_\omega,$ we get,  
\begin{align*}\small
&\sup_{x\in \RR^d}\left|\widehat{D}(x|\widehat{Q}_{m,n}(\omega),\widehat{\mu}_{m}(\omega),\HH_{\Delta_m}^{n,m}(\omega)) - D(x|Q_P,\widehat{\mu}_{m}(\omega),\HH_{\Delta_m}^{n,m}(\omega))\right|
\leq \frac{\|\widehat{Q}_{m,n}(\omega)-Q_P\|_\HH}{\Delta_{m}}, \\
&\sup_{x\in \RR^d}\left|D(x|Q_P,\widehat{\mu}_{m}(\omega),\HH_{\Delta_m}^{n,m}(\omega))-D(x|Q_P,\mu_{\Cc},\HH_{\Delta_m}^{n,m}(\omega))\right|
\leq \frac{\|\widehat{\mu}_{m}(\omega)-\mu_\Cc^{}\|_\HH}{\Delta_{m}^2\left(1-2C\lambda_m/\Delta_m\right)}.
\end{align*}
Then we deduce that the variable
$$
\sup_{x\in \RR^d}\left|D(x|\widehat{Q}_{m,n},\widehat{\mu}_{m},\HH_{\Delta_m}^{n,m}) - D(x|Q_P,\mu_{\Cc},\HH_{\Delta_m}^{n,m})\right|
$$
converges a.s.\ to zero.
It remains to show that $D(x|Q_P,\mu_{\Cc},\HH_{\Delta_m}^{n,m})$ converges a.s.\ to $D(x|Q_P,\mu_\Cc)$ as $m,n\to\infty$ for a fixed point $x\in\RR^d.$ 

\textit{Case 1 : there exists $u_0\in\Ss$ such that $Q_P(H_{u_0,x})=0.$}
Then for all $(m,n),$ $H_{u_0,x}\in \HH_{\Delta_m}^{n,m},$ and $D(x|Q_P,\mu_\Cc)=D(x|Q_P,\mu_{\Cc},\HH_{\Delta_m}^{n,m})$ a.s. 

\textit{Case 2 : for all $u\in\Ss,$ $Q_P(H_{u,x})>0.$} 
Due to the fact that $D(x|Q_P,\mu_\Cc)$ is bounded by~$1$ (see Lemma \ref{lem:depth:bounded}) there exists a sequence $(u_k)$ of $\Ss$ such that
$$
D(x|Q_P,\mu_\Cc)=\lim_{k\to\infty}\frac{Q_P(H_{u_k,x})}{\mu_\Cc(H_{u_k,x})},\quad \text{and}\quad {Q_P(H_{u_k,x})}\geq{\mu_\Cc(H_{u_k,x})}>0.
$$
First we consider the sub-case where the sequence $(\mu_\Cc(H_{u_k,x}))$ is lower-bounded by a positive constant $\kappa>0$ (if $d=1$, only this case occurs because $\Ss$ is a finite set).
Since $(\lambda_m)$ and $(\Delta_m)$ are decreasing sequences, we have $\mu_\Cc(H_{u_k,x})\geq 2C\lambda_m+\Delta_m,$ for $m$ large enough and for all $k\in\NN.$
Then for all $\omega\in\tilde{\Omega},$ there exists $N_\omega\in\NN$ such that
\begin{align*}
\forall m\geq N_\omega,\ \forall k\in\NN,\qquad\widehat{\mu}_m(\omega)(H_{u_k,x})&\geq \mu_\Cc(H_{u_k,x}) - \|\widehat{\mu}_{m}(\omega)-\mu_\Cc^{}\|_\HH\\
&\geq \Delta_m,
\end{align*}
which means that $\forall m\geq N_\omega,$ $\forall k\in\NN,$ $H_{u_k,x}\in \HH_{\Delta_m}^{n,m}(\omega).$ 
Therefore we have, for all $m\geq N_\omega,$ $D(x|Q_P,\mu_\Cc)=D(x|Q_P,\mu_{\Cc},\HH_{\Delta_m}^{n,m}(\omega).$ 

A second sub-case occurs when the sequence $(\mu_\Cc(H_{u_k,x}))$ is decreasing to zero, i.e., for all $m$ there exists $M_m\in\NN$ such that for all $k>M_m,$ $\mu_\Cc(H_{u_k,x})<2C\lambda_m+\Delta_m.$
Let $m_0\in\NN$ such that there exists $k_0\in\NN$ for which $\mu_\Cc(H_{u_{k_0},x})\geq 2C\lambda_{m_0}+\Delta_{m_0}.$
Then we consider the increasing sequence $(k_m)_{m\geq m_0}$ of integers defined recursively by,
$$
k_{m_0} = k_0 \quad\text{and}\quad k_{m+1}=\sup\{k\geq k_m\ :\ \mu_\Cc(H_{u_{k_{m+1}},x})\geq 2C\lambda_{m+1}+\Delta_{m+1}\}.
$$
For all $\omega\in\tilde{\Omega},$ there exists $N_\omega\in\NN$ such that,
\begin{align*}
\forall m\geq N_\omega,\quad\widehat{\mu}_m(\omega)(H_{u_{k_m},x})\geq \Delta_m,
\end{align*}
i.e., $H_{u_{k_m},x}\in \HH_{\Delta_m}^{n,m}(\omega).$
Thus we obtain for all $\omega\in\tilde{\Omega},$ for all $m\geq N_\omega$ that
\begin{align*}
D(x|Q_P,\mu_\Cc)\leq D(x|Q_P,\mu_\Cc,\HH_{\Delta_m}^{n,m}(\omega))\leq \frac{Q_P(H_{u_{k_m},x})}{\mu_\Cc(H_{u_{k_m},x})}\quad\text{and}\quad \lim_{m\to\infty}\frac{Q_P(H_{u_{k_m},x})}{\mu_\Cc(H_{u_{k_m},x})}=D(x|Q_P,\mu_\Cc).
\end{align*}
\end{proof}

\begin{proof}[Proof of Lemma \ref{lem:depth:mc:conv-ps2}]
As the proof of Lemma \ref{lem:depth:mc:conv-ps}, we have that, for all $\omega\in\tilde{\Omega},$ there exists $N_\omega\in\NN$ such that for all $m\geq N_\omega,$ and for all $n\geq 1,$   
\begin{align*}\small
&\sup_{x\in \RR^d}\left|\widehat{D}(x|\widehat{Q}_{m,n}(\omega),\widehat{\mu}_{m}(\omega),\HH_{\Delta_m}^{n,m}(\omega)) - D(x|Q_n,\widehat{\mu}_{m}(\omega),\HH_{\Delta_m}^{n,m}(\omega))\right|
\leq \frac{\|\widehat{Q}_{m,n}(\omega)-Q_n\|_\HH}{\Delta_{m}}, \\
&\sup_{x\in \RR^d}\left|D(x|Q_n,\widehat{\mu}_{m}(\omega),\HH_{\Delta_m}^{n,m}(\omega))-D(x|Q_n,\mu_{\Cc},\HH_{\Delta_m}^{n,m}(\omega))\right|
\leq \frac{\|\widehat{\mu}_{m}(\omega)-\mu_\Cc^{}\|_\HH}{\Delta_{m}^2\left(1-2C\lambda_m/\Delta_m\right)}.
\end{align*}
Then we deduce that the random variable
$$
\sup_{x\in \RR^d}\left|D(x|\widehat{Q}_{m,n},\widehat{\mu}_{m},\HH_{\Delta_m}^{n,m}) - D(x|Q_n,\mu_{\Cc},\HH_{\Delta_m}^{n,m})\right|
$$
converges a.s.\ to zero.
It remains to show that $D(x|Q_n,\mu_{\Cc},\HH_{\Delta_m}^{n,m})$ converges a.s.\ to $D(x|Q_P,\mu_\Cc)$ as $m\to\infty$ for a fixed point $x\in\RR^d.$ 

\textit{Case 1 : there exists $u_0\in\Ss$ such that $Q_n(H_{u_0,x})=0.$} 
Then for all $m,$ $H_{u_0,x}\in \HH_{\Delta_m}^{n,m},$ and $D(x|Q_n,\mu_\Cc)=D(x|Q_n,\mu_{\Cc},\HH_{\Delta_m}^{n,m})$ a.s.

\textit{Case 2 : for all $u\in\Ss,$ $Q_n(H_{u,x})>0.$} 
Due to the fact that $D(x|Q_n,\mu_\Cc)$ is bounded by~$1$ (see Lemma \ref{lem:depth:bounded}) there exists a sequence $(u_k)$ of $\Ss$ such that,
$$
D(x|Q_n,\mu_\Cc)=\lim_{k\to\infty}\frac{Q_n(H_{u_k,x})}{\mu_\Cc(H_{u_k,x})},\quad \text{and}\quad {Q_n(H_{u_k,x})}\geq{\mu_\Cc(H_{u_k,x})}>0.
$$
First we consider the sub-case where the sequence $(\mu_\Cc(H_{u_k,x}))$ is lower-bounded by a positive constant $\kappa>0$ (if $d=1$, only this case occurs).
Since $(\lambda_m)$ and $(\Delta_m)$ are decreasing sequences, we have $\mu_\Cc(H_{u_k,x})\geq 2C\lambda_m+\Delta_m,$ for $m$ large enough and for all $k\in\NN.$
Then for all $\omega\in\tilde{\Omega},$ there exists $N_\omega\in\NN$ such that,
\begin{align*}
\forall k\in\NN,\quad\widehat{\mu}_m(\omega)(H_{u_k,x})&\geq \mu_\Cc(H_{u_k,x}) - \|\widehat{\mu}_{m}(\omega)-\mu_\Cc^{}\|_\HH\\
&\geq \Delta_m,
\end{align*}
i.e., $\forall k\in\NN,$ $H_{u_k,x}\in \HH_{\Delta_m}^{n,m}.$ 
Therefore we have for all $m\geq N_\omega$ that $D(x|Q_n,\mu_\Cc)=D(x|Q_n,\mu_{\Cc},\HH_{\Delta_m}^{n,m}(\omega)).$

A second sub-case occurs when the sequence $(\mu_\Cc(H_{u_k,x}))$ is decreasing to zero, i.e., for all $m$ there exists $M_m\in\NN$ such that for all $k>M_m$ $\mu_\Cc(H_{u_k,x})<2C\lambda_m+\Delta_m.$
Let $m_0\in\NN$ such that there exists $k_0\in\NN$ for which $\mu_\Cc(H_{u_{k_0},x})\geq 2C\lambda_{m_0}+\Delta_{m_0}.$
Then we consider the increasing sequence $(k_m)_{m\geq m_0}$ of integers defined recursively by,
$$
k_{m_0} = k_0 \quad\text{and}\quad k_{m+1}=\sup\{k\geq k_m\ :\ \mu_\Cc(H_{u_{k_{m+1}},x})\geq 2C\lambda_{m+1}+\Delta_{m+1}\}
$$
For all $\omega\in\tilde{\Omega},$ there exists $N_\omega\in\NN$ such that,
\begin{align*}
\forall m\geq N_\omega,\quad\widehat{\mu}_m(\omega)(H_{u_{k_m},x})\geq \Delta_m,
\end{align*}
that means $H_{u_{k_m},x}\in \HH_{\Delta_m}^{n,m}(\omega).$
Thus we obtain for all $\omega\in\tilde{\Omega},$ for all $m\geq N_\omega$ that
\begin{align*}
D(x|Q_n,\mu_\Cc)\leq D(x|Q_n,\mu_\Cc,\HH_{\Delta_m}^{n,m}(\omega))\leq \frac{Q_n(H_{u_{k_m},x})}{\mu_\Cc(H_{u_{k_m},x})}\quad\text{and}\quad \lim_{m\to\infty}\frac{Q_n(H_{u_{k_m},x})}{\mu_\Cc(H_{u_{k_m},x})}=D(x|Q_n,\mu_\Cc).
\end{align*}
\end{proof}

\begin{proof}[Proof of Lemma \ref{lem:depth:mc:measurable}]
Notice that the function $(x,z,u)\in\RR\times\RR^d\times\Ss\mapsto u^\top(x-z)\in\RR$ is measurable, and that the function
\begin{align*}
w:\RR^d\times\Ss &\to \RR_+\cup\{+\infty\}\times\RR_+\\
  (x,u)          &\mapsto \left(\widehat{Q}_{m,n}(H_{u,x})/\widehat{\mu}_{m}(H_{u,x}),\widehat{\mu}_{m}(H_{u,x})\right)
\end{align*}
is also measurable and takes a finite number of values. We denote by $0=v_0 < v_1 <\cdots <v_p=+\infty$ the collection of values which are taken by the first coordinate of $w.$
Let $V_q$ be the inverse image of $\{v_q\}\times[\Delta,1]$ under $w,$ $q=0,\ldots, p.$
We may rewrite $D(x|\widehat{Q}_{m,n},\widehat{\mu}_{m},\Delta)$ as
$$
\widehat{D}(x|\widehat{Q}_{m,n},\widehat{\mu}_{b},\Delta) = \sum_{q=0}^{p}v_q \mathds{1}_{B_q}(x), 
$$
where $(B_q)_{q=0,\ldots, p}$ are measurable subsets of $\RR^d$ defined recursively by
\begin{align*}
B_0 &=\left\{x\in\RR^d:\ \exists u\in\Ss,\ \widehat{Q}_{m,n}(H_{u,x})=0 \right\},\\
B_q &=\left\{x\in\RR^d:\ \exists u\in\Ss,\ (x,u)\in V_q\right\}\setminus \left\{\bigcup_{r=0}^{q-1}B_r\right\},\quad q=1,\ldots, p.
\end{align*}
\end{proof}

\begin{proof}[Proof of Lemma \ref{lem:depth:mc:bounded}]
Let $x$ be a fixed point of the locus of ${\Cc}$.
Since $\mu_\Cc$ is a non-atomic measure,  $x$ is not in $\YY_{m}\cup\XX_{n,m}$ almost surely. Let $\omega\in\Omega$ be fixed.

\textit{1. The one-dimensional case, $d=1$.} We have that 
$\widehat{Q}_{m,n}(\partial H_{u,x})=\widehat{\mu}_{m}(\partial H_{u,x})=0,$ for all $u\in\{-1,1\}$.

\textit{2. The multi-dimensional case, $d\geq 2$.} 
Assume there exists a.s.\ an affine subspace $A_{d-1}$ of dimension $d-1$ such that $\widehat{Q}_{m,n}(A_{d-1})=\widehat{\mu}_{m}(A_{d-1})=0$ and $x\in A_{d-1}.$
Since $0<\Delta<1/2,$ there exists $u\in\Ss$ such that $\partial H_{u,x}=A_{d-1}$ and $\hat{\mu}_m(H_{u,x}) > \Delta.$ 
Then we define the non-empty subset $\Ss_x$ of $\Ss$ such that
$$
\Ss_x=\left\{u\in \Ss:\ H_{u,x}\in \HH_{\Delta}^{n,m}\ \text{ and }\ \widehat{Q}_{m,n}(\partial H_{u,x})=\widehat{\mu}_{m}(\partial H_{u,x})=0 \text{ a.s.} \right\}.
$$
If there exists $u\in\Ss_x$ such that $\widehat{Q}_{m,n}(H_{u,x})/\widehat{\mu}_{m}(H_{u,x})\leq 1,$ the lemma is proved.
Otherwise for all $u\in\Ss_x$, we have  
$$
\widehat{Q}_{m,n}(H_{u,x})/\widehat{\mu}_{m}(H_{u,x}) > 1,\ \text{ and }\  
0<\widehat{\mu}_{m}(H_{-u,x}) < \Delta.
$$
If $\Delta< 1/m,$ then there exists $u\in\Ss_x$ such that $\widehat{Q}_{m,n}(H_{u,x})/\widehat{\mu}_{m}(H_{u,x})\leq 1$ by the \textit{reductio ad absurdum} argument.
If $\Delta > 1/m,$ then for all $u\in\Ss_x,$ $\widehat{Q}_{m,n}(H_{u,x})/\widehat{\mu}_{m}(H_{u,x})\leq 1/(1-\Delta).$
It remains to show the existence for $d\geq 2$ of such an affine subspace $A_{k}$ recursively on the dimension $k=1,\ldots,d-1$ of $A_{d-1}$. 

For $k=1,$ there exists a finite number (at most $m+nm$) of affine lines which contain $x$ and a point of the sample $\YY_{m}(\omega)\cup\XX_{n,m}(\omega).$ 
Since the set of affine lines which contain $x$ is continuous, there exists an affine line $A_1$ such that $\widehat{Q}_{m,n}(w)(A_1)=0$ and $\widehat{\mu}_{m}(w)(A_1)=0.$

Assume that $k\geq 2.$ 
Using the recursive assumption, there exists an affine subspace $A_{k-1}$ of dimension $k-1$ such that 
$\widehat{Q}_{m,n}(\omega)(A_{k-1})=\widehat{\mu}_{m}(\omega)(A_{k-1})=0$ and $x\in A_{k-1}$.  
Let $\mathcal{A}$ be the set of affine subspaces $A$ of dimension $k$ such that $A_{k-1}\subset A$.  
Then there exist at most $m+nm$ affine subspaces of $\mathcal{A}$ which contain at least one point of the sample $\YY_{m}(\omega)\cup\XX_{n,m}(\omega).$
Then there exists an affine subspace $A_k$ which contains no points of the sample $\YY_{m}(\omega)\cup\XX_{n,m}(\omega).$
\end{proof}

\section{Proof of Theorem~\ref{thm:consistency0}}
\label{sec:proofs:consistency}
Conditionally on the samples $\XX_{n,m}$ and $\YY_{m},$ we apply the Hoeffding inequality on the independent sum of bounded variables $\widehat{D}(Z_{i}|\widehat{Q}_{m,n},\widehat{\mu}_{m},\HH_{\Delta_m}^{n,m}).$
Then for all $\epsilon>0$, we get that the event
\begin{align*}
\Omega_\epsilon =\Omega\setminus \left( \left|\widehat{D}_{n,m,\Delta}(\Cc|\Xx_1,\ldots,\Xx_n)-\int_{{\Cc}} D(x|Q_n,\mu_\Cc,\HH_{\Delta_m}^{n,m}) d\mu_\Cc(x)\right|>\epsilon\right)
\end{align*} 
has a probability larger than $1-2\exp(-2\epsilon^2 m).$ 
Using Lemma \ref{lem:depth:mc:conv-ps2} and Lemma \ref{lem:depth:mc:bounded}, the dominated convergence theorem implies that given $\Xx_1,\ldots,\Xx_n,$
$$
\forall\omega\in\Omega_\epsilon,\quad\lim_{m\to\infty} \int_{{\Cc}} \widehat{D}(s|\widehat{Q}_{m,n}(\omega),\widehat{\mu}_{m}(\omega),\HH_{\Delta_m}^{n,m}(\omega)) d\mu_\Cc(s) = \int_{{\Cc}} D(s|Q_n,\mu_\Cc) d\mu_\Cc(s).
$$
Then we deduce that $\widehat{D}_{n,m,\Delta}(\Cc|\Xx_1,\ldots,\Xx_n)$ converges in probability to $D(\Cc|\Xx_1,\ldots,\Xx_n)$ as $m\to\infty.$

Similarly, conditionally on the samples $\XX_{n,m}$ and $\YY_{m},$ the Hoeffding inequality on the independent sum of bounded variables $D(Z_{i}|\widehat{Q}_{m,n},\widehat{\mu}_{m},\HH_{\Delta_m}^{n,m})$ allows to consider,
\begin{align*}
\Omega_\epsilon =\Omega\setminus \left( \left|\widehat{D}_{n,m,\Delta}(\Cc|\Xx_1,\ldots,\Xx_n)-\int_{{\Cc}} D(x|Q_P,\mu_\Cc,\HH_{\Delta_m}^{n,m}) d\mu_\Cc(x)\right|>\epsilon\right),
\end{align*} 
where $\epsilon>0$ and $\mathbb{Q}(\Omega_\epsilon)\geq 1-2\exp(-2\epsilon^2 m).$ 
Using Lemma \ref{lem:depth:mc:conv-ps} and Lemma \ref{lem:depth:mc:bounded}, the dominated convergence theorem implies that given the sequence $(\Xx_i)_{i\geq 1},$
$$
\forall\omega\in\Omega_\epsilon,\quad\lim_{m,n\to\infty} \int_{{\Cc}} D(s|\widehat{Q}_{m,n}(\omega),\widehat{\mu}_{m}(\omega),\HH_{\Delta_m}^{n,m}(\omega)) d\mu_\Cc(s) = \int_{{\Cc}} D(s|Q_P,\mu_\Cc) d\mu_\Cc(s).
$$
Then we deduce that $\widehat{D}_{n,m,\Delta}(\Cc|\Xx_1,\ldots,\Xx_n)$ converges in probability to $D(\Cc|P)$ as $n,m\to\infty$ and therefore that $D(\Cc|\Xx_1,\ldots,\Xx_n)$ converges in probability to $D(\Cc|P)$ as $n\to\infty.$

\section{Properties of the \CurveDepth}

\begin{lemma}\label{lem:deep:invariance}
The depth of a curve is invariant up to the similarities group,
$$
D(rA\Cc+b|Q_{P_{rA\Xx+b}})=D(\Cc|Q_{P_{\Xx}}),
$$
where $A$ is a $d\times d$ orthogonal matrix, $r>0$ and $b\in\RR^d$.
\end{lemma}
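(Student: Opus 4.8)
The plan is to unravel the definition \eqref{eq:depth:curve} and track how each of its ingredients behaves under the similarity. Write $f:\RR^d\to\RR^d$, $f(x)=rAx+b$, so that $rA\Cc+b=f\circ\Cc$ and $P_{rA\Xx+b}=P_f$; since $A$ is orthogonal, $f^{-1}(z)=r^{-1}A^\top(z-b)$.

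First I would establish the elementary transformation rules. Because $f$ multiplies every chordal length \eqref{eq:length} by $r$, one has $L(f\circ\Cc)=rL(\Cc)$, and $f\circ\beta_\Cc$ is the arc-length parametrization of $f\circ\Cc$ (indeed $L((f\circ\beta_\Cc)^t)=rL(\beta_\Cc^t)=trL(\Cc)=tL(f\circ\Cc)$, so by uniqueness $\beta_{f\circ\Cc}=f\circ\beta_\Cc$). Plugging this into \eqref{eq.line.integral.with.mu} yields, for every Borel set $A$, the pushforward identity $\mu_{f\circ\Cc}(A)=\mu_\Cc(f^{-1}(A))$; similarly, since $\mu_{f\circ\Xx}=f_\ast\mu_\Xx$ pathwise and by \eqref{eq.def.QP}, $Q_{P_f}(A)=Q_P(f^{-1}(A))$. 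Non-atomicity of $\mu_{f\circ\Cc}$ and $Q_{P_f}$ follows from that of $\mu_\Cc$ and $Q_P$, so the hypotheses of Theorem~\ref{thm:consistency} are preserved. The one genuine geometric computation is that $f$ sends halfspaces to halfspaces: using $A^{-1}=A^\top$ one checks directly that $f(H_{u,x})=H_{Au,f(x)}$ for every $u\in\Ss$ and $x\in\RR^d$, and that $u\mapsto Au$ is a bijection of $\Ss$.

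Next I would combine these facts at the level of the point curve depth \eqref{eq:depth:x}. For $x\in S_\Cc$ we have $f(x)\in S_{f\circ\Cc}$, and $Q_{P_f}(H_{Au,f(x)})=Q_P(f^{-1}(H_{Au,f(x)}))=Q_P(H_{u,x})$, and likewise $\mu_{f\circ\Cc}(H_{Au,f(x)})=\mu_\Cc(H_{u,x})$. Since the conventions $a/0=+\infty$ and $0/0=0$ in \eqref{eq:depth:x} depend only on the ordered pair (numerator, denominator), reindexing the infimum by $v=Au$ gives $D(f(x)\,|\,Q_{P_f},\mu_{f\circ\Cc})=D(x\,|\,Q_P,\mu_\Cc)$ for every $x\in S_\Cc$. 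Finally I substitute into \eqref{eq:depth:curve}, using \eqref{eq.line.integral}--\eqref{eq.line.integral.with.mu} to pass to the parameter integral: $D(f\circ\Cc\,|\,P_f)=\int_0^1 D\big(\beta_{f\circ\Cc}(t)\,|\,Q_{P_f},\mu_{f\circ\Cc}\big)\,dt=\int_0^1 D\big(f(\beta_\Cc(t))\,|\,Q_{P_f},\mu_{f\circ\Cc}\big)\,dt=\int_0^1 D\big(\beta_\Cc(t)\,|\,Q_P,\mu_\Cc\big)\,dt=D(\Cc\,|\,P)$, which is the assertion. The only point needing care rather than being purely routine is the bookkeeping around the ratio conventions and the validity of the reindexing on the $\mu_\Cc$-null sets where $\mu_\Cc(H_{u,x})=0$; measurability of all functions involved is already supplied by Lemma~\ref{lem:depth:mc:measurable} and its surrounding results.
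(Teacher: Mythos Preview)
Your argument is correct and follows essentially the same route as the paper's proof: identify $\beta_{f\circ\Cc}=f\circ\beta_\Cc$, deduce the pushforward relations $\mu_{f\circ\Cc}=f_\ast\mu_\Cc$ and $Q_{P_f}=f_\ast Q_P$, observe that $f$ maps halfspaces to halfspaces via $f(H_{u,x})=H_{Au,f(x)}$, and then reindex the infimum in \eqref{eq:depth:x} before integrating. Your write-up is somewhat more explicit than the paper's (you spell out the bijection $u\mapsto Au$ on $\Ss$, the preservation of non-atomicity, and the behaviour of the ratio conventions), but there is no substantive difference in method.
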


\begin{proof}[Proof of Lemma \ref{lem:deep:invariance}]
Let $b$ be a translation vector in $\RR^d,$ $r>0$ be a scalar and $A$ be a $d\times d$ orthogonal matrix.
The arc-length parametrization of the curve $rA\Cc+b$ is $t\mapsto rA\beta_{\Cc}(t)+b.$
We notice by using the substitution rule $u=t/r$ that
\begin{align*}
\mu_{rA\Cc+b}(rA\Hh+b)&=
\int_0^{1} \mathds{1}_{rA\beta_{\Cc}(t)+b\in rA\Hh+b} dt\\
&=\int_0^{1} \mathds{1}_{\beta_{\Cc}(t)\in \Hh} dt \\
&=\mu_{\Cc}(\Hh).
\end{align*}
Denote by $Q_X$ the distribution of a given random vector $X.$ 
We deduce that
$$
Q_{rAX+b}(rA\Hh+b)=\int \mu_{rA\Cc+b}(rA\Hh+b)dP(\Cc)=Q_X(\Hh).
$$

Then we get
\begin{align*}
D(rA\Cc+b|Q_{P_{rA\Xx+b}}) = \int_0^{1} D(rA\beta_\Cc(t)+b|Q_{rAX+b},\mu_{rA\Cc+b})dt,
\end{align*}
where
\begin{align*}
D(rA\beta_\Cc(t)+b|Q_{rAX+b},\mu_{rA\Cc+b})&=\inf_{u\in\Ss}\left\{\frac{Q_{rAX+b}(rAH_{u,x}+b)}{\mu_{rA\Cc+b}(rAH_{u,x}+b)}\right\}\\
&=\inf_{u\in\Ss}\left\{\frac{Q_{X}(H_{u,x})}{\mu_{{\Cc}}(H_{u,x})}\right\}= D(x|\Cc,P_{\Xx}).
\end{align*}
\end{proof}

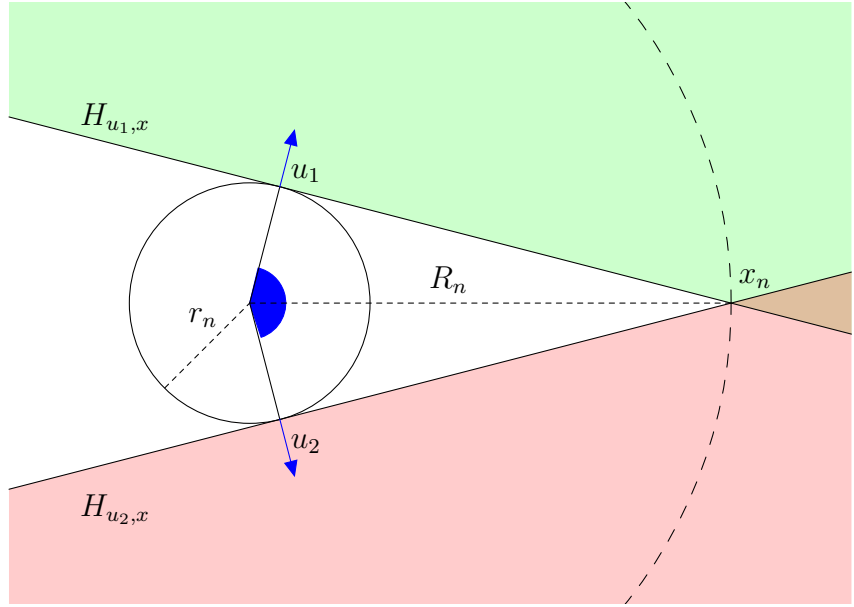
\begin{figure}[H]
\centering
  \begin{tikzpicture}[line cap=round,line join=round,>=triangle 45,x=1cm,y=1cm,scale=0.8]
\clip(-4.2,-5) rectangle (10,5);

\draw [shift={(0,0)},color=qqqqff,fill=qqqqff,fill opacity=0.1] (0,0) -- (-71.57:0.6) arc (-71.57:76.5:0.6) -- cycle;

\fill[color=zzttqq,fill=red!20,fill opacity=0.1] (-4,-3.096)-- (-4,-5.096) -- (10,-5.096)-- (10,0.516) -- cycle;
\fill[color=ttccqq,fill=green!20,fill opacity=0.1] (-4,3.096)-- (-4,5.096) -- (10,5.096)-- (10,-0.516) -- cycle;
\fill[color=ttccqq,fill=brown!50,fill opacity=0.1] (8,0) -- (10,0.516)-- (10,-0.516) -- cycle;

\draw(0,0) circle (2cm);
\draw [dash pattern=on 5pt off 5pt] (0,0) circle (8cm);

\draw [domain=-4:10] plot(\x,{(\x-8)*-0.258});
\draw [domain=-4:10] plot(\x,{(\x-8)*0.258});
\draw [dash pattern=on 2pt off 2pt] (-1.41,-1.41)-- (0,0);
\draw [dash pattern=on 2pt off 2pt] (0,0)-- (8,0);

\draw (-1.2,0.09) node[anchor=north west] {$r_n$};
\draw (2.8,0.8) node[anchor=north west] {$R_n$};

\draw (-3,3.5) node[anchor=north west] {$H_{u_1,x}$};
\draw (-3,-3) node[anchor=north west] {$H_{u_2,x}$};

\draw (0,0)-- (0.5,1.93);
\draw (0,0)-- (0.5,-1.93);

\draw[->,color=qqqqff] (0.5,1.93) -- (0.75,2.895);
\draw[->,color=qqqqff] (0.5,-1.93) -- (0.75,-2.895);
\draw (0.5,2.5) node[anchor=north west] {$u_1$};
\draw (0.5,-2) node[anchor=north west] {$u_2$};

\draw (8.4,0.4) node {$x_n$};
\end{tikzpicture}
\caption{Illustration for the proof of ``Vanishing at infinity'' property}
\end{figure}

\begin{lemma}\label{lem:deep:vanishing}
Let $(\Cc_n)$ be a sequence of unparametrized curves of length $\ell$ such that $\mu_{\Cc_n}$ is non-atomic, and
\begin{align*}
R_n = \inf_{x\in S_{\Cc_n}} |x|_2,\quad\text{and}\quad \lim_{n\to\infty} R_n=+\infty.
\end{align*}
Then the sequence of depths $(D(\Cc_n|P))$ converges to $0$ as $n\to\infty.$
\end{lemma}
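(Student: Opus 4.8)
The plan is to bound the \curvedepth\ $D(\Cc_n|P) = \int_{\Cc_n} D(s|Q_P,\mu_{\Cc_n})\,d\mu_{\Cc_n}(s)$ by controlling the \emph{point} curve depth $D(x|Q_P,\mu_{\Cc_n})$ uniformly in $x\in S_{\Cc_n}$, using the fact that every point of $S_{\Cc_n}$ is at Euclidean distance at least $R_n$ from the origin while $Q_P$ has mass concentrated near the origin. The key geometric observation, illustrated in the figure preceding the statement, is this: since $L(\Cc_n)=\ell$ for all $n$, every point $x\in S_{\Cc_n}$ satisfies $R_n\le|x|_2\le R_n+\ell$ (the locus has diameter at most $\ell$, being the continuous image of $[0,1]$ under a $1$-Lipschitz-after-arc-length reparametrization of total length $\ell$). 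Therefore, taking $u = -x/|x|_2$, the halfspace $H_{u,x}=\{y : y^\top(-x/|x|_2)\ge -|x|_2\}=\{y:\, y^\top x\le |x|_2^2\}$ contains the entire locus $S_{\Cc_n}$ (every $s\in S_{\Cc_n}$ has $s^\top x \le |s|_2|x|_2 \le (R_n+\ell)|x|_2$... this needs the sharper estimate that $S_{\Cc_n}$ lies in the ball of radius $\ell$ around $x$, hence in $H_{u,x}$), so that $\mu_{\Cc_n}(H_{u,x})=1$.

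The next step is to estimate the numerator $Q_P(H_{u,x})$ for this particular $u$. The halfspace $H_{-x/|x|_2,x}$ consists of points $y$ with $y^\top x \le |x|_2^2$, i.e., it excludes the halfspace strictly ``beyond'' $x$ in the radial direction; crucially, its complement $H_{x/|x|_2,x}^{\circ}$ contains the open ball $B(0,R_n-\ell)$... no: rather, I want to choose $u$ so that $H_{u,x}$ \emph{avoids} a large ball around the origin. Let me instead take $u=x/|x|_2$; then $H_{u,x}=\{y: y^\top x\ge |x|_2^2\}$, whose boundary hyperplane is at distance $|x|_2\ge R_n$ from the origin, so $H_{u,x}\cap B(0,R_n)=\emptyset$. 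Hence $Q_P(H_{u,x})\le Q_P(\RR^d\setminus B(0,R_n)) =: \varepsilon_n$, and since $Q_P$ is a probability measure on $\RR^d$, $\varepsilon_n\to 0$ as $R_n\to\infty$. Meanwhile, for this same $u$, the locus $S_{\Cc_n}$, lying in $B(0,R_n+\ell)$ but not necessarily inside $H_{u,x}$, satisfies $\mu_{\Cc_n}(H_{u,x})\ge ?$ — here is the difficulty: I need a \emph{lower} bound on the denominator. One fix: use the complementary halfspace. By Lemma~\ref{lem:depth:bounded} (applied to the non-atomic measures $Q_P$ and $\mu_{\Cc_n}$), for each $x$ there is some halfspace $H$ with $x\in\partial H$ and $Q_P(H)/\mu_{\Cc_n}(H)\le 1$; combined with the radial halfspace above (for which the ratio is $\le \varepsilon_n/\mu_{\Cc_n}(H_{u,x})$), I take the infimum. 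So the clean argument is: $D(x|Q_P,\mu_{\Cc_n}) = \inf_{v\in\Ss} Q_P(H_{v,x})/\mu_{\Cc_n}(H_{v,x}) \le Q_P(H_{u,x})/\mu_{\Cc_n}(H_{u,x})$ where $u = x/|x|_2$ and it remains to lower-bound $\mu_{\Cc_n}(H_{x/|x|_2,x})$. Since $S_{\Cc_n}\subset\{|y-x|_2\le\ell\}$ and this ball has more than half its mass... no. The real fix: choose $x_0\in S_{\Cc_n}$ to be a point \emph{maximizing} $|x|_2$ on the locus (exists by compactness); then for $u=x_0/|x_0|_2$ the whole locus lies in $H_{-u,x_0}$, giving $\mu_{\Cc_n}(H_{-u,x_0})=1$, while $H_{u,x_0}$ avoids $B(0,R_n)$. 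But $D(\Cc_n|P)$ integrates over \emph{all} $s\in S_{\Cc_n}$, not just $x_0$.

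So the correct structure is: for \emph{every} $x\in S_{\Cc_n}$, let $u_x$ be the outward unit normal to a supporting hyperplane of the convex hull $\mathrm{conv}(S_{\Cc_n})$ at $x$ if $x$ is an extreme point, but for interior points this fails. A cleaner route: for every $x\in S_{\Cc_n}$ and the direction $u=x/|x|_2$, note $H_{u,x}\cap B(0,R_n)=\emptyset$, hence $Q_P(H_{u,x})\le\varepsilon_n$; and $\mu_{\Cc_n}(H_{u,x})$ could be small, but then we also know by Lemma~\ref{lem:depth:bounded} there is \emph{another} halfspace through $x$ with ratio $\le 1$. We don't get to pick — the infimum picks the smallest. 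Since the infimum is $\le \min$ of these two candidate ratios, $D(x|Q_P,\mu_{\Cc_n})\le\min\{1,\; \varepsilon_n/\mu_{\Cc_n}(H_{x/|x|_2,x})\}$. To finish I would show $\mu_{\Cc_n}(H_{x/|x|_2,x})$ is bounded below uniformly, OR, better, bypass this by noting that the convention $a/0=\infty$ only hurts when $\mu_{\Cc_n}(H)=0$ \emph{and} $Q_P(H)>0$, in which case $D(x|\cdot)=0$ anyway by the convention spelled out after \eqref{eq:depth:x} and the remark that $x$ is then an outlier. Thus on the set where $\mu_{\Cc_n}(H_{x/|x|_2,x})>0$ we get the bound $\varepsilon_n/\mu_{\Cc_n}(H_{x/|x|_2,x})$, and the main obstacle — which I expect to be the technical heart of the proof — is showing $\mu_{\Cc_n}(H_{x/|x|_2,x})\ge c>0$ uniformly, or else decomposing $S_{\Cc_n}$ so that the ``bad'' part (where this measure is tiny) carries negligible $\mu_{\Cc_n}$-mass. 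I would resolve this by a covering/supporting-hyperplane argument: for the point $x^\star$ farthest from $0$, $\mu_{\Cc_n}(H_{x^\star/|x^\star|_2,x})=1$ for the direction $u=x^\star/|x^\star|_2$ uniformly over all $x\in S_{\Cc_n}$ with $x^\top x^\star$ large, and handle the remaining arc by its small arc-length. Then $D(\Cc_n|P)=\int_{S_{\Cc_n}} D(s|Q_P,\mu_{\Cc_n})\,d\mu_{\Cc_n}(s)\le \varepsilon_n + (\text{small-mass term})\to 0$, and combining with Theorem~\ref{thm:properties}'s boundedness and measurability (already established) completes the argument.
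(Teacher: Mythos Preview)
Your diagnosis of the obstacle --- a uniform lower bound on the denominator $\mu_{\Cc_n}(H_{u,x})$ --- is exactly right, but neither direction you propose delivers it. The radial choice $u=x/|x|_2$ can fail for every $x$ simultaneously: if $\Cc_n$ is an arc of the sphere $\Ss_{R_n}$ of length $\ell$, then $H_{x/|x|_2,x}$ is tangent to that sphere at $x$ and $\mu_{\Cc_n}(H_{x/|x|_2,x})=0$ for all $x\in S_{\Cc_n}$, so the ratio is $+\infty$ by convention. Your farthest-point direction $u=x^\star/|x^\star|_2$ has the claim backwards (one gets $\mu_{\Cc_n}(H_{u,x})=1$ only when $x\cdot u$ is \emph{minimal} on $S_{\Cc_n}$, not when $x^\top x^\star$ is large), and for $x$ near $x^\star$ one again has $\mu_{\Cc_n}(H_{u,x})$ arbitrarily small. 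No single direction can work uniformly in $x$.

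The missing geometric idea is to use, for each $x$, halfspaces through $x$ that are \emph{tangent to an inner sphere} $\Ss_{r_n}$ with $\ell<r_n<R_n$ chosen so that $\epsilon_n:=Q_P(\{|y|_2>r_n\})\to 0$. Every such halfspace avoids $B(0,r_n)$, hence has $Q_P$-mass at most $\epsilon_n$; and the paper's Lemma~\ref{lem:deep:vanishing:lemma} shows that through any such $x$ one can find $2d-2$ tangent halfspaces whose union contains every $y$ with $|y|_2\ge|x|_2$ and $|x-y|_2\le\ell$. When $x$ realizes $|x|_2=R_n$ this union covers all of $S_{\Cc_n}$, so at least one of the $2d-2$ halfspaces carries $\mu_{\Cc_n}$-mass $\ge 1/(2d-2)$. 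The proof then sets $\Delta_n=\{x\in S_{\Cc_n}:\forall u,\ Q_P(H_{u,x})>\epsilon_n\text{ or }\mu_{\Cc_n}(H_{u,x})<\sqrt{\epsilon_n}\}$, gets $D(x|\cdot)\le\sqrt{\epsilon_n}$ off $\Delta_n$, and uses the tangent-halfspace covering together with a case analysis on the radial c.d.f.\ $F_n(t)=\mu_{\Cc_n}\{|y|_2\le R_n+t\}$ to bound $\mu_{\Cc_n}(\Delta_n)\le 4d\sqrt{\epsilon_n}$. Your good/bad decomposition is the right skeleton; what is absent is precisely this tangent-halfspace covering that makes the denominator controllable.
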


We introduce the following notations and definitions.
We denote by $\Ss_r=\{x\in\RR^d:\ |x|_2=r\}$ the sphere of radius $r>0.$
A halfspace $H$ is \textit{tangent to} $\Ss_r$ if its boundary $\partial H$ is tangent to $\Ss_r$ and $H\cap\Ss_r$ is a singleton.

\begin{lemma}\label{lem:deep:vanishing:lemma}
  Let $x\in\RR^d$ such that $|x|_2=R$ and $d\geq 2.$
  Let $y\in\RR^d$ such that $|y|_2\geq |x|_2$ and $|x-y|_2\leq \ell <r.$
  There exist $u_1,u_2,\ldots, u_{2d-2}\in\Ss$ such that 
  \begin{itemize}
  \item for all $i=1,\ldots, 2d-2,$ $H_{u_i,x}$ is a tangent halfspace to $\Ss_r,$
  \item $y\in \cup_{i=1}^{2d-2} H_{u_i,x}.$
  \end{itemize}
\end{lemma}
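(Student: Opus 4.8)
The plan is to normalise by a rotation and then write down the $2d-2$ tangent halfspaces explicitly. Since $\Ss_r$, the Euclidean balls about the origin, the property "$H_{u,x}$ is tangent to $\Ss_r$", and membership in $\bigcup_i H_{u_i,x}$ are all invariant under an orthogonal change of coordinates, I would assume $x = R e_1$. Observe first that $H_{u,x}$ can be tangent to $\Ss_r$ only if $x^{\top}u = r$ (if $x^{\top}u = -r$ then $H_{u,x}\cap\Ss_r = \Ss_r$, not a singleton, when $d\ge 2$), which already forces $R\ge r$; in fact the estimate below will need $R$ large relative to $r$ — it is enough that $R\ge dr$ — which is harmless because Lemma~\ref{lem:deep:vanishing:lemma} is only used inside the proof of Lemma~\ref{lem:deep:vanishing}, where $R = R_n\to\infty$. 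With $\rho := \sqrt{1 - r^2/R^2}$, I would take, for $j = 2,\dots,d$ and $\sigma\in\{-1,+1\}$,
\[
u_{j,\sigma} \;=\; \tfrac{r}{R}\,e_1 + \sigma\rho\,e_j .
\]
These are $2(d-1) = 2d-2$ unit vectors (indeed $|u_{j,\sigma}|_2^2 = r^2/R^2 + \rho^2 = 1$), and $x^{\top}u_{j,\sigma} = R\cdot r/R = r$; hence $\partial H_{u_{j,\sigma},x}$ is at distance $r$ from the origin and meets $\Ss_r$ only at $r u_{j,\sigma}$, so each $H_{u_{j,\sigma},x}$ is a tangent halfspace to $\Ss_r$.

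Next I would verify $y\in\bigcup_{j,\sigma} H_{u_{j,\sigma},x}$, i.e.\ $y^{\top}u_{j,\sigma}\ge r$ for some $(j,\sigma)$. Writing $y = (y_1,\dots,y_d)$ and picking $\sigma = \operatorname{sign}(y_j)$, one has $\max_{\sigma} y^{\top}u_{j,\sigma} = \tfrac{r}{R}y_1 + \rho|y_j|$, so it suffices to find $j\in\{2,\dots,d\}$ with $\rho|y_j|\ge \tfrac{r}{R}(R-y_1)$. If $y_1\ge R$ the right-hand side is $\le 0$ and any $j$ works, so I may assume $y_1 < R$ and set $s := R - y_1$. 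From $|y-x|_2\le\ell$ I get $s^2 + \sum_{j\ge 2}y_j^2 \le \ell^2$, hence $0 < s\le\ell$; from $|y|_2\ge R$ I get $\sum_{j\ge 2}y_j^2 \ge R^2 - y_1^2 = s(2R-s)$.

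Then I would argue by contradiction: suppose $\rho|y_j| < \tfrac{r}{R}s$ for every $j\in\{2,\dots,d\}$. Applying this to the index with $|y_j|$ maximal and using $\max_{j\ge2}|y_j| \ge \bigl(\tfrac{1}{d-1}\sum_{j\ge2}y_j^2\bigr)^{1/2} \ge \bigl(\tfrac{s(2R-s)}{d-1}\bigr)^{1/2}$, squaring, substituting $\rho^2 = (R^2-r^2)/R^2$ and dividing by $s>0$ gives $(R^2-r^2)(2R-s) < (d-1)r^2 s$. Since $s\le\ell < r < R$, we have $2R-s > R$ and $s < r$, so $(R^2-r^2)R < (d-1)r^3$, i.e.\ $R^3 < r^2R + (d-1)r^3$; but when $R\ge dr$ one has $(d-1)r^3\le r^2 R$ and $2r^2 R \le \tfrac{2}{d^2}R^3\le R^3$, so $R^3 < R^3$, a contradiction. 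Hence some $j$ gives $\rho|y_j|\ge\tfrac{r}{R}s$, and $y$ lies in the union.

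The one delicate point is this final numerical inequality. The $2d-2$ tangent halfspaces genuinely fail to cover the thin sliver $\{\,|z|_2\ge R\,\}\cap\overline B(x,\ell)$ once $R$ is only slightly larger than $r$ — already for $d=2$, where $U_x := \{u\in\Ss: x^{\top}u = r\} = \{u_{2,+},u_{2,-}\}$ has only these two members, a short computation produces admissible $y$ outside $H_{u_{2,+},x}\cup H_{u_{2,-},x}$ when $R^2$ lies below a fixed multiple of $r^2$. So the proof is really "for $R$ large"; I would either record that hypothesis explicitly or, as the paper implicitly does, invoke the lemma only with $R = R_n\to\infty$. Finally, the quadratic-mean bound on $\max_{j\ge2}|y_j|$ is exactly what fixes the count $2d-2$: it is the number of vertices of the cross-polytope in the $(d-1)$-dimensional orthogonal complement of $x$, and a smaller set of tangent directions would only degrade the constant $1/\sqrt{d-1}$.
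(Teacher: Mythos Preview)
The paper does not include a proof of this lemma, so there is no argument of theirs to compare against; the result is simply stated and then used in the proof of Lemma~\ref{lem:deep:vanishing}. Your construction is the natural one and the verification is correct: the rotation to $x=Re_1$, the explicit tangent directions $u_{j,\sigma}=\tfrac{r}{R}e_1+\sigma\rho e_j$, and the contradiction via the quadratic-mean bound $\max_{j\ge2}|y_j|\ge\bigl(\tfrac{1}{d-1}\sum_{j\ge2}y_j^2\bigr)^{1/2}$ are all sound.

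You are also right that the lemma as stated is incomplete without a size hypothesis on $R/r$. Your $d=2$ remark can be made precise: with $x=(R,0)$ the only two tangent directions are $u_{2,\pm}$, and one checks that a valid $y$ escapes their union exactly when $\ell^2>4(R^2-r^2)$, which is possible whenever $r<R<\tfrac{\sqrt5}{2}\,r$. So some lower bound on $R$ is genuinely needed. Your threshold $R\ge dr$ works; a slightly sharper bookkeeping (using $s\le \ell^2/(2R)$, which follows from combining $|y|\ge R$ with $|y-x|\le\ell$, instead of the cruder $s\le\ell$) shows that the paper's assumption $r_n<R_n/\sqrt2$ in the proof of Lemma~\ref{lem:deep:vanishing} already suffices for $d\le 8$ but not beyond. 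This is purely cosmetic: since $R_n\to\infty$, one may simply replace $r_n<R_n/\sqrt2$ by $r_n<R_n/d$ there, exactly as you suggest. Your final remark about the cross-polytope is also to the point: with $2(d-1)$ fixed directions in the $(d-1)$-dimensional orthogonal complement of $x$, the factor $1/\sqrt{d-1}$ in the max-vs-norm estimate is essentially unavoidable, so no smarter fixed choice of $u_i$'s removes the need for a $d$-dependent lower bound on $R/r$.
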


\begin{proof}[Proof of Lemma \ref{lem:deep:vanishing}]
Since $Q_P$ is a probability meausure on $\RR^d,$ $Q_P$ is tight.
Then we can consider an increasing sequence $r_n$ such that: 
\begin{align*}
&1-\epsilon_n=Q_P\left(\{x\ :\ |x|_2\leq r_n\}\right) \text{ and } \lim_{n} \epsilon = 0,\\
&\ell<r_n<R_n/\sqrt{2} \text{ and } \epsilon_n\leq 1/(4d^2).
\end{align*}
Let $\Delta_n$ be a subset of $S_{\Cc_n}$ defined as
$$
\Delta_n=\left\{x\in S_{\Cc_n}\ :\ \forall u\in\Ss,\ Q_P(H_{u,x})>\epsilon_n \text{ or } \mu_{\Cc_n}(H_{u,x})<\sqrt{\epsilon_n}\right\}.
$$
Then for all $x\notin \Delta_n,$ $D(x|\mu_{\Cc_n},Q_P)\leq\sqrt{\epsilon_n}.$ 
It suffices to show that $\mu_{\Cc_n}(\Delta_n)<4d\epsilon_n$ and the lemma is proved. 
We define,
\begin{align*}
&R_n+\ell_n = \max_{x\in S_{\Cc_n}} |x|_2,\\
&F_n(t) = \mu_{\Cc_n}\left(\{x\ :\ |x|_2\leq R_n+t\}\right).
\end{align*}
Since $S_{\Cc_n}$ is a compact set, we get that $0<\ell_n<\ell,$ and $F_n$ is a cumulative distribution function (c.d.f.) whose support is $[0,\ell_n].$

\textit{1. The one-dimensional case, $d=1$.} The c.d.f.\ $F_n$ is a non atomic cumulative distribution function because $\mu_{\Cc_n}$ is non-atomic. Let $t_n$ be the quantile of order $1-\sqrt{\epsilon_n}$ of $F_n.$ Then $\Delta_n$ is included in the set of $x\in S_{\Cc_n}$ such that $|x|_2> t_n$, that is $\mu_{\Cc_n}(\Delta_n)\leq\sqrt{\epsilon_n}.$ 

\textit{2. The multi-dimensional case, $d\geq 2$.} 
The c.d.f.\ $F_n$ may have atoms.
First assume that $\mu_{\Cc_n}(\{x\ :\ |x|_2=R_n \})\geq 1-(2d-2)\sqrt{\epsilon_n}.$ 
We aim to show that $\Delta_n\subset \{x\ :\ |x|_2>R_n \}$ and then $\mu_{\Cc_n}(\Delta_n)<4d\epsilon_n.$
Let $x\in S_{\Cc_n}$ be such that $|x|_2=R_n$ and $x\in\Delta_n.$
Then for all $y\in S_{\Cc_n},$ $|y|_2 \geq |x|_2$ and $|x-y|_2\leq\ell .$
Using Lemma \ref{lem:deep:vanishing:lemma}, there exist $u_1,\ldots, u_{2d-2} \in \Ss$ such that :
$$
S_{\Cc_n}\subset \cup_{i=1}^{2d-2} H_{u_i,x}\qquad\text{and}\qquad\forall i=1,\ldots,2d-2,\quad \mu_{\Cc_n}(H_{u_i,x})<\sqrt{\epsilon_n},
$$
which is absurd.
Then $\Delta_n\subset \{x\ :\ |x|_2>R_n \}$ and $\mu_{\Cc_n}(\Delta_n)<4d\epsilon_n.$

Secondly assume that $\mu_{\Cc_n}(\{x\ :\ |x|_2=R_n \})< 1-(2d-2)\sqrt{\epsilon_n}$  : there exists a non negligible part of the curve $\Cc_n$ outside of $\Ss_{R_n}.$
Under this assumption, we distinguish two cases. 
The first sub-case is when $\mu_{\Cc_n}(\{x\ :\ |x|_2=R_n+\ell_n \})\geq (2d-2)\sqrt{\epsilon_n}.$
If $x\in \Delta_n$ then for all $y\in\{z\ :\ |z|_2=R_n+\ell_n \}\cap S_{\Cc_n},$ $|x|_2\leq |y|_2$ and $|x-y|_2\leq \ell.$
Using Lemma \ref{lem:deep:vanishing:lemma}, there exist $u_1,\ldots, u_{2d-2} \in \Ss$ such that :
$$
S_{\Cc_n}\cap\{x\ :\ |x|_2=R_n+\ell_n \}\subset \cup_{i=1}^{2d-2} H_{u_i,x}\qquad\text{and}\qquad\forall i=1,\ldots,2d-2,\quad \mu_{\Cc_n}(H_{u_i,x})<\sqrt{\epsilon_n},
$$
which is absurd.
Then $\Delta_n=\emptyset$ and $\mu_{\Cc_n}(\Delta_n)<4d\epsilon_n.$ 
 
The second sub-case is when $\mu_{\Cc_n}(\{x\ :\ |x|_2=R_n+\ell_n \})< (2d-2)\sqrt{\epsilon_n}.$
Let $t_n$ be the quantile of order $1-(2d-2)\sqrt{\epsilon_n}$ of $F_n.$ We know that $0<t_n\leq \ell_n.$
Using the same argument, we show that $\Delta_n\subset \{x\ :\: |x|_2 \geq t_n\}.$
If $\mu_{\Cc_n}(\{x\ :\ |x|_2=t_n \}) \geq (2d-2)\sqrt{\epsilon_n},$ one can show that $\Delta_n$ is a subset of $\{x\ :\: |x|_2 < t_n\}$ and $\mu_{\Cc_n}(\Delta_n)\leq (2d-2)\sqrt{\epsilon_n}.$
If $\mu_{\Cc_n}(\{x\ :\ |x|_2=t_n \}) < (2d-2)\sqrt{\epsilon_n},$ then 
$$
\mu_{\Cc_n}(\Delta_n)\leq \mu_{\Cc_n}(\{x\ :\ |x|_2=t_n \})+\mu_{\Cc_n}(\{x\ :\ |x|_2>t_n \}) \leq 4d\sqrt{\epsilon_n}.
$$
\end{proof}

\section{Algorithms}

\subsection{Procedures for Calculating Point \CurveDepth}\label{app:ssec:calcdepth}

Calculation of the point \curvedepth{} $\widehat{D}(x|\widehat{Q}_{m,n},\widehat{\mu}_{m},\HH_{\Delta}^{n,m})$ in $\mathbb{R}^2$ relies on the work by \cite{rousseeuw1996HD2}. The main idea is to regard all possible closed halfplanes by rotating a line containing $x$ in a counter-clockwise way. Here, the modifications are straightforward and narrow down to accounting for two differing samples $\widehat{Q}_{m,n}$ and $\widehat{\mu}_{m}$, threshold $\Delta$, and minimization functional represented by a ratio. The formal algorithm is detailed in Algorithm~\ref{alg:hdpart2d}, where w.l.o.g. $x=0$ due to translation invariance of the depth for convenience. The complexity of the algorithm is $O(mn\log(mn))$.

Algorithm~\ref{alg:hdpart3d} calculates $\widehat{D}(x|\widehat{Q}_{m,n},\widehat{\mu}_{m},\HH_{\Delta}^{n,m})$ in $\mathbb{R}^3$ modifying in the similar way the work by \cite{dyckerhoff2016HD}, and exploits Algorithm~\ref{alg:hdpart2d} as its basic element. The main idea is to perform Algorithm~\ref{alg:hdpart2d} for a projection of $\widehat{Q}_{m,n}$ and $\widehat{\mu}_{m}$ onto a plane orthogonal to the line connecting $0$ and one of the points from $\widehat{Q}_{m,n}$ and $\widehat{\mu}_{m}$. It can be easily extended to higher dimensions by additionally accounting for different combinations of points lying on this line on different sides form the (hyper)plane. The complexity of the algorithm is $O(m^2n^2\log(mn))$.

\noindent\hspace*{0\parindent}%
\begin{minipage}{0.875\textwidth}
\begin{algorithm}[H]
\caption{Routine for computing $D(\boldsymbol{0}|\widehat{Q}_{m,n},\widehat{\mu}_{m},\Delta)$ in dimension $2$}\label{alg:hdpart2d}
\begin{algorithmic}[1]
\Function{pTcd.2d}{$\bmy_1,...,\bmy_m$,$\bmx_1,...,\bmx_{m\cdot n},\Delta$}  \Comment{Point Tukey curve depth of $\boldsymbol{0}$}
\State{$n^z_{\bmy}\gets 0$}  \Comment{Number of $\mu$-points in the origin}
\State{$n^z_{\bmx}\gets 0$}  \Comment{Number of $Q$-points in the origin}
\State{$n^h_{\bmy}\gets 0$}  \Comment{Number of $\mu$-points in the halfplane}
\State{$n^h_{\bmx}\gets 0$}  \Comment{Number of $Q$-points in the halfplane}
\For{$i=1:m$}  \Comment{Go through all points sampled on $\mu$}
	\If{$|\bmy_i| = 0$}
		\State{$n^z_{\bmy} \gets n^z_{\bmy} + 1$}  \Comment{Count $\mu$-points in the origin}
	\Else
		\State{$P(i - n^z_{\bmy})\gets(\alpha = ATAN2(\bmy_i(2),\bmy_i(1)), c = 0)$}  \Comment{Save to all points}
		\If{$ATAN2(\bmy_i(2),\bmy_i(1)) < 0$}
			\State{$n^h_{\bmy}\gets n^h_{\bmy} + 1$}  \Comment{Count $\mu$-points in the (lower) halfplane}
		\EndIf
	\EndIf
\EndFor
\For{$i=1:(m \cdot n)$}  \Comment{Go through all points sampled on $Q$}
	\If{$|\bmx_i| = 0$} $n^z_{\bmx} \gets n^z_{\bmx} + 1$  \Comment{Count $Q$-points in the origin}
	\Else
		\State{$P(m - n^z_{\bmy} + i - n^z_{\bmx})\gets(\alpha = ATAN2(\bmx_i(2),\bmx_i(1)), c = 1)$}  \Comment{Save to all}		
		\If{$ATAN2(\bmx_i(2),\bmx_i(1)) < 0$}
			\State{$n^h_{\bmx}\gets n^h_{\bmx} + 1$}  \Comment{Count $Q$-points in the (lower) halfplane}
		\EndIf
	\EndIf
\EndFor
\State{$k \gets (m + m * n) - (n^z_{\bmy} + n^z_{\bmx})$}
\State{Sort $P$ w.r.t. $\alpha$s in ascending order}
\State{$D \gets \frac{n^h_{\bmx} / (m \cdot n)}{n^h_{\bmy} / m}$}  \Comment{Initialize the depth value}
\State{$j \gets n^h_{\bmy} + n^h_{\bmx} + 1$}

\algstore{hdpart2d1}
\end{algorithmic}
\end{algorithm}
\end{minipage}

\noindent\hspace*{0\parindent}%
\begin{minipage}{0.875\textwidth}
\begin{algorithm}[H]
\ContinuedFloat
\caption{Routine for computing $D(\boldsymbol{0}|\widehat{Q}_{m,n},\widehat{\mu}_{m},\Delta)$ in dimension $2$ (continued)}
\begin{algorithmic}
\algrestore{hdpart2d1}		

\State\Comment{Turn around counter-clockwise from the lower to the upper halfplane}
\For{$i = 1:(n^h_{\bmy} + n^h_{\bmx} + 1)$}
	\While{$j \le k$ and ((($i = n^h_{\bmy} + n^h_{\bmx} + 1$) and ($P(j).\alpha \le \pi$)) or \\
	\hfill($P(j).\alpha - \pi \le P(i).\alpha$))}
		\If{$P(i).c = 0$} $n^h_{\bmy} \gets n^h_{\bmy} + 1$  \Comment{Add the point to the halfplane}
		\Else \,$n^h_{\bmx} \gets n^h_{\bmx} + 1$
		\EndIf
		\If{($j < k$) and ($P(j + 1).\alpha = P(j).\alpha$)}  \Comment{If next point a tie ...}
			\State{$j \gets j + 1$;} \algorithmiccontinue  \Comment{... go directly to it}
		\EndIf
		\If{($i = n^h_{\bmy} + n^h_{\bmx} + 1$) or ($P(j).\alpha - \pi \le P(i).\alpha$)}  \Comment{If last point ...}
			\If{$n^h_{\bmx} = 0$}
				\State\Return{$0$} \Comment{... stop if zero depth achieved ...}
			\EndIf
			\If{$n^h_{\bmy} / m > \Delta$} \Comment{... otherwise still update the depth}
				\State{$D \gets \min\{D, \frac{n^h_{\bmx} / (m \cdot n)}{n^h_{\bmy} / m}\}$}
			\EndIf
		\EndIf
		\State{$j \gets j + 1$}  \Comment{Add point to the halfplane}
	\EndWhile
	\If{$i = n^h_{\bmy} + n^h_{\bmx} + 1$} \algorithmicbreak  \Comment{No more points to remove from the halfplane}
	\EndIf
	\If{$P(i).c = 0$} $n^h_{\bmy} \gets n^h_{\bmy} - 1$  \Comment{Remove the point from the halfplane}
	\Else \, $n^h_{\bmx} \gets n^h_{\bmx} - 1$
	\EndIf
	\If{($i < n^h_{\bmy} + n^h_{\bmx}$) and ($P(i + 1).\alpha = P(i).\alpha$)}  \Comment{If next point a tie ...}
		\State\algorithmiccontinue  \Comment{... go directly to the next iteration}
	\EndIf
	\If{$n^h_{\bmx} = 0$}
		\State\Return{$0$} \Comment{Stop if zero depth achieved}
	\EndIf
	\If{$n^h_{\bmy} / m > \Delta$} \Comment{Update the depth}
		\State{$D \gets \min\{D, \frac{n^h_{\bmx} / (m \cdot n)}{n^h_{\bmy} / m}\}$}
	\EndIf
\EndFor
\State{$j \gets 0$}

\algstore{hdpart2d2}
\end{algorithmic}
\end{algorithm}
\end{minipage}

\noindent\hspace*{0\parindent}%
\begin{minipage}{0.875\textwidth}
\begin{algorithm}[H]
\ContinuedFloat
\caption{Routine for computing $D(\boldsymbol{0}|\widehat{Q}_{m,n},\widehat{\mu}_{m},\Delta)$ in dimension $2$ (continued)}
\begin{algorithmic}
\algrestore{hdpart2d2}

\State\Comment{Turn around counter-clockwise from the upper to the lower halfplane}
\For{$i = (k - (n^h_{\bmy} + n^h_{\bmx}) + 1):(k + 1)$}
	\While{$j \le k < (n^h_{\bmy} + n^h_{\bmx})$ and ((($i = k + 1$) and ($P(j).\alpha \le 0$)) or \\
	\hfill($P(j).\alpha + \pi \le P(i).\alpha$))}
		\If{$P(i).c = 0$} $n^h_{\bmy} \gets n^h_{\bmy} + 1$  \Comment{Add the point to the halfplane}
		\Else \,$n^h_{\bmx} \gets n^h_{\bmx} + 1$
		\EndIf
		\If{($j < k - (n^h_{\bmy} + n^h_{\bmx})$) and ($P(j + 1).\alpha = P(j).\alpha$)}  \Comment{If a tie ...}
			\State{$j \gets j + 1$;} \algorithmiccontinue \Comment{... add it as well}
		\EndIf
		\If{($i = k + 1$) or ($P(j).\alpha + \pi \le P(i).\alpha$)}  \Comment{If last point ...}
			\If{$n^h_{\bmx} = 0$}
				\State\Return{$0$} \Comment{... stop if zero depth achieved ...}
			\EndIf
			\If{$n^h_{\bmy} / m > \Delta$} \Comment{... otherwise still update the depth}
				\State{$D \gets \min\{D, \frac{n^h_{\bmx} / (m \cdot n)}{n^h_{\bmy} / m}\}$}
			\EndIf
		\EndIf
		\State{$j \gets j + 1$}  \Comment{Add point to the halfplane}
	\EndWhile
	\If{$i = k + 1$}  \Comment{If last point ...}
		\State\algorithmicbreak  \Comment{... no points to remove from the halfplane, so stop the outer loop}
	\EndIf
	\If{$P(i).c = 0$} $n^h_{\bmy} \gets n^h_{\bmy} - 1$  \Comment{Remove the point from the halfplane}
	\Else \,$n^h_{\bmx} \gets n^h_{\bmx} - 1$
	\EndIf
	\If{($i < k$) and ($P(i + 1).\alpha = P(i).\alpha$)}  \Comment{If next point a tie ...}
		\State\algorithmiccontinue  \Comment{... go directly to the next iteration}
	\EndIf
	\If{$n^h_{\bmx} = 0$}
		\State\Return{$0$} \Comment{Stop if zero depth achieved}
	\EndIf
	\If{$n^h_{\bmy} / m > \Delta$} \Comment{Update the depth}
		\State{$D \gets \min\{D, \frac{n^h_{\bmx} / (m \cdot n)}{n^h_{\bmy} / m}\}$}
	\EndIf
\EndFor
\State\Return{$D$}
\EndFunction
\end{algorithmic}
\end{algorithm}
\end{minipage}

\noindent\hspace*{0\parindent}%
\begin{minipage}{0.875\textwidth}
\begin{algorithm}[H]
\caption{Routine for computing $D(\boldsymbol{0}|\widehat{Q}_{m,n},\widehat{\mu}_{m},\Delta)$ in dimension $3$}\label{alg:hdpart3d}
\begin{algorithmic}[1]
\Function{pTcd.3d}{$\bmy_1,...,\bmy_m$,$\bmx_1,...,\bmx_{m \cdot n},\Delta$}  \Comment{Point Tukey curve depth of $\boldsymbol{0}$}
\State{$D \gets 1$}
\For{$i=1:(m + m \cdot n)$}  \Comment{For each point of the both samples}
	\If{$i \le m$} $\bmz \gets \bmy_i$
	\Else \,$\bmz \gets \bmx_{i - m}$
	\EndIf
	\State{Compute a basis $\boldsymbol{A}=[\bma_1,\bma_2]$ of the hyperplane with normal $\bmz$}
	\State{$n^a_{\bmy} = 0$}  \Comment{Number of $\mu$-points in the origin \textbf{a}bove halfplane}
	\State{$n^b_{\bmy} = 0$}  \Comment{Number of $\mu$-points in the origin \textbf{b}elow plane}
	\State{$n_{\bmy} = 0$}  \Comment{Number of $\mu$-points not in the origin in the plane}
	\For{$j = 1:m$}  \Comment{Go through all points sampled on $\mu$}
		\If{$\boldsymbol{A}^{\top}\bmy_i = \boldsymbol{0}$}  \Comment{If projected in the origin}
			\If{$\bmz^{\top}\bmy_i > 0$} $n^a_{\bmy} \gets n^a_{\bmy} + 1$  \Comment{$\bmy_i$ above the plane}
			\ElsIf{$\bmz^{\top}\bmy_i < 0$} $n^b_{\bmy} \gets n^b_{\bmy} + 1$  \Comment{$\bmy_i$ below the plane}
			\Else \,$\bmy^{\prime}_{n_{\bmy} + 1} \gets \boldsymbol{A}^{\top}\bmy_i$; $n_{\bmy} \gets n_{\bmy} + 1$  \Comment{Add $\bmy_i$'s projection to the plane}
			\EndIf
		\Else \,$\bmy^{\prime}_{n_{\bmy} + 1} \gets \boldsymbol{A}^{\top}\bmy_i$; $n_{\bmy} \gets n_{\bmy} + 1$  \Comment{Add $\bmy_i$'s projection to the plane}
		\EndIf
	\EndFor
	\State{$n^a_{\bmx} = 0$}  \Comment{Number of $Q$-points in the origin \textbf{a}bove halfplane}
	\State{$n^b_{\bmx} = 0$}  \Comment{Number of $Q$-points in the origin \textbf{b}elow halfplane}
	\State{$n_{\bmx} = 0$}  \Comment{Number of $Q$-points not in the origin in the plane}
	\For{$j = 1:(m \cdot n)$}  \Comment{Go through all points sampled on $Q$}
		\If{$\boldsymbol{A}^{\top}\bmx_i = \boldsymbol{0}$}  \Comment{If projected in the origin}
			\If{$\bmz^{\top}\bmx_i > 0$} $n^a_{\bmx} \gets n^a_{\bmx} + 1$  \Comment{$\bmx_i$ above the plane}
			\ElsIf{$\bmz^{\top}\bmx_i < 0$} $n^b_{\bmx} \gets n^b_{\bmx} + 1$  \Comment{$\bmx_i$ below the plane}
			\Else \,$\bmx^{\prime}_{n_{\bmx} + 1} \gets \boldsymbol{A}^{\top}\bmx_i$; $n_{\bmx} \gets n_{\bmx} + 1$  \Comment{Add $\bmx_i$'s projection to the plane}
			\EndIf
		\Else \,$\bmx^{\prime}_{n_{\bmx} + 1} \gets \boldsymbol{A}^{\top}\bmx_i$; $n_{\bmx} \gets n_{\bmx} + 1$  \Comment{Add $\bmx_i$'s projection to the plane}
		\EndIf
	\EndFor
	\State{$D \gets \min\{D,pTcd.2d(\bmy_1,...,\bmy_{n_{\bmy}},\bmx_1,...,\bmx_{n_{\bmx}},n^a_{\bmy},n^b_{\bmy},n^a_{\bmx},n^b_{\bmx},\Delta)\}$}  \Comment{Update depth}
\EndFor
\State\Return{$D$}
\EndFunction
\end{algorithmic}
\end{algorithm}
\end{minipage}

\subsection{Procedure for Calculating the Distance Between Two Curves}\label{app:ssec:calcdist}

When calculating the metric $d_\Bg(\Cc_1,\Cc_2)$ in \eqref{eq:distance} one searches for two parametrizations that minimize the maximum norm between the two corresponding parametrized curves. Numerically this can be done by looking for a possible relocation of points from one curve to another keeping their order, in such a way that the distance of the longest relocation is minimal. Below we state the formal algorithm (Algorithm~\ref{alg:dist}) and demonstrate it on an example of calculation of distance between two digits. The complexity of the algorithm is $O(m_1m_2\log(m_1m_2))$ with $m_1$ and $m_2$ being the number of points of each of the curves $\Cc_1$ and $\Cc_2$, respectively.

In Figure~\ref{fig:digitsdists}, two curves (digits '1') are given in a pixel form, or more precisely by the coordinates of the corresponding pixel centers ordered from below to above in the image. Their mutual pixel-wise distances can be represented as a distance matrix; see Figure~\ref{fig:digitsmatrix1}. Keeping in mind that curves are (piece-wise) connected curves (in $\mathbb{R}^2$, here), optimal relocation of points will be approximated by a path in the matrix connecting the most upper left and the most bottom right cells in Figure~\ref{fig:digitsmatrix1}, such that the largest cell of this path will have smallest possible value. Algorithm~\ref{alg:dist} starts by eliminating the cells with the highest values and continues until any such path is blocked. The blockage of the path is identified when either at least one row or at least one column does not contain a single cell. Note that unreachable cells (while the path can proceed only right and down) are immediately deleted as well on each iteration of the algorithm.

\begin{figure}[H]
	\begin{center}
	\includegraphics[keepaspectratio=true,scale=0.5,trim=10mm 10mm 10mm 10mm,clip=true,page=1]{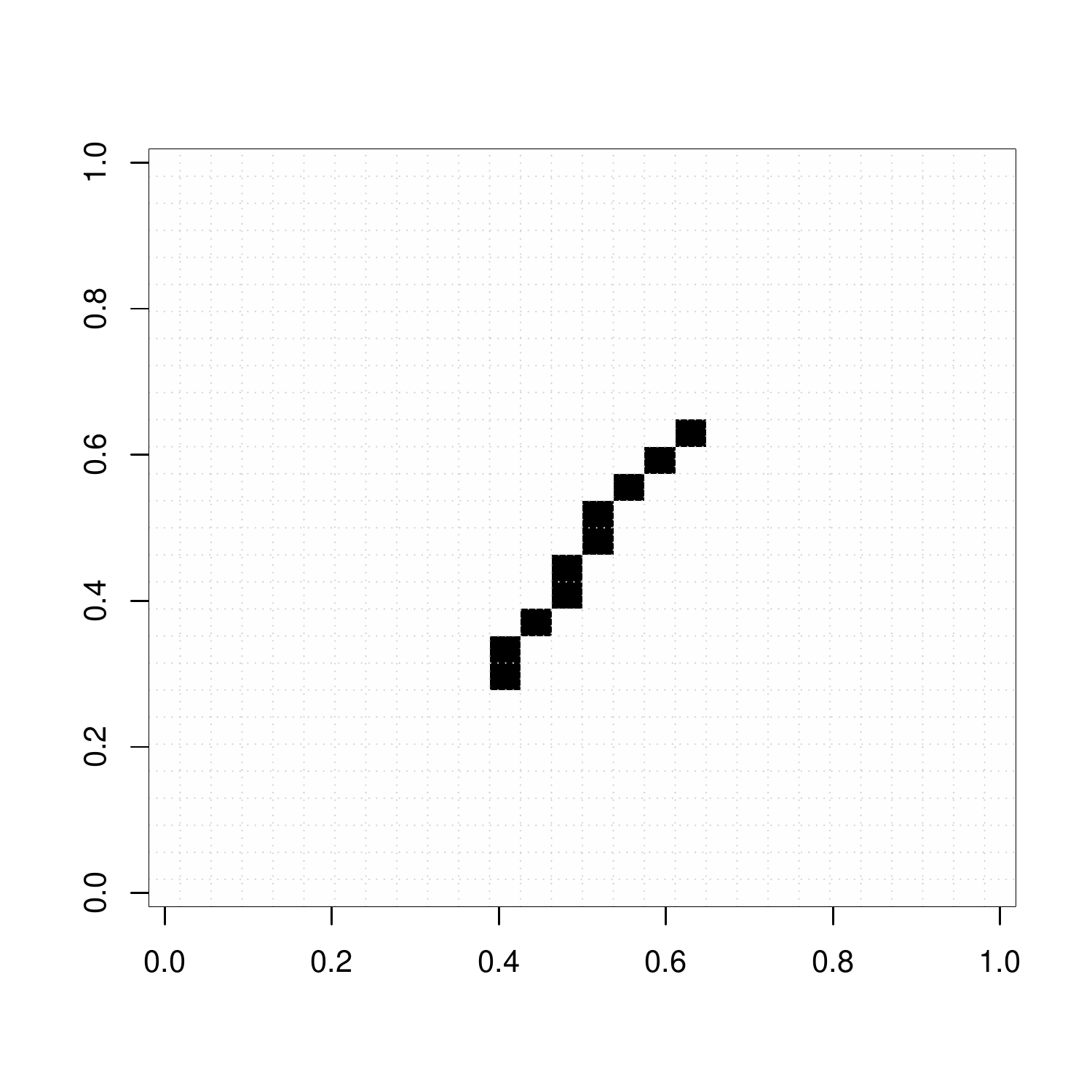}\qquad\includegraphics[keepaspectratio=true,scale=0.5,trim=10mm 10mm 10mm 10mm,clip=true,page=2]{digits_dists.pdf}
	\end{center}
	\caption{Two digits '1' used as an example to demonstrate calculation of the metric $d_\Bg$.}
	\label{fig:digitsdists}
\end{figure}

\begin{figure}[H]
	\begin{center}
		{\scriptsize
\begin{tabular}{R{.5cm}R{.5cm}R{.5cm}R{.5cm}R{.5cm}R{.5cm}R{.5cm}R{.5cm}R{.5cm}R{.5cm}R{.5cm}R{.5cm}}
\cellcolor{blue!25}0.107 & \cellcolor{blue!25}0.113 & \cellcolor{blue!25}0.129 & \cellcolor{blue!25}0.152 & \cellcolor{blue!25}0.179 & \cellcolor{blue!25}0.208 & \cellcolor{blue!25}0.240 & \cellcolor{blue!25}0.272 & \cellcolor{blue!25}0.305 & \cellcolor{blue!25}0.352 & \cellcolor{blue!25}0.385 & \cellcolor{blue!25}0.418 \\
\cellcolor{blue!25}0.113 & \cellcolor{blue!25}0.107 & \cellcolor{blue!25}0.113 & \cellcolor{blue!25}0.129 & \cellcolor{blue!25}0.152 & \cellcolor{blue!25}0.179 & \cellcolor{blue!25}0.208 & \cellcolor{blue!25}0.240 & \cellcolor{blue!25}0.272 & \cellcolor{blue!25}0.319 & \cellcolor{blue!25}0.352 & \cellcolor{blue!25}0.385 \\
\cellcolor{blue!25}0.101 & \cellcolor{blue!25}0.080 & \cellcolor{blue!25}0.071 & \cellcolor{blue!25}0.080 & \cellcolor{blue!25}0.101 & \cellcolor{blue!25}0.129 & \cellcolor{blue!25}0.160 & \cellcolor{blue!25}0.192 & \cellcolor{blue!25}0.226 & \cellcolor{blue!25}0.272 & \cellcolor{blue!25}0.305 & \cellcolor{blue!25}0.339 \\
\cellcolor{blue!25}0.113 & \cellcolor{blue!25}0.080 & \cellcolor{blue!25}0.051 & \cellcolor{blue!25}0.036 & \cellcolor{blue!25}0.051 & \cellcolor{blue!25}0.080 & \cellcolor{blue!25}0.113 & \cellcolor{blue!25}0.147 & \cellcolor{blue!25}0.182 & \cellcolor{blue!25}0.226 & \cellcolor{blue!25}0.260 & \cellcolor{blue!25}0.295 \\
\cellcolor{blue!25}0.147 & \cellcolor{blue!25}0.113 & \cellcolor{blue!25}0.080 & \cellcolor{blue!25}0.051 & \cellcolor{blue!25}0.036 & \cellcolor{blue!25}0.051 & \cellcolor{blue!25}0.080 & \cellcolor{blue!25}0.113 & \cellcolor{blue!25}0.147 & \cellcolor{blue!25}0.192 & \cellcolor{blue!25}0.226 & \cellcolor{blue!25}0.260 \\
\cellcolor{blue!25}0.179 & \cellcolor{blue!25}0.143 & \cellcolor{blue!25}0.107 & \cellcolor{blue!25}0.071 & \cellcolor{blue!25}0.036 & \cellcolor{blue!25}0.000 & \cellcolor{blue!25}0.036 & \cellcolor{blue!25}0.071 & \cellcolor{blue!25}0.107 & \cellcolor{blue!25}0.147 & \cellcolor{blue!25}0.182 & \cellcolor{blue!25}0.217 \\
\cellcolor{blue!25}0.214 & \cellcolor{blue!25}0.179 & \cellcolor{blue!25}0.143 & \cellcolor{blue!25}0.107 & \cellcolor{blue!25}0.071 & \cellcolor{blue!25}0.036 & \cellcolor{blue!25}0.000 & \cellcolor{blue!25}0.036 & \cellcolor{blue!25}0.071 & \cellcolor{blue!25}0.113 & \cellcolor{blue!25}0.147 & \cellcolor{blue!25}0.182 \\
\cellcolor{blue!25}0.253 & \cellcolor{blue!25}0.217 & \cellcolor{blue!25}0.182 & \cellcolor{blue!25}0.147 & \cellcolor{blue!25}0.113 & \cellcolor{blue!25}0.080 & \cellcolor{blue!25}0.051 & \cellcolor{blue!25}0.036 & \cellcolor{blue!25}0.051 & \cellcolor{blue!25}0.071 & \cellcolor{blue!25}0.107 & \cellcolor{blue!25}0.143 \\
\cellcolor{blue!25}0.295 & \cellcolor{blue!25}0.260 & \cellcolor{blue!25}0.226 & \cellcolor{blue!25}0.192 & \cellcolor{blue!25}0.160 & \cellcolor{blue!25}0.129 & \cellcolor{blue!25}0.101 & \cellcolor{blue!25}0.080 & \cellcolor{blue!25}0.071 & \cellcolor{blue!25}0.051 & \cellcolor{blue!25}0.080 & \cellcolor{blue!25}0.113 \\
\cellcolor{blue!25}0.339 & \cellcolor{blue!25}0.305 & \cellcolor{blue!25}0.272 & \cellcolor{blue!25}0.240 & \cellcolor{blue!25}0.208 & \cellcolor{blue!25}0.179 & \cellcolor{blue!25}0.152 & \cellcolor{blue!25}0.129 & \cellcolor{blue!25}0.113 & \cellcolor{blue!25}0.071 & \cellcolor{blue!25}0.080 & \cellcolor{blue!25}0.101
\end{tabular}}
	\end{center}
	\caption{Pixel-wise distance matrix for the two digits '1' from Figure~\ref{fig:digitsdists}.}
	\label{fig:digitsmatrix1}
\end{figure}

\noindent\hspace*{0\parindent}%
\begin{minipage}{0.875\textwidth}
\begin{algorithm}[H]
\caption{Routine for computing $d_\Bg(\Cc_1,\Cc_2)$}\label{alg:dist}
\begin{algorithmic}[1]
\Function{distance}{$\bmx_1,...,\bmx_{m_1}$,$\bmy_1,...,\bmy_{m_2}$}  \Comment{Distance between sampled curves}
\For{$i=1:m_1$}
	\For{$j=1:m_2$}
		\State $cells\bigl(m_2(i - 1) + j\bigr) = (i,j,d_{ij} = \|\bmx_i - \bmy_j\|_2)$ \Comment{Calculate cell-wise distances}
	\EndFor
\EndFor
\State Sort $cells$ w.r.t. $d_{\cdot\cdot}$-s in descending order
\State $M = (0,...,0)_{m_1}\times(0,...,0)_{m_2}$ \Comment{$m_1\times m_2$ matrix filled with $0$}
\State $rowMaxs = (m_2,...,m_2)_{m_1}$ \Comment{Vector of length $m_1$ having all entries equal $m_2$}
\State $rowMins = (0,...,0)_{m_1}$ \Comment{Vector of length $m_1$ having all entries equal $0$}
\State $colMaxs = (m_1,...,m_1)_{m_2}$ \Comment{Vector of length $m_2$ having all entries equal $m_1$}
\State $colMins = (0,...,0)_{m_2}$ \Comment{Vector of length $m_2$ having all entries equal $0$}
\State $k = 1$

	\algstore{dist1}
\end{algorithmic}
\end{algorithm}
\end{minipage}

\noindent\hspace*{0\parindent}%
\begin{minipage}{0.875\textwidth}
\begin{algorithm}[H]
\ContinuedFloat
\caption{Routine for computing $d_\Bg(\Cc_1,\Cc_2)$ (continued)}\label{alg:dist2}
\begin{algorithmic}
	\algrestore{dist1}

\While{$k <= m_1\cdot m_2$}
	\State $d = cells(k).d_{ij}$
	\While{$cells(k).d_{ij} = d$}
		\State $M(i,j) = 1$
		\If{$rowMaxs(i) = j + 1$} \Comment{If blocking cells above, then ...}
			\State $l\gets j$
			\While{$l \ge 1 \text{ and } M(i,l) = 1$}
				\State $M(0:i,l)\gets 1$; $l \gets l - 1$ \Comment{... mark cells above}
			\EndWhile
			\State $rowMaxs(i) = l + 1$ \Comment{Update maximum row's extension}
		\EndIf
		\If{$rowMins(i) = j - 1$} \Comment{If blocking cells below, then ...}
			\State $l\gets j$
			\While{$l \le m_2 \text{ and } M(i,l) = 1$}
				\State $M(i:m_1,l)\gets 1$; $l \gets l + 1$ \Comment{... mark cells below}
			\EndWhile
			\State $rowMins(i) = l - 1$ \Comment{Update minimum row's extension}
		\EndIf
		\If{$colMaxs(j) = i + 1$} \Comment{If blocking cells to the left, then ...}
			\State $l\gets i$
			\While{$l \ge 1 \text{ and } M(l,j) = 1$}
				\State $M(l,0:j)\gets 1$; $l \gets l - 1$ \Comment{... mark cells to the left}
			\EndWhile
			\State $colMaxs(j) = l + 1$ \Comment{Update maximum column's extension}
		\EndIf
		\If{$colMins(j) = i - 1$} \Comment{If blocking cells to the left, then ...}
			\State $l\gets i$
			\While{$l \le m_1 \text{ and } M(l,j) = 1$}
				\State $M(l,j:m_2)\gets 1$; $l \gets l + 1$ \Comment{... mark cells to the left}
			\EndWhile
			\State $colMins(j) = l - 1$ \Comment{Update minimum column's extension}
		\EndIf
		\State $k\gets k + 1$
	\EndWhile	
	\If{$\min{rowMaxs} = 1 \text{ or } \max{rowMins} = m_2 \text{ or } \min{colMaxs} = 1 \text{ or }$ \\
	\hfill $\max{colMins} = m_1$} \Comment{If the route through the matrix is blocked}
		\State\algorithmicbreak
	\EndIf
\EndWhile
\State\Return{d}
\EndFunction
\end{algorithmic}
\end{algorithm}
\end{minipage}

\subsection{Procedure for the Clustering of Curves}\label{app:ssec:clustcurv}

\noindent\hspace*{0\parindent}%
\begin{minipage}{0.875\textwidth}
\begin{algorithm}[H]
\caption{Clustering of curves \citep[following][]{jornsten2004clustering}}\label{alg:clust}
\begin{algorithmic}[1]
\Function{DDclustCurve}{$\Cc_1,\dots,\Cc_n$} \Comment{Input unlabeled curves}
\State{Initialize $\{I_k\}_1^K$ randomly}
\State{$m\gets 0$; $\beta\gets -1$; $i\gets 0$} \Comment{Initialize iterated variables}
\While{$m < M$ or $j < maxIter$} \Comment{Termination criterion}
	\State{Calculate $C_i(\{I_k\}_1^K)$, $i=1,...,n$}
	\State{Identify a set $S=\{i\,:\,C_i(\{I_k\}_1^K)\le T\}$} \Comment{Candidates for reallocation}
	\State{$f\gets \boldsymbol{false}$}
	\While{$S\ne \emptyset$}
		\State{For a random subset $E\subset S$, reallocate observations to get partitioning $\{\tilde I_k\}_1^K$}
		\If{$C(\{\tilde I_k\}_1^K) > C(\{I_k\}_1^K)$ or $\mathcal{B}\Bigl(P\bigl(C(\{I_k\}_1^K) - C(\{\tilde I_k\}_1^K),\beta\bigr)\sim b = 1\Bigr)$}
			\State{$I_k\gets\tilde I_k$, $k=1,\dots,K$; $f\gets \boldsymbol{true}$} \Comment{Accept reallocation}
		\EndIf
		\State{$S\gets S \setminus E$}
	\EndWhile
	\State{$\beta\gets 2\beta$; $j\gets j + 1$} \Comment{Increase simulated-annealing temperature}
	\If{$f = \boldsymbol{true}$}
		\State{$m\gets 0$}
	\Else
		\State{$m\gets m + 1$} \Comment{No changes on current iteration}
	\EndIf
\EndWhile
\State\Return{$\{I_k\}_1^K$} \Comment{Output the final clustering}
\EndFunction
\end{algorithmic}
\end{algorithm}
\end{minipage}

\section{Numerical experiments and Applications}\label{app:sec:simulation}

In this section, some additional materials are proposed in order to illustrate the convergence of the Monte Carlo approximation of our \curvedepth{} with the size $n$ of the sample of curves and the size $m$ of the Monte Carlo sample. We conduct the Monte Carlo study with three models of two-dimensional curves.

\subsection{Simple examples and their explicit depth expression}\label{app:ssec:examples}

\paragraph{Segments on a line.}
We observe $n$ non-overlapping segments on a line.
Without loss of generality, we denote by $\Xx_k$ the $k^{\text{th}}$ segment ($k=1,\ldots,n$), from left to right.
For $t\in[0,1]$, we have
\begin{align*}
\text{for } k\in\{1,n\},\ D(\beta_{\Xx_k}(t)|Q_n,\mu_{\Xx_k})&=1/n,\\ 
\text{for } k\notin\{1,n\},\ D(\beta_{\Xx_k}(t)|Q_n,\mu_{\Xx_k})&= \frac{t+(n-1)t_k}{nt}\mathds{1}_{t\geq t_k} + \frac{1-t+(n-1)(1-t_k)}{n(1-t)}\mathds{1}_{t< t_k}, 
\end{align*}
where $t_k=(k-1)/(n-1).$

\paragraph{Star segments.}
Let $\Cc_\theta$ be the segment in $\RR^2$ of starting point $(0,0)$ and ending point $(\cos(\theta),\sin(\theta)),$ for $\theta\in[0,2\pi)$.
We define $\Xx\sim P$ as the random curve generated from the following scheme :
\begin{align*}
\theta\sim \Uu[0,2\pi],\ \Xx=\Cc_{\theta} .
\end{align*}
For $t\in[0,1]$, we have
\begin{align*}
D(\beta_{\Cc_\theta}(t)|Q,\mu_{\Xx})=\left\{ 
\begin{array}{ll}
1/2, & \text{if } t=0,\\
g(t), & \text{if } t\in(0,1),\\
0, & \text{if } t=1,
\end{array}
\right.
\end{align*}
where $g:(0,1)\to(0,1)$ is a function defined by
\begin{align*}
g(t) &= \min\left\{\frac{1}{2}, \inf_{\alpha\in(0,\pi/2]} \frac{q_{t,\alpha}}{1-t}, \inf_{\alpha\in(-\pi/2,0)} \frac{1-q_{t,-\alpha}}{t}, \inf_{\alpha\in(-\pi,-pi/2)} \frac{1-q_{t,\pi+\alpha}}{t}, \inf_{\alpha\in(\pi/2,\pi)} \frac{q_{t,\pi-\alpha}}{1-t}\right\},\\
q_{t_\alpha} &= \frac{\pi - \sin^{-1}(t \sin\alpha)}{2\pi} -\frac{t\sin(\alpha)}{2\pi}\log\frac{1+\cos\left[\sin^{-1}(t\sin(\alpha))\right]}{1-\cos\left[\sin^{-1}(t\sin(\alpha))\right]}.
\end{align*}
The population version of the curve depth is defined as
\begin{align*}
D(\Cc_\theta|P) &= \int_0^1 g(r) dr .
\end{align*}

%

\paragraph{Concentric circles.}
Let $\Cc_r$ be the circle of center $0$ and radius $r>0$ in $\RR^2$.
We define $\Xx\sim P$ as the random curve generated from the following scheme :
\begin{align*}
R\sim \Uu[0,1],\ \Xx=\Cc_{R} .
\end{align*}
The population version of the curve depth is defined as
\begin{align*}
D(\Cc_r|P) &= 0 \text{ if } r=1;\\
D(\Cc_r|P) &= \min\left\{1,\frac{\cos^{-1}(r)}{\pi} - \frac{r}{\pi}\log\left(\frac{1+\sin\cos^{-1}(r)}{r}\right),\right. \\ 
&\qquad\quad \left. \inf_{\alpha\in(0,\pi/2)} \frac{2\pi - 2\cos^{-1}(r \sin\alpha)}{\pi+2\alpha} + \frac{2r\sin(\alpha)}{\pi+2\alpha}\log\frac{1+\sin\left[\cos^{-1}(r\sin(\alpha))\right]}{r\sin(\alpha)} , \right.\\
&\qquad\quad \left. \inf_{\alpha\in(0,\pi/2)} \frac{2\cos^{-1}(r \sin\alpha)}{\pi-2\alpha} - \frac{2r\sin(\alpha)}{\pi-2\alpha}\log\frac{1+\sin\left[\cos^{-1}(r\sin(\alpha))\right]}{r\sin(\alpha)} \right\}.
\end{align*}
For $n\geq 1,$ we denote by $R_i$ the radius of the circle $\Xx_i$ for all $i=1,\ldots,n.$ The sample depth of $\Cc_r$ with respect to $\{\Xx_1,\ldots,\Xx_n\}$ is
\begin{align}
D({\Cc_r}|\Xx_1,\ldots, \Xx_n) &= 0 \text{ if } r\geq \max_{i=1\ldots n} R_i;\label{app:eq:concetric:depth:sample} \\
D(\Cc_r|\Xx_1,\ldots, \Xx_n) &= \min\left\{1, \inf_{\alpha\in(0,\pi/2)} \frac{2\pi}{n(\pi-2\alpha)} \sum_{i=1}^n \frac{\cos^{-1}(r\sin(\alpha))}{\pi}\mathds{1}_{R_i>r\sin(\alpha)}, \right.\nonumber\\ 
&\qquad\quad \left. \inf_{\alpha\in(0,\pi/2]} \frac{2\pi}{n(\pi+2\alpha)}\sum_{i=1}^n\left(\mathds{1}_{R_i\leq r\sin(\alpha)} + \frac{\cos^{-1}(r\sin(\alpha))}{\pi}\mathds{1}_{R_i>r\sin(\alpha)} \right) \right\}.\nonumber
\end{align}

\subsection{Monte Carlo Approximation of the \CurveDepth}\label{app:ssec:simulation}
We illustrate the convergence of the Monte Carlo approximation of our \curvedepth{} according to the size $n$ of the sample and the Monte Carlo size $m$ on three simulation schemes.

\paragraph{Scheme 1 : Concentric circles.}
First, we consider the population of concentric circles with radius lying in the interval $(0,1)$ described in Section \ref{app:ssec:examples}.
We fix the sample size $n$ of $\{\Xx_1,\ldots, \Xx_n\},$ and we compute the depth of circles $\Cc_r$ of radius $r\in\{0.1,0.4,0.5,0.6,0.9\}$.

In the companion package of the paper, we propose an algorithm to approximate the depth $D(\Cc|\Xx_1,\ldots,\Xx_n)$ using a Monte Carlo estimate (see line 11 of Algorithm \ref{alg:montecarlo}):
$$
\frac{1}{m}\sum_{k=1}^{m} \widehat{D}(Z_{k}|\widehat{Q}_{m,n},\widehat{\mu}_{m},\HH_{\Delta}^{n,m}),
$$
where  $\widehat{Q}_{m,n}$ is the empirical measure associated to an i.i.d.\ sample $\XX_{n,m}=(X_{i,j})_{i=1,\ldots,n;j=1,\ldots,m}$ from $Q_{P_n}$, where $\widehat{\mu}_{m}$ is the empirical measure associated to an i.i.d.\ sample $\YY_{m}=(Y_{j})_{j=1,\ldots,m}$ from $\mu_\Cc$ and $(Z_{j})_{j=1,\ldots,m}$ is an i.i.d.\ sample from $\mu_\Cc$. 
Theorem~\ref{thm:consistency0} states that this Monte Carlo estimator is consistent as $m$ goes to $+\infty$.
We emphasize that the threshold $\Delta$ is here useless due to the geometry of circles :  the halfspaces $H_{x,u}$ such that $\mu_\Cc(H_{x,u})$ is small are those for which the ratio $\widehat{Q}_{n,m}/\widehat{\mu}_m(H_{x,u})$ is larger than $1.$
 
First, we fix the sample $\Xx_1,\ldots,\Xx_n$ and we compute $100$ replications of the Monte Carlo estimate of $D(\Cc_r|\Xx_1,\ldots,\Xx_n)$; see Figure~\ref{app:fig:circles:m:pack}.
The Monte Carlo estimates tend to underestimate the sample depth. Moreover the variability of the Monte Carlo estimates depends on the position of the circle $\Cc$ with respect to the sample of curves. 
Nevertheless both the bias and the variance decrease as $m$ increases.

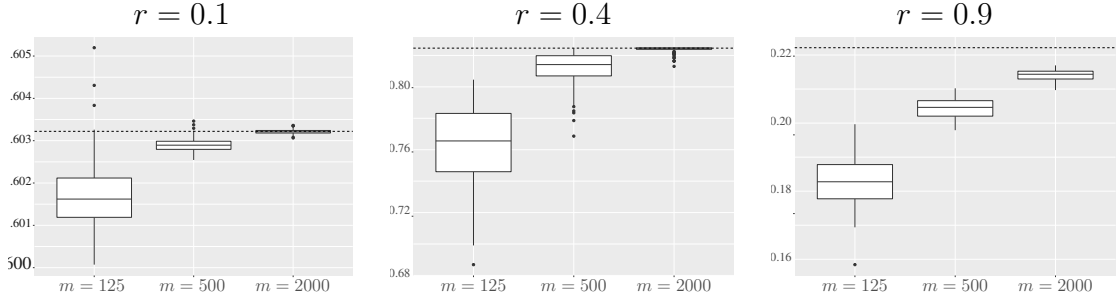
\begin{figure}[H]
\begin{center}
\begin{tabular}{ccc}
		$r=0.1$ & $r=0.4$ & $r=0.9$\\
		\resizebox{0.275\linewidth}{!}{    
		\begin{tikzpicture}[x=1pt,y=1pt]
\definecolor{fillColor}{RGB}{255,255,255}
\path[use as bounding box,fill=fillColor,fill opacity=0.00] (0,0) rectangle (573,430);
\begin{scope}
\path[clip] (  0.00,  0.00) rectangle (573,430);
\definecolor{drawColor}{RGB}{255,255,255}
\definecolor{fillColor}{RGB}{255,255,255}

\path[draw=drawColor,line width= 0.6pt,line join=round,line cap=round,fill=fillColor] (  0.00,  0.00) rectangle (573,430);
\end{scope}

\begin{scope}
\path[clip] ( 42.95, 30.69) rectangle (572.66,428.12);
\definecolor{fillColor}{gray}{0.92}

\path[fill=fillColor] ( 42.95, 30.69) rectangle (572.66,428.12);
\definecolor{drawColor}{RGB}{255,255,255}

\path[draw=drawColor,line width= 0.3pt,line join=round] ( 42.95, 78.96) --
	(572.66, 78.96);

\path[draw=drawColor,line width= 0.3pt,line join=round] ( 42.95,149.46) --
	(572.66,149.46);

\path[draw=drawColor,line width= 0.3pt,line join=round] ( 42.95,219.95) --
	(572.66,219.95);

\path[draw=drawColor,line width= 0.3pt,line join=round] ( 42.95,290.45) --
	(572.66,290.45);

\path[draw=drawColor,line width= 0.3pt,line join=round] ( 42.95,360.94) --
	(572.66,360.94);

\path[draw=drawColor,line width= 0.6pt,line join=round] ( 42.95, 43.72) --
	(572.66, 43.72);

\path[draw=drawColor,line width= 0.6pt,line join=round] ( 42.95,114.21) --
	(572.66,114.21);

\path[draw=drawColor,line width= 0.6pt,line join=round] ( 42.95,184.71) --
	(572.66,184.71);

\path[draw=drawColor,line width= 0.6pt,line join=round] ( 42.95,255.20) --
	(572.66,255.20);

\path[draw=drawColor,line width= 0.6pt,line join=round] ( 42.95,325.70) --
	(572.66,325.70);

\path[draw=drawColor,line width= 0.6pt,line join=round] ( 42.95,396.19) --
	(572.66,396.19);

\path[draw=drawColor,line width= 0.6pt,line join=round] (142.27, 30.69) --
	(142.27,428.12);

\path[draw=drawColor,line width= 0.6pt,line join=round] (307.81, 30.69) --
	(307.81,428.12);

\path[draw=drawColor,line width= 0.6pt,line join=round] (473.34, 30.69) --
	(473.34,428.12);
\definecolor{drawColor}{gray}{0.20}
\definecolor{fillColor}{gray}{0.20}

\path[draw=drawColor,line width= 0.4pt,line join=round,line cap=round,fill=fillColor] (142.27,410.05) circle (  1.96);

\path[draw=drawColor,line width= 0.4pt,line join=round,line cap=round,fill=fillColor] (142.27,347.34) circle (  1.96);

\path[draw=drawColor,line width= 0.4pt,line join=round,line cap=round,fill=fillColor] (142.27,313.99) circle (  1.96);

\path[draw=drawColor,line width= 0.6pt,line join=round] (142.27,193.02) -- (142.27,273.53);

\path[draw=drawColor,line width= 0.6pt,line join=round] (142.27,127.54) -- (142.27, 48.75);
\definecolor{fillColor}{RGB}{255,255,255}

\path[draw=drawColor,line width= 0.6pt,line join=round,line cap=round,fill=fillColor] ( 80.20,193.02) --
	( 80.20,127.54) --
	(204.35,127.54) --
	(204.35,193.02) --
	( 80.20,193.02) --
	cycle;

\path[draw=drawColor,line width= 1.1pt,line join=round] ( 80.20,157.95) -- (204.35,157.95);
\definecolor{fillColor}{gray}{0.20}

\path[draw=drawColor,line width= 0.4pt,line join=round,line cap=round,fill=fillColor] (307.81,281.70) circle (  1.96);

\path[draw=drawColor,line width= 0.4pt,line join=round,line cap=round,fill=fillColor] (307.81,275.89) circle (  1.96);

\path[draw=drawColor,line width= 0.4pt,line join=round,line cap=round,fill=fillColor] (307.81,287.85) circle (  1.96);

\path[draw=drawColor,line width= 0.6pt,line join=round] (307.81,254.28) -- (307.81,271.99);

\path[draw=drawColor,line width= 0.6pt,line join=round] (307.81,240.76) -- (307.81,223.08);
\definecolor{fillColor}{RGB}{255,255,255}

\path[draw=drawColor,line width= 0.6pt,line join=round,line cap=round,fill=fillColor] (245.73,254.28) --
	(245.73,240.76) --
	(369.88,240.76) --
	(369.88,254.28) --
	(245.73,254.28) --
	cycle;

\path[draw=drawColor,line width= 1.1pt,line join=round] (245.73,247.73) -- (369.88,247.73);
\definecolor{fillColor}{gray}{0.20}

\path[draw=drawColor,line width= 0.4pt,line join=round,line cap=round,fill=fillColor] (473.34,280.65) circle (  1.96);

\path[draw=drawColor,line width= 0.4pt,line join=round,line cap=round,fill=fillColor] (473.34,279.87) circle (  1.96);

\path[draw=drawColor,line width= 0.4pt,line join=round,line cap=round,fill=fillColor] (473.34,260.83) circle (  1.96);

\path[draw=drawColor,line width= 0.4pt,line join=round,line cap=round,fill=fillColor] (473.34,259.78) circle (  1.96);

\path[draw=drawColor,line width= 0.4pt,line join=round,line cap=round,fill=fillColor] (473.34,279.94) circle (  1.96);

\path[draw=drawColor,line width= 0.6pt,line join=round] (473.34,272.22) -- (473.34,278.00);

\path[draw=drawColor,line width= 0.6pt,line join=round] (473.34,267.97) -- (473.34,262.36);
\definecolor{fillColor}{RGB}{255,255,255}

\path[draw=drawColor,line width= 0.6pt,line join=round,line cap=round,fill=fillColor] (411.27,272.22) --
	(411.27,267.97) --
	(535.42,267.97) --
	(535.42,272.22) --
	(411.27,272.22) --
	cycle;

\path[draw=drawColor,line width= 1.1pt,line join=round] (411.27,269.96) -- (535.42,269.96);
\definecolor{drawColor}{RGB}{0,0,0}

\path[draw=drawColor,line width= 0.6pt,dash pattern=on 4pt off 4pt ,line join=round] ( 42.95,270.79) -- (572.66,270.79);
\end{scope}
\begin{scope}
\path[clip] (  0.00,  0.00) rectangle (578.16,433.62);
\definecolor{drawColor}{gray}{0.10}

\node[text=drawColor,anchor=base east,inner sep=0pt, outer sep=0pt, scale=  2.5] at ( 38.00, 40.68) {.600};

\node[text=drawColor,anchor=base east,inner sep=0pt, outer sep=0pt, scale=  1.7] at ( 38.00,111.18) {.601};

\node[text=drawColor,anchor=base east,inner sep=0pt, outer sep=0pt, scale=  1.7] at ( 38.00,181.68) {.602};

\node[text=drawColor,anchor=base east,inner sep=0pt, outer sep=0pt, scale=  1.7] at ( 38.00,252.17) {.603};

\node[text=drawColor,anchor=base east,inner sep=0pt, outer sep=0pt, scale=  1.7] at ( 38.00,322.67) {.604};

\node[text=drawColor,anchor=base east,inner sep=0pt, outer sep=0pt, scale=  1.7] at ( 38.00,393.16) {.605};
\end{scope}

\begin{scope}
\path[clip] (  0.00,  0.00) rectangle (578.16,433.62);
\definecolor{drawColor}{gray}{0.20}

\path[draw=drawColor,line width= 0.6pt,line join=round] ( 40.20, 43.72) --
	( 42.95, 43.72);

\path[draw=drawColor,line width= 0.6pt,line join=round] ( 40.20,114.21) --
	( 42.95,114.21);

\path[draw=drawColor,line width= 0.6pt,line join=round] ( 40.20,184.71) --
	( 42.95,184.71);

\path[draw=drawColor,line width= 0.6pt,line join=round] ( 40.20,255.20) --
	( 42.95,255.20);

\path[draw=drawColor,line width= 0.6pt,line join=round] ( 40.20,325.70) --
	( 42.95,325.70);

\path[draw=drawColor,line width= 0.6pt,line join=round] ( 40.20,396.19) --
	( 42.95,396.19);
\end{scope}
\begin{scope}
\path[clip] (  0.00,  0.00) rectangle (578.16,433.62);
\definecolor{drawColor}{gray}{0.20}

\path[draw=drawColor,line width= 0.6pt,line join=round] (142.27, 27.94) --
	(142.27, 30.69);

\path[draw=drawColor,line width= 0.6pt,line join=round] (307.81, 27.94) --
	(307.81, 30.69);

\path[draw=drawColor,line width= 0.6pt,line join=round] (473.34, 27.94) --
	(473.34, 30.69);
\end{scope}

\begin{scope}
\path[clip] (  0.00,  0.00) rectangle (578.16,433.62);
\definecolor{drawColor}{gray}{0.30}

\node[text=drawColor,anchor=base,inner sep=0pt, outer sep=0pt, scale=  2.5] at (138.70, 5) {$m=125$};

\node[text=drawColor,anchor=base,inner sep=0pt, outer sep=0pt, scale=  2.5] at (305.61, 5) {$m=500$};

\node[text=drawColor,anchor=base,inner sep=0pt, outer sep=0pt, scale=  2.5] at (472.52, 5) {$m=2000$};
\end{scope}

\end{tikzpicture}
		}
		&
		\resizebox{0.275\linewidth}{!}{    
		\begin{tikzpicture}[x=1pt,y=1pt]
\definecolor{fillColor}{RGB}{255,255,255}
\path[use as bounding box,fill=fillColor,fill opacity=0.00] (0,0) rectangle (573,430);
\begin{scope}
\path[clip] (  0.00,  0.00) rectangle (573,430);
\definecolor{drawColor}{RGB}{255,255,255}
\definecolor{fillColor}{RGB}{255,255,255}

\path[draw=drawColor,line width= 0.6pt,line join=round,line cap=round,fill=fillColor] (  0.00,  0.00) rectangle (573,430);
\end{scope}

\begin{scope}
\path[clip] ( 38.56, 30.69) rectangle (572.66,428.12);
\definecolor{fillColor}{gray}{0.92}

\path[fill=fillColor] ( 38.56, 30.69) rectangle (572.66,428.12);
\definecolor{drawColor}{RGB}{255,255,255}

\path[draw=drawColor,line width= 0.3pt,line join=round] ( 38.56, 83.13) --
	(572.66, 83.13);

\path[draw=drawColor,line width= 0.3pt,line join=round] ( 38.56,187.83) --
	(572.66,187.83);

\path[draw=drawColor,line width= 0.3pt,line join=round] ( 38.56,292.52) --
	(572.66,292.52);

\path[draw=drawColor,line width= 0.3pt,line join=round] ( 38.56,397.22) --
	(572.66,397.22);

\path[draw=drawColor,line width= 0.6pt,line join=round] ( 38.56, 30.78) --
	(572.66, 30.78);

\path[draw=drawColor,line width= 0.6pt,line join=round] ( 38.56,135.48) --
	(572.66,135.48);

\path[draw=drawColor,line width= 0.6pt,line join=round] ( 38.56,240.18) --
	(572.66,240.18);

\path[draw=drawColor,line width= 0.6pt,line join=round] ( 38.56,344.87) --
	(572.66,344.87);

\path[draw=drawColor,line width= 0.6pt,line join=round] (138.70, 30.69) --
	(138.70,428.12);

\path[draw=drawColor,line width= 0.6pt,line join=round] (305.61, 30.69) --
	(305.61,428.12);

\path[draw=drawColor,line width= 0.6pt,line join=round] (472.52, 30.69) --
	(472.52,428.12);
\definecolor{drawColor}{gray}{0.20}
\definecolor{fillColor}{gray}{0.20}

\path[draw=drawColor,line width= 0.4pt,line join=round,line cap=round,fill=fillColor] (138.70, 48.75) circle (  1.96);

\path[draw=drawColor,line width= 0.6pt,line join=round] (138.70,300.62) -- (138.70,356.83);

\path[draw=drawColor,line width= 0.6pt,line join=round] (138.70,203.54) -- (138.70, 80.77);
\definecolor{fillColor}{RGB}{255,255,255}

\path[draw=drawColor,line width= 0.6pt,line join=round,line cap=round,fill=fillColor] ( 76.11,300.62) --
	( 76.11,203.54) --
	(201.29,203.54) --
	(201.29,300.62) --
	( 76.11,300.62) --
	cycle;

\path[draw=drawColor,line width= 1.1pt,line join=round] ( 76.11,254.79) -- (201.29,254.79);
\definecolor{fillColor}{gray}{0.20}

\path[draw=drawColor,line width= 0.4pt,line join=round,line cap=round,fill=fillColor] (305.61,304.76) circle (  1.96);

\path[draw=drawColor,line width= 0.4pt,line join=round,line cap=round,fill=fillColor] (305.61,262.79) circle (  1.96);

\path[draw=drawColor,line width= 0.4pt,line join=round,line cap=round,fill=fillColor] (305.61,288.73) circle (  1.96);

\path[draw=drawColor,line width= 0.4pt,line join=round,line cap=round,fill=fillColor] (305.61,311.90) circle (  1.96);

\path[draw=drawColor,line width= 0.4pt,line join=round,line cap=round,fill=fillColor] (305.61,301.56) circle (  1.96);

\path[draw=drawColor,line width= 0.6pt,line join=round] (305.61,396.75) -- (305.61,409.32);

\path[draw=drawColor,line width= 0.6pt,line join=round] (305.61,363.09) -- (305.61,313.59);
\definecolor{fillColor}{RGB}{255,255,255}

\path[draw=drawColor,line width= 0.6pt,line join=round,line cap=round,fill=fillColor] (243.02,396.75) --
	(243.02,363.09) --
	(368.20,363.09) --
	(368.20,396.75) --
	(243.02,396.75) --
	cycle;

\path[draw=drawColor,line width= 1.1pt,line join=round] (243.02,381.95) -- (368.20,381.95);
\definecolor{fillColor}{gray}{0.20}

\path[draw=drawColor,line width= 0.4pt,line join=round,line cap=round,fill=fillColor] (472.52,379.01) circle (  1.96);

\path[draw=drawColor,line width= 0.4pt,line join=round,line cap=round,fill=fillColor] (472.52,403.35) circle (  1.96);

\path[draw=drawColor,line width= 0.4pt,line join=round,line cap=round,fill=fillColor] (472.52,388.17) circle (  1.96);

\path[draw=drawColor,line width= 0.4pt,line join=round,line cap=round,fill=fillColor] (472.52,392.76) circle (  1.96);

\path[draw=drawColor,line width= 0.4pt,line join=round,line cap=round,fill=fillColor] (472.52,401.55) circle (  1.96);

\path[draw=drawColor,line width= 0.4pt,line join=round,line cap=round,fill=fillColor] (472.52,387.70) circle (  1.96);

\path[draw=drawColor,line width= 0.4pt,line join=round,line cap=round,fill=fillColor] (472.52,398.81) circle (  1.96);

\path[draw=drawColor,line width= 0.4pt,line join=round,line cap=round,fill=fillColor] (472.52,401.62) circle (  1.96);

\path[draw=drawColor,line width= 0.4pt,line join=round,line cap=round,fill=fillColor] (472.52,392.95) circle (  1.96);

\path[draw=drawColor,line width= 0.4pt,line join=round,line cap=round,fill=fillColor] (472.52,399.54) circle (  1.96);

\path[draw=drawColor,line width= 0.4pt,line join=round,line cap=round,fill=fillColor] (472.52,400.12) circle (  1.96);

\path[draw=drawColor,line width= 0.4pt,line join=round,line cap=round,fill=fillColor] (472.52,394.66) circle (  1.96);

\path[draw=drawColor,line width= 0.4pt,line join=round,line cap=round,fill=fillColor] (472.52,396.83) circle (  1.96);

\path[draw=drawColor,line width= 0.4pt,line join=round,line cap=round,fill=fillColor] (472.52,403.74) circle (  1.96);

\path[draw=drawColor,line width= 0.4pt,line join=round,line cap=round,fill=fillColor] (472.52,387.68) circle (  1.96);

\path[draw=drawColor,line width= 0.4pt,line join=round,line cap=round,fill=fillColor] (472.52,400.72) circle (  1.96);

\path[draw=drawColor,line width= 0.4pt,line join=round,line cap=round,fill=fillColor] (472.52,400.19) circle (  1.96);

\path[draw=drawColor,line width= 0.6pt,line join=round] (472.52,409.63) -- (472.52,410.05);

\path[draw=drawColor,line width= 0.6pt,line join=round] (472.52,407.38) -- (472.52,404.48);
\definecolor{fillColor}{RGB}{255,255,255}

\path[draw=drawColor,line width= 0.6pt,line join=round,line cap=round,fill=fillColor] (409.92,409.63) --
	(409.92,407.38) --
	(535.11,407.38) --
	(535.11,409.63) --
	(409.92,409.63) --
	cycle;

\path[draw=drawColor,line width= 1.1pt,line join=round] (409.92,409.38) -- (535.11,409.38);
\definecolor{drawColor}{RGB}{0,0,0}

\path[draw=drawColor,line width= 0.6pt,dash pattern=on 4pt off 4pt ,line join=round] ( 38.56,409.36) -- (572.66,409.36);
\end{scope}
\begin{scope}
\path[clip] (  0.00,  0.00) rectangle (578.16,433.62);
\definecolor{drawColor}{gray}{0.30}

\node[text=drawColor,anchor=base east,inner sep=0pt, outer sep=0pt, scale=  1.7] at ( 33.61, 27.75) {0.68};

\node[text=drawColor,anchor=base east,inner sep=0pt, outer sep=0pt, scale=  1.7] at ( 33.61,132.45) {0.72};

\node[text=drawColor,anchor=base east,inner sep=0pt, outer sep=0pt, scale=  1.7] at ( 33.61,237.15) {0.76};

\node[text=drawColor,anchor=base east,inner sep=0pt, outer sep=0pt, scale=  1.7] at ( 33.61,341.84) {0.80};
\end{scope}

\begin{scope}
\path[clip] (  0.00,  0.00) rectangle (578.16,433.62);
\definecolor{drawColor}{gray}{0.20}

\path[draw=drawColor,line width= 0.6pt,line join=round] ( 35.81,129.38) --
	( 38.56,129.38);

\path[draw=drawColor,line width= 0.6pt,line join=round] ( 35.81,236.45) --
	( 38.56,236.45);

\path[draw=drawColor,line width= 0.6pt,line join=round] ( 35.81,343.51) --
	( 38.56,343.51);
\end{scope}
\begin{scope}
\path[clip] (  0.00,  0.00) rectangle (578.16,433.62);
\definecolor{drawColor}{gray}{0.20}

\path[draw=drawColor,line width= 0.6pt,line join=round] (138.70, 27.94) --
	(138.70, 30.69);

\path[draw=drawColor,line width= 0.6pt,line join=round] (305.61, 27.94) --
	(305.61, 30.69);

\path[draw=drawColor,line width= 0.6pt,line join=round] (472.52, 27.94) --
	(472.52, 30.69);
\end{scope}
\begin{scope}
\path[clip] (  0.00,  0.00) rectangle (578.16,433.62);
\definecolor{drawColor}{gray}{0.30}

\node[text=drawColor,anchor=base,inner sep=0pt, outer sep=0pt, scale=  2.5] at (138.70, 5) {$m=125$};

\node[text=drawColor,anchor=base,inner sep=0pt, outer sep=0pt, scale=  2.5] at (305.61, 5) {$m=500$};

\node[text=drawColor,anchor=base,inner sep=0pt, outer sep=0pt, scale=  2.5] at (472.52, 5) {$m=2000$};
\end{scope}

\end{tikzpicture}
		}
		&
		\resizebox{0.275\linewidth}{!}{    
		\begin{tikzpicture}[x=1pt,y=1pt]
\definecolor{fillColor}{RGB}{255,255,255}
\path[use as bounding box,fill=fillColor,fill opacity=0.00] (0,0) rectangle (573,430);
\begin{scope}
\path[clip] (  0.00,  0.00) rectangle (573,430);
\definecolor{drawColor}{RGB}{255,255,255}
\definecolor{fillColor}{RGB}{255,255,255}

\path[draw=drawColor,line width= 0.6pt,line join=round,line cap=round,fill=fillColor] (  0.00,  0.00) rectangle (573,430);
\end{scope}

\begin{scope}
\path[clip] ( 38.56, 30.69) rectangle (572.66,428.12);
\definecolor{fillColor}{gray}{0.92}

\path[fill=fillColor] ( 38.56, 30.69) rectangle (572.66,428.12);
\definecolor{drawColor}{RGB}{255,255,255}

\path[draw=drawColor,line width= 0.3pt,line join=round] ( 38.56,114.38) --
	(572.66,114.38);

\path[draw=drawColor,line width= 0.3pt,line join=round] ( 38.56,227.79) --
	(572.66,227.79);

\path[draw=drawColor,line width= 0.3pt,line join=round] ( 38.56,341.21) --
	(572.66,341.21);

\path[draw=drawColor,line width= 0.6pt,line join=round] ( 38.56, 57.67) --
	(572.66, 57.67);

\path[draw=drawColor,line width= 0.6pt,line join=round] ( 38.56,171.09) --
	(572.66,171.09);

\path[draw=drawColor,line width= 0.6pt,line join=round] ( 38.56,284.50) --
	(572.66,284.50);

\path[draw=drawColor,line width= 0.6pt,line join=round] ( 38.56,397.92) --
	(572.66,397.92);

\path[draw=drawColor,line width= 0.6pt,line join=round] (138.70, 30.69) --
	(138.70,428.12);

\path[draw=drawColor,line width= 0.6pt,line join=round] (305.61, 30.69) --
	(305.61,428.12);

\path[draw=drawColor,line width= 0.6pt,line join=round] (472.52, 30.69) --
	(472.52,428.12);
\definecolor{drawColor}{gray}{0.20}
\definecolor{fillColor}{gray}{0.20}

\path[draw=drawColor,line width= 0.4pt,line join=round,line cap=round,fill=fillColor] (138.70, 48.75) circle (  1.96);

\path[draw=drawColor,line width= 0.6pt,line join=round] (138.70,215.40) -- (138.70,282.63);

\path[draw=drawColor,line width= 0.6pt,line join=round] (138.70,158.42) -- (138.70,110.91);
\definecolor{fillColor}{RGB}{255,255,255}

\path[draw=drawColor,line width= 0.6pt,line join=round,line cap=round,fill=fillColor] ( 76.11,215.40) --
	( 76.11,158.42) --
	(201.29,158.42) --
	(201.29,215.40) --
	( 76.11,215.40) --
	cycle;

\path[draw=drawColor,line width= 1.1pt,line join=round] ( 76.11,186.77) -- (201.29,186.77);

\path[draw=drawColor,line width= 0.6pt,line join=round] (305.61,321.87) -- (305.61,342.45);

\path[draw=drawColor,line width= 0.6pt,line join=round] (305.61,296.08) -- (305.61,272.69);

\path[draw=drawColor,line width= 0.6pt,line join=round,line cap=round,fill=fillColor] (243.02,321.87) --
	(243.02,296.08) --
	(368.20,296.08) --
	(368.20,321.87) --
	(243.02,321.87) --
	cycle;

\path[draw=drawColor,line width= 1.1pt,line join=round] (243.02,310.70) -- (368.20,310.70);

\path[draw=drawColor,line width= 0.6pt,line join=round] (472.52,370.93) -- (472.52,380.62);

\path[draw=drawColor,line width= 0.6pt,line join=round] (472.52,357.95) -- (472.52,339.49);

\path[draw=drawColor,line width= 0.6pt,line join=round,line cap=round,fill=fillColor] (409.92,370.93) --
	(409.92,357.95) --
	(535.11,357.95) --
	(535.11,370.93) --
	(409.92,370.93) --
	cycle;

\path[draw=drawColor,line width= 1.1pt,line join=round] (409.92,365.86) -- (535.11,365.86);
\definecolor{drawColor}{RGB}{0,0,0}

\path[draw=drawColor,line width= 0.6pt,dash pattern=on 4pt off 4pt ,line join=round] ( 38.56,410.05) -- (572.66,410.05);
\end{scope}
\begin{scope}
\path[clip] (  0.00,  0.00) rectangle (578.16,433.62);
\definecolor{drawColor}{gray}{0.30}

\node[text=drawColor,anchor=base east,inner sep=0pt, outer sep=0pt, scale=  1.7] at ( 33.61, 54.64) {0.16};

\node[text=drawColor,anchor=base east,inner sep=0pt, outer sep=0pt, scale=  1.7] at ( 33.61,168.06) {0.18};

\node[text=drawColor,anchor=base east,inner sep=0pt, outer sep=0pt, scale=  1.7] at ( 33.61,281.47) {0.20};

\node[text=drawColor,anchor=base east,inner sep=0pt, outer sep=0pt, scale=  1.7] at ( 33.61,394.89) {0.22};
\end{scope}

\begin{scope}
\path[clip] (  0.00,  0.00) rectangle (578.16,433.62);
\definecolor{drawColor}{gray}{0.20}

\path[draw=drawColor,line width= 0.6pt,line join=round] ( 35.81,133.79) --
	( 38.56,133.79);

\path[draw=drawColor,line width= 0.6pt,line join=round] ( 35.81,264.91) --
	( 38.56,264.91);

\path[draw=drawColor,line width= 0.6pt,line join=round] ( 35.81,396.02) --
	( 38.56,396.02);
\end{scope}
\begin{scope}
\path[clip] (  0.00,  0.00) rectangle (578.16,433.62);
\definecolor{drawColor}{gray}{0.20}

\path[draw=drawColor,line width= 0.6pt,line join=round] (138.70, 27.94) --
	(138.70, 30.69);

\path[draw=drawColor,line width= 0.6pt,line join=round] (305.61, 27.94) --
	(305.61, 30.69);

\path[draw=drawColor,line width= 0.6pt,line join=round] (472.52, 27.94) --
	(472.52, 30.69);
\end{scope}

\begin{scope}
\path[clip] (  0.00,  0.00) rectangle (578.16,433.62);
\definecolor{drawColor}{gray}{0.30}

\node[text=drawColor,anchor=base,inner sep=0pt, outer sep=0pt, scale=  2.5] at (138.70, 5) {$m=125$};

\node[text=drawColor,anchor=base,inner sep=0pt, outer sep=0pt, scale=  2.5] at (305.61, 5) {$m=500$};

\node[text=drawColor,anchor=base,inner sep=0pt, outer sep=0pt, scale=  2.5] at (472.52, 5) {$m=2000$};
\end{scope}

\end{tikzpicture}
		}

\end{tabular}
\end{center}
	\caption{Boxplot of Monte Carlo estimates of the sample depths (Algorithm \ref{alg:montecarlo}) for curves $\Cc_r$ with respect to a sample of $n=25$ concentric circles over $100$ replications : (left) $r=0.1$, (middle) $r=0.4$, (right) $r=0.9$.  The Monte Carlo sample sizes considered are $m\in\{125,500,2000\}.$ The horizontal dotted line is at the value $D(\Cc_r|\Xx_1,\ldots,\Xx_n)$.}\label{app:fig:circles:m:pack}
\end{figure}

Table~\ref{app:tab:circles:n} illustrates the convergence in probability of the sample depth $D(\Cc|\Xx_1,\ldots,\Xx_n)$ and of its Monte Carlo approximation (see Algorithm~\ref{alg:montecarlo}) to $D(\Cc|P)$ as $n\to \infty$ over $5,000$ replications. 
For every replicated sample, we compute the depth of $\Cc_r$ using Equation \eqref{app:eq:concetric:depth:sample} and its approximation using Algorithm \ref{alg:montecarlo}.
The sample size $m$ in Algorithm \ref{alg:montecarlo} is set to~$500.$
As expected, both the empirical bias of the sample \curvedepth{} and its empirical standard deviation converge to zero for every value of $r$. The sample \curvedepth{} is on average smaller than the population depth.
Moreover the standard deviation of $D(\Cc_r|\Xx_1,\ldots,\Xx_n)$ is a function of the radius.

Since the Monte Carlo approximation tends to underestimate the sample depth, its average is expected to be smaller than the population depth : the bias of the Monte Carlo approximation is then larger than that of the sample \curvedepth{}.
We may expect a larger variance for the Monte Carlo estimate. 
In this example the variability due to the Monte Carlo estimate is quite weak, and tends to be equivalent to the variability of the sample \curvedepth{} for a large enough sample size.

\begin{table}[H]
\small\centering
\begin{tabular}{|l|l|ccccc|}
\hline
	&         $r$    & 0.1  & 0.4  & 0.5  & 0.6  & 0.9 \\ \hline
	&$D(\Cc_r|P)$    &\bf 0.627 &\bf 0.830 &\bf 0.758 &\bf 0.629 &\bf 0.169\\
\hline
$n=50$ 	& $D(\Cc_r|\Xx_1,\ldots,\Xx_n)$  & 0.627      & 0.808      & 0.744      & 0.617      & 0.161 \\
        &                                &\it (0.020) &\it (0.031) &\it (0.071) &\it (0.083) &\it (0.062)\\[0.2cm]
    	& MC-estimate                    & 0.627      & 0.782      & 0.697      & 0.574      & 0.144\\ 
		&                                &\it (0.020) &\it (0.041) &\it (0.071) &\it (0.078) &\it (0.056)\\ \hline
$n=200$ & $D(\Cc_r|\Xx_1,\ldots,\Xx_n)$  & 0.627      & 0.825      & 0.756      & 0.628      & 0.167 \\
		&                                &\it (0.010) &\it (0.011) &\it (0.037) &\it (0.042) &\it (0.031)\\[0.2cm]
    	& MC-estimate                    & 0.627      & 0.799      & 0.705      & 0.581      & 0.149\\ 
		&                                &\it (0.010) &\it (0.022) &\it (0.038) &\it (0.040) &\it (0.028)\\ \hline	
$n=800$ & $D(\Cc_r|\Xx_1,\ldots,\Xx_n)$  & 0.627      & 0.830      & 0.758      & 0.629      & 0.169\\
        &                                &\it (0.005) &\it (0.005) &\it (0.019) &\it (0.021) &\it (0.015)\\[0.2cm]
    	& MC-estimate                    & 0.627      & 0.806      & 0.707      & 0.582      & 0.151\\ 
		&                                &\it (0.005) &\it (0.015) &\it (0.021) &\it (0.022) &\it (0.014)\\ \hline
\end{tabular}
\caption{Average of sample depths and their Monte Carlo approximations (with $m=500$) for the curves $\Cc_r$ and their corresponding standard deviations (in parentheses) with respect to increasing sample sizes $n$ over $5,000$ replications.}\label{app:tab:circles:n}
\end{table}

\paragraph{Scheme 2 : \citep[see paragraph 4.2.1 in][]{claeskens2014multivariate}.}
We consider an i.i.d. sample $\Xx_1,\ldots \Xx_n,$  from the process $\Xx$ proposed by \citep[see paragraph 4.2.1 in][]{claeskens2014multivariate},
$$
\Xx = \left\{ \left(x, A_1\sin(2\pi x) + A_2\cos(2\pi x)\right);\, x\in[L,U]\right\}\,,
$$
where $A_1,A_2\sim\Uu[0,0.05]$, $L\sim\Uu[0,\frac{2\pi}{3}]$, and $U\sim\Uu[\frac{4\pi}{3},2\pi]$, all independent. 
The mean of the process is denoted as $\Cc_\Xx$ with the value $a_1 = a_2= 0.025$ of $A_1$ and $A_2$. 
Using the Monte Carlo approximation for a fixed values of $m$ and $\Delta$, we compute the depth of $\Cc_{\Xx}$.
We repeat the experiment $1,000$ times for different values of $n,$ $m$ and $\Delta$. 

First we fix the $n=50$ curves of the sample, and we aim to measure the effect of $m$ and $\Delta$ on the computation of the depth of $\Cc_\Xx$ given the sample $\Xx_1,\ldots, \Xx_n.$
Table~\ref{tab:sim1} indicates the average depths and their standard deviations (in parentheses) for the curve $\Cc_\Xx$ for different choices of $m$ and $\Delta.$
From the simulations, the threshold $\Delta$ in the chosen range seems to have very limited influence on the estimated depth value.
Further simulations indicate that averages of the depths converge (the consecutive differences decrease) and their standard deviations decrease towards zero as $m$ increases as expected from Theorem~3.1. 
Figure~\ref{fig:sim1} (right) indicates that a subsample of deepest curves is located nearby the center of the stochastic process.

\begin{figure}[H]
	\begin{center}
		\includegraphics[keepaspectratio=true,width=0.425\linewidth, trim  =10mm 15mm 10mm 10mm, clip]{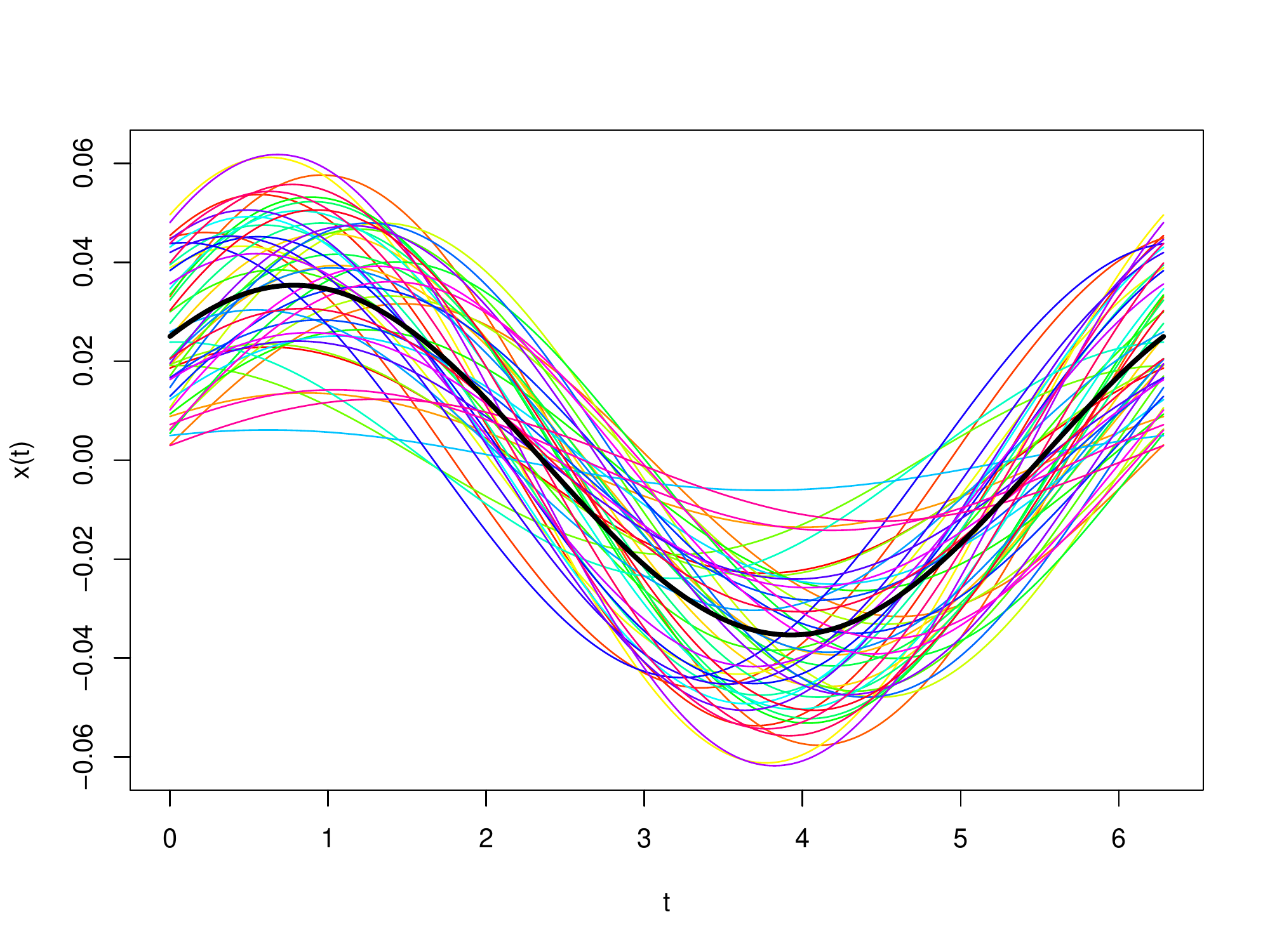}\quad
		\includegraphics[keepaspectratio=true,width=0.425\linewidth, trim  =10mm 15mm 10mm 10mm, clip]{pic-fig3b.pdf}
	\end{center}
	\caption{Illustration of the sample of $n=50$ curves for Simulation 1. 
	In the left panel, the curves $\Xx_i$ are plotted in different colors and the mean curve $\Cc_\Xx$ in black.
	In the right panel, the curves with depth larger than $0.727-2\times 0.014$ (with $m=500$ and $\alpha=1/8$, see Table \ref{tab:sim1}) are plotted in orange, where $0.727$ is the depth of the mean curve, the deepest curve having depth $0.744$ in red.}
	\label{fig:sim1}
\end{figure}

\begin{table}
	 \caption{Average depths for the mean curve $\Cc_\Xx$ and their corresponding standard deviations (in parentheses) with respect to a sample of size $n=50$ for Simulation~1 over $N=1,000$ Monte Carlo repetitions with differing $m$ and $\Delta = 1/(10m^\alpha).$}
	 \label{tab:sim1}
	 \begin{center}
	 \footnotesize
	 	\begin{tabular}{c|ccccccccc}
			$\alpha \backslash m$ & 20 & 50 & 100 & 200 & 500 & 1000 & 2000 & 5000 & 10000 \\ \hline\hline
	 		0 & 0.616 & 0.685 & 0.721 & 0.745 & 0.764 & 0.772 & 0.777 & 0.781 & 0.782 \\
			& (0.055) & (0.040) & (0.030) & (0.023) & (0.015) & (0.010) & (0.007) & (0.004) & (0.003) \\ \hline
			1/8 & 0.62 & 0.69 & 0.726 & 0.749 & 0.767 & 0.774 & 0.778 & 0.781 & 0.782 \\
			& (0.056) & (0.040) & (0.029) & (0.022) & (0.014) & (0.010) & (0.006) & (0.004) & (0.003) \\ \hline
			1/4 & 0.615 & 0.686 & 0.723 & 0.747 & 0.766 & 0.773 & 0.778 & 0.781 & 0.782 \\
			& (0.054) & (0.039) & (0.031) & (0.022) & (0.014) & (0.010) & (0.007) & (0.004) & (0.003) \\
		\end{tabular}
	 \end{center}
       \end{table}

Lastly, we aim to measure the effect of the size $n$ of the sample of curves on the computation of the depth.
Here, we sample $m=1,000$ points on each curve.
For every Monte Carlo replication, we generate a new i.i.d.\ sample $\Xx_1,\ldots, \Xx_n$ from the process $\Xx$ defined above. 
Table~\ref{tab:sim3} shows the average depths and their standard deviations (in parentheses) of the curve $\Cc_\Xx.$
We can see that the depth of $\Cc_\Xx$ converges as expected in Theorem~3.1.
Note that, compared to Table~\ref{tab:sim1}, the standard deviations take into account additionally the variation of the curves' sample, \textit{cf.} $0.010$ for $\alpha=1/8,$ $m=1000$ in Table~\ref{tab:sim1}  and $0.023$ for $n=50$ in Table~\ref{tab:sim3}.

\begin{table}
	 \caption{Average depths for the mean curve and their corresponding standard deviations (in parentheses) for Simulation 1 over $N=1,000$ Monte Carlo repetitions, $m=1,000$, and $\alpha = 1/8$ with growing $n$.}
	 \label{tab:sim3}
	 \begin{center}
	 	\footnotesize
	 	\begin{tabular}{c|ccccccccc}
	 		$n$ & 20 & 50 & 100 & 200 & 500 & 1000 & 2000 & 5000 & 10000 \\ \hline\hline
			$\Cc_\Xx$  & 0.730 & 0.751 & 0.759 & 0.762 & 0.764 & 0.765 & 0.766 & 0.765 & 0.766 \\
			  & (0.044) & (0.023) & (0.016) & (0.012) & (0.009) & (0.009) & (0.008) & (0.008) & (0.007) \\
		\end{tabular}
	 \end{center}
       \end{table}

\paragraph{Simulation 2 : \cite{cuevas2007models}.}
We consider an i.i.d. sample $\Yy_1,\ldots, \Yy_n$ from the process $\Yy$ proposed by~\cite{cuevas2007models}:
$$
\Yy = \left\{ \left(x, 30(1 - x)^{1 + W}x^{1.5 - W} + U_x\right);\,  x\in [L,U]\right\}
$$
where $\{U_t;\, t\in[0,1]\}$ is a zero mean stationary Gaussian process with covariance function $t \mapsto  0.2 e^{-\frac{1}{0.3}|t|}$, $W\sim\Uu[0,0.5]$, $L\sim\Uu[0,0.1]$, $U\sim\Uu[0.9,1]$, all independent. 
The mean of the process is denoted as $\Cc_\Yy$ with the parametrization $\bmy(t)=15(1-t)t(\sqrt{1-t}-\sqrt{t})/(\log(1-t)-\log(t)).$
Notice that since the curves $\Yy_i$ are noisy, they may not be in fact rectifiable.
While in practice such curves are discretely observed, they can be approximated by affine functions with a finite length.

Similarly to the last scheme, we fix the $n=50$ curves of the samples. 
In Figure~\ref{fig:sim2} (right), we depict the first Monte Carlo replication with the deepest curve $\Yy_{1}$ and a subsample of curves with a depth closest to it in depth.
Note that the depth of $\Cc_\Yy$ is  around $0.6$ while the depth of the deepest curve in the sample is about $0.8.$ 
Although the mean curve $\Cc_\Yy$ is fairly central, the deepest curve $\Yy_{1}$ is a better representative of the sample because of the smoothness of $\Cc_\Yy.$
Table~\ref{tab:sim2} indicates the average depths and their standard deviations (in parentheses) for the curves $\Cc_\Yy$ and $\Yy_1$ (notice that due to the variation of the points on the curves, $\Yy_1$ is not always the deepest curve for all Monte Carlo replications).
Even though the standard deviations are of the same order as in the previous simulation, we remark that the depths of $\Yy_1$ are twice more dispersed than those of $\Cc_\Yy.$

\begin{figure}[H]
	\begin{center}
		\includegraphics[keepaspectratio=true,width=0.425\linewidth, trim  =10mm 15mm 10mm 10mm, clip]{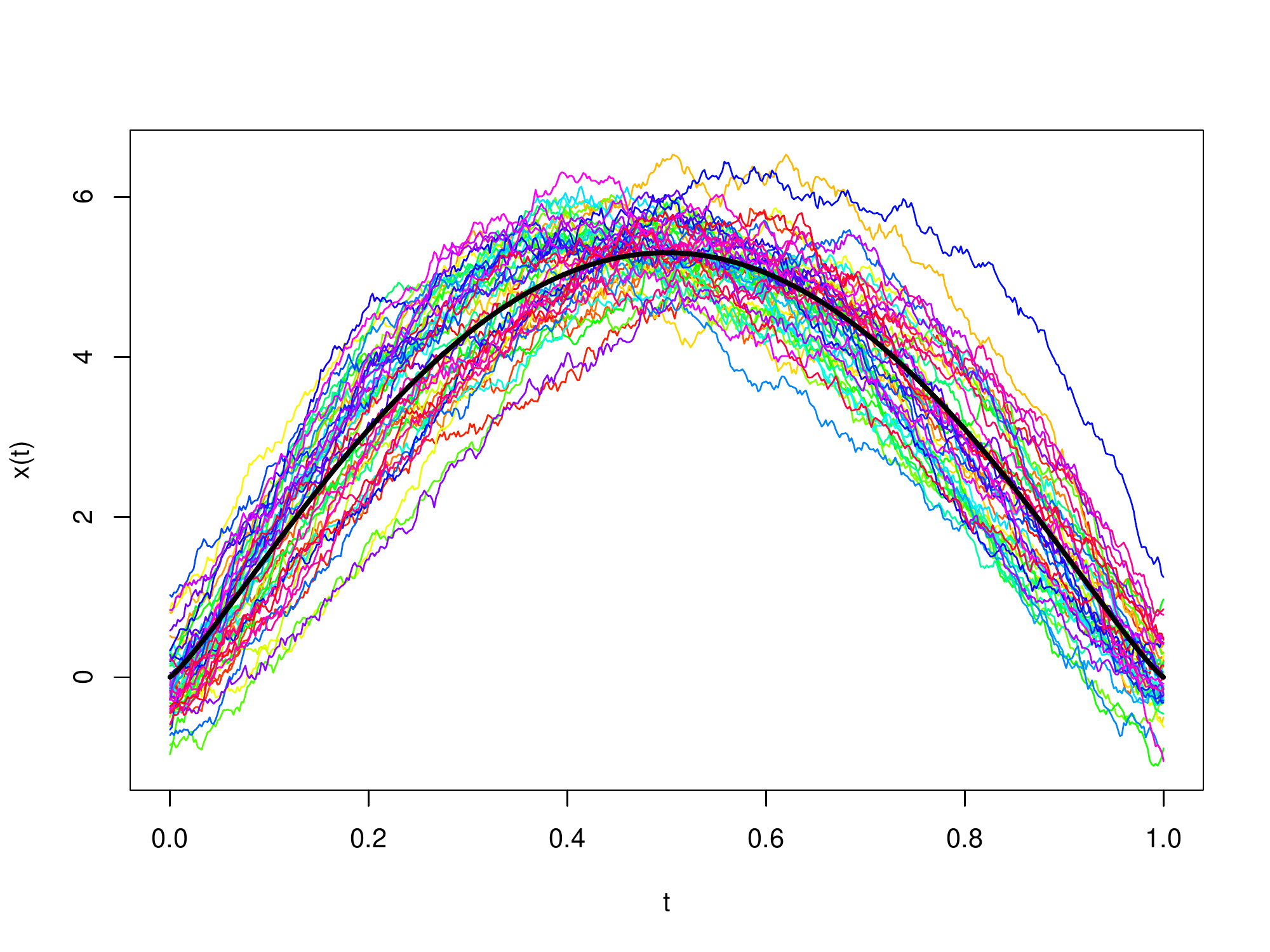}\quad
		\includegraphics[keepaspectratio=true,width=0.425\linewidth, trim  =10mm 15mm 10mm 10mm, clip]{pic-fig4b.pdf}
	\end{center}
	\caption{Illustration of the sample of $n=50$ curves for Simulation 2. 
	In the left panel, the curves $\Yy_i$ are plotted in different colors and the mean curve $\Cc_\Yy$ in black.
	In the right panel, the curves with depth larger than $0.752-2\times 0.026$ (with $m=500$ and $\alpha=1/8$, see Table \ref{tab:sim2}) are plotted in orange, where the deepest curve having depth $0.752$ is plotted in red,
the depth of the mean curve is $0.571.$
}
	\label{fig:sim2}
\end{figure}

\begin{table}
	 \caption{Average depths for the mean curve $\Cc_\Yy$ and the deepest curve $\Yy_{1}$ and their corresponding standard deviations (in parentheses) for Simulation 2 over $N=1,000$ Monte Carlo repetitions, $n=50$ curves and $\alpha=1/8$.}
	 \label{tab:sim2}
 	 \begin{center}
 	 \footnotesize
	 \begin{tabular}{c|ccccccccc}
			$m$ & 20 & 50 & 100 & 200 & 500 & 1000 & 2000 & 5000 & 10000 \\ \hline\hline
	 		$\Cc_\Yy$ & 0.497 & 0.54 & 0.557 & 0.568 & 0.576 & 0.58 & 0.583 & 0.584 & 0.585 \\
			& (0.055) & (0.039) & (0.031) & (0.021) & (0.014) & (0.010) & (0.007) & (0.004) & (0.003) \\ \hline
			$\Yy_{1}$ & 0.633 & 0.694 & 0.726 & 0.746 & 0.761 & 0.768 & 0.772 & 0.774 & 0.774 \\
			& (0.069) & (0.061) & (0.049) & (0.038) & (0.026) & (0.018) & (0.013) & (0.008) & (0.006) \\
		\end{tabular}
	 \end{center}
\end{table}

\section{Pre-processing of DTI Scans}\label{sec:preprocess.DTI}

DTI scans were acquired from all 34 twin pairs on a Philips 3T Achieva Quasar Dual MRI scanner (Philips Medical System, Best, The Netherlands), using a single-shot echo-planar imaging (EPI) sequence (TR = 7115 ms, TE = 70 ms). For each diffusion scan, 32 gradient directions (b = 1000 s/$\text{mm}^2$) and a non-diffusion-weighted acquisition (b = 0 s/$\text{mm}^2$) were acquired over a 96$\text{mm}^2$ image matrix (FOV 240 mm $\times$ 240 $\text{mm}^2$); with a slice thickness of 2.5 mm and no gap, yielding 2.5 mm isotropic voxels.

We used the MRtrix software \citep{tournier2012mrtrix} to extract fiber tracts from the DTI scans and we chose corticospinal tract (both left and right) here because corticospinal tracts are long and could be identified and extracted relatively accurately and reliably in comparison to other shorter and more ambiguous fiber tracts of the human brain. We used the seed region of interest on an axial slice on which the cerebral peduncle was visible. The resulting data sets were two bundles of around $1,000$ fibers each per subject. Each fiber was described by a set of around $400$ successive 3D locations. The size of a single file containing only one bundle of fibers was around 12MB, so that altogether the $34\times2\times2$ files weigh around 1.6GB.

\end{document}